\definecolor{ForestGreen}{rgb}{0.1333,0.5451,0.1333}
\crefname{equation}{}{}
\newcommand\remove[1]{}
\newtheorem{lemma}{Lemma}[section]
\newtheorem{theorem}{Theorem}
\newtheorem*{lemma*}{Lemma}
\newtheorem{corollary}[lemma]{Corollary}
\newtheorem*{corollary*}{Corollary}
\newtheorem{assumption}[lemma]{Assumption}
\theoremstyle{definition}
\newtheorem*{theorem*}{Theorem}
\newtheorem{definition}[lemma]{Definition}
\newtheorem*{rem*}{Remark}
\newcommand{\eps}{\varepsilon}
\newcommand{\R}{\mathbb{R}}
\newcommand{\ls}{\lesssim}
\newcommand{\E}{\mathbb{E}}
\newcommand{\mA}{\mathbf{A}}
\newcommand{\mD}{\mathbf{D}}
\crefname{algocf}{Algorithm}{Algorithms}
\renewcommand{\l}{\langle}
\renewcommand{\r}{\rangle}
\newcommand{\supp}{\mathsf{supp}}
\newcommand{\Z}{\mathbb{Z}}
\newcommand{\D}{\mathbb{D}}
\newcommand{\gs}{\gtrsim}
\newcommand{\bbC}{\mathbb{C}}
\newcommand{\Image}{\mathsf{Image}}
\newcommand{\A}{\Sigma}
\newcommand{\B}{\Gamma}
\newcommand{\C}{\Phi}
\renewcommand{\a}{\sigma}
\renewcommand{\b}{\gamma}
\renewcommand{\c}{\phi}
\newcommand{\master}{\mathfrak{m}}
\newcommand{\ama}{\a_{\master}}
\newcommand{\bma}{\b_{\master}}
\newcommand{\cma}{\c_{\master}}
\renewcommand{\aa}{\kappa}
\newcommand{\bb}{\lambda}
\newcommand{\cc}{\mu}
\renewcommand{\bar}{\overline}
\renewcommand{\hat}{\widehat}
\newcommand{\cG}{\mathcal{G}}
\newcommand{\val}{\mathsf{val}}
\newcommand{\wt}{\widetilde}
\newcommand{\rep}{\mathsf{rep}}
\newcommand{\cA}{\mathcal{A}}
\renewcommand{\bar}{\overline}
\newcommand{\diag}{\mathsf{diag}}
\renewcommand{\deg}{\mathsf{deg}}
\newcommand{\effdeg}{\mathsf{effdeg}}
\newcommand{\eff}{\mathsf{eff}}
\newcommand{\const}{\mathsf{const}}
\newcommand{\cP}{\mathcal{P}}
\newcommand{\cQ}{\mathcal{Q}}
\newcommand{\mV}{\mathbf{V}}
\renewcommand{\Re}{\mathsf{Re}}
\newcommand{\modest}{\mathsf{modest}}
\newcommand{\moddeg}{\mathsf{moddeg}}
\newcommand{\TT}{\mathcal{T}}
\newcommand\skipi{{\vskip 10pt}}
\begin{document}

\title{Parallel Repetition for $3$-Player $\mathsf{XOR}$ Games}

\author{Amey Bhangale\thanks{Department of Computer Science and Engineering, University of California, Riverside. Supported by the Hellman Fellowship award.}
	\and
    Mark Braverman\footnote{Department of Computer Science at Princeton University. Supported by the National Science Foundation under
the Alan T. Waterman Award, Grant No. 1933331.}
    \and
	Subhash Khot\thanks{Department of Computer Science, Courant Institute of Mathematical Sciences, New York University. Supported by
		the NSF Award CCF-1422159, NSF CCF award 2130816, and the Simons Investigator Award.}
	\and
    Yang P. Liu\thanks{School of Mathematics, Institute for Advanced Study, Princeton, NJ. Partially supported by NSF DMS-1926686}
    \and 
	Dor Minzer\thanks{Department of Mathematics, Massachusetts Institute of Technology. Supported by NSF CCF award 2227876 and NSF CAREER award 2239160.}}

\date{\vspace{-3ex}}
\clearpage\maketitle

\begin{abstract}
In a $3$-$\mathsf{XOR}$ game $\mathcal{G}$, the verifier samples a challenge $(x,y,z)\sim \mu$
where $\mu$ is a probability distribution over $\Sigma\times\Gamma\times\Phi$, 
and a map $t\colon \Sigma\times\Gamma\times\Phi\to\mathcal{A}$ for a finite
Abelian group $\mathcal{A}$ defining a constraint. 
The verifier
sends the questions $x$, $y$ and 
$z$ to the players Alice, Bob and 
Charlie respectively, receives answers $f(x)$, $g(y)$ and $h(z)$ 
that are elements in $\mathcal{A}$ 
and accepts if $f(x)+g(y)+h(z) = t(x,y,z)$. The value, $\mathsf{val}(\mathcal{G})$, of the game
is defined to be the maximum 
probability the verifier accepts 
over all players' strategies.

We show that if $\mathcal{G}$ is a $3$-$\mathsf{XOR}$
game with value strictly less than $1$, whose underlying distribution over
questions $\mu$ does not admit Abelian
embeddings into $(\mathbb{Z},+)$, then
the value of the $n$-fold repetition
of $\mathcal{G}$ is exponentially decaying. That is, there exists 
$c=c(\mathcal{G})>0$ such that $\mathsf{val}(\mathcal{G}^{\otimes n})\leq 2^{-cn}$. 
This extends a previous result of 
[Braverman-Khot-Minzer, FOCS 2023] showing 
exponential decay for the GHZ game.
Our proof combines tools from additive
combinatorics and tools from discrete
Fourier analysis.
\end{abstract}

\newpage
\setcounter{tocdepth}{2}
\tableofcontents

\normalsize
\pagebreak
\pagenumbering{arabic}

\section{Introduction}
\label{sec:intro}
In a $k$-player game $\cG$, a verifier chooses a question $x := (x^{1}, x^{2}, \dots, x^{k}) \sim \mu$, for some distribution $\mu$ over $\A_1 \times \A_2 \times \dots \times \A_k$, for finite alphabets $\A_1, \dots, \A_k$. The verifier sends $x^i$ to player $i$, who responds with an answer $f^i(x^i)$. The verifier accepts based on the evaluation of some known predicate $V(x, \{f^1(x^1), \dots, f^k(x^k)\})$. We define the \emph{value} $\val(\cG)$ of the game $\cG$ as the maximum probability of winning such a game for the players.

A natural question that arises is how the value of a game decays under parallel repetitions. The \emph{$n$-fold parallel repetition} of $\cG$, which we denote as $\cG^{\otimes n}$, is the game where the verifier samples $n$ questions $(x_1, \dots, x_n) \sim \mu^{\otimes n}$, and sends the $i$-th coordinate of all of $(x_1, \dots, x_n)$ to player $i$ simultaneously. The players win the repeated game if they win each one of the $n$ instances.

Parallel repetition of $2$-player games is well-understood by now.
Originally suggested by~\cite{FRS94} 
as a way to amplify the advantage in interactive 
protocols, it is now known that the
value of $2$-player games decays
exponentially if the value of the
game is strictly smaller than $1$.
This result was first proved by Raz~\cite{Raz98} using information theoretic techniques. Subsequent 
works simplified the proofs and improved it quantitatively~\cite{H09,Rao11,DS14,BG15}. We remark that most of these
works, with the exception of~\cite{DS14}, follow the information
theoretic proof of Raz. The work~\cite{DS14} suggests an 
analytical approach to the problem 
in the case the game $\mathcal{G}$ 
is a projection game.\footnote{We remark that in the majority of applications, and especially in 
the area of probabilistically checkable proofs, one is often 
only concerned with projection games.}
While it is natural to suspect that the (na\"{i}ve) bound  $\val(\cG^{\otimes n}) \le \val(\cG)^n$ should hold, it turns out to be false \cite{Fei91,FV02}.  Even this failure is well understood, and it is known to be related to the
geometry of tilings of high dimensional Euclidean space 
(see~\cite{FKO07,KORW08,AK09,BMtilesym}), and more specifically to the 
fact that there are bodies of volume 
$1$ and surface area $\Theta(\sqrt{n})$ that tile the 
Euclidean space $\mathbb{R}^n$.
Parallel repetition of $2$-player games has many applications in the theory of interactive proofs \cite{BGKW88}, communication complexity \cite{PRW97,BBCR13,BRWY13}, quantum information~\cite{CHTW04,BBLV13}, and hardness of approximation~\cite{FGLSS96,ABSS97,ALMSS98,AS98,BGS98,Fei98,Has01,Khot02a,Khot02b,GHS02,DGKR05,DRS05}.

The situation for $k$-player games, 
where $k\geq 3$, is much more poorly understood. Even for $k = 3$, the best bound known general bound for $\val(\cG^{\otimes n})$ is inverse Ackermann, which follows from quantitative versions of the Density Hales-Jewett theorem \cite{V96,DHJ12}. Since proving effective bounds on the value of parallel repetition of $k$-player games seems like a very challenging task, researchers tried to focus their attention on more limited classes of games. In~\cite{DHVY17}, the authors consider
the class of \emph{connected games}, 
and prove that an exponential decay 
holds for them. Their proof uses information-theoretic techniques similar to those used in the context of $2$-player version. Here, a game is connected if the following graph on $\supp(\mu)$ is connected: connect $x, x' \in \supp(\mu)$ with an edge if $x, x'$ differ in a single coordinate. More quantitatively, one
considers a weighted version of 
this graph, where the weights correspond to re-sampling a coordinate
in the distribution $\mu$; with this in mind, connectedness is concerned with 
the second eigenvalue of this graph
being bounded away from $1$. The
authors of~\cite{DHVY17} then suggested
that to make more progress on multi-player parallel repetition one must
tackle games that are not connected, 
and suggested the GHZ game as an example of such game.

The GHZ game is a special $3$-player
game that has its origin in physics (where it is known as quantum pseudo-telepathy). In the GHZ game the
verifier samples a challenge 
$(x,y,z)\in \mathbb{F}_2^3$ such 
that $x+y+z = 0$, sends $x$, $y$ 
and $z$ to Alice, Bob and Charlie 
respectively, receives answers 
$f(x),g(y),h(z)\in\mathbb{F}_2$, 
and accepts if $f(x) + g(y) + h(z) = x\lor y\lor z$. The GHZ game is easily
seen to have classical value $3/4$, 
but all existing techniques seemed to
fail to prove an effective upper 
bounds on the value of 
$\textsf{GHZ}^{\otimes n}$. Holmgren 
and Raz~\cite{HR20} were the first
to tackle this problem, and they
showed that 
$\val(\textsf{GHZ}^{\otimes n})
\leq n^{-\Omega(1)}$ using a combination of information theory
and Fourier analysis. Later works
simplified the proof for the $\textsf{GHZ}$ game~\cite{GHMRZ21} 
and used these techniques to prove
polynomial bounds for all $3$-player
games with binary questions~\cite{GHMRZ22,GMRZ22}, namely games where $\supp(\mu)\subseteq \{0,1\}^3$.
Finally, in~\cite{BKM22} it was
shown that $\val(\textsf{GHZ}^{\otimes n})\leq 2^{-\Omega(n)}$ using techniques
from additive combinatorics. 
We defer the reader to \cref{subsec:barriers} for further discussion.

\skipi
The primary goal of this work is
to study parallel repetition for the very natural class of 3-$\mathsf{XOR}$ which includes within it the $\textsf{GHZ}$ game. 
We show that the value of a 3-$\textsf{XOR}$ games decays 
exponentially provided that the underlying distribution of question
does not admit any Abelian embeddings
into $(\mathbb{Z},+)$. Our proof uses
new techniques from Fourier analysis~\cite{BKMcsp1,BKM2,BKMcsp3,BKM4} as well as powerful results from 
additive combinatorics~\cite{Gowers}.

\subsection{$3$-\textsf{XOR} Games, Embeddings and Connectedness}
\label{subsec:embedding}
To state our main result formally we require some setup. Suppose that  $\A, \B, \C$ are finite alphabets, and let $\mu$ be a distribution on $\A \times \B \times \C$. Let $t: \supp(\mu) \to \cA$ be a target function, where $\cA$ is a finite Abelian group. The pair $(\mu,t)$ defines a 
3-$\textsf{XOR}$ game $\cG$, wherein 
the verifier samples a challenge $(x,y,z)\sim \mu$, sends $x$, $y$ 
and $z$ to Alice, Bob and Charlie 
respectively, receives answers $f(x),g(y),h(z)\in\cA$ and 
accepts if $f(x)+g(y)+h(z) = t(x,y,z)$.

The \emph{value} of an $n$-fold repeated game under this setup is defined as the maximum over all $f:\A^n \to \cA^n, g: \B^n\to \cA^n, h: \C^n \to \cA^n$ of
\begin{align} \Pr_{(x,y,z)\sim\mu^{\otimes n}}\left[f(x)_i + g(y)_i + h(z)_i = t(x_i,y_i,z_i) \enspace \forall \enspace i = 1, 2, \dots, n \right]. \label{eq:gamevalue} \end{align}
Let $\cG^{\otimes n}$ be the $n$-fold repeated game, with value $\val(\cG^{\otimes n})$. We are not able to establish parallel repetition for \emph{all} $3$-player $\mathsf{XOR}$ games, and we need to make some assumptions on the structure of the distribution $\mu$. We discuss why removing these assumptions may prove to be quite challenging in \cref{subsec:discussion}.

\begin{definition}[Embeddings]
\label{def:embedding}
An embedding of a distribution $\mu$ on $\A \times \B \times \C$ into an abelian group $A$ consists of three maps $\a: \A \to A, \b: \B \to A, \c: \C \to A$ such that $\a(x) + \b(y) + \c(z) = 0$ for all $(x,y,z) \in \supp(\mu)$.
\end{definition}
We say that $\mu$ has no $\Z$-embeddings if all embeddings of $\mu$ into $\Z$ are \emph{trivial} (in the sense that $\a, \b, \c$ must be constant maps). Throughout, we will assume that $\mu$ has no non-trivial $\Z$-embeddings.
If $\mu$ has no $\Z$-embeddings, then it also satisfies a mild connectivity property.
\begin{definition}[Pairwise-connected]
\label{def:pairwise}
We say that a distribution $\mu$ on $\A \times \B \times \C$ is $(x,y)$-\emph{pairwise-connected} if the graph on $\A \times \B$ with an edge between $(x,y)$ if there is some $z$ with $(x,y,z) \in \supp(\mu)$ is connected. We say that $\mu$ is pairwise-connected if it is $(x,y)$, $(y,z)$, and $(x,z)$ pairwise-connected.
\end{definition}
\begin{lemma}
\label{lemma:pairwise}
If $\mu$ has no $\Z$-embeddings, then $\mu$ is pairwise-connected.
\end{lemma}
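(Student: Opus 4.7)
My plan is to prove the contrapositive: if $\mu$ is not pairwise-connected, then it admits a non-trivial $\Z$-embedding. By the $\A/\B/\C$-symmetry of the setup, it is enough to handle the case that $\mu$ is not $(x,y)$-pairwise-connected. Viewing the graph of \cref{def:pairwise} as the bipartite graph $G$ on $\A \sqcup \B$ with an edge $\{x,y\}$ whenever $(x,y,z) \in \supp(\mu)$ for some $z \in \C$, the hypothesis says $G$ has at least two connected components.

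The key step is to read an embedding directly off a disconnecting cut. Fix a non-trivial partition $\A \sqcup \B = V_1 \sqcup V_2$ into unions of components of $G$, and write $V_1 = \A_1 \sqcup \B_1$ with $\A_1 \subseteq \A$ and $\B_1 \subseteq \B$. Since no edge of $G$ crosses this cut, every $(x,y,z) \in \supp(\mu)$ satisfies either $(x \in \A_1 \text{ and } y \in \B_1)$ or $(x \in \A \setminus \A_1 \text{ and } y \in \B \setminus \B_1)$. I would then set
\[
\a(x) = \one[x \in \A_1], \qquad \b(y) = -\one[y \in \B_1], \qquad \c(z) = 0.
\]
In both cases above one has $\a(x) + \b(y) + \c(z) = 0$, so $(\a,\b,\c)$ is a valid $\Z$-embedding of $\mu$.

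For non-triviality, I would split on whether $\A_1$ is a proper non-empty subset of $\A$: if so, $\a$ is non-constant; otherwise $\A_1 \in \{\emptyset, \A\}$, and the non-triviality of the partition $(V_1, V_2)$ forces $\B_1$ to be a proper non-empty subset of $\B$, so that $\b$ is non-constant. Either way the embedding is non-trivial, which proves the contrapositive for the $(x,y)$ case; the $(y,z)$ and $(x,z)$ cases are identical after relabeling.

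I do not expect a real obstacle here: once one interprets the graph of \cref{def:pairwise} as the natural bipartite graph between $\A$ and $\B$ (which is the only reading consistent with the statement of the lemma), a disconnection of it supplies an indicator-valued $\Z$-embedding essentially for free, and the only real work is the short case analysis needed to verify non-triviality.
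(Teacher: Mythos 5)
Your proposal is correct and is essentially the paper's proof: both pass to the contrapositive, read a disconnecting cut of the bipartite question graph as a partition of $\A$ and $\B$ respected by $\supp(\mu)$, and define the $\Z$-embedding by (shifted) indicator functions with $\c \equiv 0$. The only cosmetic difference is that the paper uses $\pm 1$ values while you use $0/1$ and $0/{-1}$; your extra case analysis for non-triviality is a slightly more careful version of what the paper leaves implicit.
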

\begin{proof}
If $\mu$ is not $(x,y)$-connected, then there are partitions $\A \to \A' \cup \A''$ and $\B = \B' \cup \B''$ such that if $(x,y,z) \in \supp(\mu)$, then $(x,y) \in \A' \times \B'$ or $(x,y) \in \A'' \times \B''$. Thus, we can define the embedding $\a(x) = \b(y) = 1$ for $x \in \A'$, $y \in \B''$, $\a(x) = \b(y) = -1$ otherwise, and $\c(z) = 0$ for all $z$.
\end{proof}
We note that this notion of pairwise-connectivity is much weaker than connectivity in the context of \cite{DHVY17}, or even player-wise connectivity as defined in \cite{GHMRZ22}. In particularly, the GHZ game is pairwise-connected but neither connected or player-wise connected. This explains why the techniques for showing parallel repetition for the GHZ game differ significantly from the analyses of (player-wise) connected games, which are information-theoretic in nature.

\subsection{Our Main Result}
The main result of work is that games $\cG$ with $\mathsf{XOR}$ predicates over input distributions $\mu$ without $\Z$-embeddings, have exponential decay under parallel repetition.
\begin{theorem}
\label{thm:main}
For a $3$-player game $\cG$ over distribution $\mu$ with $\mathsf{XOR}$ predicates, if $\val(\cG) < 1$ and $\mu$ has no $\Z$-embeddings, then there is a constant $c := c(\cG,\mu) > 0$ such that $\val(\cG^{\otimes n}) \le 2^{-cn}$.
\end{theorem}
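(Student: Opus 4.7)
The plan is to follow the standard contradiction framework for parallel repetition: assume $\val(\cG^{\otimes n}) \geq 2^{-cn}$ for a sufficiently small constant $c > 0$ and derive a contradiction with $\val(\cG) < 1$. Fixing near-optimal strategies $f : \A^n \to \cA^n$, $g : \B^n \to \cA^n$, $h : \C^n \to \cA^n$ for $\cG^{\otimes n}$, I would aim to produce an index $i \in [n]$ and an event $E$ depending only on the coordinates outside $i$, with $\Pr[E] \geq 2^{-O(cn)}$, such that conditioned on $E$ (i) the marginal distribution on coordinate $i$ is close to $\mu$, and (ii) the induced single-round strategies win with probability at least $\val(\cG) + \Omega(1)$, contradicting $\val(\cG) < 1$. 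The overall skeleton is the familiar Raz--Holenstein \cite{Raz98,H09} template; what is non-standard is the single-round analysis, which must simultaneously exploit the $\mathsf{XOR}$ structure of the predicate and the no-$\Z$-embedding assumption on $\mu$.

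The first key step is a Fourier decomposition of the conditional single-round win probability. Since the acceptance indicator can be written as an average of $\cA$-characters evaluated at $f(x) + g(y) + h(z) - t(x,y,z)$, the conditional win probability expands into a sum of terms of the form $\E_{(x,y,z) \sim \mu'}[\a(x)\b(y)\c(z)]$ weighted by Fourier coefficients of the conditional strategies, where $\mu'$ is the conditional distribution on coordinate $i$ and $\a, \b, \c$ are characters of $\A, \B, \C$. If the conditional win probability exceeds $\val(\cG)$ by $\Omega(1)$, a pigeonhole argument forces at least one such triple correlation $\E_{(x,y,z) \sim \mu'}[\a(x)\b(y)\c(z)]$ to be bounded away from zero for a non-trivial choice of $(\a,\b,\c)$. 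Invoking Gowers-type inverse theorems \cite{Gowers} and the structural machinery developed in \cite{BKMcsp1,BKM2,BKMcsp3,BKM4}, such a large triple correlation should in turn force the existence of non-trivial characters $\a, \b, \c$ satisfying $\a(x)\b(y)\c(z) = 1$ on $\supp(\mu)$, i.e.\ an Abelian embedding of $\mu$ into a finite cyclic group.

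The main obstacle is bridging the gap between the finite-group embedding produced by Fourier analysis and the no-$\Z$-embedding hypothesis, which is strictly weaker (a $\Z$-embedding induces many finite-group embeddings, but not conversely). To close this gap I anticipate an inductive argument: a non-trivial embedding into a finite cyclic group $\Z/m\Z$ partitions $\supp(\mu)$ into fibers, and using the pairwise connectedness from \cref{lemma:pairwise} one can restrict the analysis to a single fiber, on which either the game becomes trivial or the induction can be applied again to a properly smaller structure. Threading this embedding induction through the Raz--Holenstein conditioning argument, while carefully controlling the correlations that conditioning introduces among the three players' inputs (the feature that makes $3$-player repetition qualitatively harder than $2$-player), is expected to be the most delicate step, substantially generalizing the GHZ-specific analysis of \cite{BKM22}.
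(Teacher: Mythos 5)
Your proposal is built on the Raz--Holenstein information-theoretic template (condition on an event $E$ depending on coordinates outside $i$, argue the induced one-round win probability exceeds $\val(\cG)+\Omega(1)$), but this is precisely the framework that is known to break for non-connected $3$-player games, and the games covered by \cref{thm:main} are only assumed to be pairwise-connected (\cref{def:pairwise}), which is much weaker than the connectivity used in \cite{DHVY17}. The GHZ game itself is pairwise-connected but not connected, and the paper's introduction is explicit that ``all existing techniques'' --- meaning exactly this conditioning framework --- failed to give effective bounds for $\mathsf{GHZ}^{\otimes n}$. The core obstruction you do not address is that after conditioning on $E$ in the $3$-player setting, the three players' inputs become correlated in a way that the correlated-sampling machinery (Holenstein's lemma) does not decorrelate when $\mu$ is not connected; this is the very difficulty that motivated \cite{HR20,GHMRZ21,GHMRZ22,GMRZ22,BKM22} to take a different route. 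The paper you are proving a theorem of likewise abandons this template entirely: instead of conditioning, it arithmetizes the $n$-round win probability directly via \cref{lemma:identity} as $\E_{S\sim\Z_m^n}\E_{\mu^{\otimes n}}[F_S G_S H_S T_S]$, and then proves exponential decay of the $3$-wise correlation at the $n$-round level by an SVD-based tensorization induction (\cref{thm:nocomplex}, \cref{thm:highmodest}) combined with path tricks, symbol merging, and --- in the embeddable case --- a separate additive-combinatorial theorem (\cref{thm:additive}) in the spirit of \cite{BKM22}.

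There is a second gap in your Step 3--4. You argue that a large triple correlation $\E_{\mu'}[\a(x)\b(y)\c(z)]$ forces a finite cyclic embedding of $\mu$ and that this contradicts the hypothesis. But the hypothesis is only that $\mu$ has no $\Z$-embeddings; finite cyclic embeddings can and do exist (again, GHZ: $\mu$ is the coset $\{x+y+z=0\}\subseteq\Z_2^3$, giving an obvious $\Z_2$-embedding, yet it has no $\Z$-embedding). So finding a finite-group embedding is not a contradiction. Your proposed remedy --- restricting to a fiber of the embedding and inducting --- is not a coherent strategy here: restricting to a fiber shrinks the support but does not reduce to a strictly simpler instance of the same problem, and you give no mechanism by which the induction terminates or by which the intermediate games retain the no-$\Z$-embedding / value-$<1$ invariants. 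The paper handles exactly this case (where the target tensor $T$ is embeddable over $\D$) by a genuinely different mechanism: it lifts the $\mathsf{XOR}$ constraint mod $p^k$ to a constraint mod $p^k N$ (\cref{subsec:barriers}), reduces via the master embedding and modest noise operator (\cref{sec:complex}) to the case $\supp(\mu)=\{x+y+z=0\}\subseteq\cA^3$ for a master group $\cA$, and then invokes Balog--Szemer\'edi--Gowers / Freiman-homomorphism structure (\cref{lemma:freiman}, \cref{lemma:ahom}) to violate the minimality of $N$. None of that is recoverable from the fiber-restriction idea as you have sketched it.
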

Beyond the result itself, our proof method shows a connection between parallel repetition, at least in the case of $\mathsf{XOR}$ games, and techniques/questions in additive combinatorics. Specifically, our proof leverages an analytical stability result from~\cite{BKM4} 
for $3$-wise correlations over pairwise-connected distributions with no $\Z$-embeddings. Thus, we (very speculatively) suspect that better understanding of these stability results (or extensions to higher arities $k > 3$) may lead to more general $3$-player or $k$-player parallel repetition results. 
We defer a more detailed discussion
of our techniques to~\cref{sec:overview}.

\subsection{Discussion of \texorpdfstring{$\Z$}{Z}-Embeddings and Pairwise-Connectivity}
\label{subsec:discussion}

Showing stronger stability results for $3$-wise correlations would lead to improvements for open problems in additive combinatorics.
This presents a sort of barrier for removing the no $\Z$-embedding assumption from $\mu$.
The reason we need to assume that $\mu$ has no $\Z$-embedding is because our proof relies on the stability result in \cite{BKM4}, 
which is currently known to hold 
only for distributions $\mu$ with no $\Z$-embedding.
We will see the reason more precisely in \cref{subsec:master1}. We remark
though that at the current state of
affairs, the assumption that $\mu$ 
has no $\mathbb{Z}$-embeddings is 
more to be seen as a technical barrier; for all we know, analogous
stability results should hold even if
$\mu$ is just assumed to be pairwise
connected, and it stands to reason
such results would translate to effective parallel repetition 
theorems using ideas form the current
paper.

We suspect that removing the pairwise-connectedness assumption on $\mu$ 
will be much more challenging, though. In this case, it is not
even clear what the stability result for non pairwise-connected distributions $\mu$ should even be. Moreover, proving such a stability result
with effective parameters will 
lead to the first effective bounds 
for the density Hales-Jewett theorem
for combinatorial lines of length $3$,\footnote{To see this, one takes the distribution $\mu$ to be supported on $\{(0,0,0),(1,1,1),(2,2,2),(0,1,2)\}$. The fact that a subset $A\subseteq \{0,1,2\}^n$ contains
no combinatorial lines can then be phrased as $\E_{(x,y,z)\sim \mu^{\otimes n}}[1_A(x)1_A(y)1_A(z)] = o(1)$, and an effective inverse 
theorem would then give a structural result about $A$ which 
may yield to a density increment argument as in~\cite{BKM5}.} 
resolving a major open problem in 
extremal combinatorics.

\section{Overview of Proof of \texorpdfstring{\cref{thm:main}}{thmmain}}
\label{sec:overview}
In this section we give an overview
of the proof of \cref{thm:main}.
\subsection{High-Level Structure of Argument}
\label{overview:highlevel}
By the fundamental theorem of finite Abelian groups, to show \cref{thm:main} it suffices to consider the case where $\cA$ is a cyclic group of prime power order, say $\cA = \Z_{m}$ where $m=p^k$ for a prime $p$. For intuition throughout this section, the reader can think of the game as being over $\Z_2$, though all our proofs work over other cyclic groups. Even if the original game is over $\Z_2$, our proofs ultimately require defining functions over larger cyclic groups (e.g., $\Z_{2^k}$), just as the analysis of the GHZ game of \cite{BKM22} looked at functions over $\Z_4$.

\subsubsection{Step 1: Reducing to an Analytic Statement and Nonembeddable Targets}
The first step of the proof is to arithmetize the probability that a set of strategies $f, g, h$ win $\cG^{\otimes n}$ as an analytic quantity. Throughout, let $\omega$ be the complex $m$th root of unity.
\begin{lemma}\label{lemma:identity}
Let $f: \A^n \to \Z_m^n, g: \B^n \to \Z_m^n, h: \C^n \to \Z_m^n$ be strategies. For a vector $S \in \Z_m^n$ define $F_S(x) = \omega^{\sum_{i\in [n]} S_i f(x)_i}$, and $G_S, H_S$ analogously. Define $T_S(x,y,z) = \omega^{-\sum_{i\in[n]} S_i t(x_i, y_i, z_i)}$. The winning probability of the strategies $f, g, h$ is exactly given by:
\begin{align*} \E_{S \sim \Z_m^n} \E_{(x,y,z)\sim\mu^{\otimes n}} \left[ F_S(x)G_S(y)H_S(z)T_S(x,y,z) \right].
\end{align*}
\end{lemma}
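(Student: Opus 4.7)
The plan is to prove the identity through a direct Fourier-style expansion of the winning indicator, using orthogonality of characters of $\Z_m$. The key scalar identity I will invoke is that for any $a \in \Z_m$,
\[
\E_{S \sim \Z_m}\left[\omega^{Sa}\right] = \mathbbm{1}[a=0],
\]
which follows from summing a geometric series of $m$th roots of unity. I will apply this coordinate-by-coordinate to the quantity $a_i := f(x)_i + g(y)_i + h(z)_i - t(x_i,y_i,z_i) \in \Z_m$, whose vanishing is precisely the event that the players win the $i$-th instance.

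Given a tuple $(x,y,z) \in (\A^n \times \B^n \times \C^n)$, the indicator that the strategies win all $n$ instances factors as $\prod_{i\in[n]} \mathbbm{1}[a_i = 0]$. Applying the orthogonality identity separately in each coordinate and then using independence of the $S_i$ to combine the $n$ expectations into a single expectation over $S \sim \Z_m^n$, I can rewrite this as
\[
\prod_{i\in[n]} \mathbbm{1}[a_i=0] \;=\; \E_{S \sim \Z_m^n}\left[\prod_{i\in[n]} \omega^{S_i a_i}\right].
\]
Expanding $\omega^{S_i a_i}$ according to the definition of $a_i$ and distributing the product over $i$ across the four summands gives exactly
\[
\prod_{i} \omega^{S_i f(x)_i} \cdot \prod_{i} \omega^{S_i g(y)_i} \cdot \prod_{i} \omega^{S_i h(z)_i} \cdot \prod_{i} \omega^{-S_i t(x_i,y_i,z_i)},
\]
which is $F_S(x)\, G_S(y)\, H_S(z)\, T_S(x,y,z)$ by the definitions in the statement.

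Finally, taking the expectation of both sides over $(x,y,z) \sim \mu^{\otimes n}$ gives the winning probability on the left, and by Fubini I can swap the order of the expectations over $S$ and over $(x,y,z)$ on the right, yielding the claimed formula. There is no real obstacle here: the lemma is a standard orthogonality-of-characters calculation, and the only thing to be careful about is bookkeeping the signs (the target $t$ is subtracted, which is why $T_S$ carries the factor $\omega^{-S_i t(\cdot)}$) and confirming that the complex conjugates are consistent with the convention that $\omega$ is a primitive $m$th root of unity.
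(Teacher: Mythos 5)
Your proof is correct and is essentially the same argument the paper has in mind (the paper only gives a one-line sketch saying to expand the definitions and reverse the order of expectations, which is exactly your orthogonality-of-characters computation). The calculation is complete and accurate; the parenthetical worry about complex conjugates is a non-issue, since the key identity $\E_{S\sim\Z_m}[\omega^{Sa}]=\mathbbm{1}[a=0]$ holds for any primitive $m$th root of unity without any conjugation entering.
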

\cref{lemma:identity} follows from simply expanding out the definitions of $F_S, G_S, H_S, T_S$ and reversing the order of the expectations.

From here, it is beneficial to consider $2$-player $\mathsf{XOR}$ games, and give a simple and direct proof of parallel repetition. For the following result, let $\D$ denote the unit disk in the complex plane, and $T(x, y) = \omega^{-t(x,y)}$.
\begin{lemma}[Two-dimensional case]
\label{lemma:2dcase}
For a \emph{connected} two player $\mathsf{XOR}$ game $\cG$ with distribution $\mu$ over $\A \times \B$ and target function $t: \supp(\mu) \to \Z_m$ with value less than $1$, there is constant $c := c(\cG,\mu) > 0$ such that for any $F: \A^n \to \D$, $G: \B^n \to \D$ we have
\[ \left|\E_{(x,y) \sim \mu^{\otimes n}}\left[F(x)G(y) \prod_{i=1}^n T(x_i,y_i) \right] \right| \le 2^{-cn}. \]
\end{lemma}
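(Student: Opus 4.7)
The plan is to express the one-instance expectation as an inner product against a linear operator and exploit the fact that operator norms tensorize. Let $U : L^2(\A, \mu_\A) \to L^2(\B, \mu_\B)$ be defined by $(UF)(y) := \E_{x \sim \mu_{\cdot \mid y}}[F(x) T(x,y)]$, so that
\[ \E_{(x,y)\sim \mu}\bigl[F(x) G(y) T(x,y)\bigr] = \langle UF, \overline{G} \rangle_{L^2(\mu_\B)}. \]
Since $|F|,|G| \le 1$ yields $\norm{F}_{L^2(\mu_\A)}, \norm{G}_{L^2(\mu_\B)} \le 1$, Cauchy--Schwarz bounds the one-instance quantity by $\norm{U}_{op}$. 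For the $n$-fold version the analogous operator is $U^{\otimes n}$ acting on $L^2(\A^n, \mu_\A^{\otimes n})$, and $\norm{U^{\otimes n}}_{op} = \norm{U}_{op}^n$ for bounded operators on Hilbert spaces. Thus \cref{lemma:2dcase} reduces to showing $\norm{U}_{op} \le 1 - c$ for some $c = c(\cG, \mu) > 0$.

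A pointwise Cauchy--Schwarz gives $\norm{UF}_{L^2(\mu_\B)} \le \norm{F}_{L^2(\mu_\A)}$, so $\norm{U}_{op} \le 1$; I rule out equality by contradiction. If $\norm{U}_{op} = 1$, finite-dimensional compactness produces $F$ with $\norm{F}_{L^2(\mu_\A)} = 1$ and $\norm{UF}_{L^2(\mu_\B)} = 1$. The Cauchy--Schwarz equality case forces, for each $y \in \supp(\mu_\B)$, the map $x \mapsto F(x) T(x,y)$ to be constant on $\supp(\mu_{\cdot \mid y})$; call this value $c(y)$. Since $|T| \equiv 1$, this gives $|F(x)| = |c(y)|$ whenever $(x,y) \in \supp(\mu)$, and connectedness of the bipartite graph $\supp(\mu)$ forces $|F|$ to be a single constant on $\A$, pinned to $1$ by $\norm{F}_{L^2(\mu_\A)} = 1$. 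Writing $F(x) = e^{i\theta(x)}$ and $c(y) = e^{i\phi(y)}$ then yields the continuous constraint
\[ \theta(x) - \phi(y) \equiv \tfrac{2\pi}{m}\, t(x,y) \pmod{2\pi} \qquad \text{for all } (x,y) \in \supp(\mu). \]

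The final step quantizes this continuous solution into an exact $\Z_m$-valued strategy for $\cG$, contradicting $\val(\cG) < 1$. Using the gauge freedom $\theta \mapsto \theta - \alpha$, $\phi \mapsto \phi - \alpha$ I may assume $\theta(x_0) \equiv 0$ for a fixed $x_0 \in \A$. For any $y$ with $(x_0, y) \in \supp(\mu)$ the constraint forces $\phi(y) \in \tfrac{2\pi}{m} \Z$; for any $x$ with $(x, y) \in \supp(\mu)$ it then forces $\theta(x) \in \tfrac{2\pi}{m} \Z$; iterating along paths in the connected bipartite graph $\supp(\mu)$ propagates this to every vertex. Setting $f(x) := \tfrac{m}{2\pi}\theta(x) \bmod m$ and $g(y) := -\tfrac{m}{2\pi}\phi(y) \bmod m$ then produces $f(x) + g(y) \equiv t(x,y) \pmod m$ on $\supp(\mu)$, i.e.\ a perfect classical strategy for $\cG$.

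The main obstacle is this last step, where Cauchy--Schwarz equality must be combined with connectedness of $\supp(\mu)$ and with the $\Z_m$-valuedness of $t$ to cross from the soft analytic statement ``an extremizer exists with constant modulus $1$'' to the rigid combinatorial conclusion ``a perfect $\Z_m$-valued strategy exists''. The other pieces -- introducing $U$, tensorizing the operator norm, and the pointwise Cauchy--Schwarz bound $\norm{U}_{op} \le 1$ -- are essentially automatic.
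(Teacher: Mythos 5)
Your proposal is correct and takes essentially the same approach as the paper: both bound the $n$-fold expectation by the $n$-th power of a singular value / operator norm (the paper via $\mD_x^{-1/2} T \mD_y^{-1/2}$, you via the operator $U$), use Cauchy--Schwarz to show this norm is at most $1$, and then use connectivity of $\supp(\mu)$ plus $|T|\equiv 1$ in the equality case to extract a perfect $\Z_m$-strategy, contradicting $\val(\cG)<1$. Your quantization step (gauge-fixing $\theta(x_0)=0$ and propagating along paths) is just a more explicit rendering of the paper's one-line ``by shifting $a(x),b(y)$ to be multiples of $1/m$.''
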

The intuition is that the quantity we wish to bound is an inner product of $F, G$ with respect to $T^{\otimes n}$ (where $T$ is thought of as a matrix), and we know that singular value of matrices tensorize.
\begin{proof}
Let $\mD_x = \diag(\mu_x)$ and $\mD_y = \diag(\mu_y)$. Note that $\|(\mD_x^{1/2})^{\otimes n} F\|_2, \|(\mD_y^{1/2})^{\otimes n} G\|_2 \le 1$. Thus, it suffices to show that for all $A: \A \to \D$ and $B: \B \to \D$ with $\|\mD_x^{1/2}A\|_2 = \|\mD_y^{1/2}B\|_2 = 1$, that
\[ \left| \E_{(x,y) \sim \mu}[A(x)B(y)T(x,y)]\right| < 1. \]
This implies then that the maximum singular value of $\mD_x^{-1/2}T \mD_y^{-1/2}$ is less than $1$, and the lemma follows because the maximum singular value tensorizes.
Note that by the Cauchy-Schwarz inequality,
\[ \left| \E_{(x,y) \sim \mu}[A(x)B(y)T(x,y)]\right| \le \|\mD_x^{1/2}A\|_2 \|\mD_y^{1/2}B\|_2 = 1, \]
with equality only if $|A(x)| = |B(y)|$ and $A(x)B(y)T(x,y) = 1$ for all $(x, y) \in \supp(\mu)$. By connectivity of $\mu$, the former condition and $\|\mD_x^{1/2}A\|_2 = 1$ implies that $|A(x)| = 1$ for all $x \in \A$, and $|B(y)| = 1$ for all $y \in \B$. Let $A(x) = e^{2\pi i a(x)}, B(y) = e^{2\pi i b(y)}$. Then the latter condition gives that $a(x) + b(y) = t(x,y)/m \pmod{1}$ for all $(x, y) \in \supp(\mu)$. By shifting $a(x), b(y)$ to be multiples of $1/m$, this implies that $t$ has value $1$, a contradiction.
\end{proof}
This implies parallel repetition for $2$-player $\mathsf{XOR}$ games. Indeed, we can assume the game is connected. Using \cref{lemma:2dcase}, for any $S \in \Z_m^n$ we have $|\E_{(x,y)\sim\mu^{\otimes n}}[F_S(x) G_S(y) T_S(x,y)]| \le 2^{-c|S|}$, and as $|S| \ge \Omega(n)$ on average over the choice $S\sim \mathbb{Z}_m^n$ we get from  \cref{lemma:identity} an exponential decay for the value of the $2$-player $\textsf{XOR}$ game.

\skipi

Can we carry out such an argument for $3$-player games? The first piece we would need is an analogue of \cref{lemma:2dcase}. In fact, the na\"{i}ve generalization of \cref{lemma:2dcase} to $3$-functions is \emph{false}, as we discuss in more detail in \cref{subsec:barriers}. The condition we actually require on $t$ and $T$ comes from looking at the end of the proof of \cref{lemma:2dcase}, where we argued that if $A(x)B(y) = T(x,y)$ for all $(x,y) \in \supp(\mu)$, then $t$ has a strategy of value $1$. This is false in the $3$-player case. Instead, for the $3$-player setting, the assumption we require on $T$ is exactly that there are no functions $A: \A \to \D$, $B: \B \to \D$, $C: \C \to \D$ such that $A(x)B(y)C(z)T(x,y,z) = 1$ for all $(x,y,z) \in \supp(\mu)$. In this case, we say that $T$ is \emph{nonembeddable over $\D$}. 
In the case that $T$ is nonembeddable over $\D$, we are able to show that this statement tensorizes by following the work of \cite{BKM2} (though some adaptations are needed to get a proper exponential decay):
\begin{restatable}{theorem}{nocomplex}
    \label{thm:nocomplex}
    Let $T:\supp(\mu) \to \D$ be pairwise-connected and nonembeddable over $\D$. Then there is a constant $c := c(T,\mu) > 0$ such that for all functions $F:\A^n \to \D, G: \B^n \to \D, H: \C^n \to \D$, 
    \[ \left|\E_{(x,y,z)\sim \mu^{\otimes n}}\left[F(x)G(y)H(z) \prod_{i=1}^n T(x_i,y_i,z_i) \right] \right| \le 2^{-cn}. \]
\end{restatable}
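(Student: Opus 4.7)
I would first establish by a compactness argument that
$$M_1 := \sup_{F:\A\to\D,\,G:\B\to\D,\,H:\C\to\D}\left|\E_{(x,y,z)\sim\mu}[F(x)G(y)H(z)T(x,y,z)]\right| \le 1 - \delta_0$$
for some $\delta_0 = \delta_0(T,\mu) > 0$. Since $\D^{|\A|}\times\D^{|\B|}\times\D^{|\C|}$ is compact, the supremum is attained; if it equaled $1$, three successive Cauchy--Schwarz applications (one per player) would force $|F|=|G|=|H|=|T|=1$ on the relevant marginal supports, with pairwise-connectivity of $\mu$ propagating this throughout, and the resulting pointwise identity $F(x)G(y)H(z)T(x,y,z) = 1$ on $\supp(\mu)$ would be a $\D$-embedding of $T$, contradicting nonembeddability.

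\textbf{Cauchy--Schwarz reduction to a two-player game.} A naive fix-a-coordinate induction only yields $M_n \le M_{n-1}$, so I would amplify by Cauchy--Schwarz in one of the players, say $x$:
$$M_n^2 \le \E_{x\sim\mu_x}\,\E_{(y,z),(y',z')\mid x}\!\left[G(y)\bar G(y')H(z)\bar H(z')\, T^{\otimes n}(x,y,z)\bar T^{\otimes n}(x,y',z')\right],$$
where $(y,z)$ and $(y',z')$ are conditionally independent given $x$. Since the inner product factors across coordinates and the $x_i$'s are independent under $\mu^{\otimes n}$, I can integrate out the $x$-player coordinate-by-coordinate to obtain a two-player quantity on a distribution $\tilde\mu$ over pairs $((y,y'),(z,z'))$ with $\D$-valued target
$$\tilde T(y,y',z,z') := \E_{x\mid y,y',z,z'}\!\left[T(x,y,z)\bar T(x,y',z')\right],$$
giving $M_n^2 \le \sup_{G^*,H^*\to\D}\bigl|\E_{\tilde\mu^{\otimes n}}[G^*(y,y')H^*(z,z')\tilde T^{\otimes n}]\bigr|$. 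To this two-player problem I would apply the argument of \cref{lemma:2dcase}, which extends essentially verbatim to general $\D$-valued targets on pairwise-connected distributions, obtaining exponential decay $M_n^2 \le 2^{-cn}$ provided the corresponding one-shot value is strictly less than~$1$.

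\textbf{Main obstacle: nonembeddability of $\tilde T$.} The crux is to verify that the two-player game $(\tilde\mu,\tilde T)$ admits no $\D$-embedding. Suppose $G^*(y,y')H^*(z,z')\tilde T(y,y',z,z') = 1$ on $\supp(\tilde\mu)$. Since $|\tilde T|\le 1$ with equality only when $T(x,y,z)\bar T(x,y',z')$ is constant in $x$ on the conditional support, one obtains
$$\frac{T(x,y,z)}{T(x_0,y,z)} = \frac{T(x,y',z')}{T(x_0,y',z')}$$
for all $x,x_0$ in the conditional support. Using pairwise-connectivity of $\mu$, this ratio propagates to be independent of $(y,z)$, yielding a factorization $T(x,y,z) = F(x)K(y,z)$ on $\supp(\mu)$ for some $F:\A\to\D$ and $K:\B\times\C\to\D$. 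Substituting back into the embedding relation and specializing at a fixed base point $(y_0,z_0)$ further factorizes $K(y,z) = G(y)H(z)$, so $T(x,y,z)=F(x)G(y)H(z)$ is a $\D$-embedding of $T$, contradicting the hypothesis. The main technical difficulty lies in executing this propagation rigorously under the weak pairwise-connectivity assumption (rather than, say, the stronger no-$\Z$-embedding hypothesis), and in tracking quantitative bounds so as to convert the above qualitative gap into a proper exponential decay --- exactly the ``adaptation'' beyond the framework of \cite{BKM2} alluded to in the theorem statement.
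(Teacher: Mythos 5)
Your approach is genuinely different from the paper's: you apply a single Cauchy--Schwarz in the $x$-player, integrate out $x$ entirely, and aim to finish via the two-player lemma (\cref{lemma:2dcase}) applied to the game $(\tilde\mu,\tilde T)$ on $(\B^2)\times(\C^2)$. The paper never collapses to a two-player game; instead it applies \emph{iterated} Cauchy--Schwarz in the form of path tricks to the \emph{distribution} (\cref{lemma:pathtrick}), merges symbols (\cref{lemma:merging}), establishes a ``relaxed base case'' on a restricted class of functions (\cref{def:relaxed}, \cref{lemma:relaxed}), and then runs an SVD-based induction on $n$ with an Efron--Stein decomposition into effective-degree parts (\cref{sec:induction}, \cref{thm:highdegree}). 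These are not cosmetic differences: the paper's whole reduction is designed to sidestep exactly the obstruction that your single-Cauchy--Schwarz route runs into.

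\textbf{The gap.} The decisive step in your proposal is the claim that nonembeddability of $T$ over $\supp(\mu)$ implies nonembeddability of $\tilde T$ over $\supp(\tilde\mu)$, so that $\|\tilde T\|_{\mathrm{op}}<1$ and the two-player argument can take over. You derive, from $|\tilde T|=1$, the ratio identity $T(x,y,z)/T(x_0,y,z)=T(x,y',z')/T(x_0,y',z')$ for $x,x_0$ in the \emph{conditional} support of $x$ given $(y,y',z,z')$, and then assert that pairwise-connectivity ``propagates'' this to a factorization $T=F(x)K(y,z)$. This step does not follow. The ratio identity is \emph{vacuous} whenever the conditional support of $x$ is a singleton; it constrains only pairs $(y,z),(y',z')$ that share two or more compatible $x$'s; and the ``propagation'' you invoke would need to chain such constraints across a graph on $\B\times\C$ that is \emph{not} guaranteed to be connected by pairwise-connectivity of $\mu$ (which concerns the bipartite $(x,y)$, $(y,z)$, $(x,z)$ graphs, not the ``shared-$x$'' graph on $\B\times\C$). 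Separately, even granting the factorization $T=FK$, the further step $K(y,z)=G(y)H(z)$ again follows only for $(y,z)$ adjacent (in the shared-$x$ sense) to a base point $(y_0,z_0)$, and once more needs an unsupplied patching argument. Finally, $\tilde\mu$ itself is typically disconnected (e.g.\ for a GHZ-like $\mu$ it splits into components indexed by $y'-y=z-z'$), so \cref{lemma:2dcase} does not literally apply without further work. This is precisely the ``Horn-SAT obstruction'' phenomenon the paper devotes \cref{subsec:relaxed} to; it cannot be dismissed as a technicality. The paper's path tricks enlarge the distribution until $(y,z)$ has full product support and the merged $\A'$ is tightly controlled, and only then does a base case (restricted to functions with nontrivial effective variance on $\A'$) hold, and even then only a sub-$1$ constant per inductive step, handled by the SVD/Efron--Stein machinery rather than by a one-shot operator-norm tensorization.

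In short, the ``main obstacle'' you flag at the end is not merely a matter of making the propagation rigorous: there is no reason for the implication ``$T$ nonembeddable $\Rightarrow \tilde T$ nonembeddable'' to hold under pairwise-connectivity alone, and the two-player reduction also loses connectivity. You would need, at minimum, to perform the distribution-enlarging path tricks of \cref{subsec:pathtrick} \emph{before} any Cauchy--Schwarz that collapses a player; but at that point you have essentially reconstructed the paper's strategy.
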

The proof of \cref{thm:nocomplex} is substantially more challenging than its two-dimensional counterpart stated in~\cref{lemma:2dcase}, and we overview the ideas in \cref{overview:analytic}.
Combining this with \cref{lemma:identity} shows that for target functions $t$ that are nonembeddable over $\D$, parallel repetition with exponential decay holds.

\subsubsection{Step 2: Barriers to Analytic Argument, and Linear Structure}
\label{subsec:barriers}
Next, we discuss a case where $T$ is embeddable over $\D$, but the game $\cG$ still has value less than $1$. In this case, we are not able to 
apply \cref{thm:nocomplex}, and
we must resort to a different set of techniques.  

The most natural example of a 3-$\textsf{XOR}$ game in this case is the $\textsf{GHZ}$ game, defined as follows. The distribution $\mu$ over $\Z_2^3$ is uniform over $\{(x,y,z) : x+y+z=0 \},$ and strategies $f, g, h$ win on $x, y, z$ if $f(x) + g(y) + h(z) = x \vee y \vee z \pmod{2}.$ It is easy to check that the value of this game is $3/4$. The tensor $T$ defined by $T(x, y, z) = (-1)^{x \vee y \vee z}$, and has values \[ T(0, 0, 0) = 1, T(1, 1, 0) = -1, T(1, 0, 1) = -1, T(0, 1, 1) = -1. \] This is embeddable over $\D$ by taking $A = B = C$, and $A(x) = i^x = e^{\frac12\pi i \cdot x}$. Another way of viewing this is that
\[ \frac12(x+y+z) = x \vee y \vee z = t(x,y,z) \pmod{2} \enspace \forall \enspace (x, y, z) \in \supp(\mu). \]
Thus, $2t(x, y, z) = x+y+z \pmod{4}$. Thus, the winning condition $f(x) + g(y) + h(z) = t(x, y, z) \pmod{2}$ can be rewritten $(2f(x) - x) + (2g(y) - y) + (2h(z) - z) = 0 \pmod{4}$. This is the key observation in~\cite{BKM22}, 
and combining it with powerful tools from additive combinatorics (which we overview in \cref{overview:additive}) yields exponential decay for  parallel repetition of the $\textsf{GHZ}$ game. 

Our argument in the general case 
that $T$ is embeddable follows a
similar strategy and uses additive
combinatorics as well. Note that if $A(x)B(y)C(z)T(x,y,z) = 1$ for \[ A(x) = e^{2\pi i a'(x)}, B(y) = e^{2\pi i b'(y)}, C(z) = e^{2\pi i c'(z)}, T(x,y,z) = e^{-2\pi i \cdot t(x,y,z) / m}, \] 
then we get that $a'(x) + b'(y) + c'(z) = t(x,y,z)/m \pmod{1}$. Assuming that $a(x), b(y), c(z)$ are all rational with denominator dividing $Nm < \infty$ (which we can assume) we get that (writing $a'(x) = a(x)/(Nm)$ and similar for $b'(y), c'(z)$),
\[ a(x) + b(y) + c(z) = N \cdot t(x, y, z) \pmod{Nm}, \] for functions $a: \A \to \Z_{Nm}, b: \B \to \Z_{Nm}, c: \C \to \Z_{Nm}$. Thus the winning condition $f(x) + g(y) + h(z) = t(x, y, z) \pmod{m}$ can be rewritten as:
\[ (N \cdot f(x) - a(x)) + (N \cdot g(y) - b(y)) + (N \cdot h(z) - c(z)) = 0 \pmod{Nm}. \]
This is a very similar setup to the $\textsf{GHZ}$ game we described above, with one critical difference. In the $\textsf{GHZ}$ game, the distribution $\mu$ itself was a subspace of $\Z_2^3$ consisting of points with sum $0$, while our input distribution $\mu$ is still somewhat arbitrary. This arbitrariness prevents us from applying tools 
from additive combinatorics, and 
to remedy the situation, 
we must gain more control over the 
structure of the distribution $\mu$. In the next section, we discuss how to use ideas from \cite{BKM4} to reduce to the case where $\mu$ is uniform over $\{(x,y,z) \in \cA^3 : x+y+z = 0 \}$, for some (possibly very large) finite Abelian group $\cA$.

\subsubsection{Step 3: The Master Embedding and Reducing to Functions Constant on the Master Embedding}
\label{subsec:master1}
The group $\cA$ discussed at the end of the previous section will be called the \emph{master group}, which we define now. Fix a sufficient large integer $r$, and let $(\a_1,\b_1,\c_1), \dots, (\a_s,\b_s,\c_s)$ be all embeddings of $\mu$ into cyclic groups $\cA_1, \dots, \cA_s$ of (prime power) order at most $r$, with $\a_i(x^*) = \b_i(y^*) = \c_i(z^*) = 0$ for some fixed $x^* \in \A, y^* \in \B, z^* \in \C$, which we can assume by shifting each embedding.
Then, we define the master embedding as $\ama(x) := (\a_1(x), \dots, \a_s(x))$ (\cref{def:master}), and  set $\cA \subseteq \cA_1 \times \dots \times \cA_s$ as the subgroup generated by $\{\ama(x) \colon x \in \A \}$. The key property of $\ama$ and $\cA$ is that if $\mu$ has no $\Z$-embeddings, then for $r$ sufficiently large the following holds. For all $x, x' \in \A$, $\a(x) = \a(x')$ for all embeddings $\a$ if and only if $\ama(x) = \ama(x')$ (\cref{lemma:same}), and analogously for $\bma, \cma$.

Recall that at the end of \cref{subsec:barriers}, we rewrote the winning strategy of the game as $\tilde{f}(x) + \tilde{g}(y) + \tilde{h}(z) = 0$ for $(x, y, z) \in \supp(\mu)$, where $\tilde{f}(x) = N \cdot f(x) - a(x)$, $\tilde{g}(y) = N \cdot g(y) - b(y)$, and $\tilde{h}(z) = N \cdot h(z) - c(z)$. Thus, $f, g, h$ win the game exactly when $\tilde{f}, \tilde{g}, \tilde{h}$ act like an embedding on $\mu$. 
The general philosophy of the master
embedding asserts that to get such 
behaviour, the values of the functions $\tilde{f}(x)$, 
$\tilde{g}(y)$ and $\tilde{h}(z)$ must 
morally depend only on $\ama(x)$, 
$\bma(y)$ and $\cma(z)$ respectively.
Indeed, we are able to transform the functions $\tilde{f}$, $\tilde{g}$ and $\tilde{h}$ to functions with
similar winning probabilities 
that are constant on the master embeddings. Abusing of notations 
we assume that $\tilde{f}$, $\tilde{g}$ and $\tilde{h}$ are 
such functions to begin with, namely $\tilde{f}(x) = \tilde{f}(x')$ if $\ama(x) = \ama(x')$, and analogously for $\tilde{g}$ and $\tilde{h}$. This way, regardless of the original distribution $\mu$, we can imagine that we are working with the
set of challenges $\{(\ama(x),\bma(y),\cma(z))~|~(x,y,z)\in \supp(\mu)\}$, and by 
changing variables we may assume that
to begin with we have that
\[ \supp(\mu) \subseteq \{(x, y, z) \in \cA \times \cA \times \cA \colon x+y+z = 0\}, \] 
(where $x+y+z = 0$ because the master embeddings satisfies this equation). Formalizing this intuition is much more challenging, and is the main content of \cref{sec:complex}. In particular, it requires a version of our analytic result \cref{thm:nocomplex} in this setting (\cref{thm:highmodest}). We remark that a version of this was shown in \cite{BKM4}, but in order to achieve our tighter quantitative bounds of $\exp(-cn)$ for parallel repetition, we need to give a tighter analysis and a slightly different proof adapted to our setting.

In the above paragraph we gave intuition on how to reduce to the case where $\supp(\mu)$ is a subset of $(x, y, z) \in \cA \times \cA \times \cA$ with $x+y+z = 0$. However, we actually wish to have $\supp(\mu)$ to be exactly this set. To achieve this, we use path tricks (discussed below in \cref{overview:analytic}) to enlarge the input alphabets and distribution $\mu$, without changing the master group $\cA$, and without reducing the winning probability of the game by much. This idea of using path tricks to enlarge the distribution was exactly used in \cite{BKM4} in the setting of showing analytic theorems, and an identical approach works for parallel repetition too.

\subsubsection{Step 4: Applying an Additive Combinatorics Argument}
\label{overview:additive}
Finally, we use techniques from additive combinatorics to solve the case where $\supp(\mu)$ is exactly $\{(x,y,z) \in \cA \times \cA \times \cA \colon x+y+z = 0\}.$ Our main theorem is the following.
\begin{restatable}{theorem}{additive}
\label{thm:additive}
Let $\cA$ be a finite abelian group and $t: \cA^3 \to \Z_{p^k}$. Let $N > 1$ be minimal so that there are functions $a, b, c: \cA \to \Z_{p^k N}$ with $a(x) + b(y) + c(z) = N \cdot t(x,y,z) \pmod{p^k N}$ for all $x + y + z = 0$. Then there is a constant $c = c(t,\cA) > 0$ such that for any $f, g, h: \cA^n \to \Z_{p^k}$,
\[ \Pr_{\substack{(x,y,z) \in (\cA^n)^3 \\ x+y+z=0}}[f(x)_i + g(y)_i + h(z)_i = t(x_i,y_i,z_i) \pmod{p^k}, i = 1, 2, \dots, n] \le 2^{-cn}. \]
\end{restatable}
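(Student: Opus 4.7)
My plan is to follow the strategy used in~\cite{BKM22} for the $\mathsf{GHZ}$ game, suitably generalized to the lifted embedding $(a,b,c)$ of modulus $p^k N$ considered here. The proof has three components: an arithmetization that converts the winning probability into an average of three-wise Fourier correlations; a reduction from arbitrary Fourier coefficients to ``generic'' ones supported on a linear number of coordinates; and a pointwise decay bound on these generic coefficients established by additive combinatorics.

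For the arithmetization, let $M := p^k N$ and $\omega := e^{2\pi i/M}$. Using $(a,b,c)$, define the lifted strategies $\tilde{f}(x)_i := N\cdot f(x)_i - a(x_i)$, $\tilde{g}(y)_i := N\cdot g(y)_i - b(y_i)$, $\tilde{h}(z)_i := N\cdot h(z)_i - c(z_i)\pmod{M}$. The identity $a(x)+b(y)+c(z) = N\cdot t(x,y,z)$ makes winning at coordinate $i$ equivalent to $\tilde{f}(x)_i + \tilde{g}(y)_i + \tilde{h}(z)_i \equiv 0\pmod{M}$, and Fourier inversion on $\Z_M^n$ writes the winning probability as
\[ \E_{\xi \in \Z_M^n}\,\E_{x+y+z=0}\bigl[F_\xi(x)\,G_\xi(y)\,H_\xi(z)\bigr],\qquad F_\xi(x) := \omega^{\langle \xi,\,\tilde{f}(x)\rangle}, \]
with $G_\xi, H_\xi$ defined analogously. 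Call a coordinate $i$ \emph{active} for $\xi$ if $\xi_i \notin p^k\Z_M$; at inactive coordinates, $N\xi_i \equiv 0 \pmod M$, so $F_\xi$ depends only on $(a,b,c)$ and not on $(f,g,h)$. A random $\xi$ has $m(\xi) \ge (1-p^{-k})n - O(\sqrt{n})$ active coordinates except with probability $2^{-\Omega(n)}$, so it suffices to prove the uniform pointwise bound
\[ \Bigl|\E_{x+y+z=0,\,(x,y,z)\in(\cA^m)^3}\bigl[F_\xi(x)\,G_\xi(y)\,H_\xi(z)\bigr]\Bigr| \le 2^{-c_0\, m} \]
for every $m \ge 1$ and every $\xi \in (\Z_M\setminus p^k\Z_M)^m$, uniformly in $f,g,h$.

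The pointwise decay is proved by contradiction. Assume it fails with $\delta = 2^{-c_0 m}$. A double application of Cauchy--Schwarz in $y$ and $z$ (using $x = -y-z$) bounds $\delta^4$ by $\|F_\xi\|_{U^2}^4\,\|G_\xi\|_{U^2}^4$ over $\cA^m$; by Plancherel, after a symmetrizing round, one of $F_\xi, G_\xi, H_\xi$ correlates with a character $\chi(x) = \omega^{\langle\eta,x\rangle}$ of $\cA^m$ at rate $\ge \delta^{O(1)}$. Writing $F_\xi(x) = \omega^{N\langle\xi,f(x)\rangle}\prod_i\omega^{-\xi_i a(x_i)}$, this correlation, together with further Cauchy--Schwarz rounds that couple all three players along the zero-sum support and with the inverse tools of~\cite{Gowers} to upgrade approximate to exact structure, produces coordinate-wise maps $a^\star, b^\star, c^\star:\cA\to \Z_{p^k N^\star}$ with $N^\star < N$ satisfying $a^\star(x) + b^\star(y) + c^\star(z) = N^\star \cdot t(x,y,z)\pmod{p^k N^\star}$ for all $x+y+z=0$. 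This contradicts the minimality of $N$, proving the pointwise decay.

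The principal obstacle is this last step: converting the approximate additive-combinatorial structure of $F_\xi$, which a priori only reflects the function $f$, into an \emph{exact} embedding of $t$ at strictly smaller modulus $p^k N^\star$. Here the zero-sum structure of $\supp(\mu)$ is indispensable, since it allows us to pivot between the three players via Cauchy--Schwarz and to lift a character correlation on one player to compatible coordinate-wise maps for all three. Once this step is carried out quantitatively (using the inverse theorems of~\cite{Gowers}), summing the pointwise bound over $\xi$ and absorbing the $2^{-\Omega(n)}$ contribution from non-generic $\xi$ gives $\val(\cG^{\otimes n}) \le 2^{-cn}$.
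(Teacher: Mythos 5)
The paper's proof of \cref{thm:additive} does \emph{not} arithmetize via Fourier at all.  It works directly with the probability: assuming success probability $\ge\eps$, it applies a Balog--Szemer\'edi--Gowers type argument (\cref{lemma:freiman}, following~\cite[Section~2]{BKM22}) to get a dense set $\cA'\subseteq\cA^n$ on which $\tilde f = Nf-a$ is an order-$p^kN$ Freiman homomorphism, and then runs a sequence of structural lemmas (\cref{lemma:ahom,lemma:czero,lemma:coef,lemma:reduceN}) that use the Freiman structure, the fact that $\tilde f\equiv -a\pmod p$, and pigeonhole over the $p$-layer filtration of $\cA=\prod_i\Z_{p_i^{k_i}}$ to show the embedding $(a,b,c)$ can be corrected and $N$ divided by $p$, contradicting minimality.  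Your proposal instead passes to the Fourier side over $\Z_M^n$ ($M=p^kN$), which is a genuinely different route.  Unfortunately, it has two gaps.

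First, the reduction to the uniform pointwise bound ``for every $\xi\in(\Z_M\setminus p^k\Z_M)^m$'' is not valid.  The condition $\xi_i\notin p^k\Z_M$ (your ``active'' coordinates) only ensures $N\xi_i\not\equiv 0$, but it ignores which $p$-adic layer $\xi_i$ sits in.  If $\xi_i\in p^{k-1}\Z_M\setminus p^k\Z_M$ at every coordinate, a direct computation gives $\langle\xi,\tilde f(x)+\tilde g(y)+\tilde h(z)\rangle = p^{k+j-1}\langle \,\cdot\,, f(x)+g(y)+h(z)-t\rangle \pmod{M}$ (writing $N=p^j$), so $F_\xi G_\xi H_\xi$ only probes the strategies modulo $p$.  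The theorem's hypothesis rules out $N=1$, i.e.\ an embedding of $t$ modulo $p^k$, but it says nothing about whether $t\bmod p$ admits a perfect mod-$p$ strategy.  When it does, the ``active'' $\xi$ supported on the $p^{k-1}$ layer give correlation exactly $1$ for a coordinate-wise strategy $f,g,h$, so your claimed $2^{-c_0m}$ bound fails.  (Those $\xi$ still only make up an exponentially small fraction of $\Z_M^n$, which is why the \emph{averaged} theorem remains true; but the blanket pointwise statement you invoke is false.)  You would need to stratify by the minimal $p$-adic valuation of $\xi$ and give a separate argument at each level, which is substantially more work and not what you've sketched.  Relatedly, the factorization into active and inactive coordinates you lean on does not quite go through either, since $f(x)_i$ at an active coordinate can depend on inactive coordinates of $x$.

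Second, the step you flag as ``the principal obstacle'' — converting approximate $U^2$/character correlation into an exact embedding at strictly smaller modulus — is exactly where the paper's proof does its real work, and it is not an off-the-shelf inverse-theorem application.  The paper's \cref{lemma:ahom} shows that Freiman structure of $\tilde f$ forces $\bar a = a\bmod p$ to be affine on $\cA$; \cref{lemma:czero} then shows, by pigeonholing $\cA'$ into cosets indexed by a level set of the cyclic factors of $\cA$ and using that $\tilde f$ is an \emph{order} $p^kN$ Freiman homomorphism, that the affine coefficients in the ``small'' cyclic factors must vanish; \cref{lemma:coef} synchronizes the three players' coefficients; and \cref{lemma:reduceN} uses all three to produce an embedding with $N$ replaced by $N/p$.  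These steps rely crucially on the specific form $\tilde f = Nf-a$ with $f$ ranging over $\Z_{p^k}$ (so that $\tilde f\equiv -a\pmod p$), not just on $\tilde f$ being an arbitrary structured function, and nothing in your ``Cauchy--Schwarz rounds $+$ inverse theorems'' sketch supplies this.  Without it, what you extract is approximate character structure for the $\tilde f$'s, not a bona fide coordinate-wise embedding at a smaller modulus.
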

A simpler version was given for the GHZ game in \cite{BKM22}. At a high level (using the notation from the previous sections), the argument goes as follows. If $\tilde{f}(x) + \tilde{g}(y) + \tilde{h}(z) = 0$ with probability at least $\eps$ over $x+y+z = 0$, $(x, y, z) \in \cA \times \cA \times \cA$, then using tools from additive combinatorics such as the Balog-Szemeredi-Gowers lemma, we deduce significant structural facts about $\tilde{f}, \tilde{g}, \tilde{h} \colon \cA \to \Z_{Nm}$ (\cref{lemma:freiman}). We ultimately combine this with the fact that $\tilde{f}$ is not completely arbitrary, as it must take the form $\tilde{f}(x) = N \cdot f(x) - a(x)$, to arrive at a contradiction.

\subsection{Proof of Main Analytic Results}
\label{overview:analytic}

\subsubsection{Changing the distribution via random restrictions}
In this section overview the tools needed to show our analytic results, focusing on \cref{thm:nocomplex} 
(our other analytical result,~\cref{thm:highmodest} involves similar ideas).
A very useful trick throughout our proofs will be that we can alter the distribution over questions $\mu$ (without changing its support), at the cost of reducing the number of repetitions $n$ by a constant factor. For example, we can change the distribution to uniform over its support. Formally, we can show:
\begin{lemma}
\label{lemma:rr}
Let $\cG_{\mu}$ be a game with question distribution $\mu = \delta \mu' + (1-\delta)\nu$ for distributions $\mu', \nu$.
Let $\cG_{\mu'}$ be the game with the same questions over distribution $\mu'$. Then \[ \val(\cG_{\mu}^{\otimes n}) \le \val(\cG_{\mu'}^{\otimes \delta n/2}) + 2^{-\Omega(\delta n)}. \]
\end{lemma}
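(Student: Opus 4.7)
The plan is to couple each coordinate of $\cG_\mu^{\otimes n}$ with an independent Bernoulli that records whether the question comes from the $\mu'$ part or the $\nu$ part of the mixture, and then to condition on this labeling together with the $\nu$-coordinates in order to extract a strategy for the smaller repeated game over $\mu'$. Concretely, fix strategies $f,g,h$ achieving $\val(\cG_\mu^{\otimes n})$ and sample questions as follows: for each $i\in[n]$ independently, draw $\chi_i\sim\mathrm{Bernoulli}(\delta)$; if $\chi_i=1$ draw $(x_i,y_i,z_i)\sim\mu'$, otherwise draw it from $\nu$. Write $S=\{i:\chi_i=1\}$ and $m=\lceil\delta n/2\rceil$; the induced marginal distribution on $(x,y,z)$ is then exactly $\mu^{\otimes n}$.

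Next, I would condition on $S$ and on the coordinates $(x_{\bar S},y_{\bar S},z_{\bar S})$ outside of $S$. With these frozen, the maps $\tilde f(u)=f(u,x_{\bar S})_S$, $\tilde g(v)=g(v,y_{\bar S})_S$ and $\tilde h(w)=h(w,z_{\bar S})_S$ (where the subscript $S$ denotes projection onto the $S$-coordinates) depend only on each player's own inputs and so form a legal strategy triple for $\cG_{\mu'}^{\otimes|S|}$. Since winning all $n$ coordinates of $\cG_\mu^{\otimes n}$ certainly forces winning on every coordinate in $S$, the conditional winning probability is at most $\val(\cG_{\mu'}^{\otimes|S|})$. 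Moreover, $k\mapsto\val(\cG_{\mu'}^{\otimes k})$ is nonincreasing in $k$: given any strategy triple for $\cG_{\mu'}^{\otimes k'}$ with $k'>k$, one may average over the last $k'-k$ coordinates and pick the best fixing of them, yielding a deterministic triple for $\cG_{\mu'}^{\otimes k}$ with at least the original win probability. Hence $\val(\cG_{\mu'}^{\otimes|S|})\le\val(\cG_{\mu'}^{\otimes m})$ whenever $|S|\ge m$.

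Taking expectations and splitting on whether $|S|\ge m$ gives
\[
\val(\cG_\mu^{\otimes n})\;\le\;\E_S\bigl[\val(\cG_{\mu'}^{\otimes|S|})\bigr]\;\le\;\val(\cG_{\mu'}^{\otimes m})\;+\;\Pr\bigl[|S|<m\bigr].
\]
Since $|S|$ is $\mathrm{Binomial}(n,\delta)$-distributed, the multiplicative Chernoff bound yields $\Pr[|S|<\delta n/2]\le e^{-\delta n/8}=2^{-\Omega(\delta n)}$, which completes the argument. No real obstacle is expected: the proof is bookkeeping around a random-restriction coupling, and the only points that need verification are (i) that freezing the $\bar S$-coordinates turns $(f,g,h)$ into a legal strategy triple for the $|S|$-fold game over $\mu'$, and (ii) the monotonicity $\val(\cG_{\mu'}^{\otimes k'})\le\val(\cG_{\mu'}^{\otimes k})$ for $k'\ge k$, both of which are elementary.
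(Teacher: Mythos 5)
Your proof is correct and is essentially the same argument as the paper's: both use the standard random-restriction coupling where each coordinate is independently tagged as coming from $\mu'$ or $\nu$, condition on the tagging and the $\nu$-coordinates to extract a strategy for the smaller game over $\mu'$, invoke monotonicity of $k \mapsto \val(\cG_{\mu'}^{\otimes k})$, and finish with a Chernoff bound on the binomial tail. The only difference is presentational (you spell out the monotonicity step, which the paper leaves implicit).
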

\begin{proof}
Deferred to~\cref{subsec:rr}.
\end{proof}
\subsubsection{Path Tricks}
\label{overview:pathtrick}
A natural approach towards showing \cref{thm:nocomplex} is to induct on $n$, where the base case $n = 1$ follows exactly because $T$ is nonembeddable over $\D$. However, we don't know how to induct on such a base case. We instead first work towards establishing a base case for functions $F, G, H$ that are $\ell_2$-bounded, instead of $\ell_\infty$-bounded. To achieve this, the idea of \cite{BKM2} was to use a \emph{path trick} (\cref{subsec:pathtrick}), which changes the distribution $\mu$ to a larger distribution $\mu^+$ over $\A' \times \B \times \C$, where $\mu^+_{yz}$ has full support. In fact, we can assume the support is uniform, by applying the random restriction trick (\cref{lemma:rr}).

Formally, the path trick takes an integer $r$, sets $\A^+ \subseteq \A^{2^{r+1}-1}$ and for all $y_1, z_1, y_2, z_2, \dots, y_{2^r}, z_{2^r}$ such that $(y_i, z_i), (z_i, y_{i+1}) \in \supp(\mu_{yz})$, adds the following to $\supp(\mu^+)$. Let $(x_i, y_i, z_i) \in \supp(\mu)$ for $i = 1, \dots, 2^r$, and $(x_i', y_{i+1}, z_i) \in \supp(\mu)$ for $i = 1, 2, \dots, 2^r - 1$.
Then, \[ (\{x_1, x_1', x_2, \dots, x_{2^r}\}, y_1, z_{2^r}) \in \A^+ \times \B \times \C \] is added to $\supp(\mu^+)$.
For sufficiently larger $r$, if $\mu_{yz}$ was initially pairwise-connected, then $\mu'_{yz}$ has full support. Also, we can show that if there were functions $F, G, H$ with \[ \left|\E_{(x,y,z)\sim \mu^{\otimes n}}\left[F(x)G(y)H(z) \prod_{i=1}^n T(x_i,y_i,z_i) \right] \right| \ge \eps, \] then there are functions $\tilde{F}: (\A^+)^n \to \D$, $\tilde{G}: \B^n \to \D$, and $\tilde{H}: \C^n \to \D$ with
\[ \left|\E_{(x,y,z)\sim (\mu^+)^{\otimes n}}\left[\tilde{F}(x)\tilde{G}(y)\tilde{H}(z) \prod_{i=1}^n T^+(x_i,y_i,z_i) \right] \right| \ge \eps^{O_r(1)}, \]
for some tensor $T^+$ that is still nonembeddable over $\D$, i.e., we only incur polynomial loss. We prove this in \cref{lemma:pathtrick}.

At this point, it is worth discussing how the path trick formally helps with enlarging the distribution $\mu$, towards eventually ensuring that $\supp(\mu)$ is all $x+y+z = 0$ for $(x, y, z) \in \cA \times \cA \times \cA$. Towards this, we first calculate how embeddings evolve under the path trick. Given a embedding $(\a, \b, \c)$ of $\mu$, one can easily check that it uniquely induces a embedding on $\mu^+$ of the form $(\a^+, \b, \c)$, where
\[ \a^+(\{x_1, x_1', x_2, \dots, x_{2^r}\}) = \a(x_1) - \a(x_1') + \dots - \a(x_{2^r-1}') + \a(x_{2^r}). \]
Because $(\{x, x, \dots, x\}, y, z) \in \supp(\mu^+)$, and $\a^+(\{x, \dots, x\}) = \a(x)$, we know that the image of $\a^+$ on $\mu$ is nondecreasing. We can also show that by carefully applying a sequence of path tricks, that we can increase its size until the image is a subgroup (\cref{lemma:saturated}).

\subsubsection{Establishing an Inductive Hypothesis}
\label{overview:induct}
When $\mu_{yz}$ is a product distribution, we can establish a weak base case via the Cauchy-Schwarz inequality:
\begin{align} &\left|\E_{(x,y,z)\sim \mu}\left[F(x)G(y)H(z)T(x,y,z) \right] \right| \nonumber \\ \le ~&\left(\E_{(x,y,z)\sim \mu} |F(x)|^2 \E_{(x,y,z)\sim\mu} |G(y)|^2|H(z)|^2 \right)^{1/2} = \|F\|_2 \|G\|_2 \|H\|_2, \label{eq:overviewcs}
\end{align}
where the final equality critically used that $\mu_{yz}$ is a product distribution.
If the right hand side was instead $(1-\delta)\|F\|_2\|G\|_2\|H\|_2$, then we would be able to actually perform the induction on $n$ via singular value decomposition (SVD) on $F, G, H$, thinking of $F: \A^n \to \bbC$ as a matrix in $\bbC^{\A \times \A^{n-1}}$. Thus, we want to say something like: if $T$ is nonembeddable over $\D$, then equality in \eqref{eq:overviewcs} cannot hold.
Equality in \eqref{eq:overviewcs} holds only if $\bar{F(x)} = \theta G(y)H(z)T(x,y,z)$ for some $\theta \in \bbC$, for all $(x, y, z) \in \supp(\mu)$. This looks quite close to $T$ having an embedding over $\D$, except that $F(x), G(y), H(z)$ may be $0$.

In \cite{BKM2}, this issue (which they called the \emph{Horn-SAT obstruction}), was handled as follows. First, perform more path tricks to the distribution to transform it to a larger distribution (which we relabel $\mu$ for convenience) satisfying a \emph{relaxed base case} (see \cref{def:relaxed}). Informally, the distribution $\mu$ over $\A \times \B \times \C$ satisfies the relaxed base case, if for some subset $\A' \subseteq \A$, any function $F$ with nontrivial variance on $\A'$, and $G, H$, satisfy the base case, i.e.,
\[ \left|\E_{(x,y,z)\sim \mu}\left[F(x)G(y)H(z)T(x,y,z) \right]\right| \le (1-\delta)\|F\|_2 \|G\|_2 \|H\|_2. \]
We too need to establish a relaxed 
base case, and we do so by analogous 
ideas.

\subsubsection{High and Low Degree Parts}
 With the relaxed base case in hand, we are able to perform the induction on functions $F$ where during the SVD, the the resulting functions all have nontrivial variance on $\A'$. To handle general functions $F$, we decompose them via an Efron-Stein decomposition (\cref{subsec:es}), into functions that are have equal degrees on $\A'$. The reader can think of an Efron-Stein as a version of a Fourier transform, on a set $\A$ without additive structure. Just as with a Fourier transform, high-degree functions have high influence/variance on each coordinate. Thus, starting from a high-degree function, one show that while performing SVDs in the induction, it remains high-degree for many iterations, allowing us to perform the induction. This allows us to handle the high-degree parts of $F$.

For the lower-degree parts of $F$, their variance on $A'$ is low, so intuitively $F$ does not change much under the noise operator $\TT^{\eff}$ which mixes inputs in $A'$. Applying $\TT^{\eff}$ (and taking random restrictions) makes the input distribution symmetric over $A'$. This case ends up reducing to the two-variable case on $(y, z)$, where an exponential decay bound is known (see \cref{lemma:2dcase}).

\subsubsection{The Case with Complex Embeddings}
The above discussion was for showing \cref{thm:nocomplex}, when $T$ was nonembeddable over $\D$. As discussed in \cref{subsec:master1}, we require an alternate analytic theorem which reduces to the case where $\mu$ is uniform over $x+y+z = 0$, $(x, y, z) \in \cA \times \cA \times \cA$. We describe the idea for adapting to this case. The approach for this case is largely the same as for proving \cref{thm:nocomplex} as discussed above: establish a variant of the relaxed base case, and split into high and low-degree parts. Here, high and low-degree are with respect to the noise operator (which we call $\TT^{\modest})$ which mixes over $x, x'$ with the same master embedding, i.e., $\ama(x) = \ama(x')$. Once again, the contribution of the high-degree parts vanishes, and the low-degree parts intuitively correspond to applying $\TT^{\modest}$. This is exactly what we wanted to show: that $x, x'$ with $\ama(x) = \ama(x')$ are (approximately) indistinguishable from the perspective of strategies $f$, and thus we can reduce the input alphabets to $\cA$ themselves.

\section{Preliminaries}
\label{sec:prelim}

\subsection{General Notation}
\label{subsec:notation}
Let $[n] = \{1, \dots, n\}$. For a real number $x$ we write $\|x\|_{\R/\Z} := \min_{z \in \Z} |x-z|$. We use $a \ls b$ to denote $a = O(b)$ and $a \gs b$ to denote $a = \Omega(b)$. We let $\C$ denote complex numbers, and let $\D$ denote the unit disk. We let $\supp(\mu)$ denote the support of a distribution $\mu$.

\paragraph{Matrices.} We briefly use matrices in a few places, when we are working with ``two-player'' like games, such as merging symbols (\cref{lemma:merging}). We will denote matrices using boldface. For a matrix $\mA$, we let $\mA^{\otimes n}$ denote its $n$-th Kronecker power. For a vector $v$, we let $\diag(v)$ denote the diagonal matrix whose diagonal entries are $v_i$.

\subsection{Efron-Stein Decomposition}
\label{subsec:es}
We will use the standard Efron-Stein decomposition on the product space $L_2(\A^n, \mu_x^{\otimes n})$. For a function $F \in L_2(\A^n, \mu_x^{\otimes n})$ and $S \subseteq [n]$, we let $F^{=S}$ be the homogeneous function of degree $|S|$ that depends on the coordinates in $S$. Let $F^{=i} = \sum_{|S|=i} F^{=S}$.

\paragraph{Effective degree and influences.} In our relaxed base case (\cref{lemma:relaxed}), we will consider the variance on a subset $\A' \subseteq \A$.
Thus, we define \emph{effective} degrees and influences of a function $f$, which are the influence and degrees that $\A'$ contributes.

Create a orthonormal basis $B$ for $L_2(\A, \mu_x)$ of the form $B = B_1 \cup B_2$, where $B_1$ is an orthonormal basis for functions that are constant on $\A'$, and $B_2$ is an orthonormal basis for functions supported on $\A'$ that are orthogonal to the constant function.

For any function $F \in L_2(\A^n, \mu_x^{\otimes n})$ we can write $F = \sum_{\chi \in B^{\otimes n}} \hat{F}(\chi) \chi$, where $\hat{F}(\chi) = \l F, \chi \r$. Let $\chi_{\const}$ be the constant function on $\A$, which we assume is in $B_1$. For a character $\chi \in B^{\otimes n}$, we define its degree and effective degree as
\[ \deg(\chi) = |\{ i \in [n] : \chi_i \neq \chi_{\const}\}| \enspace \text{ and } \enspace \effdeg(\chi) = |\{i \in [n] : \chi_i \in B_2\}|. \]
Note that $\deg(\chi) \ge \effdeg(\chi)$.

This allows us to define the decomposition into effective degree homogeneous parts. Let
\[ F^{\eff= i} = \sum_{\chi\in B^{\otimes n}, \effdeg(\chi) = i} \hat{F}(\chi) \chi. \]

We now define the effective influence.
\[ I_{\eff}[F] = \sum_{i\in[n]} I_{i,\eff}[F] \enspace \text{ for } \enspace I_{i,\eff}[F] = \E_{(x,y) \in \A^n}[|F(x) - F(y)|^2], \]
where $(x,y)$ are sampled as follows. Sample $x$ distributed as $\mu_x^{\otimes n}$, and set $y_j = x_j$ for all $j \neq i$. If $x_i \in \A \backslash \A'$ set $y_i = x_i$, and if $x_i \in \A'$ then sample $y_i \in \A'$ independently.

Analogous to the case of standard influences, we know that
\[ I_{i,\eff}[F] = \sum_{\chi\in B^{\otimes n}, \chi_i \in B_2} |\hat{F}(\chi)|^2 \enspace \text{ and } \enspace I_{\eff}[F] = \sum_{\chi\in B^{\otimes n}} \effdeg(\chi)|\hat{F}(\chi)|^2 = \sum_{0 \le i \le n} i \|F^{\eff=i}\|_2^2. \]

We finally define the effective noise operator, which we denote as $\TT^{\eff}_{1-\rho}$. Define $\TT^{\eff}_{1-\rho}F(x) = \E_{y \sim N^{\eff}_{1-\rho}(x)}[F(y)]$ where $y \sim N^{\eff}_{1-\rho}(x)$ is distributed as follows. For each $i \in [n]$, with probability $1-\rho$ set $y_i = x_i$. With probability $\rho$, if $x_i \in \A \backslash \A'$, then set $y_i = x_i$. Otherwise if $x_i \in \A'$, then $y_i$ is sampled independently according to $\mu_x$ conditioned on being in $\A'$.
As with the standard noise operator, we have
\[ \TT^{\eff}_{1-\rho}F = \sum_{0\le i \le n} (1-\rho)^i F^{\eff=i}. \]

\subsection{Random Restrictions}
\label{subsec:rr}
Consider distributions $\mu, \mu', \nu$ on $\A \times \B \times \C$ with $\mu = \alpha \mu' + (1-\alpha)\nu$. We can take a random restriction of functions $F, G, H$ on $\A^n, \B^n, \C^n$ as follows. Let $I \subseteq [n]$, where each $i \in I$ with probability $1-\alpha$. 
Sample $p := (x(p), y(p), z(p)) \in \nu^{\otimes I}$. We may define the restricted functions $F_{I\to p}: \A^{[n] \setminus I} \to \bbC$ as $F_{I\to p}(x) = F(x, x(p))$, and similar for $G, H$. Note that over the distribution of $(I, p)$, and $x \sim (\mu')^{\otimes [n] \setminus I}$, the distribution of $(x, x(p))$ is identical to $\mu_x^{\otimes n}$. At a high level, this lets us move from a distribution $\mu$ on $[n]$ to a distribution $\mu'$ on about $\alpha n$ coordinates, as long as $\alpha \mu' \le \mu$ pointwise. This was exactly the content of \cref{lemma:rr}, which we now prove.
\begin{proof}[Proof of \cref{lemma:rr}]
Take a random restriction of $\mu$ as described above. The probability that $|[n] \setminus I| < \delta n/2$ is at most $2^{-\Omega(\delta n)}$, by a Chernoff bound. Thus, for a strategy $f, g, h$ achieving $\val(\cG_{\mu}^{\otimes n})$ we know that
\begin{align*}
    \val(\cG_{\mu}^{\otimes n}) &= \Pr_{(x,y,z)\sim\mu^{\otimes n}}\left[f(x)_i + g(y)_i + h(z)_i = t(x_i,y_i,z_i) \enspace \forall \enspace i = 1, \dots, n \right] \\ &= \Pr_{\substack{I \sim_{1-\delta} [n] \\ (x(p),y(p),z(p)) \sim \nu^{\otimes I}}} \Pr_{(x,y,z) \sim (\mu')^{\otimes [n] \setminus I}} \left[ f(x)_i + g(y)_i + h(z)_i = t(x_i,y_i,z_i) \enspace \forall \enspace i = 1, \dots, n \right] \\
    &\le \Pr_{\substack{I \sim_{1-\delta} [n] \\ (x(p),y(p),z(p)) \sim \nu^{\otimes I}}} \Pr_{(x,y,z) \sim (\mu')^{\otimes [n] \setminus I}} \left[ f(x)_i + g(y)_i + h(z)_i = t(x_i,y_i,z_i) \enspace \forall \enspace i \in [n] \setminus I \right] \\
    &\le \E_{I \sim_{1-\delta} [n]} \val(\cG_{\mu'}^{\otimes |I|}) \le \val(\cG_{\mu'}^{\otimes \delta n/2}) + 2^{-\Omega(\delta n)}. \qedhere
\end{align*}
\end{proof}

\section{No Complex Solutions: Reductions}
\label{sec:nocomplex}

The main goal of the following two sections is to show \cref{thm:nocomplex}, restated below.
\nocomplex*

Combining this with \cref{lemma:trivial} directly implies an exponential decay bound on $3$-player $\mathsf{XOR}$ games where the target function $t$ does not have linear structure over any abelian group.
\begin{corollary}
\label{cor:complex}
    Let $\cG$ be a game given by $\mu, t$. If $\mu$ is pairwise-connected, and $T(x,y,z) = \omega^{t(x,y,z)}$ for $\omega = \exp(2\pi i/m)$ is nonembeddable over $\D$ then there is a constant $c := (\cG, \mu) > 0$ with $\val(\cG^{\otimes n}) \le 2^{-cn}$.
\end{corollary}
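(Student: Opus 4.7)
The plan is to combine \cref{lemma:identity} with \cref{thm:nocomplex}. Applying \cref{lemma:identity} and the triangle inequality,
\[ \val(\cG^{\otimes n}) \le \E_{S \sim \Z_m^n} \bigl|\E_{(x,y,z) \sim \mu^{\otimes n}}\bigl[F_S(x) G_S(y) H_S(z) T_S(x,y,z)\bigr]\bigr|, \]
where $F_S, G_S, H_S$ are $\D$-valued and $T_S(x, y, z) = \prod_{i=1}^n T(x_i, y_i, z_i)^{-S_i}$. The strategy is to bound the inner absolute value by $2^{-c|I_g(S)|}$ for a ``good'' set of coordinates $I_g(S) \subseteq [n]$, and then argue that $\E_S[2^{-c|I_g(S)|}]$ is exponentially small.

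Call $i \in [n]$ \emph{good} if $\gcd(S_i, m) = 1$. For such $i$, the one-coordinate tensor $T^{-S_i}$ is itself nonembeddable over $\D$: pairwise-connectedness of $\mu$ forces $|A| = |B| = |C| = 1$ in any embedding $ABC = T^{S_i}$, so the embedding relation reduces to $a(x) + b(y) + c(z) \equiv S_i t(x,y,z)/m \pmod 1$ for real $a, b, c$, and multiplying by $S_i^{-1} \bmod m$ produces an embedding of $T$, contradicting the hypothesis. For \emph{bad} $i$, i.e., $\gcd(S_i, m) > 1$, the tensor $T^{-S_i}$ may genuinely be embeddable, so I extract no decay from those positions. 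Using $|T^{-S_i}| = 1$ pointwise and pulling the fixing of bad and zero coordinates out of the expectation, the inner absolute value is at most the supremum, over $(x, y, z)$ on coordinates $[n] \setminus I_g(S)$, of $\bigl|\E_{\mu^{\otimes I_g(S)}}[\tilde F \tilde G \tilde H \prod_{i \in I_g} T^{-S_i}(x_i, y_i, z_i)]\bigr|$, where $\tilde F, \tilde G, \tilde H$ are the $\D$-valued restrictions of $F_S, G_S, H_S$.

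I would bound this remaining quantity by $2^{-c|I_g(S)|}$ using a \emph{heterogeneous} version of \cref{thm:nocomplex}: for any $\D$-valued, pairwise-connected, nonembeddable tensors $T_1, \ldots, T_n : \supp(\mu) \to \D$, one has $|\E[FGH \prod_i T_i(x_i, y_i, z_i)]| \le 2^{-cn}$. The proof sketched in \cref{overview:analytic}---path tricks, a relaxed base case, and the Efron--Stein decomposition---is essentially coordinate-local, so this heterogeneous version should follow with additional bookkeeping. Combining these steps, and using that the events $\{\gcd(S_i, m) = 1\}$ across $i$ are independent, each with probability $q = \varphi(m)/m \ge 1/m$,
\[ \val(\cG^{\otimes n}) \le \E_S\bigl[2^{-c|I_g(S)|}\bigr] = \bigl(1 - q(1 - 2^{-c})\bigr)^n \le 2^{-c' n}. \]
The main obstacle is the heterogeneous version of \cref{thm:nocomplex}: although the proof structure appears coordinate-local, the path trick and relaxed base case make global use of the tensor and may require nontrivial modification to handle sequences of differing tensors over the same distribution $\mu$.
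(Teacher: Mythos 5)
Your overall plan matches the paper's: use \cref{lemma:identity}, for each fixed $S$ extract decay from a subset of ``good'' coordinates via \cref{thm:nocomplex}, then average over $S$. But your choice to declare $i$ good whenever $\gcd(S_i,m)=1$ introduces a genuine gap you yourself flag: on the good coordinates you are left with a product $\prod_{i\in I_g}T^{-S_i}(x_i,y_i,z_i)$ in which different coordinates carry different tensors, and \cref{thm:nocomplex} is only proved for a single tensor tensorized with itself. The path-trick and relaxed-base-case machinery in \cref{sec:nocomplex,sec:induction} is set up for one $T$, and a heterogeneous extension is not established anywhere in the paper; assuming it is a real hole, not bookkeeping.

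The paper sidesteps this entirely by choosing a \emph{homogeneous} good set: fix $I_1(S)=\{i:S_i=1\}$. After conditioning on $(x_i,y_i,z_i)$ and $S_i$ for $i\notin I_1(S)$ and absorbing them into the (still $\D$-valued) functions, the remaining tensor is exactly $\bar T^{\otimes I_1(S)}$, and $\bar T$ is nonembeddable over $\D$ iff $T$ is. \cref{thm:nocomplex} then applies verbatim and gives $2^{-c|I_1(S)|}$. Averaging over $S$ as you do, with $q=1/m$ in place of $\varphi(m)/m$, still yields $(1-q(1-2^{-c}))^n\le 2^{-c'n}$. Your observation that $T^{-S_i}$ is nonembeddable whenever $\gcd(S_i,m)=1$ is correct, but it only buys a better constant and is not needed once you restrict to $S_i=1$; I'd recommend dropping the heterogeneous route in favor of the $S_i=1$ restriction, which closes the gap with no new lemma.
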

\begin{proof}
For $S \in \Z_m^n$, let $1(S) = |\{i : S_i = 1\}|$. Then by \cref{thm:nocomplex} and the triangle inequality, the winning probability as written in \cref{lemma:identity} at most
\[ \E_{S \sim \Z_m^n} 2^{-c 1(S)} \le 2^{-c'n}. \qedhere\]
\end{proof}

In the remainder of the section and the next we show \cref{thm:nocomplex}.
Before starting, we make the trivial observation that if $|T(x,y,z)| < 1$ for any $(x,y,z) \in \supp(\mu)$, then the result is clear. So for all $T$ encountered in the series of reductions we do later, we assume $|T(x,y,z)| = 1$ for all $(x,y,z) \in \supp(\mu)$.
\begin{lemma}
\label{lemma:trivial}
If $|T(x,y,z)| < 1$ for some $(x,y,z) \in \supp(\mu)$, then there is $c := c(\mu, T) > 0$ such \[ \left|\E_{(x,y,z)\sim \mu^{\otimes n}}\left[F(x)G(y)H(z) \prod_{i=1}^n T(x_i,y_i,z_i) \right] \right| \le 2^{-cn}. \]
\end{lemma}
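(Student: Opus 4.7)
The plan is to note that the hypothesis gives us a strictly sub-unit factor with positive probability at each coordinate, and a straightforward independence argument then yields the exponential decay. Since $F, G, H$ take values in $\D$, we can upper bound the absolute value of the expectation by passing to absolute values of $T$ and using $|F|, |G|, |H| \le 1$ pointwise.

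First I would pass to magnitudes: write $\tau(x,y,z) := |T(x,y,z)| \in [0,1]$, and observe
\[
\left|\E_{(x,y,z)\sim\mu^{\otimes n}}\left[F(x)G(y)H(z)\prod_{i=1}^n T(x_i,y_i,z_i)\right]\right|
\le \E_{(x,y,z)\sim\mu^{\otimes n}}\left[\prod_{i=1}^n \tau(x_i,y_i,z_i)\right],
\]
using the triangle inequality and $|F(x)|,|G(y)|,|H(z)|\le 1$ coordinatewise. Next, by independence of the coordinates under $\mu^{\otimes n}$, the right-hand side factorizes as $\bigl(\E_\mu[\tau]\bigr)^n$.

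It then remains to argue $\E_\mu[\tau] < 1$. By the assumption, there exists $(x_0,y_0,z_0)\in\supp(\mu)$ with $\tau(x_0,y_0,z_0) \le 1-\delta$ for some $\delta > 0$; let $p_0 := \mu(x_0,y_0,z_0) > 0$. Since $\tau \le 1$ everywhere on $\supp(\mu)$,
\[
\E_\mu[\tau] \;\le\; (1-p_0)\cdot 1 + p_0\cdot(1-\delta) \;=\; 1 - p_0\delta.
\]
Setting $c := -\log_2(1-p_0\delta) > 0$ (a quantity depending only on $\mu$ and $T$), we conclude that the original expectation is bounded in absolute value by $(1-p_0\delta)^n = 2^{-cn}$, as required.

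There is no genuine obstacle here: the argument is a one-line computation once one notices that $|F|,|G|,|H|\le 1$ reduces the problem to bounding an expected product of independent factors bounded by $1$, one of which is strictly less than $1$ with positive probability. The only thing to verify is that the hypotheses of the ambient setup really do give $F,G,H:\bullet^n\to \D$ (they do, by the statement of \cref{thm:nocomplex}), so no further structural properties of $\mu$ (such as pairwise-connectedness or nonembeddability) are needed for this reduction.
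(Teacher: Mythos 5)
Your proof is correct and is exactly the argument the paper has in mind; the paper compresses it to the single line ``this follows from $\E_\mu[|T|]<1$,'' and you have simply spelled out the triangle inequality, the factorization over independent coordinates, and the estimate $\E_\mu[\tau]\le 1-p_0\delta$. No divergence from the paper's approach.
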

\begin{proof}
This follows from the fact that $\E_{(x,y,z)\sim\mu}[|T(x,y,z)|] < 1$.
\end{proof}

\subsection{Path Tricks}
\label{subsec:pathtrick}

We start by defining what it means to do a path trick on $x$. The effect of this will be changing $\A$ to the larger $\A^+$, the distribution to $\mu^+$ over $\A^+ \times \B \times \C$, and the tensor to $t: \supp(\mu^+) \to [-1, 1]$. The goal is to ensure that $\mu^+$ has full support over $(y,z)$.

We define $\mu^+$ as follows.
\begin{definition}[Path trick]
\label{def:pathtrick}
The following procedure takes a distribution $\mu$ on $\A \times \B \times \C$ and generates $\mu^+$ on $\A^+ \times \B \times \C$, which we call applying an ``$x$ path trick with $r$ steps" to $\mu$.
\begin{enumerate}
    \item Sample $y_1 \sim \mu_y$.
    \item Sample $(x_1, z_1)$ from $\mu$, conditioned on $y = y_1$.
    \item Sample $(x_1', y_2)$ from $\mu$, conditioned on $z = z_1$.
    \item Repeat, and in the final step sample $(x_{2^r}, y_{2^{r-1}+1})$ from $\mu$ conditioned on $z = z_{2^{r-1}}$.
\end{enumerate}
Now let $\Sigma^+ \subseteq \Sigma^{2^r-1}$. $\mu^+$ is over the sequences $((x_1, x_1', \dots, x_{2^{r-1}-1}', x_{2^{r-1}}), y_1, z_{2^{r-1}})$ generated by the process above.
\end{definition}
Throughout, we will let $\vec{x} = (x_1, x_1', \dots, x_{2^{r-1}-1}', x_{2^{r-1}})$.

Now we define $T^+(\vec{x}, y, z)$ as follows:
\begin{align*}
    T^+(\vec{x}, y, z) = \mathop{\E}_{\substack{x_1, x_1', \dots, x_{2^{r-1}-1}', x_{2^{r-1}} \\ y_1, \dots, y_{2^{r-1}} \\ z_1, \dots, z_{2^{r-1}}}} \Bigg[ &\prod_{i=1}^{2^{r-1}} T(x_i,y_i,z_i) \prod_{i=1}^{2^{r-1}-1} \bar{T(x_i',y_{i+1}, z_i)} \enspace \Bigg{|} \enspace \\ &\vec{x} = (x_1, x_1', \dots, x_{2^{r-1}-1}', x_{2^{r-1}}), y = y_1, z = z_{2^{r-1}} \Bigg]
\end{align*}
Note that $T^+$ is still nonembeddable over $\D$.
\begin{lemma}
\label{lemma:maintain}
As defined above, if $T$ is nonembeddable over $\D$ then $T^+$ is nonembeddable over $\D$. If $(y,z)$ was pairwise-connected in $\mu$ and $2^{r-1} \ge \min\{|\B|, |\C|\}$, then the support of $(y,z)$ in $\mu^+$ is full.
\end{lemma}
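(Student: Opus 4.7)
The plan is to treat the two claims separately. For nonembeddability, I would argue the contrapositive: suppose $A \colon \A^+ \to \D$, $B \colon \B \to \D$, $C \colon \C \to \D$ satisfy $A(\vec{x})B(y)C(z)T^+(\vec{x},y,z) = 1$ on $\supp(\mu^+)$, and construct from this an embedding of $T$ over $\D$. Since all four factors lie in $\D$, the identity forces $|A(\vec{x})| = |B(y)| = |C(z)| = |T^+(\vec{x},y,z)| = 1$ throughout the support. By \cref{lemma:trivial} we may assume $|T| \equiv 1$ on $\supp(\mu)$, so $T^+(\vec{x},y,z)$ is the expectation of products of unit-modulus complex numbers, hence itself of unit modulus. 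The key rigidity observation is that the average of unit-modulus complex numbers has modulus one iff all of them coincide, so every realization of the path-trick process yielding $\vec{x}, y, z$ produces the same product $T^+(\vec{x}, y, z)$.

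To extract an embedding of $T$, I would exploit a constant path: for any $(x, y, z) \in \supp(\mu)$, setting $x_i = x_i' = x$, $y_i = y$, $z_i = z$ for all $i$ defines a valid realization of the path-trick process (each required triple equals $(x, y, z) \in \supp(\mu)$), giving $\vec{x} = (x, \dots, x)$ and product $T(x, y, z)^{2^{r-1}} \cdot \bar{T(x, y, z)}^{2^{r-1}-1} = T(x, y, z)$ since $|T(x,y,z)|=1$. Combined with the rigidity above, this gives $T^+((x, \dots, x), y, z) = T(x, y, z)$ for every $(x, y, z) \in \supp(\mu)$. Setting $A'(x) := A((x, \dots, x))$, $B' := B$, $C' := C$ yields $A'(x) B'(y) C'(z) T(x, y, z) = 1$ on $\supp(\mu)$, an embedding of $T$ over $\D$, contradicting the hypothesis.

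For the support claim, I would observe that a sample of $\mu^+$ is determined by a walk $y_1 - z_1 - y_2 - z_2 - \cdots - y_{2^{r-1}} - z_{2^{r-1}}$ of length $2^r - 1$ in the bipartite graph $G$ on vertex set $\B \sqcup \C$ whose edges are $\supp(\mu_{yz})$, together with compatible choices of $x_i, x_i'$ (which always exist by the definition of $\supp(\mu_{yz})$). Each such walk occurs with positive probability, so it suffices to show that for every $y \in \B$ and $z \in \C$ there is a walk in $G$ of length exactly $2^r - 1$ from $y$ to $z$. Bipartiteness forces any $y$-to-$z$ walk to have odd length, matching the parity of $2^r - 1$. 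A shortest $y$-$z$ path is simple, alternates between the two sides, and uses $(\ell+1)/2$ vertices on each side when it has length $\ell$; hence $d_G(y, z) \le 2 \min(|\B|, |\C|) - 1 \le 2^r - 1$ by hypothesis. A walk of length $\ell$ can then be extended to one of length $\ell + 2$ by inserting a back-and-forth along any edge incident to a vertex on the walk, so we can reach any odd length $\ge d_G(y, z)$, in particular $2^r - 1$.

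The main obstacle is the rigidity step in Part 1, which pivots on the observation that an average of unit-modulus complex numbers has unit modulus iff the numbers all coincide; once this is established, the existence of a constant path immediately recovers $T$ as a value of $T^+$ and hence an embedding of $T$. The support claim reduces to a standard diameter bound for bipartite graphs combined with walk padding.
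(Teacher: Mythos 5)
Your proof is correct and takes essentially the same approach as the paper's; the ``rigidity'' observation (an average of unit-modulus complex numbers has modulus one iff they all coincide) is precisely what underlies the paper's terse dichotomy ``either $T^+(\bar{x},y,z)=T(x,y,z)$ or $|T^+(\bar{x},y,z)|<1$,'' and the constant-path restriction is the same device the paper uses via $\bar{x}$. Your treatment of the support claim --- reducing to walks in the bipartite graph on $\B\sqcup\C$, bounding the distance by $2\min(|\B|,|\C|)-1$, and padding with back-and-forth steps to hit length exactly $2^r-1$ --- spells out what the paper dismisses as ``clear'' and is correct.
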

\begin{proof}
Define $(x,\dots,x) = \bar{x} \in \Sigma^+$ for all $x \in \Sigma$. Note that either $T^+(\bar{x}, y, z) = T(x,y,z)$, or else $|T^+(\bar{x},y,z)| < 1$, where we are trivially done by \cref{lemma:trivial}. So $T^+$ is nonembeddable. The second claim on the support of $(y,z)$ in $\mu^+$ is clear.
\end{proof}
Performing the path trick maintains pairwise-connectivity.
\begin{lemma}
\label{lemma:pathconnected}
If $\mu$ is pairwise-connected, then $\mu^+$ is pairwise-connected.
\end{lemma}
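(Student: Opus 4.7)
The plan is to verify each of the three pairwise-connectedness properties of $\mu^+$ in turn: the $(y,z)$-, $(x^+,y)$-, and $(x^+,z)$-connectedness, where $x^+$ refers to the $\A^+$ coordinate. For $(y,z)$-connectedness, we use that $(y,z)\in\supp(\mu^+_{yz})$ if and only if there is a walk of length exactly $2^r-1$ between $y$ and $z$ in the bipartite graph $G := \mu_{yz}$. Given any two such pairs $(y_1,z_1)$ and $(y_2,z_2)$, we take a $G$-walk from $y_1$ to $y_2$ (which exists by pairwise-connectedness of $\mu$), pad it by bouncing on an incident edge of $y_1$ so that its length becomes an even multiple of $2^r-1$, and then slice it into consecutive blocks each of length $2^r-1$. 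The breakpoints alternate between $\B$ and $\C$, and each block witnesses a $\mu^+_{yz}$-edge, yielding a walk in $\mu^+_{yz}$ from $y_1$ to $y_2$; the $z_1$-to-$z_2$ part is handled symmetrically.

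For $(x^+,y)$-connectedness, we plan to reduce to the subclaim that whenever $(a,y),(a,y')\in\supp(\mu_{xy})$ share a common $\A$-neighbor $a$, the vertices $y$ and $y'$ lie in the same connected component of $\mu^+_{x^+y}$. Chaining this along a $\mu_{xy}$-walk from any $y^A$ to any $y^B\in\supp(\mu_y)$ connects them in $\mu^+_{x^+y}$, and since each edge $(\vec{x},y)\in\supp(\mu^+_{x^+y})$ is adjacent to $y$, the desired connectedness follows. To prove the subclaim, we pick $z,z'$ with $(a,y,z),(a,y',z')\in\supp(\mu)$ and use the just-established $(y,z)$-connectedness of $\mu^+$ to obtain a $\mu^+_{yz}$-walk from $(y,z)$ to $(y',z')$; each edge of this walk lifts to an $\A^+$-vertex providing a $\mu^+_{x^+y}$-adjacency, which we stitch into a $\mu^+_{x^+y}$-walk from $y$ to $y'$. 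The $(x^+,z)$-connectedness then follows symmetrically by swapping the roles of $\B$ and $\C$.

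The main obstacle lies in the stitching step of the subclaim: consecutive edges $(u,v),(u',v)$ of the $\mu^+_{yz}$-walk sharing a $\C$-vertex $v$ naively lift to distinct $\A^+$-vertices $\vec{x},\vec{x}'$, so one does not immediately obtain a single vertex adjacent to both $u$ and $u'$ in $\mu^+_{x^+y}$. We expect to overcome this by exploiting the freedom in choosing the $2^r-1$ walk labels simultaneously for two parallel length-$(2^r-1)$ walks starting from $u$ and $u'$: picking the same first label positions them at compatible states, after which the $\mu_{xy}$- and $\mu_{xz}$-connectedness of $\mu$ allow one to choose common subsequent labels, possibly with short bouncing detours, to produce a single $\vec{x}$ witnessing both adjacencies at once.
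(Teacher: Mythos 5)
Your overall decomposition (three pairwise conditions, reduce $(x^+,y)$ to a common-neighbor subclaim, chain along a $\mu_{xy}$-walk) is the right shape, but the proposal both overcomplicates the easy part and leaves a real gap in the hard part. For $(y,z)$-connectedness you do not need the walk-counting and padding argument: since $(x,y,z)\in\supp(\mu)$ immediately gives $((x,\dots,x),y,z)\in\supp(\mu^+)$ (take all $x_i=x_i'=x$, $y_i=y$, $z_i=z$), the $\mu^+_{yz}$-graph simply \emph{contains} the $\mu_{yz}$-graph as a subgraph, so connectivity is inherited in one line. More importantly, the gap you flag in the $(x^+,y)$-argument is genuine and your proposed fix does not close it. Lifting consecutive edges of a $\mu^+_{yz}$-walk yields $\A^+$-vertices that genuinely differ, and the ``freedom in choosing labels simultaneously'' suggestion is not an argument: there is no reason two length-$(2^r-1)$ walks from $u$ and $u'$ should ever be made to share a single $\vec{x}$, and pairwise-connectedness gives you no control to coordinate them. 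The stronger issue, though, is that you went down this road at all: your own subclaim is trivial. If $(a,y)$ and $(a,y')$ are both in $\supp(\mu_{xy})$, then the \emph{single} constant vertex $\bar a=(a,\dots,a)\in\A^+$ is adjacent to both $y$ and $y'$ in the $(x^+,y)$-graph, because $(\bar a,y,z)\in\supp(\mu^+)$ for any $z$ with $(a,y,z)\in\supp(\mu)$, and likewise for $y'$. So $y - \bar a - y'$ is a length-$2$ path, no $(y,z)$-walks or stitching needed. That constant-sequence observation is exactly the paper's (very short) proof: all of $\B$ lies in one component because the $\mu_{xy}$-graph embeds via $x\mapsto\bar x$, and every $\vec x\in\A^+$ is adjacent to its own $y_1$ by definition of $\supp(\mu^+)$.
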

\begin{proof}
$(y,z)$-pairwise-connectivity is clear, so we focus on $(x,y)$-pairwise-connectivity by symmetry. We first show that any two symbols $y, y' \in \B$ are connected. This is because $\mu$ is pairwise-connected, and $((x,x,\dots,x), y) \in \supp(\mu^+)$ for any $(x,y) \in \supp(\mu)$. Now, every $\vec{x} \in \A^+$ in adjacent to some $y \in \B$, which completes the proof.
\end{proof}

We now show that performing the path trick can only decrease the value polynomially. 
\begin{lemma}
\label{lemma:pathtrick}
If there are functions $F:\A^n \to \D$, $G:\B^n \to \D$, $H:\C^n \to \D$ with \[ \left| \E_{(x,y,z)\sim\mu^{\otimes n}}[F(x)G(y)H(z) \prod_{i=1}^n T(x_i,y_i,z_i)] \right| \ge \eps, \]
then for $\mu^+, T^+$ as defined above, there are functions $F^+: (\A^{2^r-1})^n \to \D$ and $\wt{H}: \C^n \to \D$ with
\[ \left| \E_{(x,y,z)\sim(\mu^+)^{\otimes n}}[F^+(x)G(y)\wt{H}(z) \prod_{i=1}^n T^+(x_i,y_i,z_i)] \right| \ge \eps^{2^r}. \]
\end{lemma}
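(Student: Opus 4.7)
My plan is to apply a form of iterated Cauchy--Schwarz to the hypothesis $|\Psi|\ge \eps$, where $\Psi = \E_{\mu^{\otimes n}}[F(x)G(y)H(z)\prod_i T(x_i,y_i,z_i)]$, and identify the resulting expression as the path-trick expression $|\Psi^+|$. First, I would choose
\[ F^+(\vec x) \;=\; \prod_{i=1}^{2^{r-1}} F(x_i) \cdot \prod_{i=1}^{2^{r-1}-1} \bar F(x_i'), \]
with the conjugations on the $x_i'$-components matching the $\bar T$ edges in $T^+$. Expanding $\Psi^+ = \E_{(\mu^+)^{\otimes n}}[F^+ G\,\tilde H \prod_i T^+]$ using the definition of $T^+$ as a conditional expectation over the intermediate $y$'s and $z$'s, and summing each $x$-variable against its conditional $\mu_{x\mid y,z}$, the expression reduces to
\[ \Psi^+ \;=\; \E_{\text{path}}\!\left[G(y_1)\,\tilde H(z_{2^{r-1}}) \prod_{i=1}^{2^{r-1}} U(y_i,z_i) \prod_{i=1}^{2^{r-1}-1} \bar U(y_{i+1},z_i)\right], \]
where $U(y,z) := \E_{x\mid y,z}[F(x)T(x,y,z)]$ satisfies $|U|\le 1$.

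Next I would pick $\tilde H(z)$ to be the complex sign of $\E_{\text{path}\mid z_{2^{r-1}}=z}[G(y_1)\prod U\prod\bar U]$, so that by Jensen's inequality,
\[ |\Psi^+| \;=\; \E_{z}\!\left|\E_{\text{path}\mid z}[\cdots]\right| \;\ge\; \bigl|\E_{\text{path}}[G(y_1)\prod U\prod\bar U]\bigr|. \]
This turns the problem into a lower bound on the path integral on the right, which has no dependence on $H$ and which can be analyzed operator-theoretically. Let $\hat M : L^2(\B^n,\mu_y^{\otimes n}) \to L^2(\C^n,\mu_z^{\otimes n})$ be the operator with kernel $U$ (so $\hat M g(z)=\E_{y\mid z}[U(y,z)g(y)]$). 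Then $|\Psi| = |\langle \hat M G, H\rangle| \ge \eps$, which by one Cauchy--Schwarz gives $\langle G, \hat M^*\hat M\, G\rangle \ge \eps^2$, and by iterating this argument $k$ times yields $\langle G, (\hat M^*\hat M)^{2^{k-1}} G\rangle \ge \eps^{2^k}$. The path integral is (after a short algebraic check) the inner product $\langle \tilde H, \hat M (\hat M^*\hat M)^{2^{r-1}-1} G\rangle$. Combining the spectral bound with the sign-trick choice of $\tilde H$, and using the pointwise bound $\|\hat M(\hat M^*\hat M)^{2^{r-1}-1}G\|_\infty \le 1$ (so that $\|\cdot\|_1 \gtrsim \|\cdot\|_2^2$), yields $|\Psi^+|\ge \eps^{2^r}$.

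The hard part will be bridging the mismatch between the odd operator power $\hat M(\hat M^*\hat M)^{2^{r-1}-1}$ (arising from the $2^r-1$ edges of $T^+$) and the even powers $(\hat M^*\hat M)^{2^{k-1}}$ coming out cleanly from the recursive Cauchy--Schwarz iteration. Bridging this requires a Hölder-type interpolation applied to the spectral decomposition of $\hat M^*\hat M$, combined with the $L^1$ sign-trick choice of $\tilde H$, which together absorb the missing ``half-step'' and recover exactly the exponent $\eps^{2^r}$ rather than the weaker $\eps^{2^{r}+O(1)}$ that a naive telescoping would give.
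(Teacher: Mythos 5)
Your choices of $F^+$ and of the kernel operator $\hat M$ match the paper, and your iterated Cauchy--Schwarz
\[
\langle G,(\hat M^*\hat M)^{2^{k-1}}G\rangle \;\ge\; \eps^{2^k}
\]
is exactly the content of the paper's inductive inequality \eqref{eq:biginduct}, written operator-theoretically after the $x$-variables have been averaged out against the conditional distribution $\mu_{x\mid y,z}$; the paper's preliminary step with its own $\wt H$ is precisely your observation that $\|\hat M G\|_2^2\ge\eps^2$. So up to this point you are reproducing the same argument in different notation.

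The gap is in your final step. You want $\|\phi\|_1\ge\eps^{2^r}$ for $\phi:=\hat M(\hat M^*\hat M)^{2^{r-1}-1}G$, and you route through $\|\phi\|_1\ge\|\phi\|_2^2$ (valid since $\|\phi\|_\infty\le 1$). But $\|\phi\|_2^2=\langle G,(\hat M^*\hat M)^{2^r-1}G\rangle$, and for a positive contraction $N=\hat M^*\hat M$ with $\|G\|_2\le 1$, iterated Cauchy--Schwarz only gives $\langle G,N^jG\rangle\ge\langle G,NG\rangle^j\ge\eps^{2j}$; at $j=2^r-1$ this is $\eps^{2^{r+1}-2}$, strictly worse than $\eps^{2^r}$ once $r\ge 2$. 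This loss is real: in the scalar model $\hat M=\eps$, $G=1$, one has $\|\phi\|_1=\eps^{2^r-1}$ but $\|\phi\|_2^2=\eps^{2^{r+1}-2}$, so the $\ell_1\ge\ell_2^2$ step genuinely discards a factor, and no interpolation among $\|\phi\|_1,\|\phi\|_2,\|\phi\|_\infty$ can recover it, since in that model all three norms are fixed and they separate exactly as described. The ``H\"older-type interpolation'' you gesture at does not exist.

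The fix is short, and is literally the paper's move: test $\phi$ against $\psi:=\hat M G$ rather than against itself. Since $|U|\le 1$ and $|G|\le 1$ pointwise, $\|\psi\|_\infty\le 1$, hence
\[
\|\phi\|_1\;\ge\;|\langle\psi,\phi\rangle|\;=\;\big|\langle G,(\hat M^*\hat M)^{2^{r-1}}G\rangle\big|\;\ge\;\eps^{2^r}.
\]
Equivalently, instead of the sign of $\phi$ take
$\wt H(z):=\E_{(x,y,z')\sim\mu^{\otimes n}}\big[\,\bar{F(x)G(y)T(x,y,z)}\mid z'=z\,\big]=\bar{\psi(z)}$,
which still maps into $\D$; then $\Psi^+=\langle\psi,\phi\rangle=\langle G,(\hat M^*\hat M)^{2^{r-1}}G\rangle$ on the nose and no $\ell_1$-to-$\ell_2$ conversion is needed at all.
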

\begin{proof}
We start by performing a simple reduction on $H$. Define \[ \wt{H}(z) := \E_{(x,y,z') \sim \mu^{\otimes n}}[\bar{F(x)G(y)T(x,y,z)} | z' = z]. \] Then note that
\begin{align*} \left| \E_{(x,y,z)\sim\mu^{\otimes n}}[F(x)G(y)H(z)T(x,y,z)] \right| &= \left| \E_{z\sim\mu_z}[H(z)\bar{\wt{H}(z)}] \right| \le \|H\|_2 \|\wt{H}\|_2 \le \|\wt{H}\|_2 \\
&= \E_{(x,y,z)\sim\mu^{\otimes n}}[F(x)G(y)\wt{H}(z)T(x,y,z)]^{1/2}. \end{align*}
Hence $\left| \E_{(x,y,z)\sim\mu^{\otimes n}}[F(x)G(y)\wt{H}(z)T(x,y,z)] \right| \ge \eps^2$. The remainder of the proof involves repeated application of Cauchy-Schwarz.

We show the following by induction on $r$:
\begin{align} \left| \mathop{\E}_{\substack{x_1,x_1',\dots,x_{2^{r-1}}, x_{2^{r-1}}' \\ y_1, \dots, y_{2^{r-1}+1} \\ z_1, \dots, z_{2^{r-1}}}}\left[\prod_{i=1}^{2^{r-1}} F(x_i) \bar{F(x_i')} \cdot G(y_1) \bar{G(y_{2^{r-1} + 1})} \prod_{i=1}^{2^{r-1}} T(x_i,y_i,z_i) \bar{T(x_i',y_{i+1},z_i)} \right] \right| \ge \eps^{2^r}. \label{eq:biginduct} \end{align}
Let us establish the base case $r = 1$.
\begin{align*}
\eps^2 &\le \left| \E_{(x,y,z)\sim\mu^{\otimes n}}[F(x)G(y)\wt{H}(z)T(x,y,z)] \right| \\ &= \left| \mathop{\E}_{\substack{z_1 \\ x_1, y_1, x_1', y_2}}[F(x_1) \bar{F(x_1')} G(y_1) \bar{G(y_2)} T(x_1,y_1,z_1) \bar{T(x_1',y_2,z_1)} ] \right|.
\end{align*}
The inductive step will follow by squaring this, and using Cauchy-Schwarz on the variable other than $y_{2^{r-1}+1}$. Precisely,
\begin{align*}
    \eps^{2^{r+1}} &\le \left| \mathop{\E}_{\substack{x_1,x_1',\dots,x_{2^{r-1}}, x_{2^{r-1}}' \\ y_1, \dots, y_{2^{r-1}+1} \\ z_1, \dots, z_{2^{r-1}}}}\left[\prod_{i=1}^{2^{r-1}} F(x_i) \bar{F(x_i')} \cdot G(y_1) \bar{G(y_{2^{r-1}+1})} \prod_{i=1}^{2^{r-1}} T(x_i,y_i,z_i) \bar{T(x_i',y_{i+1},z_i)} \right] \right|^2 \\
    &\le \mathop{\E}_{y_{2^{r-1}+1}} \left| \mathop{\E}_{\substack{x_1,x_1',\dots,x_{2^{r-1}}, x_{2^{r-1}}' \\ y_1, \dots, y_{2^{r-1}} \\ z_1, \dots, z_{2^{r-1}}}}\left[\prod_{i=1}^{2^{r-1}} F(x_i) \bar{F(x_i')} \cdot G(y_1) \prod_{i=1}^{2^{r-1}} T(x_i,y_i,z_i) \bar{T(x_i',y_{i+1},z_i)} \right] \right|^2.
\end{align*}
Note that this last expression exactly equals \eqref{eq:biginduct} for $r+1$, completing the induction. Finally, the definition of $\wt{H}$ gives that the LHS expression in \eqref{eq:biginduct} can be expressed as:
\begin{align*} \left| \mathop{\E}_{\substack{x_1,x_1',\dots,x_{2^{r-1}} \\ y_1, \dots, y_{2^{r-1}} \\ z_1, \dots, z_{2^{r-1}}}}\left[\prod_{i=1}^{2^{r-1}} F(x_i) \prod_{i=1}^{2^{r-1}-1} \bar{F(x_i')} \cdot G(y_1) \wt{H}(z_{2^{r-1}}) \prod_{i=1}^{2^{r-1}} T(x_i,y_i,z_i) \prod_{i=1}^{2^{r-1}-1} \bar{T(x_i',y_{i+1},z_i)} \right] \right|.
\end{align*}
So we can set $F^+(\vec{x}) = \prod_{i=1}^{2^{r-1}} F(x_i) \prod_{i=1}^{2^{r-1}-1} \bar{F(x_i')}$. This is exactly what we wanted to show, once we note that only the $T$ part of the previous expression depends on $y_2, \dots, y_{2^{r-1}}, z_1, \dots, z_{2^{r-1}-1}$.
\end{proof}

\subsection{Merging Symbols}
\label{subsec:merging}
Let us go back to the setting of a distribution $\mu$ over $\A \times \B \times \C$ (you can think of it as if we have already applied a path trick to $x$).
Consider a graph $G$ with vertices in $\A$, where $x, x' \in \A$ are connected if and only if there are $y \in \B, z \in \C$ such that $(x,y,z), (x',y,z) \in \supp(\mu)$.
Give each connected component in $G$ a single representative vertex, and let $\rep(x)$ denote the representative of the component of $x \in \A$. Let $\A^- = \{\rep(x) : x \in \A\}$. The first goal is to reduce to the case where $(\rep(x), y, z) \in \supp(\mu)$.
\begin{lemma}[Merging symbols]
\label{lemma:merging}
If $F:\A^n \to \D$, $G:\B^n \to \D$, $H:\C^n \to \D$ satisfy \[ \left|\E_{(x,y,z) \sim \mu^{\otimes n}}[F(x)G(y)H(z)\prod_{i=1}^n T(x_i,y_i,z_i)] \right| \ge \eps, \]
then there is a distribution $\mu'$ over $\A \times \B \times \C$ with $\{ (\rep(x), y, z) : (x,y,z) \in \supp(\mu) \} \subseteq \supp(\mu')$, nonembeddable $\wt{T}: \supp(\mu') \to \D$, and $\wt{F}: \A^n \to \D$ with
\[ \left|\E_{(x,y,z) \sim {(\mu')}^{\otimes n}}[\wt{F}(x)G(y)H(z)\prod_{i=1}^n \wt{T}(x_i,y_i,z_i)]\right| \ge \eps^{2M}, \]
where $M = |\A|$.
\end{lemma}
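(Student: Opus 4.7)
The plan is to iteratively modify $\mu$ via Cauchy--Schwarz operations on the $x$ coordinate, propagating the support along the graph $G$ until every non-representative element is absorbed into its representative. Concretely, I would process the connected components of $G$ one at a time. For each component $C$, I fix a spanning tree rooted at its representative $r_C \in \A^-$ and process non-root vertices $x_1, x_2, \ldots$ in reverse BFS order. For each such $x_j$ with parent $x_j''$ in the tree, there is a witness pair $(y_j^*, z_j^*)$ with $(x_j, y_j^*, z_j^*)$ and $(x_j'', y_j^*, z_j^*)$ both in $\supp(\mu)$, and a Cauchy--Schwarz step on $F$ in the $x$-coordinate uses this witness to ``swap'' $x_j$ with $x_j''$, extending the support to include $(x_j'', y, z)$ whenever $(x_j, y, z) \in \supp(\mu)$. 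After at most $M-1$ such operations, the updated $\mu'$ contains $(\rep(x), y, z) \in \supp(\mu')$ for every $(x, y, z) \in \supp(\mu)$.

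The technical heart is the single merge step and its preservation of nonembeddability. It applies Cauchy--Schwarz on $F$ conditioned on $(y, z)$, replacing $F(x)$ by the bilinear object $F(x)\overline{F(x'')}$ and augmenting $T(x, y, z)$ to $T(x, y, z)\overline{T(x'', y, z)}$ at the witness coordinates. By a pullback argument mirroring \cref{lemma:maintain}, if the new tensor $\wt T$ were $\D$-embeddable then so would be the original $T$, contradicting the hypothesis. To keep $\wt F$ valued in $\D$, I would collapse the pair $(x, x'') \in \A^2$ back to $\A$ via $(x, x'') \mapsto \rep(x)$ (using $\rep(x) = \rep(x'')$) and verify that the resulting averaged $\wt F$ still lies in $\D$.

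The main obstacle is bringing the cumulative exponent down from the naive $\eps^{2^M}$ one would obtain from sequential squarings to the claimed $\eps^{2M}$. I plan to batch all the merge steps into a single path-trick-style construction, analogous to \cref{lemma:pathtrick}, performing a walk of length $O(M)$ on the graph $G$ rather than $M$ independent Cauchy--Schwarz operations. The outer iteration then requires only $O(\log M)$ rounds of Cauchy--Schwarz, each doubling the exponent, so the final exponent is $2^{O(\log M)} = O(M)$, matching $\eps^{2M}$. Verifying nonembeddability throughout this batched construction, and carefully tracking how the witness pairs compose as the walk traverses successive edges of the spanning tree, will be the most delicate part of the argument.
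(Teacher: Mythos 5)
Your high-level goal (propagate $x$ to $\rep(x)$ along the graph $G$ via Cauchy--Schwarz, while controlling the exponent) is on target, but there is a real gap in the proposed single merge step. You claim that a Cauchy--Schwarz on $F$ in the $x$-coordinate, conditioned on $(y,z)$ and ``using the witness $(y_j^*, z_j^*)$,'' lets you extend the support to $(x_j'', y, z)$ \emph{for every} $(y,z)$ adjacent to $x_j$. That is not what conditional Cauchy--Schwarz produces. Squaring $\bigl|\E_{y,z} R(y,z)\E_{x\mid y,z}[F(x) T(x,y,z)]\bigr|$ gives you two \emph{independent} copies $x,x'$ drawn from the same conditional $\mu_{x\mid y,z}$ — i.e. pairs adjacent to the \emph{same} $(y,z)$ — not a pair $(x_j, x_j'')$ sharing only the fixed witness $(y_j^*, z_j^*)$ while $(y,z)$ is left unchanged. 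To transport the $x$-coordinate across an edge of $G$ you must also re-sample $(y,z)$, i.e. you genuinely need an alternating walk $\A \to (\B\times\C) \to \A \to \cdots$, not a sequence of per-coordinate squarings at a fixed $(y,z)$. The spanning-tree picture, where each edge is handled by one conditional Cauchy--Schwarz, does not close up as written.

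The batching and the exponent also don't quite come out. Repeated squaring is a geometric process: $k$ rounds of Cauchy--Schwarz produce a walk of length $2^{k}$ and exponent $\eps^{2^{k+1}}$, so to reach walk length $\approx M$ you take $k\approx\log_2 M$ rounds and land at $\eps^{M^{\Theta(1)}}$, not $\eps^{2M}$ — and the rounding of $M$ to a power of two, plus the compositional bookkeeping needed to keep track of which tree edges have been traversed, is never addressed. The paper sidesteps all of this by rewriting the hypothesis as a bilinear form $\bar F^* \mA R$ with $\mA_{x,(y,z)} = \mu(x,y,z)T(x,y,z)$ and $R(y,z) = G(y)H(z)$, applying Cauchy--Schwarz \emph{once} to eliminate $F$ and get $(\mD_{yz}^{1/2}R)^* B\, (\mD_{yz}^{1/2}R) \ge \eps^2$ for the PSD, operator-norm-$\le 1$ matrix $B = \mD_{yz}^{-1/2}\mA^*\mD_x^{-1}\mA\mD_{yz}^{-1/2}$, and then invoking the Jensen-type inequality $v^* B^M v \ge (v^* B v)^M$ for $\|v\|_2 \le 1$ to pass to $B^M$ in a single step with the clean exponent $\eps^{2M}$. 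Reading off $\mA(\mD_{yz}^{-1}\mA^*\mD_x^{-1}\mA)^{M-1}$ as a $(2M-1)$-step random walk between $\A$ and $\B\times\C$ then defines $\mu'$ and $\wt T$ simultaneously, and since $|G| \leq M$ vertices, any $x$ can reach $\rep(x)$ in at most $M-1$ double-steps — so $(\rep(x), y, z) \in \supp(\mu')$ falls out for free, with no spanning tree or per-edge conditioning. I'd recommend replacing the sequential/edge-by-edge plan with this single matrix-power argument; it is both simpler and gives the stated bound.
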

\begin{proof}
Define the matrix $\mA \in \bbC^{\A \times (\B \times \C)}$ by $\mA_{(x,(y,z))} = \mu(x,y,z) T(x,y,z)$. Let $\mD_x = \diag(\mu_x)$ and $\mD_{yz} = \diag(\mu_{y,z})$. Define the vector $R(y,z) = G(y)H(z)$.

In the calculations below, to simplify notation we by abuse of notation let $\mA, \mD_x, \mD_{yz}$ denote $\mA^{\otimes n}, \mD_x^{\otimes n}, \mD_{yz}^{\otimes n}$ respectively.
By the hypothesis and Cauchy-Schwarz we get:
\[ \eps \le |\bar{F}^* \mA R| \le \|\mD_x^{1/2} F\|_2 \|\mD_x^{-1/2} \mA R\|_2 \le \|\mD_x^{-1/2} \mA R\|_2, \] so expanding gives
\[ (\mD_{yz}^{1/2}R)^* (\mD_{yz}^{-1/2} \mA^* \mD_x^{-1} \mA \mD_{yz}^{-1/2}) \mD_{yz}^{1/2}R \ge \eps^2. \]
Using $\|\mD_{yz}^{1/2}R\|_2 \le 1$ and repeated Cauchy-Schwarz gives us
\begin{align*} (\mD_{yz}^{1/2}R)^* (\mD_{yz}^{-1/2} \mA^* \mD_x^{-1} \mA \mD_{yz}^{-1/2})^M \mD_{yz}^{1/2}R \ge \eps^{2M}.  \end{align*} Thus for $\wt{F} = \mD_x^{-1} \mA R$ we know that
\begin{align} \wt{F}^* \mA(\mD_{yz}^{-1} \mA^* \mD_x^{-1} \mA)^{M-1} R \ge \eps^{2M}. \label{eq:powered} 
\end{align}
Now we interpret this to extract the desired $\wt{T}$ and $\mu'$. Consider a random walk between $\A$ and $(\B \times \C)$ as follows, where each step is sampled proportional to $\mu$. The total number of steps is $2M-1$, starting from $(y_1,z_1) \sim \mu_{y,z}$. It goes through vertices $(y_1,z_1), x_1, (y_2,z_2), \dots, (y_M, z_M), x_M$.
Then we can set $\mu'(x,y,z) = \Pr[x = x_M, y = y_1, z = z_1]$ and 
\[ \wt{T}(x,y,z) = \E\left[\prod_{i=1}^M T(x_i,y_i,z_i) \prod_{i=1}^{M-1} \bar{T(x_i, y_{i+1}, z_{i+1})} \enspace \Bigg| \enspace x = x_M, y = y_1, z = z_1 \right]. \]
Then \eqref{eq:powered} exactly tells us that $\mu'$, $\wt{T}$ satisfy the conclusion of the lemma.
To check that $\wt{T}$ is nonembeddable note that $\wt{T}(x,y,z) = T(x,y,z)$ for all $(x,y,z) \in \supp(\mu)$, or $|\wt{T}(x,y,z)| < 1$, where we are done by \cref{lemma:trivial}.
\end{proof}
Now, by the random restriction trick (\cref{lemma:rr}), we can ensure that $\supp(\mu^-)$ contains exactly $(\rep(x), y, z)$.
\begin{lemma}
\label{lemma:merging2}
Under the hypotheses of \cref{lemma:merging}, there is a distribution $\mu^-$ with $\supp(\mu^-) = \{ (\rep(x), y, z) : (x,y,z) \in \supp(\mu) \}$, $n' \ge cn$, and functions $\wt{F}: (\A^-)^{n'} \to \D, \wt{G}: \B^{n'} \to \D, \wt{H}: \C^{n'} \to \D$ such that
\[ \left|\E_{(x,y,z) \sim {(\mu^-)}^{\otimes n'}}[\wt{F}(x)\wt{G}(y)\wt{H}(z)\prod_{i=1}^{n'} T(x_i,y_i,z_i)]\right| \ge \eps^{2M}/2. \]
\end{lemma}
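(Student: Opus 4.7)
The plan is to deduce Lemma~3.2 from Lemma~3.1 by applying the random-restriction machinery of Section~2.3 at the level of the analytic inner product, rather than at the level of game value. Let
\[ S := \{(\rep(x), y, z) \colon (x, y, z) \in \supp(\mu)\}; \]
by the conclusion of Lemma~3.1 we have $S \subseteq \supp(\mu')$, so $\delta := \mu'(S) > 0$ depends only on $\mu$ and the earlier merging step. Decompose $\mu' = \delta \mu^- + (1-\delta)\nu$, where $\mu^-$ is $\mu'$ conditioned on $S$ and $\nu$ is $\mu'$ conditioned on its complement. By construction $\supp(\mu^-) = S$, as required.

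Next, mimic Section~2.3 analytically: independently for each $i \in [n]$, place $i$ in a set $I$ with probability $\delta$, and otherwise sample $p_i = (p_i^x, p_i^y, p_i^z) \sim \nu$. Conditioned on $(I, p)$, the $I$-coordinates jointly with the $p$-coordinates are distributed as $(\mu')^{\otimes n}$. Factoring
\[ \prod_{i=1}^n \wt{T}(x_i, y_i, z_i) \;=\; K(I, p) \cdot \prod_{i \in I} \wt{T}(x_i, y_i, z_i), \qquad K(I, p) := \prod_{i \notin I} \wt{T}(p_i^x, p_i^y, p_i^z) \in \D, \]
and absorbing $K(I, p)$ together with the partial evaluations of $\wt{F}, G, H$ on the $p$-coordinates into new $\D$-valued functions on the $I$-coordinates, the inner product from Lemma~3.1 becomes an outer expectation over $(I, p)$ of analytic inner products of the desired form over $(\mu^-)^{\otimes I}$, each of magnitude at most $1$.

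A Chernoff bound gives $\Pr[|I| < \delta n / 2] \le 2^{-\Omega(\delta n)}$. For $n$ above some threshold depending on $\delta$ and $\eps$ this is at most $\eps^{2M}/2$; splitting the outer expectation on the event $|I| \ge \delta n/2$ and using the trivial bound of $1$ on the complement produces a realization $(I, p)$ with $|I| \ge \delta n/2$ and inner inner product of magnitude at least $\eps^{2M}/2$. Setting $n' := |I|$, $c := \delta/2$, and reading off the restricted functions as $\wt{F}, \wt{G}, \wt{H}$ completes the proof; the remaining small-$n$ regime is handled trivially since $n$ is then $O_{\mu, \wt{T}, \eps}(1)$ and one can simply take $n' = 1$ with suitable indicator functions on $S$. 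There is no substantive obstacle — the entire argument is a direct analytic analogue of Lemma~2.5, with the Chernoff-tail-versus-$\eps^{2M}$ comparison as the only mild bookkeeping step.
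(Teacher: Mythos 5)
Your proposal is correct and is exactly what the paper intends: the paper gives no explicit proof of this lemma beyond the remark that it follows "by the random restriction trick (Lemma~\ref{lemma:rr})," and your argument is precisely the analytic-inner-product analogue of that lemma (write $\mu' = \delta\mu^- + (1-\delta)\nu$, random-restrict the fixed coordinates, factor out the $\D$-valued scalar $\prod_{i\notin I}\wt{T}$ into the restricted function, Chernoff-bound $|I|$, and average). The only cosmetic mismatch is that the lemma's $T$ is really the restriction of $\wt{T}$ from Lemma~\ref{lemma:merging} to $\supp(\mu^-)$, and the ``small-$n$'' fallback is slightly informal since $\eps$ may depend on $n$, but the paper is equally informal here --- the additive $2^{-\Omega(\delta n)}$ error is simply absorbed into the factor $1/2$ in the regime where the reductions are actually invoked.
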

We have reduced the size of the alphabet and support of the distribution so we may worry that $T: \supp(\mu^-) \to \{-1, 1\}$ may become embeddable into $\D$. We show that if this is the case, then we can show that the game has exponentially decreasing value. This can be viewed as an instance of two player games.
\begin{lemma}
\label{lemma:validt}
    If $T: \supp(\mu^-) \to \D$ is embeddable over $\D$ then for some $c := c(T,\mu) > 0$, \[ \left|\E_{(x,y,z)\sim\mu^{\otimes n}}\left[F(x)G(y)H(z)\prod_{i=1}^n T(x_i,y_i,z_i)\right]\right| \le 2^{-cn}. \]
\end{lemma}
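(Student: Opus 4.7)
The plan is to recast the quantity as a two-player inner product by merging Bob and Charlie into a single super-player with joint input $(y,z)$, and then invoke the singular-value tensorization argument behind \cref{lemma:2dcase}.

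First, I would use the embedding hypothesis to obtain $\alpha\colon \A^- \to \D$, $\beta\colon \B \to \D$, $\gamma\colon \C \to \D$ with $\alpha(x)\beta(y)\gamma(z)T(x,y,z) = 1$ on $\supp(\mu^-)$; since we may assume $|T|=1$ on its support (otherwise \cref{lemma:trivial} applies), the three embedding functions are unimodular. Extending $\alpha$ to all of $\A$ by $\alpha(x) := \alpha(\rep(x))$, I would absorb these factors into the strategies via $\wt F(x) := F(x)\prod_i \alpha(x_i)$ and analogously $\wt G$, $\wt H$; since the absorbed factors have modulus one, the new functions remain $\D$-valued. The tensor $T$ is replaced by $U(x,y,z) := \alpha(x)\beta(y)\gamma(z)T(x,y,z)$, which equals $1$ whenever $x \in \A^-$.

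Next, I would fold $(y,z)$ into a single variable $w$, set $R(w) := \wt G(y)\wt H(z)$ (bounded by $1$), and let $\nu$ be the marginal of $\mu$ on $\A \times (\B \times \C)$. The quantity in question becomes a two-player inner product
\[
\left|\E_{(x,w)\sim\nu^{\otimes n}}\Bigl[\wt F(x)\,R(w)\,\prod_i U(x_i,w_i)\Bigr]\right|
\]
on the connected bipartite distribution $\nu$, and the strategy for the remainder mirrors the proof of \cref{lemma:2dcase}: show that the operator norm of the two-player transfer operator associated to $U$ is strictly less than $1$, and tensorize across the $n$ coordinates to obtain the $2^{-cn}$ bound.

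The main obstacle is establishing this strict inequality. Following the equality analysis of \cref{lemma:2dcase}, saturating Cauchy-Schwarz would force unimodular functions $A\colon \A\to \D$ and $B\colon \B\times\C\to \D$ with $A(x)B(w)U(x,w) = 1$ on $\supp(\nu)$. Because $U \equiv 1$ on pairs with $x \in \A^-$, one would have $A(x) = \overline{B(w)}$ for every $w$ such that $(x,w) \in \supp(\mu)$; propagating this identity through the connectivity of the graph $G$ used to define $\rep$, and through pairwise-connectivity of $\mu$, would force $A,B$ to extend to a two-player embedding of the \emph{original} tensor $T$ on the full $\supp(\mu)$. Combined with the value-$<1$ assumption maintained through the reduction chain (together with the fact that the absorbed $\alpha,\beta,\gamma$ can be chosen with rational phases of denominator dividing a fixed multiple of $m$), this would produce a winning three-player XOR strategy, contradicting $\val(\cG) < 1$. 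Carrying out the propagation carefully---using the rich connectivity guaranteed by the preceding path tricks and the definition of $\A^-$---is the delicate part of the argument.
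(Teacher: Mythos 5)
Your high-level strategy matches the paper's: view the quantity as a two-player inner product on $\A \times (\B\times\C)$, argue that the associated transfer operator has norm strictly less than $1$, and tensorize as in the proof of \cref{lemma:2dcase}. The absorption of the embedding from $\supp(\mu^-)$ into the strategies is harmless and makes the analysis cleaner in some ways. Up to the last paragraph, the plan is sound.

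However, there is a genuine gap in your contradiction endpoint. You claim that a two-player embedding $A(x)B(w)$ of the original tensor $T$ over $\supp(\mu)$ ``would produce a winning three-player XOR strategy, contradicting $\val(\cG) < 1$.'' This cannot work, for two reasons. First, $B\colon \B\times\C \to \D$ need not factor as $\beta(y)\gamma(z)$, so the two-player embedding does not yield any three-player strategy. Second, even a full three-player embedding of $T$ over $\D$ does \emph{not} imply $\val(\cG)=1$: the GHZ game in \cref{subsec:barriers} is precisely an example where $T$ is embeddable over $\D$ (via $A(x)=i^x$) yet $\val(\cG)=3/4$, and no amount of rounding rational phases to denominators dividing a multiple of $m$ bridges that gap (that is exactly why the paper needs the entirely separate machinery of \cref{sec:complex,sec:additive} for the embeddable case). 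The hypothesis being leveraged here is not ``$\val(\cG)<1$'' at all; it is the standing assumption of \cref{thm:nocomplex} that $T$ on the \emph{original} $\supp(\mu)$ is nonembeddable over $\D$. The correct conclusion, as in the paper, is that the hypothetical two-player embedding $A(x)B(y,z)=T(x,y,z)$ on $\supp(\mu)$, combined with the three-player embedding $A'(x)B'(y)C'(z)=T(x,y,z)$ on $\supp(\mu^-)$, factors $T$ as
\[
T(x,y,z) \;=\; \frac{A(x)\,A'(\rep(x))}{A(\rep(x))}\,B'(y)\,C'(z)
\]
on $\supp(\mu)$ --- that is, a three-player $\D$-embedding of $T$ on the original support, contradicting the hypothesis of \cref{thm:nocomplex}, not $\val(\cG)<1$. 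Replacing your last sentence with this observation repairs the argument and aligns it with the paper's proof.
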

\begin{proof}
We first show that if $T: \supp(\mu^-) \to \D$ is embeddable, then there are no functions $A: \A \to \D$ and $B: (\B \times \C) \to \D$ for which $A(x) B(y,z) = T(x,y,z)$ for all $(x,y,z) \in \supp(\mu)$. Assume such $A, B$ existed. We will show how to create an embedding on $\supp(\mu)$ from one on $\supp(\mu^-)$. Take an embedding $A'(x)B'(y)C'(z) = T(x, y, z)$ for all $(x, y, z) \in \supp(\mu^-)$. Then we have
\[ T(x, y, z) = A(x) B(y, z) = \frac{A(x)}{A(\rep(x))} T(\rep(x), y, z) = \frac{A(x)A'(\rep(x))}{A(\rep(x))} B'(y) C'(z), \] hence $T$ admits an embedding on $\mu$, a contradiction.

So we may assume that $T: \A \times (\B \times \C) \to \D$ is nonembeddable. Critically, we have combined $\B, \C$ so that this is two dimensional now. Thus, the conclusion follows from the two-dimensional case, in \cref{lemma:2dcase}.
\end{proof}

\subsection{The Relaxed Base Case}
\label{subsec:relaxed}

In this section, we reduce to $\mu$ which satisfy a certain \emph{relaxed base case}. This is to handle the \emph{Horn SAT obstruction}, as introduced in \cite{BKM2}.
\begin{definition}[Relaxed base case]
\label{def:relaxed}
We say that a distribution $\mu$ on $\A \times \B \times \C$ and $\A' \subseteq \A$ satisfy the relaxed base case if:
\begin{itemize}
    \item $\mu_{y,z}$ is a product distribution.
    \item $(y,z)$ uniquely determines $x$ in $\supp(\mu)$.
    \item For a nonembeddable tensor $T$, functions $G: \B \to \bbC$, $H: \C \to \bbC$, and $F: \A \to \bbC$ with \[ I_{\eff}[F(x)] = \E_{x,x'\sim\mu_x}[1_{x,x' \in \A'} |F(x) - F(x')|^2] \ge \tau \|F\|_2^2, \]
    we have for a constant $c := c(T,\mu)$ and $M = \max\{|\A|, |\B|, |\C|\}$,
    \[ \left|\E_{(x,y,z)\sim\mu}\left[F(x)G(y)H(z)T(x,y,z) \right]\right| \le (1 - c\tau^{100M})\|F\|_2\|G\|_2\|H\|_2. \]
\end{itemize}
\end{definition}
The following lemma gives a reduction to the relaxed base case.
\begin{lemma}[Reduction to relaxed base case]
\label{lemma:relaxed}
If there are functions $F:\A^n \to \D$, $G:\B^n \to \D$, $H:\C^n \to \D$ with \[ \left| \E_{(x,y,z)\sim\mu^{\otimes n}}[F(x)G(y)H(z) \prod_{i=1}^n T(x_i,y_i,z_i)] \right| \ge \eps, \]
then we can find $\A^+, \B^+, \C$ and $\mu^+$ on $\A^+ \times \B^+ \times \C$ and $\A' \subseteq \A^+$ satisfying the relaxed base case, and $n' \ge cn$, functions $\wt{F}: (\A^+)^{n'} \to \D$, $\wt{G}: (\B^+)^{n'} \to \D$, $\wt{H}: \C^{n'} \to \D$, and nonembeddable $\wt{T}: \supp(\mu^+) \to \D$ with
\begin{align} \left|\E_{(x,y,z)\sim(\mu^+)^{\otimes n'}}[\wt{F}(x)\wt{G}(y)\wt{H}(y)\prod_{i=1}^{n'}\wt{T}(x_i,y_i,z_i)] \right| \ge \eps^{O_M(1)}. \label{eq:reductionresult} \end{align}
\end{lemma}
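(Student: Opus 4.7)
The plan is to install clauses (i) and (ii) of \cref{def:relaxed} through a pipeline of structural reductions, and then verify clause (iii) via a stability-plus-nonembeddability argument on the reduced distribution. First I would apply the $x$-path trick \cref{def:pathtrick} with $r$ large enough that $2^{r-1}\ge\max\{|\B|,|\C|\}$: by \cref{lemma:maintain,lemma:pathconnected,lemma:pathtrick} the resulting tensor stays nonembeddable and pairwise-connected, has a fully supported $(y,z)$-marginal, and preserves the correlation up to $\eps^{2^r}$. A random restriction (\cref{lemma:rr}) then forces the $(y,z)$-marginal to be the uniform product distribution on $\B\times\C$ at the cost of a constant factor in $n$; performing the symmetric step on the $y$-coordinate enlarges $\B$ to $\B^+$ while preserving the product structure, which establishes clause (i).

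Next I would run the merging-of-symbols reduction \cref{lemma:merging,lemma:merging2} on the $x$-coordinate, followed by one more random restriction to restore uniformity on $(y,z)$. Either the merged tensor $\wt T$ becomes embeddable into $\D$, in which case \cref{lemma:validt} combined with the two-player bound \cref{lemma:2dcase} already delivers exponential decay for the original trilinear form and the lemma is vacuous; or $\wt T$ remains nonembeddable and each $(y,z)\in\B^+\times\C$ pins down a unique $x=\pi(y,z)\in\A^+$ in the support, at a cumulative cost of $\eps^{O_M(1)}$ as allowed by \eqref{eq:reductionresult}.

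With the structural clauses in hand, I would verify clause (iii) by choosing $\A'$ to be the image of $\pi$ and introducing $\Phi(y,z):=F(\pi(y,z))\wt T(\pi(y,z),y,z)$. The product structure on $(y,z)$ recasts the base-case trilinear form as $\E_{y,z}[G(y)H(z)\Phi(y,z)]$, and Cauchy-Schwarz yields
\[ \bigl|\E_{\mu^+}[F(x)G(y)H(z)\wt T(x,y,z)]\bigr|\le\sigma_1(\Phi)\|G\|_2\|H\|_2\le\|\Phi\|_F\|G\|_2\|H\|_2=\|F\|_2\|G\|_2\|H\|_2, \]
where the last equality uses that $\pi$ pushes the uniform measure on $\B^+\times\C$ onto the $x$-marginal of $\mu^+$. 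Equality in the SVD step forces $\Phi$ to be rank one, $\Phi(y,z)=u(y)v(z)$, which on $\supp(\mu^+)$ reads $F(x)\wt T(x,y,z)=u(y)v(z)$ --- an embedding of $\wt T$ into $\D$, ruled out by nonembeddability.

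The main obstacle is converting this rigidity into the effective bound $(1-c\tau^{100M})\|F\|_2\|G\|_2\|H\|_2$ under the hypothesis $I_\eff[F]\ge\tau\|F\|_2^2$ on $\A'$. My plan is Eckart-Young plus compactness: nonembeddability together with finiteness of the alphabets delivers an absolute constant $\eta_0=\eta_0(\wt T,\mu^+)>0$ such that no tuple $(A,B,C)\in\D^{\A^+}\times\D^{\B^+}\times\D^{\C}$ satisfies $|A(x)B(y)C(z)\wt T(x,y,z)-1|<\eta_0$ on all of $\supp(\mu^+)$. If $\sigma_1(\Phi)$ were within $c\tau^{100M}$ of $\|F\|_2$, Eckart-Young would produce $\Phi=uv^{*}+R$ with $\|R\|_F=O(\tau^{50M})\|F\|_2$, so $F(\pi(y,z))\wt T(\pi(y,z),y,z)\approx u(y)v(z)$ in $\ell_2$ on the support. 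Discretizing $u, v, F$ at mesh $\eta_0/2$ and pigeonholing produces an exact embedding on a positive-density subset of $\supp(\mu^+)$; the effective-influence hypothesis on $\A'$, rather than on all of $\A^+$, is exactly what rules out the Horn-SAT obstruction where $F$ vanishes on a large piece, and pairwise-connectivity (\cref{lemma:pathconnected}) lets us extend the partial embedding to all of $\A^+, \B^+, \C$, producing a genuine embedding and the desired contradiction. The exponent $100M$ absorbs the polynomial losses from the discretization, pigeonhole, and propagation steps, and pinning it down tightly is the delicate quantitative part of the argument.
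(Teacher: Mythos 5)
Your structural pipeline for clauses (i) and (ii) is in the right spirit — path tricks, merging, and random restrictions — but note that the paper deliberately applies a $y$-path trick \emph{before} the $x$-path trick. The reason is that after the $y$-trick the $(x,z)$-marginal is full, which guarantees that in the final distribution $\mu^+$ every $z\in\C$ is adjacent (in the support) to every $x\in\A'=\{\rep((x,\dots,x))\}$. This structural fact is used later, in the verification of clause (iii), to propagate a lower bound on $|F|$ from a single $x\in\A'$ to all of $\C$ and thence (via pairwise-connectivity) to all of $\A^+,\B^+$. Your ordering (apply the $x$-trick first, then restrict to product $(y,z)$, then apply a $y$-trick) would destroy the $(y,z)$ product structure you just created, since a $y$-path trick re-correlates $y^+$ and $z$ through the path; and without the paper's first step you have not established the adjacency property used in clause (iii).

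The larger gap is in your verification of clause (iii). The paper does a direct pointwise argument: assume for contradiction that $|\bar{F(x)} - G(y)H(z)T(x,y,z)|^2\le c^2\tau^{100M}$ on all of $\supp(\mu^+)$, use $\E_x[\mathbf{1}_{x\in\A'}|F(x)|^2]\ge\tau$ to locate $x\in\A'$ with $|F(x)|^2\gtrsim\tau$, propagate $|H(z)|\gtrsim\sqrt\tau$ for all $z$ (using the adjacency just described), then $|F(x)|,|G(y)|\gtrsim\tau^{M/2}$ for all $x,y$ by walking along the $(x,y)$-connectivity graph, and finally invoke nonembeddability of $\wt T$ to find a point where the phase discrepancy is $\gtrsim1$, contradicting the pointwise bound. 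Your Eckart--Young route instead bounds $\sigma_1(\Phi)\le\|\Phi\|_F$ with equality iff $\Phi$ is rank one; but this is an $\ell_2$ statement, and you then need $\ell_\infty$ control (a pointwise approximate embedding) to invoke nonembeddability of $\wt T$. The step ``discretize at mesh $\eta_0/2$, pigeonhole, extend via connectivity'' is where this should happen, and as you note yourself it is the delicate part — but it is also exactly where the Horn-SAT obstruction lives: an $\ell_2$-approximate rank-one decomposition does not prevent $F$ from vanishing on a large fraction of $\A^+$, in which case you cannot normalize $F/|F|$ to read off an embedding. You acknowledge that the hypothesis on $\A'$ should rule this out, but the argument for how the $\A'$-weight forces $|F|,|G|,|H|$ to be uniformly bounded below is precisely the propagation step you have not supplied, and it relies on the structural property of $\mu^+$ that your pipeline order does not establish. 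As written the proposal therefore has a genuine gap at the heart of clause (iii); the paper's pointwise contradiction argument avoids both difficulties.
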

\begin{proof}
We construct $\mu^+$ as follows starting from $\mu$.
\begin{enumerate}
    \item Apply the path trick (\cref{lemma:pathtrick}) to $y$. The result is a distribution $\mu^{(1)}$ on $\A \times \B^+ \times \C$ such that $\mu^{(1)}_{x,z}$ has full support.
    \item Apply the path trick (\cref{lemma:pathtrick}) to $x$. The result is a distribution $\mu^{(2)}$ on $\A^+ \times \B^+ \times \C$ such that $\mu^{(2)}_{y,z}$ is full. We also will require that for all $z$ and $\vec{x} = (x,\dots,x) \in \A^+$, there is a $\vec{y} \in \B^+$ with $(\vec{x}, y, z) \in \supp(\mu^{(2)})$. This follows because if $(x, \vec{y}, z) \in \supp(\mu^{(1)})$, then $(\vec{x}, \vec{y}, z) \in \supp(\mu^{(2)})$.
    \item Merge symbols in $\A^+$ using \cref{lemma:merging,lemma:merging2}. Relabel the alphabet as $\A^+$ still. Call this distribution $\mu^{(3)}$.
    \item Apply a random restriction so that $(y,z)$ is uniform. This is our distribution $\mu'$.
    \item Set $\A' = \{\rep((x,\dots,x)) : x \in \A\}$.
\end{enumerate}
The existence of $\wt{F},\wt{G},\wt{H}$, and nonembeddable $\wt{T}$ satisfying \eqref{eq:reductionresult} follows from combining \cref{lemma:pathtrick,lemma:merging,lemma:merging2,lemma:validt}.
It suffices to show that $\mu'$ satisfies the relaxed base case. The first two properties follow by construction. We proceed to showing the third property.

In fact we only use the weaker property that $\E_{x\sim\mu_x}[1_{x\in\A'} |F(x)|^2] \ge \tau\|F\|_2^2$. Assume that $\|F\|_2 = \|G\|_2 = \|H\|_2 = 1$. For simplicity of notation, we use $\ls, \gs$ in this proof to suppress constants depending on $M, \mu$.

Let us multiply $F(x)$ pointwise by the same unit complex number so that 
\[ \E_{(x,y,z)\sim\mu}\left[F(x)G(y)H(z)T(x,y,z)\right] \] is a positive real number. We may assume this throughout. Then we have the equality
\[ \left|\E_{(x,y,z)\sim\mu}\left[F(x)G(y)H(z)T(x,y,z) \right]\right| = 1 - \frac12\E_{(x,y,z)\sim\mu}\left[|\bar{F(x)} - G(y)H(z)T(x,y,z)|^2 \right]. \]
Assume for contradiction that $|\bar{F(x)} - G(y)H(z)T(x,y,z)|^2 \le c^2\tau^{100M}$ for all  $(x,y,z) \in \supp(\mu)$, for some constant $c$. In particular, $||F(x)| - |G(y)||H(z)|| \le c\tau^{50M}$.

Now, there is $x \in \A'$ with $|F(x)|^2 \gs \tau$. For all $z \in \C$ there is $y \in \B^+$ with $(x,y,z) \in \supp(\mu)$. Using that $|G(y)| \ls 1$ for all $y$ (as $\|G\|_2 = 1$), we deduce that
\[ |H(z)| \ge \frac{|F(x)| - c\tau^{50M}}{|G(y)|} \gs \sqrt{\tau}. \]
We next establish that $|G(y)|, |F(x)| \gs \tau^{M/2}$ for all $x \in \A^+, y \in \B^+$.
Indeed, for any $(x, y)$ such that there is a $z \in \C$ with $(x,y,z) \in \supp(\mu)$, then $|F(x)| \ge |G(y)||H(z)| - c\tau^{50M} \gs |G(y)|\sqrt{\tau} - c\tau^{50M}$, and
\[ |G(y)| \ge \frac{|F(x)| - c\tau^{50M}}{|H(z)|} \gs |F(x)| - c\tau^{50M}, \] as $|H(z)| \ls 1$. This implies that $|G(y)|, |F(x)| \gs \tau^{M/2}$ for all $x \in \A^+, y \in \B^+$ because the $x,y$ coordinates are pairwise connected, so there is a path from the $x \in \A'$ with $|F(x)| \gs \tau^{1/2}$ to any $x \in \A^+$ of length at most $M$.

Now, because $T$ is nonembeddable, there is $(x,y,z) \in \supp(\mu)$ with
\[ \left|T(x,y,z) - \frac{\bar{F}(x)}{|\bar{F}(x)|} \cdot \frac{\bar{G}(y)}{|\bar{G}(y)|} \cdot \frac{\bar{H}(z)}{|\bar{H}(z)|}\right| \gs 1. \]
From this, it is direct that
\[ |T(x,y,z)G(y)H(z) - \bar{F(x)}| \gs \min\{|G(y)H(z)|, |F(x)| \} \gs \tau^M. \]
\end{proof}

\subsection{Reducing to High-Degree Functions}
\label{subsec:highdegree}

In this section, we will deduce \cref{thm:nocomplex} from the following bound on functions with high effective degrees, which we show in \cref{sec:induction}.
\begin{restatable}{theorem}{highdegree}
\label{thm:highdegree}
If $\mu$ satisfies the relaxed base case (\cref{def:relaxed}), then there are constants $c := c(T, \mu), C := C(T,\mu) > 0$ such that for all $G: \B^n \to \bbC, H: \C^n \to \bbC$, and $F: \A^n \to \bbC$ that is effectively homogeneous with effective degree $d$,
\[ \left|\E_{(x,y,z)\sim\mu^{\otimes n}}[F(x)G(y)H(z) \prod_{i=1}^n T(x_i,y_i,z_i)]\right| \le 2^{-c \cdot d(d/n)^C} \|F\|_2 \|G\|_2 \|H\|_2. \]
\end{restatable}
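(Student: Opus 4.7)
The plan is to prove \cref{thm:highdegree} by an induction on $n$ with the effective degree $d$ as a secondary parameter. The inductive step exposes one coordinate of $F$ at a time via an SVD / Cauchy--Schwarz argument, gaining a contraction factor from the relaxed base case whenever that coordinate is ``effective.'' Concretely, I fix a coordinate $i \in [n]$ and decompose $F$ using the basis $B = B_1 \cup B_2$ on coordinate $i$ into $F = F^{(i,1)} + F^{(i,2)}$, where $F^{(i,j)}$ collects those Efron--Stein characters whose coordinate-$i$ component lies in $B_j$. Because $F$ is effectively homogeneous of degree $d$, each slice $F^{(i,1)}_{x_i}(x_{\neq i})$ is effectively homogeneous of degree $d$ (over $n-1$ variables) while each slice $F^{(i,2)}_{x_i}$ is effectively homogeneous of degree $d-1$, and orthogonality gives $\|F\|_2^2 = \|F^{(i,1)}\|_2^2 + \|F^{(i,2)}\|_2^2$.

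Next, I apply Cauchy--Schwarz on coordinate $i$, separating $(x_i,y_i,z_i)$ from the remaining coordinates. The $F^{(i,1)}$ piece contributes a term bounded by the inductive hypothesis at parameters $(d, n-1)$ with no local contraction (its coordinate-$i$ character is constant on $\A'$, so the relaxed base case offers no improvement here). The $F^{(i,2)}$ piece, in contrast, has coordinate-$i$ effective variance strictly bounded below; reading the relaxed base case of \cref{def:relaxed} as the spectral statement that the trilinear form $(F,G,H) \mapsto \E_\mu[F G H T]$ is contractive by $(1 - c\tau^{100M})$ on the $B_2$-subspace, this piece gains the extra $(1 - c\tau^{100M})$ factor on top of the inductive hypothesis at parameters $(d-1, n-1)$. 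Averaging the choice of $i$ uniformly over $[n]$ and noting that for an effectively homogeneous degree-$d$ character the coordinate-$i$ component lies in $B_2$ with probability exactly $d/n$, these two contributions combine into a recursion of the form
\[
  f(d,n)^2 \;\leq\; \tfrac{n-d}{n}\, f(d, n-1)^2 \;+\; \tfrac{d}{n}\,(1 - c\tau^{100M})\, f(d-1, n-1)^2,
\]
with boundary data $f(0,n) = 1$ (no contraction is possible when nothing is effective) and $f(d,d) \lesssim (1 - c\tau^{100M})^{d/2}$ (all coordinates effective, pure exponential decay).

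Finally, I verify that the ansatz $f(d,n) = 2^{-c' d (d/n)^{C}}$ satisfies this recursion for suitable constants $c', C > 0$ depending on $c\tau^{100M}$. In the regime $d \asymp n$ the recursion reduces to pure exponential decay $\exp(-\Theta(d))$, while in the regime $d \ll n$ the small probability $d/n$ of hitting an effective coordinate weakens the per-step gain by a factor of $(d/n)^{C-1}$; iterating this over $\Theta(n)$ steps nevertheless accumulates the claimed $\exp(-\Omega(d(d/n)^{C}))$ bound.

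The main obstacle is the bookkeeping in the inductive step. The relaxed base case is a univariate statement about $F:\A \to \bbC$, and its spectral form must be applied only after Cauchy--Schwarz has isolated coordinate $i$. Handling the norms of $G$ and $H$ so that the inner univariate expectations average correctly to $\|G\|_2^2$ and $\|H\|_2^2$ is delicate; doing so exploits the product structure of $\mu_{y,z}$ and the fact that $(y,z)$ uniquely determines $x$, both guaranteed by the relaxed base case. A further subtlety is the choice of the exponent $C$: it must be large enough to absorb the squaring introduced by Cauchy--Schwarz (otherwise $f^2$ is much smaller than $f$ and the recursion degrades), yet small enough that the induction closes in the $d \asymp n$ regime where we require genuine exponential decay.
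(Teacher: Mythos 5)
Your high-level plan---induct on $n$, strip a coordinate via SVD, convert the relaxed base case into a contraction of order $1-c\tau^{O(M)}$, and package the degree-drop into a recursion for $\gamma_{n,d}$---is consonant with the paper's proof. But the proposal glosses over the steps that actually make the argument work, and the recursion you write down is not the one the argument yields.

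First, the relaxed base case is \emph{not} a clean spectral gap on the $B_2$-subspace: it says that the univariate form is contractive when the single-variable function $F'\colon\A\to\bbC$ satisfies $I_{\eff}[F']\geq\tau\|F'\|_2^2$, with loss $(1-c\tau^{100M})$. After you decompose $F$ and SVD it (as the paper does), what you control is only the \emph{weighted average} $\sum_{r}\kappa_r^2 I_{\eff}[F_r']\geq d/n$ (\cref{lemma:variance}); the individual univariate factors $F_r'$ need not each have nontrivial effective variance. So the factor you ascribe to $F^{(i,2)}$---that each slice ``gains the extra $(1-c\tau^{100M})$''---cannot be applied one factor at a time. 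This is exactly why the paper introduces the singular value gap $\delta<\Delta$ and the threshold sets $R,S,T$, and then \emph{must} split into the two cases $|R|<|S||T|$ and $|R|\geq|S||T|$. In the small-$|R|$ case the contraction comes not from the relaxed base case at all but from a dimension-counting argument: the vectors $V^{(s,t)}_r=\hat{F_r}(s,t)$ live in dimension $|R|<|S||T|$, so two of them have non-negligible inner product, and a perturbation of $G,H$ then violates the inductive bound $\gamma_{n-1,d-1}$ on the $\mu^{\otimes(n-1)}$ form (\cref{subsec:r<st}). Only in the large-$|R|$ case is the relaxed base case invoked, and even there it is applied to carefully constructed aggregates $\widetilde{F}_{st}$, with $G_s'H_t'T$-orthogonality (needing both that $\mu_{y,z}$ is a product distribution and that $(y,z)$ determines $x$) feeding back into a contradiction with \cref{lemma:variance}. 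Your proposal never distinguishes these cases, so the inductive step is not established.

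Second, your two-branch recursion
\[
  f(d,n)^2 \leq \tfrac{n-d}{n} f(d,n-1)^2 + \tfrac{d}{n}(1-c\tau^{100M}) f(d-1,n-1)^2
\]
with $\tau$ bounded below by a universal constant would, upon unrolling, yield $f(d,n)\lesssim (1-c)^{d/2}=2^{-\Omega(d)}$ \emph{with no $(d/n)^C$ penalty at all}. That is strictly stronger than the theorem you are asked to prove and is not what the argument actually gives. The paper's recursion (\cref{lemma:highdegree}) is single-branch, $\gamma_{n,d}\leq (1-c(d/n)^{O_M(1)})\gamma_{n-1,d-1}$, where the \emph{polynomial} dependence on $d/n$ comes from the singular value gap (one must choose $\delta\leq(\Delta/CM\tau)^{1000M}$ with $\tau=d/n$) and from raising the relaxed base case loss to the $1/(100M)$ power when translating back from a near-equality in the trilinear form to a bound on $I_{\eff}[\widetilde{F}_{st}]$. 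Also, the paper does not average over $i$: it selects the single coordinate $j$ with $I_{j,\eff}[F]\geq d/n$ (pigeonhole) and applies the SVD there, so no clean $(n-d)/n$ vs.\ $d/n$ dichotomy arises. In short, the proposal captures the spirit (high effective influence in one coordinate leads to contraction) but omits the case analysis, the gap parameter, and the averaging subtlety that are substantively the content of the proof.
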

To deduce \cref{thm:nocomplex} from 
\cref{thm:highdegree}, the idea is to decompose $F = \TT^{\eff}_{I-\rho}F + (1 - \TT^{\eff}_{1-\rho})F$, where $I$ is the identity operator. We can handle the former term because it causes the distribution $\mu$ to mix over $\A'$, at which point the underlying game becomes essentially connected. The second term has only high-degree terms, and we apply \cref{thm:highdegree}. To achieve the desired quantitative dependence on $n$ in \cref{thm:nocomplex}, we will use a combination of noise operators instead of a single operator to more effectively eliminate low-degree terms.

Our combination of noise operators comes from the following lemma.
\begin{lemma}[Polynomial approximation]
\label{lemma:polyapprox}
For any $d \in \Z_{\ge0}$ and $\eps \in (0,1/2)$, there are $\rho_0, \dots, \rho_d \in [0, 1-\eps]$ and $c_0, \dots, c_d \in \R$ such that:
\begin{enumerate}
    \item $\sum_{j=0}^d c_j\rho_j^k = 1$ for $k = 0, 1, \dots, d$, \label{item:degree<d}
    \item $0 \le \sum_{j=0}^d c_j \rho_j^k \le 1$ for $k \ge d+1$. \label{item:degree>d}
    \item $\sum_{j=0}^d |c_j| \le 2^{O(d\sqrt{\eps})}$. \label{item:totalc}
\end{enumerate}
\end{lemma}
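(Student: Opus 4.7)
My plan is to choose $\rho_0, \ldots, \rho_d$ as Chebyshev extremal nodes mapped to $[0, 1-\eps]$ (Chebyshev points of the second kind) and to set $c_j = L_j(1)$, where $L_j(x) = \prod_{i \ne j} (x-\rho_i)/(\rho_j - \rho_i)$ is the $j$-th Lagrange basis polynomial for these nodes. Item 1 is then immediate from Lagrange interpolation: for any polynomial $q$ of degree at most $d$ we have $q(1) = \sum_j q(\rho_j) L_j(1)$, so applying this to $q(x) = x^k$ for $k \le d$ yields $\sum_j c_j \rho_j^k = 1$.

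For item 2, let $P_k$ denote the degree-$d$ Lagrange interpolant of $x^k$ at the chosen nodes, so that $\sum_j c_j \rho_j^k = P_k(1)$. The Hermite remainder formula gives $1 - P_k(1) = \prod_j (1-\rho_j) \cdot [\rho_0, \ldots, \rho_d, 1]\, x^k$, and the classical divided-difference identity for power functions identifies the last factor with the complete homogeneous symmetric polynomial $h_{k-d-1}(\rho_0, \ldots, \rho_d, 1)$ when $k \ge d+1$. A combinatorial expansion shows $h_m(\rho_0, \ldots, \rho_d, 1) = \sum_{a_0 + \cdots + a_d \le m} \rho_0^{a_0} \cdots \rho_d^{a_d}$, which is a partial sum (over non-negative compositions) of the full geometric-series expansion $\prod_j (1-\rho_j)^{-1}$. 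All summands being non-negative, I obtain $0 \le \prod_j (1-\rho_j) \cdot h_{k-d-1}(\rho_0, \ldots, \rho_d, 1) \le 1$, which gives $P_k(1) \in [0, 1]$ for every $k \ge d+1$, establishing item 2.

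The main obstacle is item 3. My approach is LP duality: the minimum of $\sum_j |c_j|$ over all $\rho_j \in [0, 1-\eps]$ and $c_j \in \R$ subject to the moment constraints equals the extremal value of $|p(1)|$ over polynomials $p$ of degree at most $d$ with $\|p\|_{\infty, [0, 1-\eps]} \le 1$. By Chebyshev's extremal theorem, this supremum equals $T_d\bigl(\tfrac{1+\eps}{1-\eps}\bigr)$, attained by $p^*(x) = T_d(L(x))$, where $L$ is the affine map $[0, 1-\eps] \to [-1, 1]$; by complementary slackness the minimizing $\rho_j$ are the points where $|p^*(\rho_j)| = 1$, which are exactly the Chebyshev extremal nodes I chose, so the optimum $\sum_j |c_j| = T_d\bigl(\tfrac{1+\eps}{1-\eps}\bigr)$ is realized. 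To convert to the claimed exponential bound, I would use the identity $T_d(1+h) = \cosh\bigl(d \cosh^{-1}(1+h)\bigr)$ with $h = 2\eps/(1-\eps)$, together with the elementary estimate $\cosh^{-1}((1+\eps)/(1-\eps)) = 2 \log(1+\sqrt{\eps}) - \log(1-\eps) = O(\sqrt{\eps})$ for $\eps \in (0, 1/2)$. The most delicate point is this duality/extremality argument; a concrete backup is to directly bound $\sum_j |L_j(1)|$ as the Lebesgue function of Chebyshev nodes at the extrapolation point $1$ outside $[0, 1-\eps]$, which also gives $2^{O(d\sqrt{\eps})}$ up to polynomial factors in $d$ and $1/\eps$ that can be absorbed into the exponent.
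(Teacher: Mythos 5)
Your construction is the same as the paper's: Chebyshev--Lobatto extremal nodes in $[0,1-\eps]$ with $c_j = L_j(1)$. Items 1 and 2 are essentially the paper's argument in a different guise: the paper establishes $A_k \ge A_{k+1}$ by identifying $\sum_j \rho_j^k / \prod_{i\ne j}(\rho_j-\rho_i)$ with a one-row Schur (complete homogeneous) polynomial and then telescopes, while you apply the divided-difference error formula directly and recognize $[\rho_0,\dots,\rho_d,1]x^k$ as $h_{k-d-1}(\rho_0,\dots,\rho_d,1)$, a partial sum of $\prod_j(1-\rho_j)^{-1}$; the underlying identity is identical and both routes are correct (yours is arguably a touch slicker since it avoids the limiting argument).

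The real divergence is item 3. The paper gives a short, self-contained computation: show $c_j$ alternates in sign (the Lagrange denominator has sign $(-1)^j$ for decreasing nodes), so $\sum_j|c_j| = \sum_j c_j(-1)^j = \sum_j c_j\tilde T(\rho_j) = \tilde T(1) = T_d\bigl(\tfrac{1+\eps}{1-\eps}\bigr)$, then estimate. You instead invoke the classical extremal-measure/LP duality picture: the minimal-total-variation discrete measure reproducing $p \mapsto p(1)$ on degree-$\le d$ polynomials on $[0,1-\eps]$ has norm $T_d\bigl(\tfrac{1+\eps}{1-\eps}\bigr)$, with the atoms located at the equioscillation points of $T_d$ by complementary slackness. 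That machinery is correct and gives the identical sharp constant, but it imports a Hahn--Banach/Krein-type argument that the paper avoids; in particular you would still want to observe (as the paper does explicitly) that the weights $L_j(1)$ genuinely alternate in sign in order to justify that these particular nodes realize the dual optimum. Finally, your fallback of bounding $\sum_j|L_j(1)|$ as a Lebesgue-type quantity ``up to poly$(d,1/\eps)$ factors absorbed in the exponent'' does not quite establish the lemma as stated: when $d\sqrt\eps$ is bounded, a stray $\log d$ or $\log(1/\eps)$ in the exponent is not absorbable. That caveat is harmless for the application downstream (where $\eps$ is a fixed constant and $d=\Theta(n)$), but the primary duality route or the paper's sign computation is needed to get the clean $2^{O(d\sqrt\eps)}$.
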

\begin{proof}
The proof uses Chebyshev polynomials and other tricks, and is deferred to \cref{app:poly}.
\end{proof}
We also prove that applying a single noise operator $\TT^{\eff}_{1-\xi}$ to $F$ leads to a game whose value is exponentially decaying.
\begin{lemma}
\label{lemma:applynoise}
For $F: \A^n \to \D$, $G: \B^n \to \D$, $H: \C^n \to \D$, $\mu$ constructed as in \cref{lemma:relaxed}, and any $\xi > 0$, there is a constant $c := c(T,\mu) > 0$ such that
\[ \left|\E_{(x,y,z)\sim\mu^{\otimes n}}[\TT^{\eff}_{1-\xi}F(x) G(y) H(z) \prod_{i=1}^n T(x_i,y_i,z_i) ]\right| \le 2^{-c\xi n}. \]
\end{lemma}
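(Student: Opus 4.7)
The plan is to realize the noise operator $\TT^{\eff}_{1-\xi}$ as a randomized averaging operator applied on a random subset $L \subseteq [n]$ of coordinates (each included independently with probability $\xi$), and then reduce the resulting expectation to the two-player case on $(y,z)$ by exploiting the product structure of $\mu_{y,z}$ from \cref{lemma:relaxed}. Concretely, write $\TT^{\eff}_{1-\xi, i} = (1-\xi) I_i + \xi Q_i$ per coordinate, where $Q_i$ resamples $x_i$ from $\mu_x|_{\A'}$ when $x_i \in \A'$ and acts as the identity otherwise. Then $\TT^{\eff}_{1-\xi} = \E_L[Q_L]$ with $Q_L = \prod_{i \in L} Q_i$, so by the triangle inequality it suffices to bound the inner expectation for each $L$ separately; a Chernoff bound restricts attention to the event $|L| \ge \xi n/2$, with the excluded event contributing only $2^{-\Omega(\xi n)}$.

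For fixed $L$, I use that $\mu_{y,z}$ is product and $x = \phi(y,z)$ on $\supp(\mu)$, so $T(x_i,y_i,z_i) = T(\phi(y_i,z_i),y_i,z_i) =: T'(y_i,z_i)$. The key observation is that the function $Q_L F(\phi(y,z))$ becomes constant in $(y_i, z_i)$ on the random subset $K := \{i \in L : \phi(y_i,z_i) \in \A'\}$, since on those coordinates $F$ has been averaged over $\A'$. A second Chernoff bound using $\mu_x(\A') > 0$ gives $|K| \ge c \xi n$ with probability $\ge 1 - 2^{-\Omega(\xi n)}$. Conditioning on $K$, the inner expectation decomposes into a factor depending only on the boundary coordinates $(y,z)_{[n] \setminus K}$ times the two-dimensional expectation
\[
\E_{(y,z)_K \sim \tilde q^{\otimes K}}\!\left[\tilde G(y_K)\, \tilde H(z_K) \prod_{i \in K} T'(y_i, z_i)\right],
\]
where $\tilde q$ is $\mu_{y,z}$ conditioned on $\phi(y,z) \in \A'$ and $\tilde G, \tilde H$ are $\D$-valued restrictions of $G, H$. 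Applying \cref{lemma:2dcase} (whose singular-value proof works for any nonembeddable $\D$-valued target) yields a bound of $2^{-c'|K|}$, which combined with the Chernoff estimates gives the desired $2^{-c\xi n}$.

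The main obstacle is verifying that the 2D game over $\tilde q$ satisfies the hypotheses of \cref{lemma:2dcase}, namely pairwise-connectedness and nonembeddability of $T'$. Nonembeddability of $T'$ over $\mu_{y,z}$ itself follows from the nonembeddability of $T$ over $\mu$: any embedding $A(y) B(z) = \overline{T'(y,z)}$ extends to the three-player embedding $(1, A, B)$ of $T$ over $\mu$, using $T(x,y,z) = T'(y,z)$ on $\supp(\mu)$. However, restricting to the smaller support $\supp(\tilde q) \subsetneq \supp(\mu_{y,z})$ could in principle admit new embeddings or break connectivity. I would handle this using the explicit construction of $\A'$ in \cref{lemma:relaxed} as $\{\rep((x,\ldots,x)) : x \in \A\}$, together with additional path-trick and merging reductions (as in \cref{lemma:pathtrick,lemma:merging,lemma:merging2}) applied before invoking the 2D bound, to ensure that $\tilde q$ inherits the necessary structural properties.
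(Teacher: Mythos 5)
Your proposal is correct in spirit and is at heart the same reduction as the paper's: average $F$ over $\A'$-coordinates and reduce to a two-player game on the $(y,z)$ marginal via \cref{lemma:2dcase}. The paper executes this more cleanly, though. Instead of decomposing $\TT^{\eff}_{1-\xi}$ into averaging sub-operators $Q_L$ and then carrying two nested random subsets $L$ and $K := \{i \in L : \phi(y_i,z_i) \in \A'\}$ (the latter depending on $(y,z)$, which forces an additional conditioning argument), it observes that the noised expectation equals $\E_{(x,y,z)\sim(\mu')^{\otimes n}}[F\,G\,H\,\prod T]$ for the distribution $\mu'$ that re-noises $x$, writes $\mu' = c\xi\,\bar{\mu} + (1-c\xi)\,\nu$ where $\bar{\mu}$ samples $(x,y,z) \sim \mu$ conditioned on $x \in \A'$ and then resamples $x$ independently within $\A'$, and applies the already-established random-restriction lemma (\cref{lemma:rr}). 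Under $\bar{\mu}^{\otimes n'}$ the $x$-variable is completely independent of $(y,z)$ and supported in $\A'$, so $F$ drops out as a constant in $\D$ and one is immediately in the two-dimensional setting, with no need for Chernoff bounds on $|L|$ or $|K|$ or for conditioning on which coordinates land in $\A'$.

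The concern you flag at the end — that passing to the $(y,z)$ marginal conditioned on $\phi(y,z) \in \A'$ might break connectivity or admit new embeddings — is the right thing to worry about, and your proposed repair (tracing through the construction of $\A'$ in \cref{lemma:relaxed}) is what the paper does; the paper's justification is that $\bar{\mu}$ is connected by construction and that the nonembeddability verifications in \cref{lemma:merging,lemma:validt} only ever use values $T(x,y,z)$ with $x \in \A'$, hence the nonembeddability already established for $T$ survives the restriction. Your parenthetical that \cref{lemma:2dcase} extends to an arbitrary nonembeddable $\D$-valued target (not just $\omega^{-t}$) is correct and is exactly what \cref{lemma:validt} relies upon. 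The one step you should be more careful with is the decomposition "into a factor depending only on the boundary coordinates $(y,z)_{[n]\setminus K}$": since $K$ is a function of $(y,z)$, you must condition on the \emph{exact} set $K$ (i.e., also on $\phi(y_i,z_i)\notin\A'$ for $i \in L\setminus K$), and then verify that the resulting conditional law remains a product across coordinates before peeling off $\tilde{G}, \tilde{H}$. That can be made to work, but it is precisely the fiddliness that the paper's convex-combination-plus-random-restriction route avoids.
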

\begin{proof}
Recall that $(y, z)$ determines $x$ in the support of $\mu$. Thus we can (by abuse of notation) write $T(y, z) = T(x, y, z)$.
Let $\mu'$ be the distribution defined as follows: sample $(x,y,z) \sim \mu$, then $x' \sim N^{\eff}_{1-\xi}(x)$, and return $(x',y,z)$.
Then by definition, 
\begin{align*} &\left|\E_{(x,y,z)\sim\mu^{\otimes n}}[\TT^{\eff}_{1-\xi}F(x) G(y) H(z) \prod_{i=1}^n T(y_i,z_i) ]\right| \\ = ~&\left|\E_{(x,y,z)\sim(\mu')^{\otimes n}}[F(x) G(y) H(z) \prod_{i=1}^n T(y_i,z_i) ]\right|. \end{align*}
For sufficiently small constant $c := c(T,\mu)$, we can write $\mu' = c\xi\bar{\mu} + (1-c\xi)\nu$ for the distribution $\bar{\mu}$ which is sampled as follows: sampled $(x,y,z) \sim \mu$ conditioned on $x \in \A'$, and then resample $x \in \A'$ independently.
Taking a random restriction (\cref{lemma:rr}) gives that it suffices to show that for all $F: \A^{n'} \to \D, G: \B^{n'} \to \D, H: \C^{n'} \to \D$, we have
\[ \E_{(x,y,z)\sim(\bar{\mu})^{\otimes n'}}[F(x)G(y)H(z) \prod_{i=1}^{n'} T(y_i, z_i)] \le 2^{-cn'}. \]
Note that $\bar{\mu}$ is connected, so it suffices to show that $T$ is nonembeddable over $\supp(\bar{\mu})$. This follows from the fact that \cref{lemma:merging,lemma:validt} only use values $T(x,y,z)$ for $x \in \A'$ to show nonembeddability of $T$.
\end{proof}
We can now show our main analytic result, \cref{thm:nocomplex}.
\begin{proof}[Proof of \cref{thm:nocomplex}]
By combining all our reductions in \cref{lemma:pathtrick,lemma:merging,lemma:merging2,lemma:relaxed} (which either increase $\eps$ by polynomial factors, or decrease $n$ by a constant), we assume that $\mu$ satisfies the relaxed base case.
Let $I$ be the identity operator on $L_2(\A^n, \mu_x^{\otimes n})$. For parameters $d \in \Z_{\ge0}$ (chosen later) and $\eps = 1/4$, let $\{c_i\}, \{\rho_i\}$ be as in \cref{lemma:polyapprox}, and define $\TT = \sum_{j=0}^d c_j \TT^{\eff}_{\rho_j}.$
We now write
\begin{align*} 
&\E_{(x,y,z)\sim\mu^{\otimes n}}[F(x)G(y)H(z) \prod_{i=1}^n T(x_i,y_i,z_i)] \\
~& \E_{(x,y,z)\sim\mu^{\otimes n}}[\TT F(x)G(y)H(z) \prod_{i=1}^n T(x_i,y_i,z_i)] + \E_{(x,y,z)\sim\mu^{\otimes n}}[(I-\TT)F(x)G(y)H(z) \prod_{i=1}^n T(x_i,y_i,z_i)].
\end{align*}
We handle the first term using \cref{lemma:applynoise}. Indeed,
\begin{align*}
    &\left|\E_{(x,y,z)\sim\mu^{\otimes n}}[\TT F(x)G(y)H(z) \prod_{i=1}^n T(x_i,y_i,z_i)] \right| \\ \le ~&\sum_{j=0}^d |c_j| \left|\E_{(x,y,z)\sim\mu^{\otimes n}}[\TT^{\eff}_{\rho_j}F(x)G(y)H(z) \prod_{i=1}^n T(x_i,y_i,z_i)] \right| \le 2^{O(d\sqrt{\eps})} 2^{-c\eps n},
\end{align*}
by \cref{lemma:applynoise} and that $\rho_j \le 1-\eps$ for all $j$ by \cref{lemma:polyapprox}. We use \cref{thm:highdegree} to handle the second term. Indeed,
\begin{align*}
    &\left|\E_{(x,y,z)\sim\mu^{\otimes n}}[(I-\TT)F(x)G(y)H(z) \prod_{i=1}^n T(x_i,y_i,z_i)] \right| \\
    = ~& \left| \sum_{k=0}^n (1-\sum_{j=0}^d c_j \rho_j^k) \E_{(x,y,z)\sim\mu^{\otimes n}}[F^{\eff=k}(x)G(y)H(z) \prod_{i=1}^n T(x_i,y_i,z_i)] \right| \\
    \le ~& \sum_{k=d+1}^n \left|\E_{(x,y,z)\sim\mu^{\otimes n}}[F^{\eff=k}(x)G(y)H(z) \prod_{i=1}^n T(x_i,y_i,z_i)] \right| \le \sum_{k=d+1}^n 2^{-ck(k/n)^C} \le n \cdot 2^{-d(d/n)^C},
\end{align*}
where the first inequality uses \cref{item:degree<d,item:degree>d} of \cref{lemma:polyapprox}, and the second inequality uses \cref{lemma:highdegree} and $\|F^{\eff=k}\|_2 \le \|F\|_2 \le 1$ for all $k$. To conclude, set $d = c'n$ for sufficiently small constant $c'$, so that the total is at most $2^{O(d\sqrt{\eps})} 2^{-c\eps n} + n \cdot 2^{-d(d/n)^C} \le 2^{-c''n}$.
\end{proof}

\section{Main Induction: No Complex Embeddings}
\label{sec:induction}
The goal of this section is to show \cref{thm:highdegree} by induction on $n$.
\highdegree*

Let $\cP^{(n,d)}$ be the class of all functions $F:\A^n \to \bbC$ such that all monomials in $F$ have effective degree at least $d$. Define
\[ \gamma_{n,d} := \mathop{\sup}_{\substack{\|F\|_2 \le 1, F \in \cP^{(n,d)} \\ \|G\|_2 \le 1, \|H\|_2 \le 1}} \left|\E_{(x,y,z)\sim\mu^{\otimes n}}[F(x)G(y)H(z)\prod_{i=1}^n T(x_i,y_i,z_i)]\right|. \]
\cref{thm:highdegree} follows from the following lemma.
\begin{lemma}
\label{lemma:highdegree}
If $\mu$ satisfies the relaxed base case, then $\gamma_{n,d} \le (1 - c(d/n)^{O_M(1)})\gamma_{n-1,d-1}$.
\end{lemma}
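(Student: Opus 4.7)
The plan is to induct on $n$. Having pinned down a coordinate of high effective influence, I would split $F$ orthogonally along that coordinate, apply the relaxed base case at the distinguished coordinate and the inductive hypothesis on the remaining $n-1$ coordinates, and then combine the two estimates via Cauchy-Schwarz.

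For the coordinate selection, since $F \in \cP^{(n,d)}$ the Parseval-type identity $\sum_{i=1}^n I_{i, \eff}[F] = \sum_{\chi \in B^{\otimes n}} \effdeg(\chi) |\hat{F}(\chi)|^2 \ge d\|F\|_2^2$ produces, by averaging, some $i^* \in [n]$ with $I_{i^*, \eff}[F] \ge (d/n) \|F\|_2^2$. Using the basis $B = B_1 \cup B_2$ at $i^*$, decompose $F = F_1 + F_2$ orthogonally, where $F_2$ collects the characters whose $i^*$-th coordinate basis element lies in $B_2$; then $\|F_2\|_2^2 = I_{i^*, \eff}[F] \ge d/n$, and expanding along $i^*$ shows that each Efron-Stein coefficient of $F_1$ has effective degree $\ge d$ on the remaining coordinates, while each coefficient of $F_2$ has effective degree $\ge d-1$.

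Next, perform SVDs of $F_1, F_2, G, H$ along coordinate $i^*$:
\[ F_j = \sum_k \sigma_k^{(j)} A_k^{(j)}(x_{i^*}) \wt{F}_k^{(j)}(x_{-i^*}), \quad G = \sum_\ell \tau_\ell B_\ell(y_{i^*}) \wt{G}_\ell(y_{-i^*}), \quad H = \sum_m \rho_m C_m(z_{i^*}) \wt{H}_m(z_{-i^*}), \]
with $A_k^{(1)} \in \mathrm{span}(B_1)$, $A_k^{(2)} \in \mathrm{span}(B_2)$, and all listed functions orthonormal. Writing $\mathsf{t}^{(j)}(k, \ell, m) = \E_\mu[A_k^{(j)} B_\ell C_m T]$ and $\beta^{(j)}(k, \ell, m) = \E[\wt{F}_k^{(j)} \wt{G}_\ell \wt{H}_m T^{\otimes(n-1)}]$, the relaxed base case applied to $A_k^{(2)}$ (which satisfies $I_\eff[A_k^{(2)}] = \|A_k^{(2)}\|_2^2 = 1$) yields the single-coordinate bilinear operator-norm bound that $\mathsf{t}^{(2)}(k, \cdot, \cdot)$, viewed as a matrix on $\B \times \C$, has largest singular value at most $1 - c$ for some $c = c(T, \mu) > 0$, while the inductive hypothesis gives $|\beta^{(j)}(k, \ell, m)| \le \gamma_{n-1, d-1}$ (using $\gamma_{n-1,d} \le \gamma_{n-1,d-1}$ when $j=1$).

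The heart of the argument is to combine these bounds into something like $|\E[FGHT^{\otimes n}]|^2 \le \big(\|F_1\|_2^2 + (1-c)^2 \|F_2\|_2^2\big) \gamma_{n-1, d-1}^2$, which, combined with $\|F_1\|_2^2 + \|F_2\|_2^2 \le 1$ and $\|F_2\|_2^2 \ge d/n$, produces the claimed $(1 - c'(d/n))\gamma_{n-1,d-1}$ bound. The main obstacle is this combining step: a naive two-step Cauchy-Schwarz on the trilinear sum tends to introduce alphabet-size factors of $M = \max(|\A|, |\B|, |\C|)$ that cannot be absorbed back into the $(d/n)$-gain. I expect the careful argument to require a Parseval-type identity of the form $\sum_{\ell,m} |\beta^{(j)}(k, \ell, m)|^2 \le \|\wt{F}_k^{(j)}\|_2^2 = 1$ (exploiting that $(y,z)$ determines $x$ in the relaxed base case), together with an averaging argument over $x_{-i^*}$ that invokes Jensen's inequality on the convex function $\tau \mapsto \tau^{100M}$ appearing in the relaxed base case bound via $\E_{x_{-i^*}}[\tau(x_{-i^*})\,\|F_{x_{-i^*}}\|_2^2] \ge d/n$; the latter being the natural source of the $O_M(1)$ exponent in the final statement.
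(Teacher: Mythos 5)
Your high-level inputs are the right ones (coordinate of influence $\ge d/n$, SVD along that coordinate, apply the relaxed base case there and the inductive bound on the complementary $n-1$ coordinates), but the step you yourself flag as ``the heart of the argument'' --- the passage from the single-coordinate gain to the claimed inequality
\[
|\E[FGHT^{\otimes n}]|^2 \le \bigl(\|F_1\|_2^2 + (1-c)^2 \|F_2\|_2^2\bigr) \gamma_{n-1,d-1}^2
\]
--- is precisely where the proof actually lives, and the Jensen/averaging heuristic you sketch does not close the gap. If you bound each term separately, $|\E[F_1GHT^{\otimes n}]| \le \|F_1\|_2 \gamma_{n-1,d-1}$ and $|\E[F_2GHT^{\otimes n}]| \le (1-c)\|F_2\|_2\gamma_{n-1,d-1}$, the triangle inequality only yields the prefactor $\|F_1\|_2 + (1-c)\|F_2\|_2$, which typically exceeds $1$ when $\|F_2\|_2^2 = d/n$ (indeed $\|F_1\|_2+\|F_2\|_2 \approx 1 + \sqrt{d/n}$), so no gain appears. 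The orthogonality of $F_1,F_2$ does not transfer to the two complex scalars $\E[F_jGHT^{\otimes n}]$, and applying the relaxed base case pointwise over restrictions $x_{-i^*}$ introduces an absolute value that destroys the phase structure required to invoke the inductive hypothesis on the remaining coordinates; the two bounds pull in incompatible directions.

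The paper's route is structurally different and is what supplies the missing mechanism. It performs a \emph{single} SVD of $F$ (not of $F_1,F_2$ separately), writes $F(x) = \sum_r \aa_r F_r(x_I)F_r'(x_n)$, and observes (\cref{lemma:variance}) that the SVD factors $F_r'$ at the pivot coordinate carry the influence \emph{on weighted average}: $\sum_r \aa_r^2\, I_{\eff}[F_r'] \ge d/n$, even though no individual $F_r'$ need have large effective influence (this is why splitting $F$ into $B_1$/$B_2$ pieces as you do is not what happens). It then fixes a singular-value gap $\delta < \Delta$ with $\delta = \tau^{O_M(1)}$, truncates to the large singular values $R,S,T$ (\cref{eq:remove}), and applies one carefully arranged Cauchy--Schwarz (\cref{eq:cs}) whose two factors are controlled by \cref{lemma:bound1} (inductive hypothesis on normalized combinations of the $F_r$) and \cref{lemma:bound2} (the Parseval bound you correctly anticipate, using the product structure of $\mu_{yz}$). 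The $(d/n)^{O_M(1)}$ gain is then extracted by a case split on $|R|$ versus $|S||T|$: when $|R| < |S||T|$, a rank-counting argument (\cref{lemma:inner}) locates a correlated pair of vectors $V^{(s,t)}$ and a perturbation of $G,H$ produces a contradiction with the inductive hypothesis --- this is the Horn-SAT obstruction, and does not appear in your sketch at all; when $|R| \ge |S||T|$, the argument runs \emph{in the contrapositive}: near-saturation of \cref{lemma:bound1,lemma:bound2} forces the auxiliary functions $\wt{F_{st}}$ to be close to $G_s'H_t'T$, the relaxed base case then forces $I_{\eff}[\wt{F_{st}}]$ to be small, and triangle inequalities propagate this to $I_{\eff}[F_r']$, contradicting \cref{lemma:variance}. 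Your proposal runs the base case in the forward direction (influence $\Rightarrow$ improvement), but the quantitative loss from doing so summand-by-summand over $(s,t)$ (you would pick up a factor of $M^2$ before any $(1-c)$-gain, as $\sum_{s,t}|\hat{F_r'}(s,t)|^2$ can be as large as $1$ even when each $|\hat{F_r'}(s,t)| \le 1-c$) is exactly the obstacle the contrapositive argument is designed to avoid.
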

\begin{proof}[Proof of \cref{thm:highdegree}]
Iterating \cref{lemma:highdegree} a total of $d/2$ times gives:
\[ \gamma_{n,d} \le (1 - c(d/n)^{O_M(1)})^{d/2} \le \exp(-c \cdot d(d/n)^{O_M(1)}). \qedhere\]
\end{proof}
The proof of \cref{lemma:highdegree} proceeds by performing an SVD on the functions $F, G, H$.
\begin{lemma}[SVD]
\label{lemma:svd}
For a function $F: \A^n \to \bbC$ with $\|F\|_2 = 1$, for $M = |\A|$, there are functions $F_1, \dots, F_M: \A^{n-1} \to \bbC$, $F_1', \dots, F_M': \A \to \bbC$, and nonnegative real numbers $\aa_1, \dots, \aa_M$ satisfying:
\begin{enumerate}
    \item $F_1, \dots, F_M$ form an orthonormal basis.
    \item $F_1', \dots, F_M'$ form an orthonormal basis.
    \item $\sum_{r=1}^M \aa_r^2 = 1$.
    \item $F(x) = \sum_{r=1}^M \aa_r F_r(x_1,\dots,x_{n-1}) F_r'(x_n)$ for all $x \in \A^n$.
\end{enumerate}
\end{lemma}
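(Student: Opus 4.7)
The plan is to view $F$ as a matrix and apply the standard singular value decomposition, adapted to the weighted Hilbert-space inner product $\langle\cdot,\cdot\rangle_{\mu_x^{\otimes n}}$. Write $x = (x', x_n)$ with $x' \in \A^{n-1}$ and $x_n \in \A$, and regard $F$ as an element of the tensor product $L_2(\A^{n-1}, \mu_x^{\otimes(n-1)}) \otimes L_2(\A, \mu_x)$. Without loss of generality, assume $\mu_x(a) > 0$ for every $a \in \A$; otherwise restrict $\A$ to $\supp(\mu_x)$, which does not change any value arising in the analysis.

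First, I would pass to the rescaled matrix $\tilde{\mathbf{F}}$ with entries
\[ \tilde{\mathbf{F}}(x', x_n) \;=\; \sqrt{\mu_x^{\otimes(n-1)}(x')\,\mu_x(x_n)}\cdot F(x', x_n), \]
so that the Frobenius (unweighted) norm of $\tilde{\mathbf{F}}$ as an $|\A|^{n-1} \times |\A|$ matrix equals $\|F\|_{L_2(\mu_x^{\otimes n})} = 1$. Since this matrix has exactly $M = |\A|$ columns, its rank is at most $M$, and the usual SVD produces
\[ \tilde{\mathbf{F}} \;=\; \sum_{r=1}^{M} \sigma_r\, u_r\, v_r^{*}, \]
with $\sigma_r \ge 0$, the vectors $\{v_r\}_{r=1}^{M}$ forming an orthonormal basis of $\bbC^{\A}$, and the vectors $\{u_r\}_{r=1}^{M}$ forming an orthonormal set in $\bbC^{\A^{n-1}}$ (if the rank is strictly less than $M$, the remaining $u_r$ are chosen as any orthonormal completion with $\sigma_r = 0$).

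Next, I would undo the rescaling by setting
\[ F_r(x') \;:=\; \frac{u_r(x')}{\sqrt{\mu_x^{\otimes(n-1)}(x')}}, \qquad F_r'(x_n) \;:=\; \frac{v_r(x_n)}{\sqrt{\mu_x(x_n)}}, \qquad \aa_r \;:=\; \sigma_r. \]
Orthonormality with respect to the weighted inner products is then immediate:
\[ \langle F_r, F_s \rangle_{\mu_x^{\otimes(n-1)}} \;=\; \sum_{x'} u_r(x')\overline{u_s(x')} \;=\; \delta_{r,s}, \]
and analogously for $\{F_r'\}$. The identity $F(x', x_n) = \sum_{r=1}^{M} \aa_r F_r(x') F_r'(x_n)$ is just the reconstruction from the SVD after cancelling the $\sqrt{\mu}$ factors, and $\sum_r \aa_r^2 = \|\tilde{\mathbf{F}}\|_F^2 = \|F\|_2^2 = 1$.

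There is no real obstacle here: the statement is the SVD of a rectangular matrix, reexpressed in the weighted Hilbert-space language the paper uses. The only point worth flagging is a minor abuse of terminology in item~1: the space $L_2(\A^{n-1}, \mu_x^{\otimes(n-1)})$ has dimension $M^{n-1}$, so the $M$ functions $F_1,\dots,F_M$ form an orthonormal \emph{system}, not a basis of the full space; this is exactly what the induction step needs, since it will only ever expand $F$ in terms of these $M$ directions.
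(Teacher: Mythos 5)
Your proof is correct. The paper does not actually supply a proof of \cref{lemma:svd} -- it is stated without one, as a standard reformulation of the singular value decomposition in the weighted $L_2$ space $L_2(\A^n, \mu_x^{\otimes n}) \cong L_2(\A^{n-1}, \mu_x^{\otimes(n-1)}) \otimes L_2(\A, \mu_x)$ -- and the argument you give (rescale by $\sqrt{\mu}$ to reduce to the Frobenius setting, apply the usual SVD of an $|\A|^{n-1}\times|\A|$ matrix, undo the rescaling) is exactly the standard one. Your observation that item~1 should really say ``orthonormal system'' rather than ``orthonormal basis'' (since $L_2(\A^{n-1},\mu_x^{\otimes(n-1)})$ has dimension $M^{n-1}$, not $M$) is a fair reading of a minor abuse of terminology in the statement, and as you note it is harmless for the induction in \cref{sec:induction}, which only ever uses these $M$ directions and the bound $\|F_r\|_2 = 1$.
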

Note that each $F_r \in \cP^{(n-1,d-1)}$ because each $F_r'$ has degree $1$.

For notational simplicity, let $I = \{1, \dots, n-1\}$ and $J = \{n\}$. We write $x_I = (x_1, \dots, x_{n-1})$ and $x_J = x_n$. Towards showing \cref{lemma:highdegree}, because $F$ has all monomials of effective degree at least $d$, we know that its effective influence satisfies $I_{\eff}[F] \ge d$. Hence there is an index $j \in [n]$ with $I_{j,\eff}[F] \ge d/n$.
Without loss of generality, we let $j = n$. Let $\tau = d/n$.

This actually implies that the average effective variance of the $F_r'$ is large.
\begin{lemma}
\label{lemma:variance}
We have: $\sum_{r\in[M]} \aa_r^2 I_{\eff}[F_r'] = I_{n,\eff}[F] \ge \tau$.
\end{lemma}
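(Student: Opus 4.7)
The plan is to carry out a direct second-moment expansion of the effective influence $I_{n,\eff}[F]$ using the SVD decomposition from \cref{lemma:svd}, and then exploit orthonormality of the $F_r$ to kill the cross terms.

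First, I would unpack the definition of $I_{n,\eff}[F]$. Sample $x \sim \mu_x^{\otimes n}$, and then sample $y$ which agrees with $x$ on the first $n-1$ coordinates and agrees with $x$ in coordinate $n$ if $x_n \in \A \setminus \A'$, or otherwise is an independent resample from $\A'$. Writing $x = (x_I, x_n)$ and $y = (x_I, y_n)$, substituting the SVD expression for $F$ yields
\[
F(x) - F(y) \;=\; \sum_{r=1}^{M} \aa_r \, F_r(x_I)\bigl(F_r'(x_n) - F_r'(y_n)\bigr),
\]
since the $x_I$ part is common to $x$ and $y$.

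Next, I would square the modulus, take expectations, and separate the expectation over $x_I$ from the joint expectation over $(x_n, y_n)$ (these are independent by construction). Expanding
\[
\E\bigl[|F(x)-F(y)|^2\bigr] \;=\; \sum_{r,s} \aa_r \aa_s \, \E_{x_I}\!\bigl[F_r(x_I)\overline{F_s(x_I)}\bigr] \cdot \E_{x_n,y_n}\!\bigl[(F_r'(x_n)-F_r'(y_n))\overline{(F_s'(x_n)-F_s'(y_n))}\bigr],
\]
the orthonormality of $\{F_r\}$ (item~1 of \cref{lemma:svd}) forces the first factor to be $\delta_{r,s}$, so only the diagonal terms $r=s$ survive. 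The resulting $(x_n, y_n)$-expectation is exactly $I_{\eff}[F_r']$ by the definition of effective influence applied to the single-coordinate function $F_r'$. Thus
\[
I_{n,\eff}[F] \;=\; \sum_{r=1}^{M} \aa_r^2 \, I_{\eff}[F_r'].
\]

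Finally, I would apply the hypothesis that $n$ was chosen so that $I_{n,\eff}[F] \ge d/n = \tau$, yielding the claimed lower bound. There is really no obstacle here; the only thing to be a bit careful about is verifying that the resampling distribution defining $I_{n,\eff}$ on $F$ matches the single-coordinate definition of $I_{\eff}$ on $F_r'$ (it does, by construction of $N^{\eff}_{1-\rho}$ in \cref{subsec:es}), so that the diagonal $(x_n,y_n)$-expectation is precisely $I_{\eff}[F_r']$ rather than some reweighted quantity.
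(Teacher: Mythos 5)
Your proposal is correct and matches the paper's own argument: both substitute the SVD of $F$ into the definition of $I_{n,\eff}[F]$, expand the squared modulus, use orthonormality of the $F_r$ to eliminate off-diagonal terms, and read off the remaining $(x_n,y_n)$-expectation as $I_{\eff}[F_r']$. The concluding inequality $I_{n,\eff}[F]\ge\tau$ is, as you note, just the choice of $j=n$ made before the lemma.
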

\begin{proof}
Indeed, if $(x_n,y_n)$ are sampled by the distribution used to define $I_{\eff}$, we know
\begin{align*}
I_{n,\eff}[F] &= \E_{x_I,(x_n,y_n)}[|F(x_I,x_n) - F(x_I,y_n)|^2] = \E_{x_I,(x_n,y_n)}\left[\left|\sum_{r\in[M]} \aa_r F_r(x_I)(F_r'(x_n) - F_r'(y_n)) \right|^2\right] \\
&= \E_{x_I,(x_n,y_n)}\left[\sum_{r_1,r_2\in[M]} \aa_{r_1}\aa_{r_2}F_{r_1}(x_I)\bar{F_{r_2}(x_I)} (F_{r_1}'(x_n) - F_{r_1}'(y_n))\bar{(F_{r_2}'(x_n) - F_{r_2}'(y_n))}\right] \\
&= \E_{(x_n,y_n)}\left[\sum_{r\in[M]} \aa_r^2 |F_r'(x_n) - F_r'(y_n)|^2 \right] = \sum_{r\in[M]} \aa_r^2 I_{\eff}[F_r']
\end{align*}
\end{proof}

Perform SVDs on $F, G, H$ using \cref{lemma:svd}. We write
\[ F(x) = \sum_{r=1}^M \aa_rF_r(x_I)F_r'(x_J), G(y) = \sum_{s=1}^M \bb_sG_s(y_I)G_s'(y_J), H(z) = \sum_{t=1}^M \cc_tH_t(x_I)H_t'(x_J). \]
Define the quantities:
\begin{align*} \hat{F_r}(s,t) &:= \E_{(x,y,z)\sim\mu^{\otimes I}}[F_r(x)G_s(y)H_t(z)\prod_{i\in I}T(x_i,y_i,z_i)], \\
\hat{F_r'}(s,t) &:= \E_{(x,y,z)\sim\mu}[F_r'(x)G_s'(y)H_t'(z)T(x,y,z)].
\end{align*}
Then by definition we know that
\[ \E_{(x,y,z)\sim\mu^{\otimes n}}[F(x)G(y)H(z)\prod_{i=1}^n T(x_i,y_i,z_i)] = \sum_{r,s,t\in[M]} \aa_r\bb_s\cc_t\hat{F_r}(s,t)\hat{F_r'}(s,t). \]

\subsection{Finding a Singular Value Gap}
\label{subsec:gap}
Let $\delta < \Delta$ satisfy that none of the $\aa_r, \bb_s, \cc_t$ lie in $[\delta,\Delta]$, and $\delta \le \left(\frac{\Delta}{CM\tau}\right)^{1000M}$.
We can find such $\delta, \Delta$ with $\delta \ge \tau^{O_M(1)}$. Let $R = \{r : \aa_r \ge \Delta\}, S = \{s : \bb_s \ge \Delta\}, T = \{t : \cc_t \ge \Delta\}$. Note that because $|\hat{F_r}(s,t)| \le \gamma_{n-1,d-1}, |\hat{F_r'}(s,t)| \le 1$ we know
\begin{align} \left|\sum_{r,s,t\in[M]} \aa_r\bb_s\cc_t\hat{F_r}(s,t)\hat{F_r'}(s,t)\right| \le \left|\sum_{r\in R,s\in S,t\in T} \aa_r\bb_s\cc_t\hat{F_r}(s,t)\hat{F_r'}(s,t)\right| + \delta M^3 \gamma_{n-1,d-1}. \label{eq:remove} \end{align}

\subsection{Cauchy-Schwarz and Simple Observations}
\label{subsec:firstcs}
For the remainder of the section, we assume that \cref{lemma:highdegree} is false for the choice of $F, G, H$. By the Cauchy-Schwarz inequality we get:
\begin{align} \left|\sum_{r\in R,s\in S,t\in T} \aa_r\bb_s\cc_t\hat{F_r}(s,t)\hat{F_r'}(s,t)\right| \le \sqrt{\sum_{s\in S,t\in T} \bb_s^2 \cc_t^2 \sum_{r\in R} |\hat{F_r}(s,t)|^2}\sqrt{\sum_{r\in R} \aa_r^2 \sum_{s\in S,t\in T} |\hat{F_r'}(s,t)|^2 }. \label{eq:cs} \end{align}
We have the following simple observations:
\begin{lemma}
\label{lemma:bound1}
For all $s, t$ we have: $\sum_{r\in[M]} |\hat{F_r}(s,t)|^2 \le \gamma_{n-1,d-1}^2$.
\end{lemma}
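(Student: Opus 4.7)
The plan is to realize $\sum_{r\in[M]} |\hat{F_r}(s,t)|^2$ as the value of the $(n-1)$-fold correlation on a single cleverly chosen test function built from the $F_r$'s, and then invoke the definition of $\gamma_{n-1,d-1}$ directly. Set $N := \bigl(\sum_{r\in[M]} |\hat{F_r}(s,t)|^2\bigr)^{1/2}$ and define
$$F^*(x) := \sum_{r\in[M]} \overline{\hat{F_r}(s,t)}\, F_r(x), \qquad x \in \A^{n-1}.$$
Since $F_1,\ldots,F_M$ form an orthonormal family in $L_2(\A^{n-1},\mu_x^{\otimes (n-1)})$, Parseval gives $\|F^*\|_2^2 = \sum_r |\hat{F_r}(s,t)|^2 = N^2$. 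Moreover, $\cP^{(n-1,d-1)}$ is a linear subspace, and, as already noted in the text (each $F_r'$ has degree $1$ while $F$ has all monomials of effective degree at least $d$), every $F_r$ lies in $\cP^{(n-1,d-1)}$, so also $F^* \in \cP^{(n-1,d-1)}$.

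Now I substitute $F^*$ into the $(n-1)$-dimensional correlation and use linearity of expectation together with the definition of $\hat{F_r}(s,t)$:
$$\E_{(x,y,z)\sim \mu^{\otimes I}}\!\left[F^*(x)\, G_s(y)\, H_t(z)\, \prod_{i\in I} T(x_i,y_i,z_i)\right] = \sum_{r\in[M]} \overline{\hat{F_r}(s,t)}\,\hat{F_r}(s,t) = N^2.$$
On the other hand, $G_s$ and $H_t$ are elements of orthonormal bases, so $\|G_s\|_2 = \|H_t\|_2 = 1$, and $F^*/N$ is a unit-norm element of $\cP^{(n-1,d-1)}$. Applying the definition of $\gamma_{n-1,d-1}$ to the triple $(F^*/N, G_s, H_t)$ and rescaling bounds the left-hand side in absolute value by $N \cdot \gamma_{n-1,d-1}$. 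Combining with the equality above yields $N^2 \le N \cdot \gamma_{n-1,d-1}$, hence $N \le \gamma_{n-1,d-1}$, which squared is exactly the claim.

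This step is really just a Bessel-inequality/duality computation, so I do not expect any genuine obstacle. The only two points requiring a line of justification are the linearity of $\cP^{(n-1,d-1)}$ (which is immediate from the definition via monomials of effective degree at least $d-1$) and the inheritance of the effective-degree lower bound from $F$ to each SVD factor $F_r$, both of which are already observed in the excerpt just before the statement of the lemma.
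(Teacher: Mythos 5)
Your argument is correct and is essentially identical to the paper's own proof: both form the test function $\sum_{r}\bar{\hat{F_r}(s,t)}F_r$, use orthonormality of the $F_r$'s to compute its norm and its correlation, and then invoke the definition of $\gamma_{n-1,d-1}$. The only cosmetic difference is that the paper normalizes first and then squares, whereas you keep $F^*$ unnormalized and divide through by $N$ at the end.
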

\begin{proof}
    Write \[ \wt{F}_{s,t} = \frac{\sum_{r\in[M]} \bar{\hat{F_r}(s,t)}F_r}{\sqrt{\sum_{r\in[M]} |\hat{F_r}(s,t)|^2}}. \] Clearly $\|\wt{F}_{s,t}\|_2 = 1$. Thus, we get \[ \sum_{r\in[M]} |\hat{F_r}(s,t)|^2 = \left|\E_{(x,y,z)\sim\mu^{\otimes I}}\left[\wt{F}_{s,t}(x)G_r(y)H_s(y) \prod_{i\in I} T(x_i,y_i,z_i) \right] \right|^2 \le \gamma_{n-1,d-1}^2. \]
\end{proof}
Analogously, we have:
\begin{lemma}
\label{lemma:bound21}
For all $s, t$ we have: $\sum_{r\in[M]} |\hat{F_r'}(s,t)|^2 \le 1$.
\end{lemma}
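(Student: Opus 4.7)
The plan is to mirror the proof of \cref{lemma:bound1} but, since we are now at the single-coordinate level rather than operating on $n-1$ coordinates, to replace the invocation of $\gamma_{n-1,d-1}$ by a direct Cauchy--Schwarz bound. Define the normalized dual function
\[
  \wt{F}'_{s,t}(x) \;:=\; \frac{\sum_{r\in[M]} \bar{\hat{F_r'}(s,t)}\, F_r'(x)}{\sqrt{\sum_{r\in[M]} |\hat{F_r'}(s,t)|^2}},
\]
so that $\|\wt{F}'_{s,t}\|_{L^2(\A,\mu_x)} = 1$ by the orthonormality of $\{F_r'\}_{r\in[M]}$ in $L^2(\A,\mu_x)$ (item 2 of \cref{lemma:svd}). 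Interchanging the finite sum over $r$ with the expectation defining each $\hat{F_r'}(s,t)$ yields
\[
  \sqrt{\sum_{r\in[M]}|\hat{F_r'}(s,t)|^2} \;=\; \E_{(x,y,z)\sim\mu}\!\left[\wt{F}'_{s,t}(x)\, G_s'(y)\, H_t'(z)\, T(x,y,z)\right].
\]

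Next I would apply the Cauchy--Schwarz inequality in $L^2(\mu)$, pairing the $x$-only factor against everything else:
\[
  \sqrt{\sum_{r\in[M]}|\hat{F_r'}(s,t)|^2} \;\le\; \|\wt{F}'_{s,t}\|_{L^2(\A,\mu_x)} \cdot \sqrt{\E_{(x,y,z)\sim\mu}\!\left[|G_s'(y)|^2\,|H_t'(z)|^2\,|T(x,y,z)|^2\right]}.
\]
The first factor equals $1$. To finish, two structural inputs are combined: first, $T \colon \supp(\mu) \to \D$, so $|T(x,y,z)| \le 1$ pointwise, eliminating the $T$ factor; second, the relaxed base case (\cref{def:relaxed}) guarantees that $\mu_{yz}$ is a product distribution, hence
\[
  \E_{(y,z)\sim\mu_{yz}}\!\left[|G_s'(y)|^2 |H_t'(z)|^2\right] \;=\; \E_{y\sim\mu_y}[|G_s'(y)|^2]\cdot \E_{z\sim\mu_z}[|H_t'(z)|^2] \;=\; 1,
\]
using the orthonormality of $\{G_s'\}$ in $L^2(\B,\mu_y)$ and $\{H_t'\}$ in $L^2(\C,\mu_z)$.

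There is no serious technical obstacle here; the main thing worth flagging is \emph{where} the hypothesis is used. Unlike \cref{lemma:bound1}, whose bound came from the inductive quantity $\gamma_{n-1,d-1}$, here we need the product-distribution structure of $\mu_{yz}$ to decouple the $y$- and $z$-contributions after Cauchy--Schwarz. Without the first bullet of \cref{def:relaxed}, one would only obtain $\sqrt{\E_{\mu_{yz}}[|G_s'(y)|^2 |H_t'(z)|^2]}$, which in general can exceed $1$. Thus the argument implicitly relies on the reduction to the relaxed base case carried out in \cref{subsec:relaxed} before the SVD step.
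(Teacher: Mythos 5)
Your proof is correct and matches the paper's intent: the paper's own proof of \cref{lemma:bound21} simply says ``identical to \cref{lemma:bound1}'', meaning one defines the normalized dual function $\wt{F}'_{s,t}$, interprets $\sqrt{\sum_r |\hat{F_r'}(s,t)|^2}$ as a correlation, and bounds it — which is exactly what you do. Your observation that the final step is \emph{not} literally identical to \cref{lemma:bound1} (where the bound was supplied by the inductive quantity $\gamma_{n-1,d-1}$) but instead requires Cauchy--Schwarz together with the product structure of $\mu_{yz}$ and $|T| \le 1$ is a useful and accurate clarification of what the paper leaves implicit.
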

\begin{proof}
The proof is identical to that of \cref{lemma:bound1}.
\end{proof}
\begin{lemma}
\label{lemma:bound2}
For all $r$ we have: $\sum_{s,t\in[M]} |\hat{F_r'}(s,t)|^2 \le 1$.
\end{lemma}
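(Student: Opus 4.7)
The plan is to interpret the sum $\sum_{s,t\in[M]} |\hat{F_r'}(s,t)|^2$ as a sum of squared generalized Fourier coefficients of one fixed function on $\B \times \C$, and then invoke Bessel's inequality. We are operating under the relaxed base case (\cref{def:relaxed}), so two features of $\mu$ are available: the marginal $\mu_{y,z}$ equals the product $\mu_y \otimes \mu_z$, and each $(y,z) \in \supp(\mu_{y,z})$ determines a unique $x = x(y,z)$ with $(x,y,z) \in \supp(\mu)$. Together they will let me push all $x$-dependence out of the integral defining $\hat{F_r'}(s,t)$.

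Concretely, I would introduce the transport operator $\Phi \colon L^2(\A, \mu_x) \to L^2(\B \times \C, \mu_y \otimes \mu_z)$ by
\[ \Phi F(y,z) \;=\; F(x(y,z))\, T(x(y,z), y, z) \]
for $(y,z) \in \supp(\mu_y \otimes \mu_z)$. Expanding the definition of $\hat{F_r'}(s,t)$ and collapsing the $x$-sum using uniqueness of $x(y,z)$ gives
\[ \hat{F_r'}(s,t) \;=\; \E_{(y,z) \sim \mu_y \otimes \mu_z}\bigl[\Phi F_r'(y,z)\, G_s'(y)\, H_t'(z)\bigr]. \]
The relation $\mu(x(y,z),y,z) = \mu_y(y)\mu_z(z)$ on $\supp(\mu)$ implies that the pushforward of $\mu_y \otimes \mu_z$ through $x(\cdot,\cdot)$ equals $\mu_x$; combining this with $|T|\le 1$ shows that $\Phi$ is a contraction, so $\|\Phi F_r'\|_{L^2(\mu_y\otimes\mu_z)}^2 \le \E_{\mu_x}[|F_r'|^2] = 1$.

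Finally, \cref{lemma:svd} guarantees that $\{G_s'\}_{s\in[M]}$ and $\{H_t'\}_{t\in[M]}$ are orthonormal in $L^2(\mu_y)$ and $L^2(\mu_z)$, whence $\{\bar{G_s'} \otimes \bar{H_t'}\}_{s,t}$ is an orthonormal family in $L^2(\mu_y \otimes \mu_z)$. Bessel's inequality then yields
\[ \sum_{s,t\in[M]} |\hat{F_r'}(s,t)|^2 \;=\; \sum_{s,t\in[M]} \bigl|\langle \Phi F_r',\, \bar{G_s'}\otimes\bar{H_t'}\rangle\bigr|^2 \;\le\; \|\Phi F_r'\|_2^2 \;\le\; 1, \]
which is exactly the claim. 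No step here is genuinely hard; the only care required is bookkeeping complex conjugates when identifying $\hat{F_r'}(s,t)$ as an honest $L^2$ inner product, and verifying the pushforward identity $\mu_y \otimes \mu_z \mapsto \mu_x$ via $x(\cdot,\cdot)$.
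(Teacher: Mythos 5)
Your proof is correct and is essentially the paper's argument in slightly different bookkeeping: the paper absorbs $T$ into the $(y,z)$ side, forming the orthonormal family $U_{s,t}(y,z)=G_s'(y)H_t'(z)T(x(y,z),y,z)$ and leaving $F_r'$ alone, whereas you absorb $T$ into the $x$-side via $\Phi F_r'$ and keep $\bar{G_s'}\otimes\bar{H_t'}$ as the orthonormal family; in both cases the key ingredients are identical (under the relaxed base case $(y,z)$ determines $x$, $\mu_{y,z}$ is a product, the pushforward of $\mu_{y,z}$ along $x(\cdot,\cdot)$ is $\mu_x$, $|T|=1$ on $\supp(\mu)$) and the conclusion follows from Bessel's inequality.
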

\begin{proof}
This is one of the few places in the proof we use that $x$ is determined by $(y,z)$ under the relaxed based case (\cref{def:relaxed}). To clarify this throughout this proof, we will write $x$ as $x(y,z)$. Define the functions $U_{s,t}: \B \times \C \to \bbC$ as $U_{s,t}(y,z) = G_s'(y)H_t'(z)T(x(y,z),y,z)$. These functions are orthogonal over $L_2(\B\times\C,\mu_{y,z})$. Indeed for $(s_1,t_1)$ and $(s_2,t_2)$,
\begin{align*} \E_{(y,z)\sim\mu_{y,z}}[U_{s_1,t_1}(y,z) \bar{U_{s_2,t_2}(y,z)}] &= \E_{(y,z)\sim\mu_{y,z}}[G_{s_1}'(y)\bar{G_{s_2}'(y)} H_{t_1}'(z) \bar{H_{t_2}'(z)} |T(x(y,z),y,z)|^2] \\
&= \E_{(y,z)\sim\mu_{y,z}}[G_{s_1}'(y)\bar{G_{s_2}'(y)} H_{t_1}'(z) \bar{H_{t_2}'(z)}] = 1_{(s_1,t_1) = (s_2,t_2)},
\end{align*}
where we have used that $\mu_{y,z}$ is a product distribution, $|T(x,y,z)| = 1$ for all $(x,y,z)\in\supp(\mu)$, and orthogonality of the families $G_s'$ and $H_t'$. The lemma follows because $\hat{F_r'}(s,t) = \l F_r', \bar{U_{s,t}} \r$.
\end{proof}
Note now that these lemmas show:
\[ \sum_{s\in S,t\in T} \bb_s^2 \cc_t^2 \sum_{r\in R} |\hat{F_r}(s,t)|^2 \le \gamma_{n-1,d-1}^2 \enspace \text{ and } \enspace \sum_{r\in R} \aa_r^2 \sum_{s\in S,t\in T} |\hat{F_r'}(s,t)|^2 \le 1. \]
Thus, if \cref{lemma:highdegree} fails, we must have near-equality of \cref{lemma:bound1,lemma:bound2}.
\begin{lemma}
\label{cor:bound1}
For all $s\in S,t\in T$ we have: $\sum_{r\in R} |\hat{F_r}(s,t)|^2 \ge (1 - \frac{4\delta M^3}{\Delta^4})\gamma_{n-1,d-1}^2$.
\end{lemma}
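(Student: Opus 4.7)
My plan is to prove this by contradiction, exploiting the singular-value gap built into the choice of $\delta, \Delta$ in \cref{subsec:gap}. Suppose there exists some specific pair $(s_0, t_0) \in S \times T$ violating the claim, so $\sum_{r\in R} |\hat{F_r}(s_0,t_0)|^2 < (1 - 4\delta M^3/\Delta^4)\gamma_{n-1,d-1}^2$. I would combine this failure with \cref{lemma:bound1}, the pointwise upper bound $\sum_{r\in R} |\hat{F_r}(s,t)|^2 \le \gamma_{n-1,d-1}^2$ for every other $(s,t)$, together with $\bb_{s_0}^2 \cc_{t_0}^2 \ge \Delta^4$ and $\sum_s \bb_s^2, \sum_t \cc_t^2 \le 1$, to get the weighted upper estimate
\[ \sum_{s\in S, t\in T} \bb_s^2 \cc_t^2 \sum_{r\in R} |\hat{F_r}(s,t)|^2 \le \left(1 - \frac{4\delta M^3}{\Delta^4}\bb_{s_0}^2\cc_{t_0}^2\right)\gamma_{n-1,d-1}^2 \le (1 - 4\delta M^3)\gamma_{n-1,d-1}^2. \]

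The second step is to produce the matching lower bound on the same weighted quantity directly from the standing contradiction hypothesis (that \cref{lemma:highdegree} fails). That hypothesis gives $\gamma_{n,d} \ge (1 - c\tau^{O_M(1)})\gamma_{n-1,d-1}$, which combined with \eqref{eq:remove} yields a lower bound on the truncated sum $|\sum_{r\in R, s\in S, t\in T} \aa_r\bb_s\cc_t \hat{F_r}(s,t)\hat{F_r'}(s,t)|$. Squaring and feeding this into \eqref{eq:cs}, and using $\sum_{r\in R} \aa_r^2 \sum_{s\in S, t\in T} |\hat{F_r'}(s,t)|^2 \le 1$ (which follows immediately from $\sum_r \aa_r^2 \le 1$ combined with \cref{lemma:bound2}) to bound the second factor in the Cauchy--Schwarz product, forces
\[ \sum_{s\in S, t\in T} \bb_s^2 \cc_t^2 \sum_{r\in R} |\hat{F_r}(s,t)|^2 \ge (1 - c\tau^{O_M(1)} - \delta M^3)^2 \gamma_{n-1,d-1}^2. \]

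Comparing the two estimates yields the desired contradiction as soon as $c\tau^{O_M(1)} + \delta M^3$ is smaller than $2\delta M^3$, i.e., $c\tau^{O_M(1)} < \delta M^3$. This is exactly what is guaranteed by the choice of the singular-value gap in \cref{subsec:gap}, where $\delta$ was chosen to be a controlled polynomial in $\tau$ with $\delta \ge \tau^{O_M(1)}$. The one thing that requires care is the parameter bookkeeping: one must track the competition between the $\delta M^3 \gamma_{n-1,d-1}$ loss in \eqref{eq:remove}, the $(1 - c\tau^{O_M(1)})$ factor from the contradiction hypothesis, and the $\Delta^4$ gain on the target index $(s_0,t_0)$, and verify that they fit together given the relations between $\delta$, $\Delta$, and $\tau$ stipulated in \cref{subsec:gap}. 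I do not expect any genuinely new idea to be needed — this is the standard ``near-equality in Cauchy--Schwarz forces pointwise near-equality'' manoeuvre, specialised to the present two-sided weighting by the large singular values.
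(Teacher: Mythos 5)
Your proposal is correct and takes essentially the same route as the paper: the paper's proof is a two-line version of exactly the argument you describe — assume the contrary, use $\bb_s,\cc_t\ge\Delta$ and \cref{lemma:bound1} to push the pointwise deficit into the weighted sum $\sum_{s,t}\bb_s^2\cc_t^2\sum_r|\hat{F_r}(s,t)|^2\le(1-4\delta M^3)\gamma_{n-1,d-1}^2$, and then feed this through \eqref{eq:cs} and \eqref{eq:remove} (bounding the second Cauchy--Schwarz factor by $1$ via \cref{lemma:bound2}) to conclude \cref{lemma:highdegree}, contradicting the standing assumption. Your extra unpacking of the bookkeeping around $\delta$, $\Delta$, and $\tau$ is just the content the paper suppresses into the phrase ``would show \cref{lemma:highdegree}.''
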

\begin{proof}
If the contrary holds, then
\[ \sum_{s\in S,t\in T} \bb_s^2 \cc_t^2 \sum_{r\in R} |\hat{F_r}(s,t)|^2 \le (1 - 4\delta M^3)\gamma_{n-1,d-1}^2, \] because $\bb_s,\cc_t \ge \Delta$. Combining this with \eqref{eq:remove} and \eqref{eq:cs} would show \cref{lemma:highdegree}. 
\end{proof}
\begin{lemma}
\label{cor:bound2}
For all $r\in R$ we have: $\sum_{s\in S,t\in T} |\hat{F_r'}(s,t)|^2 \ge 1 - \frac{4\delta M^3}{\Delta^2}$.
\end{lemma}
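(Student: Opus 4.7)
The plan is to follow the approach of \cref{cor:bound1} mutatis mutandis, this time tightening the simple inequality of \cref{lemma:bound2} instead of that of \cref{lemma:bound1}. Suppose for contradiction that some $r^* \in R$ satisfies
\[ \sum_{s\in S,\,t\in T} |\hat{F_{r^*}'}(s,t)|^2 < 1 - \frac{4\delta M^3}{\Delta^2}. \]
Then the right factor under the square root in \eqref{eq:cs} can be bounded by
\[ \sum_{r\in R} \aa_r^2 \sum_{s\in S,\,t\in T} |\hat{F_r'}(s,t)|^2 \le \sum_{r\in R\setminus\{r^*\}} \aa_r^2 + \aa_{r^*}^2 \left(1 - \frac{4\delta M^3}{\Delta^2}\right) \le 1 - \aa_{r^*}^2 \cdot \frac{4\delta M^3}{\Delta^2} \le 1 - 4\delta M^3, \]
where the first inequality uses \cref{lemma:bound2} for $r \ne r^*$, the second uses $\sum_r \aa_r^2 = 1$, and the third uses that $r^* \in R$ forces $\aa_{r^*}^2 \ge \Delta^2$.

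Combining this with the trivial bound $\sum_{s\in S,\,t\in T} \bb_s^2\cc_t^2 \sum_{r\in R} |\hat{F_r}(s,t)|^2 \le \gamma_{n-1,d-1}^2$, obtained from \cref{lemma:bound1} together with $\sum_{s,t} \bb_s^2 \cc_t^2 \le 1$, the Cauchy-Schwarz inequality \eqref{eq:cs} yields
\[ \left|\sum_{r\in R,\,s\in S,\,t\in T} \aa_r\bb_s\cc_t \hat{F_r}(s,t) \hat{F_r'}(s,t)\right| \le \gamma_{n-1,d-1}\sqrt{1 - 4\delta M^3}. \]
Plugging this into \eqref{eq:remove} and using $\sqrt{1-x} \le 1 - x/2$ for $x \in [0,1]$ I would conclude
\[ \left|\sum_{r,s,t \in [M]} \aa_r\bb_s\cc_t \hat{F_r}(s,t) \hat{F_r'}(s,t)\right| \le \gamma_{n-1,d-1}\left(\sqrt{1 - 4\delta M^3} + \delta M^3\right) \le \gamma_{n-1,d-1}(1 - \delta M^3). \]

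Since $\delta \ge \tau^{O_M(1)} = (d/n)^{O_M(1)}$ by the choice fixed in \cref{subsec:gap}, the preceding bound gives $\gamma_{n,d} \le \gamma_{n-1,d-1}(1 - c(d/n)^{O_M(1)})$ for the chosen $F, G, H$, and so already establishes \cref{lemma:highdegree} for this triple; this contradicts the standing assumption of the section that \cref{lemma:highdegree} fails for $F, G, H$. The argument is structurally parallel to the proof of \cref{cor:bound1}; the only bookkeeping point worth flagging is that the weight of the offending index is $\aa_{r^*}^2 \ge \Delta^2$ (quadratic in $\Delta$, rather than the quartic $\bb_{s^*}^2\cc_{t^*}^2 \ge \Delta^4$ used in \cref{cor:bound1}), which is precisely what accounts for the $\Delta^2$ in the denominator of the stated inequality. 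No substantive new idea is required beyond this exchange of roles.
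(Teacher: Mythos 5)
Your proof is correct and takes essentially the same approach as the paper: assume the contrary, use Lemma \ref{lemma:bound2} for the other indices to bound $\sum_{r\in R}\aa_r^2\sum_{s\in S,t\in T}|\hat{F_r'}(s,t)|^2 \le 1-4\delta M^3$ (using $\aa_{r^*}\ge\Delta$), then plug into \eqref{eq:cs} and \eqref{eq:remove} to contradict the standing assumption that Lemma \ref{lemma:highdegree} fails. The paper states this more tersely; your write-up simply fills in the same intermediate steps.
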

\begin{proof}
If the contrary holds, then
\[ \sum_{r\in R} \aa_r^2 \sum_{s\in S,t\in T} |\hat{F_r'}(s,t)|^2 \le 1 - 4\delta M^3, \]
because $\aa_r \ge \Delta$. Combining this with \eqref{eq:remove} and \eqref{eq:cs} would show \cref{lemma:highdegree}. 
\end{proof}
The remainder of the section is split into two cases depending on whether $|R| < |S||T|$ or $|R| \ge |S||T|$. The former is handled by showing that we can perturb $F_r,G_s,H_t$ to reach a contradiction. The latter is handled by using the one-dimensional bound guaranteed by the relaxed base case.

\subsection{Handling the Case \texorpdfstring{$|R| < |S||T|$}{r<st}}
\label{subsec:r<st}
In this case we will show how to build functions $\wt{F}, \wt{G}, \wt{H}$ that violate the inductive hypothesis. Define the $|R|$ dimensional vectors $V^{(s,t)}$, defined as $V^{(s,t)}_r = \hat{F_r}(s,t)$. Because $|R| < |S||T|$, we can find some $(s_1,t_1)$ and $(s_2,t_2)$ such that $|\l V^{(s_1,t_1)}, V^{(s_2,t_2)}\r|$ is non-negligible.
\begin{lemma}
\label{lemma:inner}
For $q > d$, and vectors $v_1,\dots,v_q \in \bbC^d$, there is $i \neq j$ such that \[ |\l v_i, v_j \r| \ge d^{-1}\|v_i\|_2\|v_j\|_2. \]
\end{lemma}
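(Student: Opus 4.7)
The plan is to reduce to the unit-vector case and then argue via a Gram/moment-matrix computation that exploits the rank constraint. First, I would discard any zero $v_i$ (for which the inequality to be proven is vacuous or trivially satisfied) and then replace each remaining $v_i$ by $u_i := v_i/\|v_i\|_2$, noting that the quantity $|\l v_i, v_j\r|/(\|v_i\|_2\|v_j\|_2)$ is unchanged. It therefore suffices to show: if $u_1,\ldots,u_q\in\bbC^d$ are unit vectors with $q > d$, then some pair satisfies $|\l u_i, u_j\r| \geq 1/d$.

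The key object is the $d\times d$ Hermitian positive semidefinite matrix $M := \sum_{i=1}^q u_i u_i^*$, which has $\mathrm{tr}(M) = \sum_i \|u_i\|_2^2 = q$ and $\mathrm{rank}(M) \leq d$. Letting $\lambda_1,\ldots,\lambda_d \geq 0$ denote its eigenvalues, Cauchy--Schwarz gives the rank-trace bound
\[ q^2 = \Big(\sum_{i=1}^d \lambda_i\Big)^2 \leq d \sum_{i=1}^d \lambda_i^2 = d\cdot\|M\|_F^2. \]
On the other hand, expanding $\|M\|_F^2 = \mathrm{tr}(M^*M)$ directly yields
\[ \|M\|_F^2 = \sum_{i,j=1}^q |\l u_i, u_j\r|^2 = q + \sum_{i\neq j} |\l u_i, u_j\r|^2. \]
Combining the two expressions gives $\sum_{i\neq j} |\l u_i, u_j\r|^2 \geq q(q-d)/d$, and averaging over the $q(q-1)$ off-diagonal pairs produces an $i\neq j$ with
\[ |\l u_i, u_j\r|^2 \;\geq\; \frac{q-d}{d(q-1)}. \]
The right-hand side is a nondecreasing function of $q$ for $q \geq d+1$, so it is at least its value at $q=d+1$, namely $1/d^2$. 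Hence $|\l u_i, u_j\r| \geq 1/d$, as required.

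There is no serious obstacle; the argument is a standard application of the rank-trace inequality $\|M\|_F^2 \geq \mathrm{tr}(M)^2/\mathrm{rank}(M)$ combined with a direct expansion of the Frobenius norm in terms of pairwise inner products. One sanity check worth noting is that the bound $1/d$ is essentially tight (as illustrated by $d+1$ equiangular lines / a regular simplex in $\bbC^d$), so one cannot hope to improve this step of the overall induction in any meaningful way.
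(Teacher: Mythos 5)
Your proof is correct and is essentially the paper's argument: the paper forms the $q\times q$ Gram matrix $\mV\mV^*$ of rank $\le d$ and bounds $\mathrm{Tr}((\mV\mV^*)^2)\ge q^2/d$, whereas you form the $d\times d$ matrix $M=\sum_i u_iu_i^*=\mV^*\mV$, which has the same nonzero spectrum, so the Frobenius-norm/rank-trace computation and the final averaging step $|\langle u_i,u_j\rangle|^2\ge (q-d)/(d(q-1))\ge 1/d^2$ are identical.
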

\begin{proof}
By scaling we may assume that $\|v_i\|_2 = 1$ for all $i$. Let $\mV \in \bbC^{q \times d}$ be the matrix whose $i$-th row is $v_i$. Then $\mV\mV^*$ has entries $\l v_i, v_j \r$, and has rank at most $d$. Hence
\[ \sum_{i,j} |\l v_i, v_j \r|^2 = \mathsf{Tr}((\mV\mV^*)^2) \ge d \cdot (q/d)^2 = q^2/d. \]
Thus there is an $i \neq j$ with
\[ |\l v_i, v_j \r|^2 \ge \frac{q^2/d - q}{q(q-1)} = \frac{q-d}{d(q-1)} \ge \frac{1}{d^2} \]
as long as $q \ge d+1$.
\end{proof}
Let $(s_1,t_1) \neq (s_2,t_2)$ (we do allow for $s_1 = s_2$ or $t_1 = t_2$) satisfy
\begin{align} |\l V^{(s_1,t_1)}, V^{(s_2,t_2)} \r| \ge |R|^{-1} \|V^{(s_1,t_1)}\|_2 \|V^{(s_2,t_2)}\|_2 \ge \frac{1}{2M} \gamma_{n-1,d-1}^2, \label{eq:choice} \end{align}
where we have used \cref{lemma:bound1}. 

Now we define the perturbed functions $\wt{G}, \wt{H}$. Let $\alpha_s$ for $s \in S$ and $\beta_t$ for $t \in T$ be unit complex variables. Let $A_s$ for $s \in S$ and $B_t$ for $t \in T$ be nonnegative real numbers satisfying $\sum_{s\in S} A_s^2 = 1$ and $\sum_{t \in T} B_t^2 = 1$. Define
\[ \wt{G} = \sum_{s \in S} \alpha_s A_s G_s \enspace \text{ and } \enspace \wt{H} = \sum_{t \in T} \beta_t B_t H_t. \]
Note that $\|\wt{G}\|_2 = \|\wt{H}\|_2 = 1$. Define the coefficients \[ \wt{F_r} = \E_{(x,y,z)\sim\mu^I}[F_r(x)\wt{G}(y)\wt{H}(z)\prod_{i\in I} T(x_i,y_i,z_i)], \] and the function $\wt{F} = \frac{\sum_{r\in R} \bar{\wt{F_r}}F_r}{\sqrt{\sum_{r\in R} |\wt{F_r}|^2}}.$ Then $\|\wt{F}\|_2 = 1$ and thus
\begin{align} \sum_{r \in R} |\wt{F_r}|^2 = \left|\E_{(x,y,z)\sim\mu^I}[\wt{F}(x)\wt{G}(y)\wt{H}(z)\prod_{i\in I} T(x_i,y_i,z_i)] \right|^2 \le \gamma_{n-1,d-1}^2, \label{eq:inductfr} \end{align}
for any choice of $\alpha_s, \beta_t$, $A_s$, $B_t$ by the inductive hypothesis. Expanding the leftmost term gives
\begin{align*} \sum_{r \in R} |\wt{F_r}|^2 &= \sum_{r,s,t,s',t'} \alpha_s A_s \bar{\alpha_{s'}} A_{s'} \beta_t B_t \bar{\beta_{t'}} B_{t'} \hat{F_r}(s,t) \bar{\hat{F_r}(s',t')} \\
&= \sum_{s,t,s',t'} \alpha_s A_s \bar{\alpha_{s'}} A_{s'} \beta_t B_t \bar{\beta_{t'}} B_{t'} \l V^{(s,t)}, V^{(s',t')}\r. \end{align*}
Let us take the expectation of the previous expression over $\alpha_s$ for $s \notin \{s_1, s_2\}$ and $\beta_t$ for $t \notin \{t_1, t_2\}$ being independent, uniform unit complex variables. This gives
\begin{align}
    &\mathop{\E}_{\substack{\alpha_s, s \notin \{s_1, s_2\} \\ \beta_t, t \notin \{t_1, t_2\}}}\left[\sum_{s,t,s',t'} \alpha_s \bar{\alpha_{s'}} \beta_t \bar{\beta_{t'}} \l V^{(s,t)}, V^{(s',t')}\r \right] \nonumber \\
    = ~& \sum_{s \in S,t \in T} A_s^2 B_t^2 \|V^{(s,t)}\|_2^2 \label{eq:11} \\
    + ~& \sum_{\substack{s \in S \\ t, t' \in \{t_1,t_2\}, t \neq t'}} A_s^2 \beta_t B_t \bar{\beta_{t'}} B_{t'} \l V^{(s,t)}, V^{(s,t')} \r \label{eq:12} \\
    + ~& \sum_{\substack{s, s' \in \{s_1, s_2\}, s \neq s' \\ t \in T}} \alpha_s A_s \bar{\alpha_{s'}} A_{s'} B_t^2 \l V^{(s,t)}, V^{(s',t)} \r \label{eq:21} \\
    + ~& \sum_{\substack{s, s' \in \{s_1, s_2\}, s \neq s' \\ t, t' \in \{t_1,t_2\}, t \neq t'}} \alpha_s A_s \bar{\alpha_{s'}} A_{s'} \beta_t B_t \bar{\beta_{t'}}  B_{t'} \l V^{(s,t)}, V^{(s',t')} \r. \label{eq:22}
    \end{align}
By \cref{cor:bound1} we know that the expression in \eqref{eq:11} is at least $(1 - \frac{4\delta M^3}{\Delta^4})\gamma_{n-1,d-1}^2$. The rest of the argument is split into subcases depending on whether $s_1 = s_2$, $t_1 = t_2$, or neither.
\paragraph{Case 1: We have $|\l V^{(s_1,t_1)}, V^{(s_2,t_2)}\r| > \frac{4\delta M^3}{\Delta^4}\gamma_{n-1,d-1}^2$ for some $s_1 = s_2$ or $t_1 = t_2$:} By symmetry, we only consider the case $s_1 = s_2$. Let $s = s_1$. In this case, the terms in \eqref{eq:21} and \eqref{eq:22} vanish.
We set $A_s = 1$ and all other $A_{s'} = 0$. set $B_{t_1} = 1/\sqrt{2}$ and $B_{t_2} = 1/\sqrt{2}$. The the expression in \eqref{eq:12} simplifies to
\[ \frac{1}{2}\beta_{t_1}\bar{\beta_{t_2}}\l V^{(s,t_1)}, V^{(s,t_2)}\r + \frac{1}{2}\beta_{t_2}\bar{\beta_{t_1}}\l V^{(s,t_2)}, V^{(s,t_1)}\r = \mathsf{Re}\left(\beta_{t_1}\bar{\beta_{t_2}}\l V^{(s,t_1)}, V^{(s,t_2)}\r \right). \]
Thus, we should choose $\beta_{t_1}\bar{\beta_{t_2}}$ to make the final expression a positive real number. Then its value is $|\l V^{(s,t_1)}, V^{(s,t_2)}\r| > \frac{4\delta M^3}{\Delta^4}\gamma_{n-1,d-1}^2$ by assumption. So the total sum is greater than
\[ \left(1 - \frac{4\delta M^3}{\Delta^4}\right)\gamma_{n-1,d-1}^2 + \frac{4\delta M^3}{\Delta^4}\gamma_{n-1,d-1}^2 = \gamma_{n-1,d-1}^2, \] a contradiction.

\paragraph{Case 1 does not hold and $|\l V^{(s_1,t_1)}, V^{(s_2,t_2)}\r| \ge \frac{1}{2M}\gamma_{n-1,d-1}^2$ for some $s_1 \neq s_2$ and $t_1 \neq t_2$:} 
Set $A_{s_1} = A_{s_2} = B_{t_1} = B_{t_2} = 1/\sqrt{2}$. We first bound the expressions in \eqref{eq:12} and \eqref{eq:21}. Because \textbf{Case 1} does not hold, the expression in \eqref{eq:12} can be lower bounded by
\[ -\max_{s \in \{s_1,s_2\}} |\l V^{(s,t_1)}, V^{(s,t_2)}| \ge -\frac{4\delta M^3}{\Delta^4}\gamma_{n-1,d-1}^2, \] and an identical bound follows for \eqref{eq:21}. The expression in \eqref{eq:12} can be simplified to
\begin{align}
\frac{1}{4}&\Bigg(\alpha_{s_1}\bar{\alpha_{s_2}} \beta_{t_1}\bar{\beta_{t_2}} \l V^{(s_1,t_1)}, V^{(s_2,t_2)}\r + \alpha_{s_2}\bar{\alpha_{s_1}} \beta_{t_2}\bar{\beta_{t_1}} \l V^{(s_2,t_2)}, V^{(s_1,t_1)}\r \nonumber \\ +~&\alpha_{s_2}\bar{\alpha_{s_1}} \beta_{t_1}\bar{\beta_{t_2}} \l V^{(s_2,t_1)}, V^{(s_1,t_2)}\r + \alpha_{s_1}\bar{\alpha_{s_2}} \beta_{t_2}\bar{\beta_{t_1}} \l V^{(s_1,t_2)}, V^{(s_2,t_1)}\r \Bigg) \nonumber \\
= \frac{1}{2}~&\Re\left(\alpha_{s_1}\bar{\alpha_{s_2}} \beta_{t_1}\bar{\beta_{t_2}} \l V^{(s_1,t_1)}, V^{(s_2,t_2)}\r + \alpha_{s_2}\bar{\alpha_{s_1}} \beta_{t_1}\bar{\beta_{t_2}} \l V^{(s_2,t_1)}, V^{(s_1,t_2)}\r \right). \label{eq:bigstuff}
\end{align}
Define $\alpha = \alpha_{s_1}\bar{\alpha_{s_2}}$ and $\beta = \beta_{t_1} \bar{\beta_{t_2}}$. Pick $\alpha,\beta$ uniformly at random so that $\alpha \beta \l V^{(s_1,t_1)}, V^{(s_2,t_2)} \r$ is a positive real number. Say $\alpha\beta = \omega$. Then $\E[\alpha^{-1}\beta] = \E[\omega^{-1}\beta^2] = 0$, so the second term in \eqref{eq:bigstuff} vanishes.
Thus in expectation, the total value of equations \eqref{eq:11} to \eqref{eq:22} is at least
\[ \left(1 - \frac{4\delta M^3}{\Delta^4}\right)\gamma_{n-1,d-1}^2 - \frac{4\delta M^3}{\Delta^4}\gamma_{n-1,d-1}^2 - \frac{4\delta M^3}{\Delta^4}\gamma_{n-1,d-1}^2 + \frac{1}{2M}\gamma_{n-1,d-1}^2 > \gamma_{n-1,d-1}^2, \] a contradiction.

\subsection{Handling the Case \texorpdfstring{$|R| \ge |S||T|$}{r>st}}
\label{subsec:r>st}

In this case we use the relaxed base case (\cref{def:relaxed}). We start with the following observation.
\begin{lemma}
\label{lemma:fr'large}
If $|R| \ge |S||T|$ then for all $s \in S, t \in T$ we have $\sum_{r \in R} |\hat{F_r'}(s,t)|^2 \ge 1 - \frac{4\delta M^5}{\Delta^2}$.
\end{lemma}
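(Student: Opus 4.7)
}
The plan is to combine the double-sum lower bound already established in \cref{cor:bound2} with the single-row upper bound of \cref{lemma:bound21}, and then exploit the hypothesis $|R| \ge |S||T|$ to transfer a bulk bound into a per-pair bound. Concretely, I would introduce the nonnegative slack
\[ Y_{s,t} := 1 - \sum_{r \in R} |\hat{F_r'}(s,t)|^2 \ge 0 \qquad \text{for } (s,t) \in S \times T, \]
where nonnegativity follows from \cref{lemma:bound21} (applied to $R \subseteq [M]$). The goal then reduces to showing $Y_{s,t} \le \frac{4\delta M^5}{\Delta^2}$ for every individual $(s,t) \in S \times T$.

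The key step is to sum \cref{cor:bound2} over $r \in R$, which gives
\[ \sum_{r \in R} \sum_{s \in S, t \in T} |\hat{F_r'}(s,t)|^2 \;\ge\; |R|\left(1 - \frac{4\delta M^3}{\Delta^2}\right). \]
Rearranging in terms of the slacks yields
\[ \sum_{s \in S, t \in T} Y_{s,t} \;=\; |S||T| - \sum_{r \in R}\sum_{s \in S, t \in T} |\hat{F_r'}(s,t)|^2 \;\le\; |S||T| - |R|\left(1 - \frac{4\delta M^3}{\Delta^2}\right). \]
Now I would invoke the case assumption $|R| \ge |S||T|$ to replace $|S||T|$ on the right-hand side by $|R|$, obtaining
\[ \sum_{s \in S, t \in T} Y_{s,t} \;\le\; |R| \cdot \frac{4\delta M^3}{\Delta^2} \;\le\; \frac{4\delta M^4}{\Delta^2}, \]
using $|R| \le M$.

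Finally, since every individual $Y_{s,t}$ is nonnegative, each one is bounded by the total, so
\[ Y_{s,t} \;\le\; \frac{4\delta M^4}{\Delta^2} \;\le\; \frac{4\delta M^5}{\Delta^2}, \]
which is exactly the claim. There is no real obstacle here: the argument is essentially a pigeonhole/averaging observation built on top of \cref{cor:bound2} and \cref{lemma:bound21}, and the loss of a single factor of $M$ (going from $M^4$ to $M^5$ in the stated bound) is consistent with the slack left in the lemma's statement.
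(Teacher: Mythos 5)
Your proof is correct and follows the same approach as the paper: combining the lower bound from \cref{cor:bound2} summed over $r \in R$ with the per-pair upper bound from \cref{lemma:bound21}, and then using $|R| \ge |S||T|$ plus an averaging/pigeonhole step. The only (immaterial) difference is that you bound $|R| \le M$ rather than $|S||T| \le M^2$, which actually yields the slightly sharper constant $4\delta M^4/\Delta^2$, but this is the same argument.
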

\begin{proof}
By \cref{cor:bound2} we know that
\[ \sum_{r \in R, s \in S, t \in T} |\hat{F_r'}(s,t)|^2 \ge |R|\left(1 - \frac{4\delta M^3}{\Delta^2}\right) \ge |S||T| \left(1 - \frac{4\delta M^3}{\Delta^2}\right). \]
Because $\sum_{r \in R} |\hat{F_r'}(s,t)|^2 \le 1$ for all $s \in S, t \in T$ by \cref{lemma:bound21}, the conclusion follows.
\end{proof}
For $s \in S, t \in T$, define the functions
\[ \wt{F_{st}} = \frac{\sum_{r \in R} \bar{\hat{F_r'}(s,t)}F_r'}{\sqrt{\sum_{r\in R} |\hat{F_r'}(s,t)|^2}}. \]
Note that $\|\wt{F_{st}}\|_2 = 1$ and
\begin{align} \left(\E_{(x,y,z)\sim\mu}\left[\wt{F_{st}}(x) G_s'(y) H_t'(z) T(x,y,z) \right]\right)^2 = \sum_{r \in R} |\hat{F_r'}(s,t)|^2 \ge 1 - \frac{4\delta M^5}{\Delta^2}, \label{eq:fr'2} \end{align}
by \cref{lemma:fr'large}. Now we leverage the relaxed base case property of $\mu$ to get that \[ I_{\eff}[\wt{F_{st}}] \le \left(\frac{4\delta M^5}{c\Delta^2}\right)^{\frac{1}{100M}} := \delta', \] for all $s \in S$ and $t \in T$. The remainder of the proof converts this into an effective influence bound on the $F_r'$ functions themselves to contradict \cref{lemma:variance}.

In the following, treat all functions to be on the space $L_2(\B \times \C, \mu_{y,z})$, where we recall that $(y,z)$ is a product distribution and determines $x$ uniquely in the relaxed base case. Note that for all $s \in S, t \in T$ that
\begin{align} \|\bar{\wt{F_{st}}} - G_s' H_t' T\|_2 = \|\wt{F_{st}}\|_2^2 + \|G_s' H_t' T\|_2^2 - 2 \E_{(x,y,z)}\left[\wt{F_{st}}(x) G_s'(y) H_t'(z) T(x,y,z) \right] \le \frac{8\delta M^5}{\Delta^2}, \label{eq:kobe} \end{align} where we have used \eqref{eq:fr'2} and that the expectation is real. For all $r \in R$ we have
\[ \Big\|\bar{F_r'} - \sum_{s\in S, t\in T} \bar{\hat{F_r'}(s,t)} \bar{\wt{F_{st}}} \Big\|_2 \le \Big\|\bar{F_r'} - \sum_{s \in S, t \in T} \bar{\hat{F_r'}(s,t)} G_s' H_t' T \Big\|_2 + \sum_{s \in S, t \in T} \hat{F_r'}(s,t)\left\|G_s' H_t' T - \bar{\wt{F_{st}}} \right\|_2. \]
By \eqref{eq:kobe} the second term can be bounded by
\[ \sum_{s \in S, t \in T} \hat{F_r'}(s,t)\left\|G_s' H_t' T - \bar{\wt{F_{st}}} \right\|_2 \le M^2 \cdot \frac{8\delta M^5}{\Delta^2} = \frac{8\delta M^7}{\Delta^2}. \]
The first term can be simplified to get
\[ \Big\|\bar{F_r'} - \sum_{s \in S, t \in T} \bar{\hat{F_r'}(s,t)} G_s' H_t' T \Big\|_2 = \Big(\|F_r'\|_2^2 - \sum_{s \in S, t \in T} |\hat{F_r'}(s,t)|^2 \Big)^{1/2} \le \sqrt{\frac{4\delta M^3}{\Delta^2}},\]
where we have used that the functions $G_s'H_t'T$ are orthonormal (see the proof of \cref{lemma:bound2}), and \cref{cor:bound2}. By the triangle inequality again, we get
\begin{align*} I_{\eff}[F_r'] &\le I_{\eff}\Big[\bar{F_r'} - \sum_{s\in S, t\in T} \bar{\hat{F_r'}(s,t)} \bar{\wt{F_{st}}}\Big] + \sum_{s \in S, t \in T} |\hat{F_r'}(s,t)| I_{\eff}[\wt{F_{st}}]
\\ &\le \Big\|\bar{F_r'} - \sum_{s\in S, t\in T} \bar{\hat{F_r'}(s,t)} \bar{\wt{F_{st}}}\Big\|_2 + \sum_{s \in S, t \in T} |\hat{F_r'}(s,t)| I_{\eff}[\wt{F_{st}}] \\ 
&\le \sqrt{\frac{4\delta M^3}{\Delta^2}} + \frac{8\delta M^7}{\Delta^2} + M^2 \delta' \le 2M^2 \delta',
\end{align*}
by our choice of $\delta = \left(\frac{\Delta}{CM\tau}\right)^{1000M}.$ This gives that
\[ \sum_{r \in [M]} \aa_r^2 I_{\eff}[F_r'] \le M \delta^2 + \sum_{r \in R} \aa_r^2 I_{\eff}[F_r'] \le M\delta^2 + 2M^2 \delta' < \tau, \]
by the choice of $\delta$ again, because $\delta' = \left(\frac{4\delta M^5}{c\Delta^2}\right)^{\frac{1}{100M}}$. This contradicts \cref{lemma:variance}.

\section{Complex Embeddings: Reductions}
\label{sec:complex}

The system $f(x) + g(y) + h(z) = t(x, y, z) \pmod{m}$ has a solution for all $(x,y,z) \in \supp(\mu)$ iff it has a solution for all prime power factors $p^k$ of $m$ by the Chinese remainder theorem. So we assume $m = p^k$ for the remainder of the paper. If $T$ is nonembeddable over $\D$, then we are done by \cref{cor:complex}. So we assume that $T$ is embeddable over $\D$, and hence there is an integer $N$ and functions $a: \A \to \Z_{p^k N}, b: \B \to \Z_{p^k N}, c: \C \to \Z_{p^k N}$ with
\[ a(x) + b(y) + c(z) = N \cdot t(x,y,z) \pmod{p^k N}. \]
We assume that $N$ is minimal.
Thus the equation $f(x) + g(y) + h(z) = t(x, y, z) \pmod{p^k}$ can be rewritten as $\wt{f}(x) + \wt{g}(y) + \wt{h}(z) = 0 \pmod{p^k N}$ where $\wt{f}(x) = N \cdot f(x) - a(x)$.

This motivates us to apply the techniques of \cite{BKM4} to reduce to the case where the distribution $\mu$ is over all $(x,y,z) \in \cA^3$ with $x+y+z=0$, where $\cA$ is a finite abelian group (the \emph{master group}) that governs all abelian embeddings of $\mu$.
\subsection{Master Group and Embedding}
\label{subsec:master}
We start by introducing the master embedding, which essentially captures all the abelian embeddings of $\mu$. The master embedding maps into a finite abelian group as long as $\mu$ has no $\Z$-embeddings.

\begin{definition}[Master Embedding]
\label{def:master}
Let $\mu$ be a distribution on $\A \times \B \times \C$, and let $M = \max\{|\A|, |\B|, |\C|\}$. Let $r = O_M(1)$ be sufficiently large. Let $(\a_1, \b_1, \c_1), \dots, (\a_s, \b_s, \c_s)$ be all embeddings of $\mu$ into cyclic groups $\cA_1, \dots, \cA_s$ of size at most $r$. Then a master embedding of $\mu$ into $\prod_{i\in[s]} \cA_i$ is given by
\[ \ama(x) = (\a_1(x), \dots, \a_s(x)), \enspace \bma(y) = (\b_1(y), \dots, \b_s(y)), \enspace \cma(z) = (\c_1(z), \dots, \c_s(z)). \]
\end{definition}
As argued in \cite{BKM4}, for sufficiently large $r = O_M(1)$, the master embedding captures all embeddings that are not equal up to simple transformations.
\begin{lemma}
\label{lemma:same}
If $\mu$ has no $\Z$-embeddings, then $\a(x) = \a(x')$ for all embeddings $\a, \b, \c$ of $\mu$ if and only if $\ama(x) = \ama(x')$, and simliar for the $y, z$ variables.
\end{lemma}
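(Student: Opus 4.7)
The forward direction is immediate: each $\a_i$ appearing in the master embedding $\ama$ is itself an embedding of $\mu$, so if $\a(x) = \a(x')$ for every embedding of $\mu$, then in particular $\a_i(x) = \a_i(x')$ for $i = 1, \dots, s$, giving $\ama(x) = \ama(x')$. The substance lies in the reverse direction, where we must show that the finite list $\a_1, \dots, \a_s$ of ``small'' embeddings suffices to witness every abelian embedding of $\mu$, provided $r = O_M(1)$ is large enough.

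For the reverse direction, I would first reduce general abelian embeddings to $\R/\Z$-valued embeddings. For any embedding $\a : \A \to A$ into a finite abelian group $A$, composing with each character $\chi : A \to \R/\Z$ gives an $\R/\Z$-embedding $\chi \circ \a$ of $\mu$, and by Pontryagin duality $\a(x) = \a(x')$ is equivalent to $\chi(\a(x)) = \chi(\a(x'))$ for all $\chi$. Hence it suffices to prove that $\ama(x) = \ama(x')$ implies $\tilde{\a}(x) = \tilde{\a}(x')$ for every $\R/\Z$-embedding $\tilde\a$ of $\mu$.

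Next, I would argue that the abelian group $\mc{E}$ of $\R/\Z$-valued embeddings modulo constant embeddings is \emph{finite}, with exponent bounded by some $N = O_M(1)$. The group $\mc{E}$ is a closed subgroup of $(\R/\Z)^{|\A|+|\B|+|\C|}$ cut out by the $\{0,1\}$-integer linear equations $\a(x) + \b(y) + \c(z) = 0$, one per $(x,y,z) \in \supp(\mu)$. Its Lie algebra agrees with the $\R$-solution space of the same homogeneous system modulo constants; any nonzero element of that Lie algebra would, after restricting to rational coordinates and clearing denominators, yield a nontrivial $\Z$-embedding of $\mu$, contradicting the hypothesis. Hence $\mc{E}$ is zero-dimensional, and being closed in a compact group it is finite. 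Applying Smith normal form to the defining matrix $L : \Z^{|\A|+|\B|+|\C|} \to \Z^{|\supp(\mu)|}$ and using Hadamard's inequality on its minors bounds the largest invariant factor, and therefore the exponent of $\mc{E}$, by some constant depending only on $M$.

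Now choose $r \ge N$ in \cref{def:master}. Every $\R/\Z$-embedding $\tilde\a$ then has image contained in $\frac{1}{N}\Z/\Z$, and by the Chinese remainder theorem it decomposes into embeddings into cyclic groups $\Z_{p^k}$ with $p^k \le N \le r$. After normalizing so that $\tilde\a(x^*) = \tilde\b(y^*) = \tilde\c(z^*) = 0$ (which does not affect $\tilde\a(x) - \tilde\a(x')$), these prime-power components are precisely among the $\a_1, \dots, \a_s$ used to build $\ama$. Therefore $\ama(x) = \ama(x')$ forces $\tilde\a(x) = \tilde\a(x')$ for every $\R/\Z$-embedding, which by the first step forces $\a(x) = \a(x')$ for every abelian embedding. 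The main obstacle is the finiteness step: the qualitative statement follows cleanly from the ``no $\Z$-embeddings'' hypothesis via the Lie-algebra argument, but extracting the quantitative bound $r = O_M(1)$ needed for \cref{def:master} is the reason one invokes Smith normal form (or an equivalent invariant-factors argument), and is the only nontrivial computation in the proof.
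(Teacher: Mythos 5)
The paper does not give a proof of \cref{lemma:same}; it is explicitly deferred to \cite{BKM4} (``As argued in [BKM4], for sufficiently large $r = O_M(1)$\dots''), so there is no in-paper argument to compare against. Your reconstruction is essentially the expected argument and is correct: reduce to $\R/\Z$-valued embeddings, observe that the closed subgroup of the torus cut out by the (integer-coefficient) embedding equations has trivial Lie algebra modulo constants precisely when $\mu$ has no $\Z$-embeddings (a nonzero tangent vector, being a real solution of an integer system, yields a rational and hence integer non-constant embedding), conclude the quotient is finite, and then bound its exponent by the largest invariant factor of the defining $\{0,1\}$ matrix via Smith normal form and Hadamard, giving $N = O_M(1)$; finally take $r \ge N$, apply CRT, and normalize at $x^*$. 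The one step you state without justification is the reduction at the very start from arbitrary abelian groups to finite ones: as stated, \cref{def:embedding} allows embeddings into any abelian group. This is easily repaired --- the images $\a(\A), \b(\B), \c(\C)$ generate a finitely generated subgroup, and characters of finitely generated abelian groups separate points, so the reduction to $\R/\Z$-embeddings goes through (alternatively, the no-$\Z$-embedding hypothesis already kills the free part of that subgroup modulo constants) --- but you should say so explicitly rather than silently assuming $A$ finite.
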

We also do a reduction to the embedding functions $a(x) + b(y) + c(z) = N \cdot t(x, y, z) \pmod{p^k N}$ to ensure that $(a, b, c)$ are constant on the master embedding. Specifically, define $\wt{a}(x) = a(x) \pmod{N}$, and similar for $\wt{b}, \wt{c}$, and let $\wt{t}(x, y, z) = (\wt{a}(x) + \wt{b}(y) + \wt{c}(z))/N$. We claim that the game with target values $\wt{t}$ is equivalent to the game with target value $t$. Indeed, note that if $f(x) + g(y) + h(z) = t(x, y, z)$ then for
\[ \wt{f}(x) = f(x) + \frac{\wt{a}(x) - a(x)}{N}, \enspace \wt{g}(y) = g(y) + \frac{\wt{b}(y) - b(y)}{N}, \enspace \wt{h}(z) = h(z) + \frac{\wt{c}(z) - c(z)}{N} \] we have $\wt{f}(x) + \wt{g}(y) + \wt{h}(z) = \wt{t}(x, y, z) \pmod{p^k}$.

Thus, from now on, we make the following assumption.
\begin{assumption}
\label{ass:bounded}
We write $N \cdot t(x, y, z) = a(x) + b(y) + c(z) \pmod{p^k N}$ where $0 \le a(x), b(y), c(z) < N$ for all $x, y, z$.
\end{assumption}
This way, we know that $a, b, c$ is an embedding, hence it is captured by the master embedding.
\begin{lemma}
\label{lemma:ass}
Under \cref{ass:bounded}, $(a, b, c)$ is an embedding. Thus, if $\ama(x) = \ama(x')$, we have $a(x) = a(x')$.
\end{lemma}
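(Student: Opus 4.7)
The plan is to unfold \cref{ass:bounded} and reduce the key equation modulo $N$ so as to exhibit $(a,b,c)$ as an explicit embedding of $\mu$ into the cyclic group $\Z_N$. Once that is in hand, the second assertion is an immediate invocation of \cref{lemma:same}.

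Concretely, for the first claim I would start from the identity
\[
a(x) + b(y) + c(z) \equiv N \cdot t(x,y,z) \pmod{p^k N}
\]
guaranteed by \cref{ass:bounded} for every $(x,y,z) \in \supp(\mu)$, and reduce it modulo $N$. Since $N$ divides both $N \cdot t(x,y,z)$ and the modulus $p^k N$, the right-hand side vanishes and we obtain
\[
a(x) + b(y) + c(z) \equiv 0 \pmod{N}.
\]
The assumption $0 \le a(x), b(y), c(z) < N$ means the values can be interpreted as elements of $\Z_N$ without any further modular reduction, and the displayed congruence is then exactly the defining identity of an embedding (\cref{def:embedding}) of $\mu$ into the finite abelian group $\Z_N$.

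For the second claim, I would simply invoke \cref{lemma:same}. Under the standing assumption that $\mu$ has no $\Z$-embeddings, that lemma says that for every embedding $(\a,\b,\c)$ of $\mu$ and every $x, x' \in \A$ with $\ama(x) = \ama(x')$, one has $\a(x) = \a(x')$. Applied to the embedding $(a,b,c)$ just constructed, this yields $a(x) = a(x')$, completing the proof.

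The only potential subtlety --- and where I would expect a careful reader to raise an eyebrow --- is that the master embedding in \cref{def:master} is assembled only from cyclic groups of prime-power size at most $r = O_M(1)$, while the group $\Z_N$ into which we embed may be much larger than $r$. However, \cref{lemma:same} is stated for \emph{all} embeddings of $\mu$, precisely because for $r$ large enough (depending only on $M$) every embedding of $\mu$ into an arbitrary abelian group is forced to factor through the master embedding; this is exactly the content of the lemma as proved in \cite{BKM4}. Hence no auxiliary argument beyond citing \cref{lemma:same} is required, and the proof is essentially a one-line modular reduction followed by this quotation.
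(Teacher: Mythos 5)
The paper states \cref{lemma:ass} without proof, treating it as immediate, and your argument is exactly the one the authors had in mind: reduce the identity $a(x)+b(y)+c(z) \equiv N\cdot t(x,y,z) \pmod{p^kN}$ modulo $N$ to kill the right-hand side, observe that the bound $0 \le a,b,c < N$ makes $(a,b,c)$ a genuine embedding into $\Z_N$ (and simultaneously lets you upgrade congruence mod $N$ to honest equality of values), and invoke \cref{lemma:same}. Your closing remark about the group $\Z_N$ potentially being much larger than $r$ is also the right thing to flag, and your resolution is correct: \cref{lemma:same} is a statement about \emph{all} embeddings of $\mu$, not just those into small cyclic groups, precisely because under the no-$\Z$-embedding hypothesis every embedding factors through finitely many small ones.
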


\subsection{Modest Noise Operator and Modest Influences}
\label{subsec:modest}
The goal of this section is to define the \emph{modest noise operator} and \emph{modest influences}.

Let $\mu$ be a distribution on $\A \times \B \times \C$, and consider a set $\A' \subseteq \A$. Let $\cA$ be the master group, and for each $a \in \cA$, let $\A'_a = \{x \in \A' : \ama(x) = a\}.$ Create an orthonormal basis of $L^2(\A, \mu_x)$ called $B = B' \cup \cup_{a \in \cA} B_a$ as follows. $B_a$ is an orthnormal basis for functions supported on $\A'_a$ that are orthogonal to constant. $B'$ is an orthnormal basis for functions that are constant on each $\A'_a$.

Any function $F \in L_2(\A^n, \mu_x^{\otimes n})$ can be written as $F = \sum_{\chi \in B^{\otimes n}} \hat{F}(\chi) \chi$. For a character $\chi$, define its \emph{modest degree} as $\moddeg(\chi) = \{i : \chi_i \in \cup_{a \in \cA} B_a\}$. This allows us to define the modest degree decomposition
\[ F^{\modest=i} = \sum_{\chi \in B^{\otimes n} : \moddeg(\chi) = i} \hat{F}(\chi)\chi. \]
Now define the modest influence as
\[ I_{\modest}[F] = \sum_{i\in[n]} I_{i,\modest}[F] = \E_{(x,y)\in\A^n}[|F(x) - F(y)|^2], \]
where $(x,y)$ are sampled as follows. Sample $x \in \A^n$ according to $\mu_x^{\otimes n}$. If $x_i \notin \A'$ set $y_i = x_i$. Otherwise, set $y_i$ to be from $\mu_x$ conditioned on $y_i \in \A'_{\ama(x_i)}$.

Analogous to the case of standard influences we know that
\[ I_{i,\modest}[F] = \sum_{\chi \in B^{\otimes n}, \chi_i \in B \setminus B'} |\hat{F}(\chi)|^2 \enspace \text{ and } \enspace I_{\modest}[F] = \sum_{\chi \in B^{\otimes n}} \moddeg(\chi) |\hat{F}(\chi)|^2. \]

We finally define the modest noise operator $\TT^{\modest}_{1-\rho}$. Define $\TT^{\modest}_{1-\rho}F(x) = \E_{y \sim N^{\modest}_{1-\rho}(x)}[F(y)]$, where $N^{\modest}_{1-\rho}(x)$ is the following distribution. For each $i \in [n]$, with probability $1-\rho$, set $y_i = x_i.$ Otherwise with probability $\rho$, do the following. If $x_i \in \A \setminus \A'$, set $y_i = x_i$. Otherwise, sample $y_i$ from $\mu_x$ conditioned on $y_i \in B'_{\ama(x_i)}$. Analogous to the standard noise operator, we have
\[ \TT^{\modest}_{1-\rho}F = \sum_{0 \le i \le n} (1-\rho)^i F^{\modest=i}. \]

\subsection{Saturating the Master Embedding via Path Tricks}
\label{subsec:saturate}
The goal of this section is to perform multiple path tricks as in \cref{subsec:pathtrick}. This is with two goals in mind. The first is to enlarge the sets $\{\ama(x) : x \in \A\}, \{\bma(y) : y \in \B\}, \{\cma(z) : z \in \C\}$ until each of them forms a group. In fact, it is easy to argue that the three groups must be identical.

The secondary goal will be to work towards establishing a variation of the relaxed base case.

Throughout this section, $\mu, t$ will denote a distribution and target value, and $f: \A^n \to \Z_m^n, g: \B^n \to \Z_m^n, h: \C^n \to \Z_m^n$ will be strategies. Let the winning probability of $f, g, h$ be represented as
\[ \val_{\mu,t}(f,g,h) = \Pr_{(x,y,z)\sim\mu^{\otimes n}}\left[f(x)_i + g(y)_i + h(z)_i = t(x_i,y_i,z_i)\pmod{m} \enspace \forall \enspace i = 1, 2, \dots, n \right]. \]
\begin{lemma}
\label{lemma:npathtrick}
Let $\mu, t$ represent a game of value less than $1$, and let $\val_{\mu,t}(f,g,h) \ge \eps$. If $\mu^+$ represents the result of applying a path trick to $\mu$ with $r$ steps (\cref{def:pathtrick}), then there is $t^+$ such that $\mu^+, t^+$ has value less than $1$, and there are $\wt{f}, \wt{g}, \wt{h}$ with $\val_{\mu^+,t^+}(\wt{f}, \wt{g}, \wt{h}) \ge \eps^{2^r}$.
\end{lemma}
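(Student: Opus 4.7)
The plan is to lift the strategies $f, g, h$ to the path-tricked game using the telescoping structure of \cref{def:pathtrick}, and control the new winning probability by a Cauchy--Schwarz induction along the walk, mirroring the proof of \cref{lemma:pathtrick}. Concretely, I would augment $\A^+$ to include the intermediate walk values $(y_2, \dots, y_{2^{r-1}}, z_1, \dots, z_{2^{r-1}-1})$ alongside $(x_1, x_1', \dots, x_{2^{r-1}})$, so that the telescoping target
\[
t^+(\vec x, y_1, z_{2^{r-1}}) := \sum_{i=1}^{2^{r-1}} t(x_i, y_i, z_i) - \sum_{i=1}^{2^{r-1}-1} t(x_i', y_{i+1}, z_i)
\]
is a well-defined function on $\supp(\mu^+)$. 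The lifted strategies are $\wt g := g$, $\wt h := h$, and $\wt f(\vec x) := \sum_i f(x_i) - \sum_i f(x_i')$, applied coordinate-wise over the $n$ instances.

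For the $\eps^{2^r}$ lower bound, observe by direct telescoping (using $\sum_{i=1}^{2^{r-1}} g(y_i) - \sum_{i=1}^{2^{r-1}-1} g(y_{i+1}) = g(y_1)$ and its analogue for $h$) that the $j$-th winning equation in $\cG^+$ holds whenever all $2^r - 1$ original equations along the walk hold for coordinate $j$. Since the $\mathbf{x}$-samples along the walk are conditionally independent given the sequence $(\mathbf{y}^{(i)}, \mathbf{z}^{(i)})$,
\[
\val_{\mu^+, t^+}(\wt f, \wt g, \wt h) \ \ge\ \Pr_{\text{walk}}\bigl[\text{all walk positions win}\bigr] \ =\ \E_{\text{walk}}\Bigl[\prod_{i=1}^{2^r-1} u_i\Bigr],
\]
where $u_i := \Pr_{\mathbf{x}}[\text{win} \mid \mathbf{y}^{(i)}, \mathbf{z}^{(i)}]$. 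A direct induction on $r$, doubling the walk at each step by Cauchy--Schwarz on an alternating $\mathbf{y}$- or $\mathbf{z}$-endpoint variable (exactly as in the inductive step of \cref{lemma:pathtrick}'s proof), yields $\E_{\text{walk}}[\prod_i u_i] \ge \val_{\mu, t}(f, g, h)^{2^r} \ge \eps^{2^r}$.

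To check that the value of the new game is strictly less than $1$: a hypothetical perfect strategy $(\wt f, \wt g, \wt h)$ would, when evaluated at the constant walk with $x_i = x_i' = x$, $y_i = y$, $z_i = z$ (which lies in $\supp(\mu^+)$ exactly when $(x, y, z) \in \supp(\mu)$), give $\wt f(\text{const}) + \wt g(y) + \wt h(z) = t(x, y, z)$. Combining these with single-step deviation walks (where a single intermediate variable is perturbed) and invoking the pairwise connectedness of $\mu$ from \cref{lemma:pairwise}, one extracts a perfect classical strategy for $(\mu, t)$, contradicting $\val_{\mu, t}(f^\star, g^\star, h^\star) < 1$.

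The main technical obstacle is the probability-level Cauchy--Schwarz doubling step: each induction needs $\E_{\text{walk}_{r+1}}[\prod u] \ge \E_{\text{walk}_r}[\prod u]^2$, which I would establish by writing the level-$(r+1)$ walk as two copies of a level-$r$ sub-walk sharing an endpoint, conditioning on that shared endpoint, and applying Jensen's inequality to the inner expectation, in direct analogy with the squaring-and-Cauchy--Schwarz step in \cref{lemma:pathtrick}'s proof.
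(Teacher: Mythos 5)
Your high-level plan---telescope the target, lift $f$ to $\wt f(\vec x)=\sum_i f(x_i)-\sum_i f(x_i')$, and control the loss by a Cauchy--Schwarz/Jensen induction along the walk---is the right idea, and your observation that $t^+((x,\dots,x),y,z)=t(x,y,z)$ gives $\val(\mu^+,t^+)<1$ matches the paper (you do not need the pairwise-connectedness digression: restricting any perfect strategy for $(\mu^+,t^+)$ to the diagonal $\vec x=(x,\dots,x)$ already yields a perfect strategy for $(\mu,t)$). However, there are two genuine gaps.

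First, you handle the well-definedness of $t^+$ by enlarging $\A^+$ to carry the intermediate $(y_2,\dots,y_{2^{r-1}},z_1,\dots,z_{2^{r-1}-1})$. That changes $\mu^+$ from the object fixed by \cref{def:pathtrick}, so you would be proving a different lemma, and downstream uses (\cref{lemma:pathtrickembed}, \cref{lemma:augment}, \cref{lemma:saturated}) rely on $\A^+\subseteq\A^{2^r-1}$ exactly so that the master embedding evolves in the prescribed way. The paper does not need the augmentation: in \cref{sec:complex} one has the $\D$-embedding $a(x)+b(y)+c(z)=N\cdot t(x,y,z)\pmod{p^kN}$, and plugging it in shows
\[
N\cdot t^+(\vec x,y_1,z_{2^{r-1}})=\sum_{i}a(x_i)-\sum_i a(x_i')+b(y_1)+c(z_{2^{r-1}})\pmod{p^kN},
\]
so $t^+$ really does depend only on $(\vec x,y_1,z_{2^{r-1}})$ with no extra data.

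Second, and more seriously, the doubling step $\E_{\text{walk}_{r+1}}[\prod u]\ge\E_{\text{walk}_r}[\prod u]^2$ for the $\mu^+$-walk is false. The $\mu^+$-walk at parameter $r$ goes $y\to z$ and has $2^r-1$ positions; gluing two such walks at a shared endpoint yields $2(2^r-1)=2^{r+1}-2$ positions, not $2^{r+1}-1$, so the level-$(r+1)$ walk is not two shared-endpoint copies of the level-$r$ walk. Concretely, with a product $\mu_{y,z}$ and $p(y,z):=\Pr_x[\text{win}\mid y,z]\equiv 1/2$ one has $V_1=\E[p]=1/2$ but $V_2=\E[p(y_1,z_1)p(y_2,z_1)p(y_2,z_2)]=1/8<1/4=V_1^2$, refuting the claimed recursion. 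The paper sidesteps this: its inductive claim \eqref{eq:maininduct} is a statement about a $y$-to-$y$ walk with $2^r$ positions (one extra variable $y_{2^{r-1}+1}$), which does split into two identically distributed shared-endpoint copies, and the extra $y$ is only removed at the very end by choosing $\wt h(z)$ to be the \emph{randomized} reconstruction $t(x,y,z')-f(x)-g(y)$ from a random $(x,y)\sim\mu\mid_{z'=z}$ and averaging over $\wt h$. Taking $\wt h:=h$ and lower-bounding by $\Pr[\text{all walk positions win}]$, as you do, discards information the argument needs and leaves you with a recursion that does not hold.
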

\begin{proof}
Without loss of generality, let $\mu^+$ be the result of an $x$ path trick. In this case, we set
\[ t^+(\vec{x}, y_1, z_{2^{r-1}-1}) = \sum_{i=1}^{2^{r-1}} t(x_i,y_i,z_i) - \sum_{i=1}^{2^{r-1}-1} t(x_i',y_{i+1},z_i). \]
This definition actually does not depend on $y_2, \dots, y_{2^{r-1}}, z_1, \dots, z_{2^{r-1}-2}$. Indeed, because
$a(x) + b(y) + c(z) = N \cdot t(x,y,z) \pmod{p^k N}$ we have \[ N \cdot t^+(\vec{x}, y_1, z_{2^{r-1}-1}) = \sum_{i=1}^{2^{r-1}} a(x_i) - \sum_{i=1}^{2^{r-1}-1} a(x_i') + b(y_1) + c(z_{2^{r-1}-1}) \pmod{p^k N}, \] so $t^+$ is uniquely determined modulo $p^k$.

Now, $\mu^+, t^+$ has value less than $1$ because for all $(x,y,z) \in \supp(\mu)$, we have $t^+((x,\dots,x), y, z) = t(x,y,z)$. We will show the following by induction.
\begin{align} \mathop{\Pr}_{\substack{x_1,x_1',\dots,x_{2^{r-1}}, x_{2^{r-1}}' \\ y_1, \dots, y_{2^{r-1}+1} \\ z_1, \dots, z_{2^{r-1}}}} \left[\sum_{i=1}^{2^{r-1}} f(x_i)-f(x_i') + g(y_1) - g(y^{2^{r-1}+1}) = \sum_{i=1}^{2^{r-1}} t(x_i,y_i,z_i) - t(x_i',y_{i+1},z_i) \right] \ge \eps^{2^r}. \label{eq:maininduct} \end{align}
Let us establish the case $r=1$. Because $\val_{\mu,t}(f,g,h) \ge \eps$, we get
\begin{align}
    \eps^2 &\le \E_z\left[\Pr_{(x_1,y_1,z_1) \sim \mu)}[f(x_1)+g(y_1)-t(x_1,y_1,z_1) = h(z_1) | z_1 = z]\right]^2 \nonumber \\
    &\le \E_{z'}\left[\Pr_{(x_1,y_1,z_1) \sim \mu)}[f(x_1)+g(y_1)-t(x_1,y_1,z_1) = h(z_1) | z_1 = z]^2 \right] \nonumber \\
    &= \E_z\Pr_{\substack{(x_1,y_1,z_1) \sim \mu \\ (x_1',y_2,z_1) \sim \mu}}[f(x_1)+g(y_1)-t(x_1,y_1,z_1) = f(x_2) + g(y_2) - t(x_2,y_2,z_1) = h(z_1) | z_1 = z]  \nonumber \\
    &\le \E_z\Pr_{\substack{(x_1,y_1,z_1) \sim \mu \\ (x_1',y_2,z_1) \sim \mu}}[f(x_1)-f(x_2)+g(y_1)-g(y_2) = t(x_1,y_1,z_1) - t(x_1',y_2,z_1) | z_1 = z] \label{eq:stuff1} \\
    &= \mathop{\Pr}_{\substack{x_1,x_1' \\ y_1, y_2 \\ z_1}}[f(x_1)-f(x_1')+g(y_1)-g(y_2) = t(x_1,y_1,z_1) - t(x_1',y_2,z_1)]. \label{eq:stuff2}
\end{align}
We complete the induction with a similar approach. By induction
\begin{align*}
    \eps^{2^{r+1}} &\le \E_{y_{2^{r-1}+1}}\left[\mathop{\Pr}_{\substack{x_1,x_1',\dots,x_{2^{r-1}}, x_{2^{r-1}}' \\ y_1, \dots, y_{2^{r-1}} \\ z_1, \dots, z_{2^{r-1}}}} \left[\sum_{i=1}^{2^{r-1}} f(x_i)-f(x_i') + g(y_1) - g(y^{2^{r-1}+1}) = \sum_{i=1}^{2^{r-1}} t(x_i,y_i,z_i) - t(x_i',y_{i+1},z_i) \right] \right]^2 \\
    &\le \E_{y_{2^{r-1}+1}}\left[\mathop{\Pr}_{\substack{x_1,x_1',\dots,x_{2^{r-1}}, x_{2^{r-1}}' \\ y_1, \dots, y_{2^{r-1}} \\ z_1, \dots, z_{2^{r-1}}}} \left[\sum_{i=1}^{2^{r-1}} f(x_i)-f(x_i') + g(y_1) - g(y^{2^{r-1}+1}) = \sum_{i=1}^{2^{r-1}} t(x_i,y_i,z_i) - t(x_i',y_{i+1},z_i) \right]^2 \right].
\end{align*}
Expanding the last expression and removing the $g(y^{2^{r-1}+1})$ like in \eqref{eq:stuff1}, \eqref{eq:stuff2} completes the induction.
To conclude, we discuss the choices of $\wt{f}, \wt{g}, \wt{h}$. We set $\wt{f}(\vec{x}) = \sum_{i=1}^{2^{r-1}} f(x_i) - \sum_{i=1}^{2^{r-1}-1} f(x_i')$ and $\wt{g}(y) = y$. Finally, for each $z \in \C$ we randomly choose $\wt{h}(z)$ as follows: generate $(x,y,z')$ from $\mu$ conditioned on $z' = z$, and set $\wt{h}(z) = t(x,y,z') - f(x) - g(y)$. Then \eqref{eq:maininduct} exactly tells us that $\E_{\wt{h}}[\val_{\mu^+,t^+}(\wt{f}, \wt{g}, \wt{h})] \ge \eps^{2^r}$ as desired.
\end{proof}
We study how embeddings evolve under the path trick.
\begin{lemma}
\label{lemma:pathtrickembed}
Let $\mu^+$ be the result of applying an $x$ path trick to $\mu$. If $\a^+, \b^+, \c^+$ is an embedding of $\mu^+$, then there must be an embedding $\a, \b, \c$ of $\mu$ such that $\b^+(y) = b(y)$ for all $y \in \B$, $\c^+(z) = \c(z)$ for all $z \in \C$, and $\a^+(\vec{x}) = \sum_{i=1}^{2^{r-1}} \a(x_i) - \sum_{i=1}^{2^{r-1}-1} \a(x_i')$.
\end{lemma}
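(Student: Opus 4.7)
The plan is to recover $(\a,\b,\c)$ from $(\a^+,\b^+,\c^+)$ by restricting to the diagonal copy of $\A$ inside $\A^+$, and then derive the telescoping formula for $\a^+$ by exploiting the intermediate triples produced by the path trick.

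First I would \emph{define the candidate embedding}: set $\b(y) := \b^+(y)$, $\c(z) := \c^+(z)$, and for $x \in \A$ set $\a(x) := \a^+(\bar x)$ where $\bar x := (x, x, \ldots, x) \in \A^{2^r - 1}$ denotes the all-$x$ tuple. To see that $\bar x$ actually lies in $\A^+$ whenever $(x,y,z) \in \supp(\mu)$, I observe that the sampling procedure of \cref{def:pathtrick} admits the ``constant'' run $y_1 = y_2 = \cdots = y$, $z_1 = \cdots = z_{2^{r-1}} = z$, $x_i = x_i' = x$; all conditional sampling steps are feasible because $(x,y,z) \in \supp(\mu)$ provides the needed edges. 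Hence $(\bar x, y, z) \in \supp(\mu^+)$, and the embedding relation for $(\a^+,\b^+,\c^+)$ immediately gives $\a(x)+\b(y)+\c(z) = \a^+(\bar x)+\b^+(y)+\c^+(z) = 0$. Thus $(\a,\b,\c)$ is an embedding of $\mu$.

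Next I would \emph{derive the formula for $\a^+$}. Pick any $(\vec x, y_1, z_{2^{r-1}}) \in \supp(\mu^+)$, and let $x_1,x_1',\ldots,x_{2^{r-1}},y_1,\ldots,y_{2^{r-1}},z_1,\ldots,z_{2^{r-1}}$ be the intermediate variables witnessing membership, so that every $(x_i,y_i,z_i)$ and $(x_i',y_{i+1},z_i)$ lies in $\supp(\mu)$. Applying the embedding relations for $(\a,\b,\c)$ to these triples and subtracting yields the telescoping identity $\a(x_i) - \a(x_i') = \b(y_{i+1}) - \b(y_i)$, hence
\[
\sum_{i=1}^{2^{r-1}}\a(x_i) - \sum_{i=1}^{2^{r-1}-1}\a(x_i') \;=\; \a(x_{2^{r-1}}) + \b(y_{2^{r-1}}) - \b(y_1) \;=\; -\c(z_{2^{r-1}}) - \b(y_1),
\]
where the last step uses $\a(x_{2^{r-1}})+\b(y_{2^{r-1}})+\c(z_{2^{r-1}}) = 0$. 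On the other hand, the embedding relation for $(\a^+,\b^+,\c^+)$ applied at $(\vec x, y_1, z_{2^{r-1}})$ gives $\a^+(\vec x) = -\b^+(y_1) - \c^+(z_{2^{r-1}}) = -\b(y_1) - \c(z_{2^{r-1}})$, matching the displayed expression.

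The main (minor) obstacle is simply the first bullet — making sure the diagonal tuple $\bar x$ is a legitimate element of $\A^+$ so that $\a(x) := \a^+(\bar x)$ is well-defined. Once that feasibility check is done, the rest is a direct telescoping computation, and the uniqueness of $(\a,\b,\c)$ is automatic since $\b,\c$ are forced to equal $\b^+,\c^+$ and the value of $\a$ at each $x$ is forced by the diagonal case.
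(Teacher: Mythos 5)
Your proof is correct and takes essentially the same route as the paper: restrict $\a^+$ to the diagonal tuples $\bar x=(x,\dots,x)$ to define $\a$, observe $(\bar x,y,z)\in\supp(\mu^+)$ so that $(\a,\b,\c)$ inherits the embedding relation from $(\a^+,\b^+,\c^+)$, and then telescope along the intermediate triples to obtain the stated formula for $\a^+(\vec x)$. The paper's version is just more terse, stating the diagonal fact and the resulting identity without spelling out the telescoping steps you filled in.
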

\begin{proof}
Recall that $((x,\dots,x), y, z) \in \supp(\mu^+)$. Thus, there is are embeddings $\a, \b, \c$ of $\mu$ with $\b^+(y) = \b(y)$ and $\c^+(z) = \c(z)$. If $(\vec{x}, y_1, z_{2^{r-1}-1}) \in \supp(\mu^+)$, then 
\[ \a^+(\vec{x}) = -\b(y_1) - \c(z_{2^{r-1}-1}) = \sum_{i=1}^{2^{r-1}} \a(x_i) - \sum_{i=1}^{2^{r-1}-1} \a(x_i'). \]
\end{proof}
Finally, we discuss how to augment the image of the master embedding.
\begin{lemma}
\label{lemma:augment}
Consider the following sequence of steps to a distribution $\mu$.
\begin{enumerate}
    \item Perform a $z$ path trick with $r > \log_2 M$.
    \item Perform a $y$ path trick with $r > \log_2 M$.
    \item Perform a $x$ path trick with $r = 2$.
\end{enumerate}
Then if $\mu$ had master embeddings $\ama, \bma, \cma$, then the new distribution $\mu^+$ has master embeddings $\ama^+, \bma^+, \cma^+$ generated according to \cref{lemma:pathtrickembed}. In particular, $\Image(\bma) \subseteq \Image(\bma^+)$, $\Image(\cma) \subseteq \Image(\cma^+)$, and for all $x_1, x_2, x_3 \in \A$ we have $\ama(x_1) - \ama(x_2) + \ama(x_3) \in \Image(\ama^+)$.
\end{lemma}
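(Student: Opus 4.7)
The plan is to apply \cref{lemma:pathtrickembed} (with the appropriate symmetric analogues for the $y$ and $z$ path tricks) three times in succession to track how each individual embedding into a small cyclic group evolves, and then to verify separately that after the final $x$ path trick the alphabet $\A^+$ already contains every triple of $\A^3$.

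Applying the $z$-analog of \cref{lemma:pathtrickembed} to the first path trick yields a bijection between embeddings of $\mu$ and embeddings of the resulting distribution $\mu^{(1)}$: the $x$- and $y$-components of each embedding are preserved, and the $z$-component $\c$ is replaced by an alternating-sum function $\c^{(1)}$ satisfying $\c^{(1)}(\bar{z}) = \c(z)$ for the diagonal tuple $\bar{z} = (z, \dots, z) \in \C^{(1)}$. Running this argument componentwise over the list of embeddings into cyclic groups of size at most $r$ that defines the master embedding gives $\bma^{(1)} = \bma$ and $\cma^{(1)}(\bar{z}) = \cma(z)$. The identical argument applied to the subsequent $y$ path trick gives $\cma^{(2)} = \cma^{(1)}$ and $\bma^{(2)}(\bar{y}) = \bma(y)$. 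Finally, the $x$ path trick with $r = 2$ (so $\vec{x} = (x_1, x_1', x_2)$) leaves $\bma^+ = \bma^{(2)}$ and $\cma^+ = \cma^{(2)}$, and by \cref{lemma:pathtrickembed} produces $\ama^+(x_1, x_1', x_2) = \ama(x_1) - \ama(x_1') + \ama(x_2)$. The inclusions $\Image(\bma) \subseteq \Image(\bma^+)$ and $\Image(\cma) \subseteq \Image(\cma^+)$ are witnessed by the diagonal tuples $\bar{y}$ and $\bar{z}$.

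For the alternating-sum assertion on $\ama$, I need every triple $(x_1, x_1', x_2) \in \A^3$ to lie in $\A^+$. The choice $r > \log_2 M$ in the first two path tricks, together with (the symmetric versions of) \cref{lemma:maintain}, guarantees that $\mu^{(1)}_{xy}$ and $\mu^{(2)}_{xz}$ are both full. Given $x_1, x_1', x_2 \in \A$, I would fix an arbitrary $y_0 \in \B$ and set $y_2 := \bar{y_0} \in \B^{(2)}$; use fullness of $\mu^{(1)}_{xy}$ to select $z_1, z_2 \in \C^{(1)}$ with $(x_1', y_0, z_1), (x_2, y_0, z_2) \in \supp(\mu^{(1)})$, which lift to $(x_1', \bar{y_0}, z_1), (x_2, \bar{y_0}, z_2) \in \supp(\mu^{(2)})$; and use fullness of $\mu^{(2)}_{xz}$ to pick $y_1 \in \B^{(2)}$ with $(x_1, y_1, z_1) \in \supp(\mu^{(2)})$. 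By \cref{def:pathtrick} with $r = 2$, these three triples are precisely the data certifying $(x_1, x_1', x_2) \in \A^+$.

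The main obstacle, and the reason for ordering the path tricks $z, y, x$ with $r > \log_2 M$ in the first two, is arranging simultaneously enough fullness of $\mu^{(2)}_{xz}$ on arbitrary pairs and of $\mu^{(2)}_{xy}$ at least on diagonal $\bar{y}$-tuples, so that every alternating $x$-triple can actually be realized by the final $x$ path trick. Beyond this, the argument is bookkeeping through \cref{lemma:pathtrickembed}.
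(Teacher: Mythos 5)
Your proposal is correct and takes essentially the same route as the paper: first component-wise application of \cref{lemma:pathtrickembed} to track the master embeddings through each path trick, then a zigzag-realization argument showing $(x_1,x_1',x_2)\in\A^+$ for arbitrary $x_1,x_1',x_2$, using that the $z$ path trick makes $\mu^{(1)}_{xy}$ full (so the diagonals $\bar{y_0}\in\B^{(2)}$ remain adjacent to every $x$ and a suitable $z$) and the $y$ path trick makes $\mu^{(2)}_{xz}$ full. The paper assembles the same three triples in the reverse orientation of the zigzag, but this is the identical construction up to relabeling.
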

\begin{proof}
Everything before the final sentence follows from \cref{lemma:pathtrickembed}. Let $\mu', \mu''$ after step 1, 2 respectively. In $\mu'$, $(x,y)$ has full support. In $\mu''$, $(x,z)$ has full support, and for all $x \in \A, y \in \B$ there is some $z$ with $(x, (y, \dots, y), z) \in \supp(\mu'')$. Now, we find a path containing any $x_1, x_2, x_3 \in \A$.
Pick a $y \in \B$. Let $z_1, z_2$ be such that $(x_1, (y, \dots, y), z_1) \in \supp(\mu'')$ and $(x_2, (y, \dots, y), z_2) \in \supp(\mu'')$. Because $(x, z)$ has full support in $\mu''$, there is $\vec{y}$ with $(x_3, \vec{y}, z_2) \in \supp(\mu'')$, as desired.
\end{proof}
By shifting, assume that $0 \in \Image(\ama), \Image(\bma), \Image(\cma)$.
\begin{lemma}[Saturated embedding]
\label{lemma:saturated}
Let $\mu, t$ represent a game of value less than $1$, with strategies $f, g, h$ with $\val_{\mu,t}(f,g,h) \ge \eps$. Then there is a game $\mu^+, t^+$ of value less than $1$, and strategies $\wt{f}, \wt{g}, \wt{h}$ with $\val_{\mu^+, t^+}(\wt{f}, \wt{g}, \wt{h}) \ge \eps^{O_M(1)}$.
Additionally:
\begin{enumerate}
    \item The master embeddings $\ama, \bma, \cma$ of $\mu^+$ satisfy for some finite group $\cA$: \[ \Image(\ama) = \Image(\bma) = \Image(\cma) = \cA. \]
    \item $(y, z)$ has full support in $\mu^+$.
\end{enumerate}
\end{lemma}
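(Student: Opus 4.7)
The plan is to apply \cref{lemma:augment} and its symmetric counterparts $O_M(1)$ times to saturate the three master-embedding images $\Image(\ama), \Image(\bma), \Image(\cma)$ into subgroups of the master group, and then apply one extra path trick to guarantee full $(y,z)$-support. Since each path trick costs only a polynomial factor in $\eps$ by \cref{lemma:npathtrick}, and at most $O_M(1)$ rounds are required, the cumulative loss remains $\eps^{O_M(1)}$ as demanded.

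First I would apply \cref{lemma:augment} directly. Its conclusion tells us that the image $\Image(\ama^+)$ contains every element of the form $\ama(x_1) - \ama(x_2) + \ama(x_3)$. Since normalization puts $0 \in \Image(\ama)$, substituting $x_3 = x^*$ and $x_1 = x^*$ shows the new image contains both $\Image(\ama) - \Image(\ama)$ and $\Image(\ama) + \Image(\ama)$; iterating \cref{lemma:augment} a handful of times yields $\Image(\ama^+) = \langle \Image(\ama) \rangle$, a subgroup of the ambient master group. Applying the analogous $y$- and $z$-variants saturates $\Image(\bma)$ and $\Image(\cma)$ in the same way. Crucially, path tricks applied to an image that is already a subgroup preserve it as a subgroup: the new image is a set of $\mathbb{Z}$-linear combinations of elements in the subgroup, and therefore lies back inside the subgroup. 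Since the master group is contained in a product of at most $O_M(1)$ cyclic groups of order $O_M(1)$, only $O_M(1)$ strict enlargements can occur before the procedure stabilizes with all three images being subgroups.

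Next I would apply a single $x$-path trick with $r > \log_2 M$ steps. By \cref{lemma:maintain}, the resulting distribution $\mu^+$ has full $(y,z)$-support, while $\Image(\bma)$ and $\Image(\cma)$ are unchanged and $\Image(\ama)$ remains a subgroup. To extract the final equality $\Image(\ama) = \Image(\bma) = \Image(\cma) = \cA$, I combine the subgroup structure with the pairwise-connectivity of $\mu^+$, which holds by \cref{lemma:pairwise}: walking along the bipartite $(x,y)$-connectivity graph from a fixed $x^*$ with $\ama(x^*) = 0$ to an arbitrary $x$, each edge contributes a difference of $\bma$-values that, by the embedding relation, equals a difference of $\ama$-values. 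Telescoping yields $\ama(x) \in \Image(\bma)$, so $\Image(\ama) \subseteq \Image(\bma)$, and the matching inclusions follow by symmetry. Hence all three images coincide with a single finite group $\cA$.

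The hard part will be bookkeeping the master group itself throughout the iteration: each path trick may create new abelian embeddings of the new distribution into cyclic groups of bounded order, so the ambient master group can grow from round to round. The key observation keeping this under control is that the master group is always embedded in $\cA_1 \times \dots \times \cA_s$ with both $s$ and each $|\cA_i|$ bounded by $O_M(1)$ via the choice of threshold $r$ in \cref{def:master}; this yields a uniform a priori cardinality bound, which is what guarantees convergence of the saturation procedure in $O_M(1)$ rounds and ensures that the final loss exponent depends only on $\mu$.
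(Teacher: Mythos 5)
Your plan matches the paper's proof: iterate \cref{lemma:augment} and its symmetric variants to force each image $\Image(\ama), \Image(\bma), \Image(\cma)$ to stabilize to a subgroup (termination in $O_M(1)$ rounds via the cardinality bound on the master group), argue the three subgroups coincide, and apply one more $x$ path trick for full $(y,z)$ support, with \cref{lemma:npathtrick} bounding the total loss. You fill in two details the paper glosses over — that path tricks of the wrong flavor do not unsaturate an image that is already a subgroup, and that the master group itself is invariant under the transformation because \cref{lemma:pathtrickembed} gives a correspondence of embeddings preserving the target cyclic group — and these are genuinely needed for the termination argument to go through. Two small slips worth correcting: substituting $x_1 = x^*$ into $\ama(x_1) - \ama(x_2) + \ama(x_3)$ gives another difference, not a sum; you want $x_2 = x^*$ to produce $\ama(x_1) + \ama(x_3)$. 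And on a step in the bipartite $(x,y)$-connectivity graph, the two triples $(x_i, y_i, z_i)$ and $(x_{i+1}, y_i, z_i')$ share the $y$-coordinate, so subtracting the embedding relations gives $\ama(x_{i+1}) - \ama(x_i) = \cma(z_i) - \cma(z_i')$ — a difference of $\cma$, not $\bma$; telescoping therefore gives $\Image(\ama) \subseteq \Image(\cma)$, and you should walk on the $(x,z)$-graph to get $\Image(\ama) \subseteq \Image(\bma)$. The remaining containments do follow by symmetry. The paper instead deduces coincidence directly from full support of one pair of coordinates after the path tricks; your connectivity-walk argument is a bit longer but somewhat more robust to exactly which pair ends up with full support at the coincidence step.
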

\begin{proof}
Repeating \cref{lemma:augment} multiple times, and its symmetric $y, z$ versions, we eventually reach a distribution $\mu^+$ where $\Image(\ama)$, $\Image(\bma)$, $\Image(\cma)$ are subgroups. These must be the same subgroup because (say) $(x,y)$ has full support. Then for any $a \in \Image(\ama), b \in \Image(\bma)$, we know that $-a-b \in \Image(\cma)$. Let $\cA = \Image(\ama)$. Item 2 can be guaranteed by applying one more $x$ path trick. The bound $\val_{\mu^+,t^+}(\wt{f}, \wt{g}, \wt{h}) \ge \eps^{O_M(1)}$ follows since we apply \cref{lemma:npathtrick} a finite number of times.
\end{proof}

\subsection{Merging Symbols}
\label{subsec:merginggame}
We apply symbol merging in this section as in \cref{subsec:merging}. 
\begin{lemma}
\label{lemma:merginggame}
Let $(\mu, t)$ be a game of value less than $1$, with strategies $f, g, h$ with $\val_{\mu,t}(f,g,h) \ge \eps$. Then there is a distribution $\mu'$ with $\{(\rep(x), y, z) : (x,y,z) \in \supp(\mu)\} \subseteq \supp(\mu')$, $t'$ on $\supp(\mu')$ with game value less than $1$, and strategy $\wt{f}$ such that $\val_{\mu',t'}(\wt{f},g,h) \ge \eps^{2M}$.
\end{lemma}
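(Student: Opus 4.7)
The plan is to adapt the analytic symbol-merging argument of Lemma~\ref{lemma:merging} to the game setting, exploiting the complex embedding $(a, b, c)$ with $a(x) + b(y) + c(z) = N \cdot t(x, y, z) \pmod{p^k N}$ (available under the embeddable-$T$ assumption of this section, by \cref{ass:bounded} and \cref{lemma:ass}) in order to ensure that a new target function $t'$ depending only on the endpoints of a random walk is well-defined.

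First, I would define an alternating random walk $(y_1, z_1), x_1, (y_2, z_2), x_2, \ldots, (y_M, z_M), x_M$ of length $2M - 1$ on the bipartite graph between $\A$ and $\B \times \C$ induced by $\supp(\mu)$, seeding $(y_1, z_1) \sim \mu_{y, z}$ and alternately sampling $x_i \sim \mu \mid (y_i, z_i)$ and $(y_{i+1}, z_{i+1}) \sim \mu \mid x_i$. Set $\mu'(x, y, z) := \Pr[x_M = x,\, y_1 = y,\, z_1 = z]$. Since $M = \max\{|\A|, |\B|, |\C|\}$ bounds the diameter of each connected component in the symbol graph on $\A$, a walk starting at any $(y, z)$ with $(x, y, z) \in \supp(\mu)$ can first step to $x$ and then traverse to $\rep(x)$ in at most $M - 1$ further bipartite double-hops, so $\supp(\mu') \supseteq \{(\rep(x), y, z) : (x, y, z) \in \supp(\mu)\}$. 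Define $t'(x, y, z) := N^{-1}(a(x) + b(y) + c(z)) \pmod{p^k}$; this is well-defined on $\supp(\mu')$ because summing the embedding identities along any walk reaching $(x, y, z)$ yields $a(x) + b(y) + c(z) = N \cdot \left[\sum_i t(x_i, y_i, z_i) - \sum_i t(x_i, y_{i+1}, z_{i+1})\right] \pmod{Np^k}$, which is manifestly a multiple of $N$.

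Next, I would establish the telescoping identity: if every edge of the walk is game-winning coordinate-wise, i.e., $f(x_i) + g(y_i) + h(z_i) = t(x_i, y_i, z_i)$ and $f(x_i) + g(y_{i+1}) + h(z_{i+1}) = t(x_i, y_{i+1}, z_{i+1})$ for all applicable $i$, then the interior $f$, $g$, $h$ values cancel, leaving $f(x_M) + g(y_1) + h(z_1) = t'(x_M, y_1, z_1) \pmod{p^k}$. To bound the probability of this walk-winning event from below, I would iterate Cauchy-Schwarz in the spirit of the proof of \cref{lemma:npathtrick}: each round squares the current walk-probability and applies Jensen's inequality conditioned on the current terminal node of the walk, coupling in a fresh sample from $\mu$ conditioned on that node to extend the walk by one alternation. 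After $M$ such rounds, we obtain a lower bound of $\eps^{2M}$ on the walk-winning event (matching the matrix identity $v^* P^M v \ge \eps^{2M}$ that drives \cref{lemma:merging}). Since the walk-winning event implies the $(\mu', t')$-game event at $(x_M, y_1, z_1)$ and since the marginal of $(x_M, y_1, z_1)$ is exactly $\mu'$ (applied coordinate-wise for the $n$-fold setting), we conclude $\val_{\mu', t'}(f, g, h) \ge \eps^{2M}$, so setting $\wt{f} := f$ suffices. The game $(\mu', t')$ has value less than $1$ because on $\supp(\mu)$ the constant walk $x_i = x,\, y_i = y,\, z_i = z$ shows $t'(x, y, z) = t(x, y, z)$, so any perfect strategy for $(\mu', t')$ would restrict to a perfect strategy for $(\mu, t)$, contradicting the hypothesis.

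The main obstacle is orchestrating the iterated Cauchy-Schwarz so that each squaring step extends the walk by the correct alternating node type while preserving the walk-winning event and producing the claimed exponent $2M$; this is the probabilistic analogue of the PSD-power argument in \cref{lemma:merging}, but carried out on indicator events rather than on matrix norms, so conditioning must be done carefully at each stage. A secondary subtlety is the well-definedness of $t'$, for which the embedding hypothesis is essential: without it the telescoping right-hand side would retain dependence on the interior walk variables $(x_i, y_i, z_i)$ and would not descend to a single-shot target on $(x_M, y_1, z_1)$, which is precisely why this lemma lives in the complex-embedding reduction rather than in the analytic regime treated in \cref{sec:nocomplex}.
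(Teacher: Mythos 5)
Your construction matches the paper in all the preliminary steps: the alternating walk between $\A$ and $\B\times\C$, the definition of $\mu'$ from its endpoint marginals, the telescoping identity that shows $t'$ is well-defined precisely because the embedding $(a,b,c)$ forces the interior terms $f(x_i),g(y_i),h(z_i)$ to cancel, and the observation that $(\mu',t')$ has value $<1$ via the diagonal walk. The gap is in the key probability bound. You propose to show the walk-winning probability is at least $\eps^{2M}$ by ``iterating Cauchy--Schwarz in the spirit of \cref{lemma:npathtrick} \dots after $M$ such rounds.'' But the \cref{lemma:npathtrick}-style iteration squares \emph{both} the walk length \emph{and} the exponent at each step (yielding $\eps^{2^r}$ for a walk of length $\sim 2^r$), so $M$ rounds of squaring give $\eps^{2^M}$, not the claimed $\eps^{2M}$; moreover it only reaches power-of-two walk lengths, with the wrong endpoint parity for the $\A$-to-$(\B\times\C)$ merging walk (the iterated square produces even-length walks ending on the $(y,z)$ side, whereas you need a $(2M-1)$-edge walk ending at $x_M$). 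The sentence ``extend the walk by one alternation'' while ``squaring the probability'' is internally inconsistent with $\eps^{2M}$.

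The paper's actual argument is quite different from iterated Cauchy--Schwarz: it does not lower-bound the walk-winning event at all. Instead it first chooses a \emph{random} $\wt f$ (for each coordinate of $x$ pick a random $\mu$-neighbour $(y,z)$ and set $\wt f(x)=t(x,y,z)-g(y)-h(z)$), which after arithmetizing via \cref{lemma:identity} makes $\E_{\wt f}[\wt F_S]$ exactly the vector $\mD_x^{-1}\mA R_S$; this converts the repeated-game value into a Hermitian form $\E_S\bigl[(\mD_{yz}^{1/2}R_S)^*P^M(\mD_{yz}^{1/2}R_S)\bigr]$ with $P$ PSD of norm $\le1$, and then a \emph{single} Jensen inequality on the spectrum, $v^*P^Mv\ge(v^*Pv)^M$, followed by a few more Jensens on $\E_S$, gives $\ge\val_{\mu,t}(f,g,h)^{2M}$ directly for every $M$. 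That linear exponent is produced by convexity of $x^M$ on the spectrum of $P$, not by doubling. Your conclusion is in fact correct and your instinct to keep $\wt f=f$ can be made rigorous: writing $A(x,y,z)=\mathbf 1[f(x)+g(y)+h(z)=t(x,y,z)]$, $\alpha(x)=\E_{(y,z)\mid x}[A]$, $B$ the natural operator with $P=BB^*$, one has $\Pr[\text{walk-winning}]=\langle\alpha,P^{M-1}\mathbf 1\rangle\ge\langle\alpha,P^{M-1}\alpha\rangle\ge\|\alpha\|_2^2\bigl(\langle\hat\alpha,P\hat\alpha\rangle\bigr)^{M-1}\ge\|\alpha\|_2^{2M}\ge\eps^{2M}$, using pointwise monotonicity of $P^{M-1}$ and the same PSD--Jensen step --- but you should replace the doubling sketch with this argument (or adopt the paper's random-$\wt f$ arithmetization), since doubling cannot yield a linear exponent in $M$.
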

\begin{proof}
To avoid repeating arguments, we will show this by combining \cref{lemma:identity,lemma:merging}. We use the same notation as the proof of \cref{lemma:merging}.
In particular, $\mu'$ is still given by the random walk between $\A$ and $(\B \times \C)$, and we let $N \cdot t'(x,y,z) = a(x) + b(y) + c(z) \pmod{p^k N}$. Clearly, $(\mu',t')$ has game value less than $1$ because $t'(x,y,z) = t(x,y,z)$ for $(x,y,z) \in \supp(\mu)$.
Finally, we set $\wt{f}(x)$ as follows: pick $(y,z)$ as a random neighbor of $x$ propotional to mass in $\mu$, and set $\wt{f}(x) = t(x,y,z) - g(y) - h(z)$.
Define $\wt{F}_S, G_S, H_S$ as in \cref{lemma:identity}, and $R_S(y,z) = G_S(y)H_s(z)$.
\begin{align*}
    \E_{\wt{f}}[\val_{\mu',t'}(\wt{f}, g, h)] &= \E_{S \in \Z_m^n}\E_{\wt{f}}[\wt{F}_S^* \mA(\mD_{yz}^{-1} \mA^* \mD_x^{-1} \mA)^{M-1} R_S]
    \\ &= \E_{S \in \Z_m^n}\left[(\mD_{yz}^{1/2}R_S)^* (\mD_{yz}^{-1/2} \mA^* \mD_x^{-1} \mA \mD_{yz}^{-1/2})^M \mD_{yz}^{1/2}R_S \right] \\
    &\ge \E_{S \in \Z_m^n}\left[(R_S^* \mA^* \mD_x^{-1} \mA R_S)^M \right] \ge \E_{S \in \Z_m^n}\left[R_S^* \mA^* \mD_x^{-1} \mA R_S \right]^M \\
    &\ge \E_{S \in \Z_m^n}[|\bar{F_S}^* \mA R_S|^2]^M \ge \E_{S \in \Z_m^n}[\bar{F_S}^* \mA R_S]^{2M} \\
    &= \val_{\mu,t}(f, g, h)^{2M} \ge \eps^{2M},
\end{align*}
where we have applied Jensen's inequality and \cref{lemma:identity} several times.
\end{proof}
Now we show that we can reduce to when the support is exactly $(\rep(x), y, z)$. This is essentially a repeat of the arguments in \cref{lemma:merging2,lemma:validt}.
\begin{lemma}
\label{lemma:merginggame2}
Let $(\mu, t)$ be a game of value less than $1$, with strategies $f, g, h$ with $\val_{\mu,t}(f,g,h) \ge \eps$. Then there is a distribution $\mu^-$ with support $\{(\rep(x), y, z) : (x,y,z) \in \supp(\mu)\}$, $(y,z)$ is a product distribution, $t^-$ on $\supp(\mu^-)$ with game value less than $1$, and strategies $\wt{f}, \wt{g}, \wt{h}$ on $n' \ge cn$ variables with $\val_{\mu^-,t^-}(\wt{f}, \wt{g}, \wt{h}) \ge \eps^{2M}/2$.
\end{lemma}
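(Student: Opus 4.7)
The proof will mirror the combination of \cref{lemma:merging2} and \cref{lemma:validt} from the analytic setting, adapted to the game-value (probability) formulation and to the abelian-embedding structure. First, I apply \cref{lemma:merginggame} to pass from $(\mu, t)$ to the merged game $(\mu', t')$, obtaining a distribution $\mu'$ whose support contains $\{(\rep(x), y, z) : (x,y,z) \in \supp(\mu)\}$, a target $t'$ of game value strictly less than $1$, and a strategy $(\wt{f}, g, h)$ with $\val_{\mu',t'}(\wt{f},g,h) \ge \eps^{2M}$.

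Next, I define $\mu^-$ to be supported exactly on $\{(\rep(x), y, z) : (x,y,z) \in \supp(\mu)\}$, with $(y,z)$-marginal equal to the product measure (e.g.\ $\mu'_y \otimes \mu'_z$, appropriately restricted). Symbol-merging guarantees that each $(y,z)$ in the support uniquely determines $\rep(x)$, so this is well-defined. After a preparatory $y$- or $z$-path trick (via \cref{lemma:npathtrick}) to ensure $(y,z)$ has full support in $\mu'$, the product measure is absolutely continuous with respect to $\mu'_{yz}$, and we can write $\mu' = \alpha \mu^- + (1-\alpha)\nu$ for some constant $\alpha > 0$ depending only on $\mu$. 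Applying the random-restriction trick (\cref{lemma:rr}) transfers strategies from $\mu'$ to $\mu^-$, at the cost of reducing from $n$ variables to $n' \ge cn$ variables and incurring an additive loss of $2^{-\Omega(n)}$; choosing $c$ suitably small guarantees $\val_{\mu^-,t^-}(\wt{f}',\wt{g}',\wt{h}') \ge \eps^{2M}/2$.

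The main obstacle is verifying that $(\mu^-, t^-)$ has game value strictly less than $1$, analogous to \cref{lemma:validt}. Suppose for contradiction there were strategies $A, B, C$ achieving value $1$, so that $A(\rep(x)) + B(y) + C(z) = t^-(\rep(x), y, z) \pmod{p^k}$ on $\supp(\mu^-)$. By construction in \cref{lemma:merginggame}, $N \cdot t^-(\rep(x), y, z) = a(\rep(x)) + b(y) + c(z) \pmod{p^k N}$, and by \cref{ass:bounded}, $N \cdot t(x,y,z) = a(x) + b(y) + c(z) \pmod{p^k N}$ on $\supp(\mu)$. Subtracting these gives
\[ N \cdot \bigl[A(\rep(x)) + B(y) + C(z) - t(x,y,z)\bigr] \equiv a(\rep(x)) - a(x) \pmod{p^k N}, \]
so the right-hand side depends only on $x$ and is divisible by $N$ modulo $p^k N$. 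Dividing by $N$ yields a well-defined $\phi(x) \in \Z_{p^k}$ (well-defined because multiplication by $N$ from $\Z_{p^k}$ to $\Z_{p^k N}$ is injective) satisfying $\phi(x) \equiv t(x,y,z) - t^-(\rep(x),y,z) \pmod{p^k}$ for all $(x,y,z) \in \supp(\mu)$. Setting $\wt{f}(x) := A(\rep(x)) + \phi(x)$, $\wt{g} := B$, $\wt{h} := C$ then produces a value-$1$ strategy for $(\mu, t)$, contradicting our standing assumption $\val(\mu, t) < 1$. The delicate point is the divisibility of $a(\rep(x)) - a(x)$ by $N$ in $\Z_{p^k N}$, which is exactly what allows the lift $\phi$ to be defined unambiguously over $\Z_{p^k}$ and is guaranteed by the embedding structure imposed by \cref{ass:bounded} and \cref{lemma:merginggame}.
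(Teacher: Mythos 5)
Your proof is correct, and in one key step it takes a genuinely different (and more self-contained) route than the paper. The shared scaffolding is the same: invoke \cref{lemma:merginggame} to get the merged game $(\mu', t')$, write $\mu'$ as a convex combination dominating the desired product measure $\mu^-$, and random-restrict via \cref{lemma:rr}. The divergence is in proving that $(\mu^-, t^-)$ has game value strictly less than~$1$. The paper argues indirectly: it observes that if the \emph{two-player} game on $\A \times (\B \times \C)$ with target $t$ admits a value-$1$ strategy $r, s$, then one can combine $r, s$ with a hypothetical value-$1$ strategy for $(\mu^-, t^-)$ to manufacture a value-$1$ strategy for $(\mu, t)$ (the identity $t(x,y,z) = [f(x) + r(x) - r(\rep(x))] + g(y) + h(z)$); and if no such $r,s$ exist, the two-player game already has value $<1$ and Raz's two-player parallel repetition theorem finishes the job. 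You instead observe, from the embedding identities $N t = a + b + c$ and $N t^- = a\circ\rep + b + c$ on their respective supports, that $a(\rep(x)) - a(x)$ lies in $N\Z_{p^kN}$ (indeed this divisibility is always true, because adjacent $x, x'$ in the merging graph share a $(y,z)$-neighbor, so $a(x)-a(x') = N[t(x,y,z)-t(x',y,z)]$), which lets you define a correction $\phi(x) \in \Z_{p^k}$ depending only on $x$ satisfying $\phi(x) \equiv t(x,y,z) - t^-(\rep(x),y,z) \pmod{p^k}$. Then $\wt f(x) := A(\rep(x)) + \phi(x)$ lifts any value-$1$ strategy $(A,B,C)$ for $(\mu^-,t^-)$ directly to a value-$1$ strategy for $(\mu,t)$ --- a single-case contradiction with no appeal to two-player parallel repetition. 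Your argument trades a little extra explicitness about the $a,b,c$ functions for a cleaner logical structure; be slightly more careful in the write-up about which quantity is being divided by $N$ (the wording "dividing by $N$ yields $\phi$" next to the relation $N[\cdots] \equiv a(\rep(x))-a(x)$ momentarily reads as $\phi = [a(\rep(x))-a(x)]/N$ rather than its negative), but the intended construction and the final identity both check out.
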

\begin{proof}
We may assume that $(\mu, t)$ are as constructed in \cref{lemma:merginggame}. Then $\mu^-, t^-, \wt{f}, \wt{g}, \wt{h}$ are formed by taking a random restriction. It suffices to show that $(\mu^-, t^-)$ has game value less than $1$. We show that otherwise, the two player game on $\A \times (\B \times \C)$ with target value $t(x,y,z)$ was unwinnable, so we can apply a two-player parallel repetition theorem to reach a contradiction \cite{Raz98}. Indeed, assume there are strategies $r(x) + s(y,z) = t(x,y,z) \pmod{m}$, and $f(x) + g(y) + h(z) = t^-(x,y,z)$ for $(x,y,z) \in \supp(\mu^-)$. Then clearly
\begin{align*} 
t(x,y,z) &= r(x) + s(y,z) = (r(x) - r(\rep(x)) + r(\rep(x)) + s(y,z) \\ &= (r(x) - r(\rep(x)) + t^-(x,y,z) = \left[f(x) + r(x) - r(\rep(x))\right] + g(y) + h(z),
\end{align*}
so $t$ admits a strategy with value $1$, a contradiction.
\end{proof}

\subsection{Establishing the Master Relaxed Base Case}
\label{subsec:masterrelaxed}
The goal of this section is to establish an analogue of the relaxed base case in this setting.
\begin{definition}[Master relaxed base case]
\label{def:masterrelaxed}
We say that a distribution $\mu$ on $\A \times \B \times \C$ and $\A' \subseteq \A$ satisfies the \emph{master relaxed base case} if:
\begin{itemize}
    \item $\mu_{y,z}$ is a product distribution.
    \item $(y, z)$ uniquely determines $x$ in $\supp(\mu)$.
    \item For functions $G: \B \to \bbC$, $H: \C \to \bbC$, and $F: \A \to \bbC$ with
    \[ I_{\modest}[F] = \E_{x,x'\sim\mu_x}[1_{x,x'\in\A'} 1_{\ama(x) = \ama(x')} |F(x) - F(x')|^2] \ge \tau\|F\|_2^2, \]
    we have for a constant $c := c(\mu)$ and $M = \max\{|\A|, |\B|, |\C|\}$ that \[ \E_{(x,y,z)\sim\mu}[F(x)G(y)H(z)] \le (1-c\tau^{100M})\|F\|_2 \|G\|_2 \|H\|_2. \]
\end{itemize}
\end{definition}
As in \cref{lemma:relaxed}, we show that applying two path tricks, merging, and taking a random restriction results in a distribution $\mu$ that satisfies the master relaxed base case. In the next section, it will be important to remember that we got to a distribution satisfying the relaxed base case in this manner, because we require that the original distribution is contained inside it.
\begin{lemma}
\label{lemma:masterrelaxed}
    Consider a distribution $\mu$ on $\A \times \B \times \C$. Perform the following sequence of operations to $\mu$.
    \begin{enumerate}
        \item Apply the path trick (\cref{lemma:npathtrick}) to $\mu$ on coordinate $y$ to reach a distribution $\mu^{(1)}$ on $\A \times \B^+ \times \C$.
        \item Apply the path trick (\cref{lemma:npathtrick}) to $\mu^{(1)}$ on coordinate $x$ to reach a distribution $\mu^{(2)}$ on $\A^+ \times \B^+ \times \C$.
        \item Merge symbols in $\A^+$ using (\cref{lemma:merginggame2}) to reach a distribution $\mu^{(3)}$.
        \item Random restrict $\mu^{(3)}$ so that $(y,z)$ is uniform. Call this distribution $\mu^+$.
        \item Set $\A' = \{\rep((x,\dots,x)) : x \in \A\}$.
    \end{enumerate}
Then $\mu^+$ satisfies the master relaxed base case with the set $\A'$.
\end{lemma}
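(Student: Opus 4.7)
The plan is to verify each of the three conditions of \cref{def:masterrelaxed} in turn. Properties (1) and (2) will be forced by the construction: step~4 applies a random restriction that leaves $\mu^+_{y,z}$ uniform on its support (hence a product distribution), and step~3 collapses each graph-component in $\A^+$ to its representative, so $(y,z)$ determines $x=\rep(x)$ in $\supp(\mu^+)$. All of the real effort will go into property~(3), the inequality $|\E[F(x)G(y)H(z)]| \le (1-c\tau^{100M})\|F\|_2\|G\|_2\|H\|_2$ whenever $F$ has large modest influence.

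I will mimic the proof of \cref{lemma:relaxed}. Normalize $\|F\|_2=\|G\|_2=\|H\|_2=1$, rotate $F$ so that $\E[FGH]$ is a positive real, and suppose for contradiction that $\E[FGH] > 1-c\tau^{100M}$. The identity
\[ \E[FGH] = 1-\tfrac12\E\left[|\bar{F(x)} - G(y)H(z)|^2\right] \]
then yields $|\bar{F(x)}-G(y)H(z)|^2 \le c\tau^{100M}$ pointwise on $\supp(\mu^+)$ (after reducing to the worst-case triple). As in \cref{lemma:relaxed}, the modest-influence hypothesis forces some $x_0\in\A'$ with $|F(x_0)|\gs \sqrt\tau$, and pairwise-connectivity of $\mu^+$ (inherited through the path tricks of steps~1--2 and preserved by merging via \cref{lemma:pathconnected}) propagates this to $|F(x)|,|G(y)|,|H(z)|\gs \tau^{M/2}$ uniformly. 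Rescaling $\tilde A(x)=F(x)/|F(x)|$, and similarly $\tilde B,\tilde C$, then produces an approximate $\D$-embedding of $\mu^+$: on the support, $|\tilde A(x)\tilde B(y)\tilde C(z) - 1| \le \tau^{\Omega(M)}$.

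To convert this into a contradiction I invoke the master-embedding machinery. Since $\mu^+$ inherits the no-$\Z$-embedding property from $\mu$ (path tricks lift $\Z$-embeddings by \cref{lemma:pathtrickembed}, and neither merging nor random restriction can introduce new ones), the group of exact $\D$-embeddings of $\mu^+$ has bounded exponent dividing some $r=O_M(1)$. Provided $c$ is small enough, rounding each $\tilde A(x)$ to its nearest $r$-th root of unity produces a genuine embedding $(\hat\a,\hat\b,\hat\c)$ of $\mu^+$ into $\Z_r$. By \cref{lemma:same}, this cyclic embedding factors through the master embedding: $\hat\a(x)=\hat\a(x')$ whenever $\ama(x)=\ama(x')$. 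Combined with the quantitative rounding estimate and the uniform lower bound on $|F|$, this forces $|F(x)-F(x')|\le \tau^{\Omega(M)}$ for all pairs $x,x'\in\A'$ lying in the same master class, contradicting $I_{\modest}[F]\ge\tau\|F\|_2^2$ once $c$ is small enough relative to the exponent $100M$.

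\textbf{Main obstacle.} The delicate step is the rounding of an approximate $\D$-embedding to an exact cyclic embedding of bounded exponent, which is exactly what dictates the large constant $100M$: the rounding error must be small enough to be unique yet still polynomial in $\tau$, so that the loss can be absorbed when one contradicts the modest-influence lower bound. This is the analogue in the present setting of the role played by nonembeddability of $T$ in \cref{lemma:relaxed}; here it is played by the master-embedding theorem \cref{lemma:same} together with the no-$\Z$-embedding assumption on $\mu$ that bounds the exponent of the $\D$-embedding group.
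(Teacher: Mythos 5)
Your proposal follows the paper's outline for the first several steps (everything up through establishing $|F(x)|,|G(y)|,|H(z)|\gs\tau^{M/2}$ and extracting an approximate $\D$-embedding is the same), but the finish has two genuine gaps where the paper does more work.

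First, the step ``rounding each $\tilde A(x)$ to its nearest $r$-th root of unity produces a genuine embedding'' is not justified. Knowing that every \emph{exact} $\D$-embedding of $\mu^+$ takes values in $r$-th roots of unity does not imply that an \emph{approximate} embedding has values near $r$-th roots of unity; nothing a priori forces the phases of $\tilde A, \tilde B, \tilde C$ to be close to rationals with bounded denominator. Consequently the rounding errors can be as large as $\Theta(1/r)$ in angle and the rounded triple $(\hat A,\hat B,\hat C)$ need not satisfy the exact product constraint. What is really needed is a quantitative \emph{stability} statement: an $O(\tau^{\Omega(M)})$-approximate $\R/\Z$-cocycle on $\supp(\mu^+)$ is $O(\tau^{\Omega(M)})$-close to an exact one. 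The paper supplies this precisely via Dirichlet's approximation theorem: for a suitable $N$, it finds a single denominator $q\le N$ and integers $\a(x),\b(y),\c(z)$ simultaneously approximating the phases to within $1/(qN^{1/(3M)})$, and then shows the triple $(\a,\b,\c)$ must be an exact embedding because $\|\frac1q(\a(x)+\b(y)+\c(z))\|_{\R/\Z}<1/q$. The common denominator produced by Dirichlet need not be $r$; one then invokes \cref{lemma:same} to say this embedding respects $\ama$. Your direct rounding step conflates stability with the algebraic fact about exponents and cannot be repaired without inserting the Dirichlet (or an equivalent stability) argument.

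Second, you have omitted the magnitude analysis entirely. Your conclusion ``this forces $|F(x)-F(x')|\le\tau^{\Omega(M)}$'' from phase control alone plus a \emph{lower} bound on $|F|$ is insufficient: if $\ama(x)=\ama(x')$ but $|F(x)|$ and $|F(x')|$ differ by a constant, then $|F(x)-F(x')|$ is $\Omega(1)$ regardless of the phases. The paper runs a separate Dirichlet argument on the functions $f'(x)=-\log|F(x)|$, $g'(y)=\log|G(y)|$, $h'(z)=\log|H(z)|$, which nearly form a $\Z$-embedding; the no-$\Z$-embedding hypothesis then forces these to be essentially constant, giving $||F(x)|-|F(x')||$ small for \emph{all} pairs $x,x'$. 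Only after both the magnitude and the phase bounds are in hand can one conclude $|F(x)-F(x')|$ is small on each master class and contradict $I_{\modest}[F]\ge\tau$.
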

\begin{proof}
The first two items of \cref{def:masterrelaxed} are satisfied by construction. We focus on the last item.

Relabel $\A^+, \B^+$ as $\A, \B$ for simplicity.
Start with the observation that $\E_{x\sim\mu_x}[1_{x\in\A'} |F(x)|^2] \ge \tau\|F\|_2^2$. Assume that $\|F\|_2 = \|G\|_2 = \|H\|_2 = 1$. For simplicity of notation, we use $\ls, \gs$ in this proof to suppress constants depending on $M, \mu$.

Let us multiply $F(x)$ pointwise by the same unit complex number so that 
\[ \E_{(x,y,z)\sim\mu}\left[F(x)G(y)H(z)\right] \] is a positive real number. We may assume this throughout. Then we have the equality
\[ \left|\E_{(x,y,z)\sim\mu}\left[F(x)G(y)H(z) \right]\right| = 1 - \frac12\E_{(x,y,z)\sim\mu}\left[|\bar{F(x)} - G(y)H(z)|^2 \right]. \]
Assume for contradiction that $|\bar{F(x)} - G(y)H(z)|^2 \le c\tau^{100M}$ for all  $(x,y,z) \in \supp(\mu)$. In particular, $||F(x)| - |G(y)||H(z)|| \le c\tau^{50M}$.

As in the proof of \cref{lemma:relaxed}, we can establish that $|F(x)|, |G(y)|, |H(z)| \gs \tau^{M/2}$ for all $x \in \A, y \in \B, z \in \C$. Now, we actually establish that $|F(x)|$ must be nearly constant by using that $\mu$ has no $\Z$-embeddings. Recall that $||F(x)| - |G(y)| |H(z)|| \ls \tau^{50M}$. Using that $|F(x)| \gs \tau^{M/2}$ gives us $|1 - |F(x)|^{-1} |G(y)| |H(z)|| \ls \tau^{49M}$. Define $f'(x) = -\log |F(x)|$, $g'(y) = \log |G(y)|$, $h'(z) = \log |H(z)|$. Then we get
\[ |f'(x) + g'(y) + h'(z)| \ls \tau^{49M} \enspace \text{ for all } \enspace (x, y, z) \in \supp(\mu), \] i.e., $f', g', h'$ are approximately a $\Z$-embedding. By the Dirichlet approximation theorem, for $N = (C\tau)^{9M}$, there is some $q \le N$ and integers $\a(x), \b(y), \c(z)$ with
\[ \left|f'(x) - \frac{\a(x)}{q} \right| \le \frac{1}{qN^{1/(3M)}}, \enspace \left|g'(y) - \frac{\b(y)}{q} \right| \le \frac{1}{qN^{1/(3M)}}, \enspace \left|h'(z) - \frac{\c(z)}{q} \right| \le \frac{1}{qN^{1/(3M)}}. \]
Thus
\begin{align*} |\a(x) + \b(y) + \c(z)|
\le q|f'(x) + g'(y) + h'(z)| + \frac{3}{N^{1/(3M)}} \ls N \tau^{49M} + 3(C\tau)^{-3} < 1.
\end{align*}
Hence $\a, \b, \c$ form a $\Z$-embedding, so they are all constant. Thus \[ |\log |F(x)| - \log |F'(x)|| \le \frac{1}{N^{1/(3M)}} = (C\tau)^{-3}, \] so $|1 - |F(x)|/|F(x')|| \ls (C\tau)^{-3}$ for all $x, x'$.

To conclude, we will establish that the arguments of $F(x)$ are nearly constant. Recall that $|\bar{F(x)} - G(y) H(z)| \ls \tau^{50M}$ and $|F(x)|, |G(y)|, |H(z)| \gs \tau^{M/2}$, so \[ \tau^{49M} \gs \left| \frac{\bar{F(x)}}{|F(x)|} - \frac{G(y)}{|G(y)|} \frac{H(z)}{|H(z)|} \right| = \left| 1 - \frac{F(x)}{|F(x)|} \frac{G(y)}{|G(y)|} \frac{H(z)}{|H(z)|} \right|. \]
Let $f'(x) = \arg(F(x))/(2\pi)$, $g'(y) = \arg(G(y))/(2\pi)$, $h'(z) = \arg(H(z))/(2\pi)$. A direct calculation gives that $\|f'(x) + g'(y) + h'(z)\|_{\R/\Z} \ls \tau^{49M}$, i.e. $f', g', h'$ nearly form an embedding. Applying Dirichlet approximation again gives
\[ \left|f'(x) - \frac{\a(x)}{q} \right| \le \frac{1}{qN^{1/(3M)}}, \enspace \left|g'(y) - \frac{\b(y)}{q} \right| \le \frac{1}{qN^{1/(3M)}}, \enspace \left|h'(z) - \frac{\c(z)}{q} \right| \le \frac{1}{qN^{1/(3M)}}. \]
This gives that
\begin{align*}
    \left\|\frac{1}{q}(\a(x) + \b(y) + \c(z))\right\|_{\R/\Z} \le \|f'(x) + g'(y) + h'(z)\|_{\R/\Z} + \frac{3}{qN^{1/(3M)}} < 1/q.
\end{align*}
Thus, $\a(x) + \b(y) + \c(z) \equiv 0 \pmod{q}$, i.e. forms an embedding. Thus, for any $x, x'$ with $\ama(x) = \ama(x')$ we have $\a(x) = \a(x')$ by the definition of the master embedding. Thus, for any $x, x'$ with $\ama(x) = \ama(x')$ we have that $|f'(x) - f'(x')| \le (C\tau)^{-3}$. Combining this with $|1 - |F(x)|/|F(x')|| \ls (C\tau)^{-3}$ for all $x, x'$ gives $|F(x) - F(x')| \ls \tau^{-3}$ for any $\ama(x) = \ama(x')$, contradicting the hypothesis.
\end{proof}

\subsection{Applying the Modest Noise Operator}
\label{subsec:noise2}
In \cref{sec:induction2} we will show the following bound on functions with high modest degree.
\begin{restatable}{theorem}{highmodest}
\label{thm:highmodest}
If $\mu$ satisfies the master relaxed base case (\cref{def:masterrelaxed}), then for any $G: \B^n \to \bbC$, $H: \C^n \to \bbC$, and $F: \A^n \to \bbC$ that with all terms having modest degree at least $d$:
\[ \left|\E_{(x,y,z) \sim \mu^{\otimes n}}\left[F(x) G(y) H(z) \right] \right| \le 2^{-c\cdot d(d/n)^C} \|F\|_2 \|G\|_2 \|H\|_2.\]
\end{restatable}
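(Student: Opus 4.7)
The plan is to adapt the inductive argument proving \cref{thm:highdegree} in \cref{sec:induction}, with effective degree replaced by modest degree, with the relaxed base case replaced by the master relaxed base case, and with the tensor $T$ specialized to the constant $1$ (so it disappears from every computation). Define
\[
\gamma^{\modest}_{n,d} := \sup \left| \E_{(x,y,z) \sim \mu^{\otimes n}}\left[F(x) G(y) H(z)\right] \right|,
\]
where the supremum is over $\|F\|_2, \|G\|_2, \|H\|_2 \le 1$ and $F\colon \A^n\to\bbC$ has all monomials of modest degree $\ge d$. The theorem follows by iterating the single-step recursion $\gamma^{\modest}_{n,d} \le (1 - c(d/n)^{O_M(1)}) \gamma^{\modest}_{n-1,d-1}$, exactly as \cref{thm:highdegree} is deduced from \cref{lemma:highdegree}.

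To prove the recursion I would apply the SVD of \cref{lemma:svd} to $F, G, H$ on the last coordinate $n$, writing $F = \sum_r \aa_r F_r F_r'$, $G = \sum_s \bb_s G_s G_s'$, $H = \sum_t \cc_t H_t H_t'$, and form the coefficients $\hat{F_r}(s,t)$, $\hat{F_r'}(s,t)$ as in \cref{sec:induction} but with $T\equiv 1$. Because every monomial of $F$ has modest degree $\ge d$, the total modest influence of $F$ is at least $d$, so some coordinate (WLOG $n$) satisfies $I_{n,\modest}[F] \ge d/n =: \tau$, which yields the analog of \cref{lemma:variance}, namely $\sum_r \aa_r^2 I_{\modest}[F_r'] \ge \tau$. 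Fix a singular-value gap $\delta \le \Delta$ with $\delta \ge \tau^{O_M(1)}$ and restrict to index sets $R, S, T$ of singular values $\ge \Delta$. Apply the Cauchy-Schwarz step of \eqref{eq:cs} and prove analogs of \cref{lemma:bound1,lemma:bound21,lemma:bound2}. The analog of \cref{lemma:bound2}, the bound $\sum_{s,t}|\hat{F_r'}(s,t)|^2 \le 1$ via orthonormality of the functions $(y,z) \mapsto G_s'(y) H_t'(z)$ in $L_2(\B \times \C, \mu_{y,z})$, is actually simpler here since there is no tensor to carry along: it requires only that $\mu_{y,z}$ is a product distribution, which the master relaxed base case guarantees.

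I would then split into the same two cases as \cref{subsec:r<st,subsec:r>st}. The case $|R| < |S||T|$ is handled by the perturbation argument: use \cref{lemma:inner} to locate a pair $(s_1,t_1)\neq(s_2,t_2)$ with $|\langle V^{(s_1,t_1)}, V^{(s_2,t_2)}\rangle| \ge (2M)^{-1}(\gamma^{\modest}_{n-1,d-1})^2$, and then average over random unit phases $\alpha_s, \beta_t$ applied to $\wt{G}, \wt{H}$ to produce a normalized test triple violating the inductive hypothesis exactly as in \cref{subsec:r<st}. The case $|R| \ge |S||T|$ uses the master relaxed base case: define $\wt{F_{st}} := (\sum_{r\in R}\bar{\hat{F_r'}(s,t)}F_r')/\sqrt{\sum_{r\in R}|\hat{F_r'}(s,t)|^2}$ as in \cref{subsec:r>st}; near-equality in Cauchy-Schwarz forces $\wt{F_{st}}$ to be $L_2$-close to $\overline{G_s' H_t'}$ on $\supp(\mu)$, after which \cref{def:masterrelaxed} yields $I_{\modest}[\wt{F_{st}}] \le \delta' := (O(\delta M^5/\Delta^2)/c)^{1/(100M)}$. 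The same triangle-inequality chain as in \cref{subsec:r>st}, with $G_s' H_t' T$ replaced throughout by $G_s' H_t'$, then propagates this to $I_{\modest}[F_r'] \lesssim M^2 \delta'$ for every $r\in R$, which by our choice of $\delta$ contradicts the lower bound $\sum_r \aa_r^2 I_{\modest}[F_r'] \ge \tau$.

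The main obstacle is bookkeeping: verifying that the perturbation computation of \cref{subsec:r<st} and the base-case application of \cref{subsec:r>st} both go through cleanly with modest influence in place of effective influence and with the master relaxed base case in place of the relaxed one. The $T \equiv 1$ specialization actually simplifies several steps, since the arguments no longer need $|T(x,y,z)| = 1$ on $\supp(\mu)$ nor have to track preserved nonembeddability of derived tensors. Beyond \cref{def:masterrelaxed} and the linear-algebraic pigeonhole of \cref{lemma:inner}, no new analytic ingredient should be required.
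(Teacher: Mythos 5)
Your proposal matches the paper's proof essentially line for line: define $\gamma_{n,d}$ for functions of modest degree $\ge d$, prove the one-step recursion $\gamma_{n,d} \le (1-c(d/n)^{O_M(1)})\gamma_{n-1,d-1}$ via SVD on the high-influence coordinate, split on a singular-value gap, and handle $|R|<|S||T|$ by the phase-perturbation argument (with $T\equiv 1$) and $|R|\ge|S||T|$ via the master relaxed base case forcing small modest influence on $F_r'$, contradicting $\sum_r \aa_r^2 I_{\modest}[F_r'] \ge \tau$. The paper itself remarks that the $|R|<|S||T|$ case is identical ``imagine that the tensor $T\equiv 1$,'' so no discrepancy.
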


We will apply this bound to reduce to studying functions that are constant on the master embedding. The idea is to apply the modest noise operator, apply \cref{thm:highmodest} to remove the high-degree terms, and then take a random restriction onto a distribution that distributes mass equally over inputs with the same master embedding.

\begin{lemma}
\label{lemma:reducef}
Let $\mu$ be a saturated distribution on $\A \times \B \times \C$, and let $\nu$ be a distribution on $\cA \times \B \times \C$ sampled as follows: sample $(y, z) \sim \mu_{y,z}$, and return $(-\ama(y)-\ama(z), y, z)$.

For any constant $c' > 0$, for all sufficiently small $c$ (in terms of $c'$), and $\val_{t,\mu}(f, g, h) \ge \eps$ for $\eps \ge 2^{-cn}$, there are strategies $\wt{f}, \wt{g}, \wt{h}$ on $n/8$ coordinates with $\val_{\nu, t}(\wt{f}, \wt{g}, \wt{h}) \gs \eps^{O_M(1)}/2^{c'n}.$
\end{lemma}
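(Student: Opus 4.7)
The plan is to combine a reduction to the master relaxed base case with a modest-noise decomposition, and then convert the resulting analytic bound back into a strategy on $\nu$. First, apply \cref{lemma:masterrelaxed} to $(\mu, t)$ to obtain a distribution $(\mu^+, t^+)$ satisfying the master relaxed base case with some $\A' \subseteq \A^+$. By \cref{lemma:npathtrick,lemma:merginggame,lemma:merginggame2}, this step costs only a polynomial factor in $\eps$ and a constant factor in $n$, and by \cref{lemma:pathtrickembed,lemma:augment} the path tricks preserve saturation together with the master group $\cA$ (applying one more saturation round afterwards if needed). So without loss of generality we may assume $\mu$ already satisfies the master relaxed base case and is saturated.

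Next, arithmetize via \cref{lemma:identity}: for the given strategies $f, g, h$,
\[ \eps \le \E_{S \sim \Z_m^n} \E_{(x, y, z) \sim \mu^{\otimes n}}\left[F_S(x)\, G_S(y)\, H_S(z)\, T_S(x, y, z)\right]. \]
By \cref{lemma:ass}, each $T_S$ factors through the master embeddings, i.e., $T_S(x, y, z)$ is a function of $(\ama(x), \bma(y), \cma(z))$ alone. For each $S$, invoke \cref{lemma:polyapprox} with $d = \delta n$ (for $\delta$ small in terms of $c'$) and $\eps_0 = 1/4$ to construct $\TT := \sum_{j=0}^d c_j \TT^{\modest}_{\rho_j}$, and decompose $F_S = \TT F_S + (I - \TT) F_S$. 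Applying \cref{thm:highmodest} modest-degree by modest-degree to the tail (combined with $\sum_j |c_j| \le 2^{O(d\sqrt{\eps_0})}$) bounds the contribution of $(I - \TT) F_S$ by $n \cdot 2^{-c\, d(d/n)^C} \le 2^{-c'n/2}$, provided $c$ is sufficiently small.

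It remains to transfer the $\TT F_S$ contribution to a strategy on $\nu$. Each $\TT^{\modest}_{\rho_j}$ acts coordinate-wise by resampling $x_i$ within its fiber $\{x \in \A' : \ama(x) = \ama(x_i)\}$ with probability $\rho_j$. Because $\mu$ is saturated and $T_S$ factors through the master embedding, this resampling preserves $T_S(x, y, z)$. We can therefore reinterpret the action of $\TT$ as producing a randomized strategy on the $\cA$-alphabet: set $\wt f(a) := f(x)$ for $x$ drawn from $\mu_x$ conditioned on $x \in \A' \cap \ama^{-1}(a)$, and keep $\wt g = g$, $\wt h = h$ on the unchanged $y, z$ coordinates; after derandomizing $\wt f$ by averaging, a final application of \cref{lemma:rr} trims from $n$ to $n/8$ coordinates with only a constant loss. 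Reverting through \cref{lemma:identity} yields $\val_{\nu, t}(\wt f, \wt g, \wt h) \gs \eps^{O_M(1)} - 2^{-c'n/2} \gs \eps^{O_M(1)}/2^{c'n}$, with the first term dominating since $c$ is chosen small relative to $c'$.

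The main obstacle is the final transfer step: $\TT F_S$ is not literally a function of $(\ama(x_1), \dots, \ama(x_n))$, but only a function whose modest-degree mass is mostly concentrated below $d$. Converting the analytic bound $\E[\TT F_S\, G_S\, H_S\, T_S]$ into a genuine strategy on $\nu$ requires carefully using that $T_S$ depends only on master embeddings and that $(y, z)$ uniquely determines $x$ in the master relaxed base case, so that the fiber resampling performed by $\TT^{\modest}_{\rho_j}$ is consistent with the $\nu$-sampling $a = -\bma(y) - \cma(z)$. This is precisely where saturation and the master relaxed base case (\cref{def:masterrelaxed}) play essential roles.
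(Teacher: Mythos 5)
Your proposal follows the paper's overall route: reduce to the master relaxed base case via \cref{lemma:masterrelaxed}, arithmetize with \cref{lemma:identity}, decompose using a modest-noise operator built from \cref{lemma:polyapprox}, control the high-modest-degree tail with \cref{thm:highmodest}, and transfer the surviving low-degree contribution to a strategy over the master alphabet. However, there is a concrete gap in how you invoke \cref{thm:highmodest}. That theorem bounds $\E_{(x,y,z)\sim\mu^{\otimes n}}[F(x)G(y)H(z)]$ with no tensor in the integrand, so to apply it to the term $\E[(I-\TT)F_S \cdot G_S \cdot H_S \cdot T_S]$ you must first absorb $T_S$ into the per-player functions. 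Your justification, ``$T_S$ factors through the master embeddings,'' is true but not strong enough for this: a function that depends only on $(\ama(x),\bma(y),\cma(z))$ need not split as a product of single-variable factors, and it is precisely the product factorization that lets one move $T_S$ into $F_S,G_S,H_S$. What the paper uses is the explicit product decomposition $T_S(x,y,z) = A_S(x)B_S(y)C_S(z)$ coming from the complex embedding identity $a(x)+b(y)+c(z) = N\cdot t(x,y,z) \pmod{p^kN}$ (with $A(x)=\exp(2\pi i\, a(x)/(p^kN))$, etc.). Thanks to \cref{lemma:ass}, $A_S$ is constant on master fibers, hence multiplying by $A_S$ preserves modest degree, so $(I-\TT)(A_SF_S)$ is high-modest-degree and \cref{thm:highmodest} applies to $\E[(I-\TT)(A_SF_S)\cdot(B_SG_S)\cdot(C_SH_S)]$. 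You should make this factorization explicit rather than deducing it from Lemma~\ref{lemma:ass} alone.

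A smaller issue: your final transfer step replaces the paper's construction of the intermediate distribution $\mu'$ (sample $(x,y,z)\sim\mu^+$, resample $x'\sim N^{\modest}_{1-\rho_j}(x)$, use $\val_{t,\mu'}$, then random-restrict onto $\nu$) by directly defining a randomized $\wt f$ on $\cA$. Morally this is close, but you drop the $\sum_j|c_j| \le 2^{O(d)}$ normalization when reverting to a game value, so your final lower bound $\eps^{O_M(1)} - 2^{-c'n/2}$ should be $\eps^{O_M(1)}/2^{O(c'n)}$; you also need to select a single $j$ by averaging and to use that $1-\rho_j\ge 1/4$ so the random restriction onto $\nu$ retains $n/8$ coordinates. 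These are recoverable but should be spelled out.
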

\begin{proof}
Perform path tricks and merging as in \cref{lemma:masterrelaxed} to transform $\mu$ into $\mu^+$ which satisfies the master relaxed base case. For simplicity, relabel $f, g, h$ as the resulting strategies. Note that this reduces $n$ by a constant, due to random restrictions, and $\val_{\mu^+,t}(f, g, h) \ge \eps'$ for $\eps' = \eps^{O_M(1)}$.

Let $d \in \Z_{>0}$ be chosen later, $\hat{\eps} = 1/4$, and $\{c_i\}, \{\rho_i\}$ as in \cref{lemma:polyapprox} for $\eps := \hat{\eps}$. Let $\TT = \sum_{j=0}^d c_j \TT^{\modest}_{\rho_j}$. Let $F_S, G_S, H_S$ be as defined in \cref{lemma:identity}.
Let $a, b, c$ and $N$ satisfy $a(x) + b(y) + c(z) = N \cdot t(x, y, z) \pmod{p^k N}$, and define
\[ A(x) := \exp\left(\frac{2\pi i \cdot a(x)}{p^k N} \right), \enspace B(y) := \exp\left(\frac{2\pi i \cdot b(y)}{p^k N}\right), \text{ and } \enspace C(z) := \exp\left(\frac{2\pi i \cdot c(z)}{p^k N}\right). \] For $S \subseteq \Z_m^n$, define $A_S(x) = \prod_{i\in[n]} A(x)^{S_i}$, so that $T_S(x,y,z) = A_S(x)B_S(y)C_S(z)$ for all $(x, y, z) \in \supp(\mu)$.
By \cref{lemma:identity},
\begin{align*} 
\val_{\mu^+,t}(f,g,h) &= \E_{(x,y,z)\sim(\mu^+)^{\otimes n}}\left[F_S(x) G_S(y) H_S(z) T_S(x,y,z) \right] \\ &= \E_{(x,y,z)\sim(\mu^+)^{\otimes n}}\left[(I-\TT)(A_SF_S)(x) (B_SG_S)(y) (C_SH_S)(z) \right] \\ &+ \E_{(x,y,z)\sim(\mu^+)^{\otimes n}}\left[\TT(A_SF_S)(x) (B_SG_S)(y) (C_SH_S)(z) \right]. \end{align*}
We will bound the first term using \cref{thm:highmodest}. Indeed,
\begin{align*}
    &\left|\E_{(x,y,z)\sim(\mu^+)^{\otimes n}}\left[(I-\TT)(A_SF_S)(x) (B_SG_S)(y) (C_SH_S)(z) \right]\right| \\ = ~& \left|\sum_{k=0}^n (1-\sum_{j=0}^d c_j \rho_j^k) \E_{(x,y,z)\sim(\mu^+)^{\otimes n}}[(A_SF_S)^{\modest=k}(x) (B_SG_S)(y) (C_SH_S)(z)] \right| \\
    \le ~&\sum_{k=d+1}^n 2^{-ck(k/n)^C} \le n \cdot 2^{-cd(d/n)^C}, 
\end{align*}
where we have used \cref{lemma:polyapprox} and \cref{lemma:highmodest}. For the choice $d = c'n$ for $c' \asymp c^{1/(C+1)}$, we know that $2^{-\Omega(d(d/n)^C)} \le \eps'/2$, because $\eps' \ge 2^{-cn}$. Thus \cref{lemma:identity} and triangle inequality give
\begin{align*} &\sum_{j=0}^d |c_j| \left|\E_{S\sim\Z_m^n}\E_{(x,y,z) \sim (\mu^+)^{\otimes n}}[\TT^{\modest}_{\rho_j}(A_SF_S)(x) (B_SG_S)(y) (C_SH_S)(z)]\right| \\ \ge ~&\val_{\mu^+,t}(f, g, h) - \eps'/2 \ge \eps'/2.
\end{align*}
Because $\sum_{j=1}^k |c_j| \le 2^{O(d)}$ by \cref{lemma:polyapprox}, there is some $j$ with 
\[ \left|\E_{S\sim\Z_m^n}\E_{(x,y,z) \sim (\mu^+)^{\otimes n}}[\TT^{\modest}_{\rho_j}(A_SF_S)(x) (B_SG_S)(y) (C_SH_S)(z)]\right| \ge \eps'/2^{O(d)} \gs \eps^{O_M(1)}/2^{O(c'n)}. \]
Define the distribution $\mu'$ as follows: take a sample $(x, y, z) \sim \mu^+$, and then resample $x' \sim N^{\modest}_{1-\rho_j}(x)$. Then the above expression equals
\[ \left|\E_{S\sim\Z_m^n}\E_{(x,y,z) \sim (\mu')^{\otimes n}}[(A_SF_S)(x) (B_SG_S)(y) (C_SH_S)(z)]\right| = \val_{t,\mu'}(f, g, h), \]
by \cref{lemma:identity} again. Thus $\val_{t,\mu'}(f, g, h) \gs \eps^{O_M(1)}/2^{O(c'n)}$. Formally, we need to define the tensor $t$ on the support of $\mu'$. However, by \cref{lemma:ass} we know that for any $x$, $\ama$ is constant on the support of $N^{\modest}_{1-\rho_j}(x)$, so $t$ just inherits its value.

To conclude the proof, we will take a random restriction of $\mu'$. Note that all atoms have in $\mu'$ value at least $1-\rho_j \ge \hat{\eps}$ for all $j = 0, 1, \dots, d$, by \cref{lemma:polyapprox}. By construction, $\mu^+$ contained all points of the form $(\rep((x, \dots, x)), (y, \dots, y), z)$ for $(x, y, z) \in \supp(\mu)$. Thus we can take a random restriction onto a distribution $\nu$ supported on $(x', y, z)$ for all $(y, z) \in \supp(\mu)$ and $\ama(x') = \ama(\rep(x)) = -\ama(y)-\ama(z)$. The random restriction reduces $n$ to at least $n' = \hat{\eps} n/2 = n/8$ coordinates, and only affects the winning probability by an exponentially small amount (see \cref{lemma:rr}).
\end{proof}

Apply \cref{lemma:reducef} to reduce to a distribution $\mu$ on $\cA \times \B \times \C$. Because $(y, z)$ is already a product distribution in the master relaxed base case, we can take another random restriction so make $\bma(y), \cma(z)$ uniform over $\cA$. Thus, we have reduced to the distribution on $\cA^3$ which is uniform over all $(x, y, z) \in \cA^3$ with $x+y+z=0$. \cref{thm:additive}, whose proof is based on additive combinatorics (as in \cite{BKM22}), shows that in this case, $\eps \le 2^{-cn}$. This completes the proof of \cref{thm:main}, after we show \cref{thm:highmodest} and \cref{thm:additive}.

\section{Main Induction: Complex Embeddings}
\label{sec:induction2}
This section is dedicated to proving \cref{thm:highmodest}. The proof is nearly identical to that of \cref{thm:highdegree}.
\highmodest*
Define $\cQ^{(n,d)}$ to be the set of functions $F: \A^n \to \bbC$ all of whose terms have modest degree at least $d$. Define
\[ \gamma_{n,d} := \sup_{\substack{\|F\|_2 \le 1, F \in \cQ^{(n,d)} \\ \|G\|_2 \le 1, \|H\|_2 \le 1}} \left|\E_{(x,y,z)\sim\mu^{\otimes n}}[F(x)G(y)H(z)]\right|. \]
Then \cref{thm:highmodest} follows from the following lemma.
\begin{lemma}
\label{lemma:highmodest}
If $\mu$ satisfies the master relaxed base case, then $\gamma_{n,d} \le (1-c(d/n)^{O_M(1)})\gamma_{n-1,d-1}$.
\end{lemma}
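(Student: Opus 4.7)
The plan is to mirror the proof of \cref{lemma:highdegree}, replacing effective degrees/influences by modest degrees/influences throughout, and noting that in the master relaxed base case there is no target tensor $T$ to track. Concretely, I would first locate a coordinate of large modest influence: since every monomial of $F \in \cQ^{(n,d)}$ has modest degree at least $d$, we have $I_{\modest}[F] \ge d$, so some coordinate has $I_{j,\modest}[F] \ge d/n$; WLOG $j=n$, and set $\tau = d/n$. Apply the SVD lemma (\cref{lemma:svd}) in coordinate $n$ to write
\[ F(x) = \sum_{r} \aa_r F_r(x_I) F_r'(x_n),\quad G(y) = \sum_{s} \bb_s G_s(y_I) G_s'(y_n),\quad H(z) = \sum_{t} \cc_t H_t(z_I) H_t'(z_n), \]
with $F_r \in \cQ^{(n-1,d-1)}$ because $F_r'$ has degree one. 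Define the inductive coefficients $\hat{F_r}(s,t) = \E_{\mu^{\otimes I}}[F_r G_s H_t]$ and $\hat{F_r'}(s,t) = \E_{\mu}[F_r' G_s' H_t']$ (no $T$ factors now). A routine calculation exactly analogous to \cref{lemma:variance} gives $\sum_r \aa_r^2\,I_{\modest}[F_r'] = I_{n,\modest}[F] \ge \tau$.

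Next I would introduce a singular-value gap: pick $\delta < \Delta$ with none of $\aa_r,\bb_s,\cc_t$ in $[\delta,\Delta]$ and $\delta \le (\Delta/(CM\tau))^{1000M}$, and restrict attention to the index sets $R,S,T$ of singular values exceeding $\Delta$. Apply Cauchy-Schwarz as in \eqref{eq:cs}. The three basic bounds from Section~\ref{subsec:firstcs} transfer with minor simplification: $\sum_r |\hat{F_r}(s,t)|^2 \le \gamma_{n-1,d-1}^2$ (inductive hypothesis on $F$), $\sum_{r} |\hat{F_r'}(s,t)|^2 \le 1$ (orthonormality of $\{F_r'\}$), and $\sum_{s,t} |\hat{F_r'}(s,t)|^2 \le 1$; the last uses that $\mu_{y,z}$ is a product distribution and $(y,z)$ determines $x$ in $\supp(\mu)$, so that $\{G_s' H_t'\}_{s,t}$ is orthonormal in $L_2(\B\times\C, \mu_{y,z})$---the absence of $T$ in fact simplifies \cref{lemma:bound2}'s calculation. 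If the inductive step fails, near-equality in Cauchy-Schwarz (as in \cref{cor:bound1,cor:bound2}) forces \[ \sum_{r\in R}|\hat{F_r}(s,t)|^2 \ge (1-4\delta M^3/\Delta^4)\gamma_{n-1,d-1}^2,\quad \sum_{s\in S,t\in T}|\hat{F_r'}(s,t)|^2 \ge 1-4\delta M^3/\Delta^2. \]

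The argument then splits into the same two cases as in Section~\ref{sec:induction}. When $|R| < |S||T|$, \cref{lemma:inner} produces $(s_1,t_1)\ne(s_2,t_2)$ with $|\langle V^{(s_1,t_1)}, V^{(s_2,t_2)}\rangle| \ge \gamma_{n-1,d-1}^2/(2M)$. Form perturbed functions $\wt{G} = \sum_{s\in S}\alpha_s A_s G_s$, $\wt{H} = \sum_{t\in T}\beta_t B_t H_t$ with random phases and appropriate real weights, and correspondingly $\wt{F} = \sum_{r\in R}\overline{\wt{F_r}}F_r /\|\cdot\|_2$. The inductive hypothesis forces $\sum_{r\in R}|\wt{F_r}|^2 \le \gamma_{n-1,d-1}^2$, and then averaging over phases/weights (split according to whether $s_1=s_2$, $t_1=t_2$, or neither, exactly as in Section~\ref{subsec:r<st}) isolates a term of size $\gtrsim \gamma_{n-1,d-1}^2/M$ while the error terms are controlled by $\delta$, yielding a contradiction.

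When $|R| \ge |S||T|$, I would invoke the master relaxed base case: the near-equality above gives $\sum_{r\in R}|\hat{F_r'}(s,t)|^2 \ge 1 - 4\delta M^5/\Delta^2$, so the unit-norm functions $\wt{F_{st}} = \sum_{r\in R}\overline{\hat{F_r'}(s,t)}F_r'/(\sum_r|\hat{F_r'}(s,t)|^2)^{1/2}$ satisfy $|\E[\wt{F_{st}} G_s' H_t']|^2 \ge 1 - 4\delta M^5/\Delta^2$. By \cref{def:masterrelaxed} this yields $I_{\modest}[\wt{F_{st}}] \le (4\delta M^5/(c\Delta^2))^{1/(100M)} =: \delta'$. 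Then two triangle-inequality steps (as in Section~\ref{subsec:r>st}, using $\|\bar{F_r'} - \sum_{s,t}\overline{\hat{F_r'}(s,t)}\bar{\wt{F_{st}}}\|_2$ being small and the Lipschitz property of $I_{\modest}$ with respect to $\|\cdot\|_2$) give $I_{\modest}[F_r'] \le 2M^2\delta'$ for all $r\in R$, whence $\sum_r \aa_r^2 I_{\modest}[F_r'] \le M\delta^2 + 2M^2\delta' < \tau$ by our choice of $\delta$, contradicting step one. The only delicate part is maintaining the quantitative $(d/n)^{O_M(1)}$ dependence through both cases; this is handled exactly as in Section~\ref{sec:induction} by choosing $\delta$ polynomially in $\tau$. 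No genuinely new obstacle arises beyond checking that every application of $T$ in Section~\ref{sec:induction} was either orthonormality of $G_s' H_t' T$ (which still holds without $T$) or a nonembeddability statement (replaced here by the master relaxed base case hypothesis).
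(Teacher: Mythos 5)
Your proposal mirrors the paper's own proof almost line for line: locating a high modest-influence coordinate, SVD, the variance identity, the singular-value gap, the three Cauchy–Schwarz bounds and their near-equality versions, and the split into $|R|<|S||T|$ versus $|R|\ge|S||T|$ with the master relaxed base case invoked in the latter. The observation that dropping $T$ only simplifies the orthonormality of $\{G_s'H_t'\}$ is exactly the paper's remark, so this is the same argument.
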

\begin{proof}[Proof of \cref{thm:highmodest}]
Apply \cref{lemma:highmodest} a total of $d/2$ times, giving \[ \gamma_{n,d} \le (1 - c(d/n)^{O_M(1)})^{d/2} \le \exp(-c \cdot d(d/n)^{O_M(1)}). \qedhere\]
\end{proof}
The remainder of the section is devoted to establishing \cref{lemma:highmodest}. Because $F \in \cQ^{(n,d)}$ we know that $I_{\modest}[F] \ge d$, so $\tau := I_{j,\modest}[F] \ge d/n$ for some $j$.  Without loss of generality, let $j = n$, and perform an SVD (\cref{lemma:svd}) to write $F(x) = \sum_{r=1}^M \aa_r F_r(x_I) F_r'(x_n)$.

Analogous to \cref{lemma:variance}, the sum of modest influences of the $F_r'$ are large.
\begin{lemma}
\label{lemma:modestvariance}
We have: $\sum_{r\in[M]} \aa_r^2 I_{\modest}[F_r'] = I_{n,\modest}[F] \ge \tau$.
\end{lemma}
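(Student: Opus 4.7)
The plan is to mirror the proof of \cref{lemma:variance} verbatim, replacing the effective-influence sampling distribution with the modest one. This works because the only property of the sampling distribution used in that earlier proof is that $y_n$ is resampled while $x_I$ is held fixed; orthonormality of the $F_r$ basis does the rest. So this is a routine computation, not an obstacle.

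Concretely, I would start from the definition of $I_{n,\modest}[F]$: sample $x \sim \mu_x^{\otimes n}$, set $y_j = x_j$ for $j \neq n$, set $y_n = x_n$ if $x_n \notin \A'$, and otherwise resample $y_n \sim \mu_x$ conditioned on $y_n \in \A'_{\ama(x_n)}$. Then substitute the SVD decomposition $F(x) = \sum_{r \in [M]} \aa_r F_r(x_I) F_r'(x_n)$ into the expectation $\E[|F(x_I, x_n) - F(x_I, y_n)|^2]$. Since $x_I$ is common to both arguments and only the last coordinate is being perturbed, we obtain
\begin{align*}
I_{n,\modest}[F] &= \E_{x_I,(x_n,y_n)}\left[\left|\sum_{r\in[M]} \aa_r F_r(x_I)\bigl(F_r'(x_n)-F_r'(y_n)\bigr)\right|^2\right] \\
&= \sum_{r_1,r_2\in[M]} \aa_{r_1}\aa_{r_2}\, \E_{x_I}\bigl[F_{r_1}(x_I)\bar{F_{r_2}(x_I)}\bigr]\, \E_{(x_n,y_n)}\bigl[(F_{r_1}'(x_n)-F_{r_1}'(y_n))\bar{(F_{r_2}'(x_n)-F_{r_2}'(y_n))}\bigr],
\end{align*}
using that the $(x_I)$ marginal is independent of the resampling step.

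Next I would invoke orthonormality of $F_1, \dots, F_M$ in $L_2(\A^{n-1}, \mu_x^{\otimes(n-1)})$ from \cref{lemma:svd}, which kills all cross terms $r_1 \neq r_2$. What remains is
\[
I_{n,\modest}[F] = \sum_{r\in[M]} \aa_r^2\, \E_{(x_n,y_n)}\bigl[|F_r'(x_n)-F_r'(y_n)|^2\bigr] = \sum_{r\in[M]} \aa_r^2\, I_{\modest}[F_r'],
\]
giving the claimed identity. For the lower bound, recall that the index $n$ was chosen (without loss of generality) so that $I_{n,\modest}[F] \geq d/n = \tau$, using that $F \in \cQ^{(n,d)}$ has $I_{\modest}[F] \geq d$ and so some coordinate contributes at least $d/n$. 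This is exactly the analogue of the argument in the paragraph preceding \cref{lemma:variance}, so no new ideas are needed; the whole lemma is a one-line modification of the previous one.
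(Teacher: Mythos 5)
Your proposal is correct and matches the paper's proof essentially line for line: both substitute the SVD into the definition of $I_{n,\modest}[F]$, expand the square, and use orthonormality of $F_1,\dots,F_M$ over $L_2(\A^{n-1},\mu_x^{\otimes(n-1)})$ to kill the cross terms $r_1\neq r_2$. The lower bound $\ge\tau$ likewise follows exactly as you say, from the choice of coordinate $n$ with $I_{n,\modest}[F]\ge d/n=\tau$ made just before the lemma.
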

\begin{proof}
Indeed, if $(x_n,y_n)$ are sampled by the distribution used to define $I_{\modest}$, we know
\begin{align*}
I_{n,\modest}[F] &= \E_{x_I,(x_n,y_n)}[|F(x_I,x_n) - F(x_I,y_n)|^2] \\ &= \E_{x_I,(x_n,y_n)}\left[\left|\sum_{r\in[M]} \aa_r F_r(x_I)(F_r'(x_n) - F_r'(y_n)) \right|^2\right] \\
&= \E_{x_I,(x_n,y_n)}\left[\sum_{r_1,r_2\in[M]} \aa_{r_1}\aa_{r_2}F_{r_1}(x_I)\bar{F_{r_2}(x_I)} (F_{r_1}'(x_n) - F_{r_1}'(y_n))\bar{(F_{r_2}'(x_n) - F_{r_2}'(y_n))}\right] \\
&= \E_{(x_n,y_n)}\left[\sum_{r\in[M]} \aa_r^2 |F_r'(x_n) - F_r'(y_n)|^2 \right] = \sum_{r\in[M]} \aa_r^2 I_{\modest}[F_r']
\end{align*}
\end{proof}
Given the SVDs
\[ F(x) = \sum_{r\in[M]} \aa_r F_r(x_I) F_r'(x_J), \enspace G(y) = \sum_{s\in[M]} \bb_s G_s(x_I) G_s'(x_J), \enspace H(z) = \sum_{t\in[M]} \cc_t H_t(z_I) H_t'(z_J), \]
define the quantities
\begin{align*} \hat{F_r}(s, t) &:= \E_{(x,y,z)\sim\mu^{\otimes I}}[F_r(x) G_s(y) H_t(z)] \\
\hat{F_r'}(s, t) &:= \E_{(x,y,z)\sim\mu}[F_r'(x)G_s'(y)H_t'(z)].
\end{align*}
Then by definition we have that
\[ \E_{(x,y,z)\sim\mu^{\otimes n}}[F(x)G(y)H(z)] = \sum_{r,s,t\in[M]} \aa_r\bb_s\cc_t \hat{F_r}(s, t)\hat{F_r'}(s, t). \]
\subsection{Finding a Singular Value Gap}
\label{subsection:complexgap}
Let $\delta < \Delta$ satisfy that none of the $\aa_r, \bb_s, \cc_t$ lie in $[\delta,\Delta]$, and $\delta \le \left(\frac{\Delta}{CM\tau}\right)^{1000M}$.
We can find such $\delta, \Delta$ with $\delta \ge \tau^{O_M(1)}$. Let $R = \{r : \aa_r \ge \Delta\}, S = \{s : \bb_s \ge \Delta\}, T = \{t : \cc_t \ge \Delta\}$. Note that because $|\hat{F_r}(s,t)| \le \gamma_{n-1,d-1}, |\hat{F_r'}(s,t)| \le 1$ we know
\begin{align} \left|\sum_{r,s,t\in[M]} \aa_r\bb_s\cc_t\hat{F_r}(s,t)\hat{F_r'}(s,t)\right| \le \left|\sum_{r\in R,s\in S,t\in T} \aa_r\bb_s\cc_t\hat{F_r}(s,t)\hat{F_r'}(s,t)\right| + \delta M^3 \gamma_{n-1,d-1}. \label{eq:remove2} \end{align}

\subsection{Cauchy-Schwarz and Simple Observations}
\label{subsec:firstcs2}
For the remainder of the section, we assume that \cref{lemma:highmodest} is false for the choice of $F, G, H$. By the Cauchy-Schwarz inequality we get:
\begin{align} \left|\sum_{r\in R,s\in S,t\in T} \aa_r\bb_s\cc_t\hat{F_r}(s,t)\hat{F_r'}(s,t)\right| \le \sqrt{\sum_{s\in S,t\in T} \bb_s^2 \cc_t^2 \sum_{r\in R} |\hat{F_r}(s,t)|^2}\sqrt{\sum_{r\in R} \aa_r^2 \sum_{s\in S,t\in T} |\hat{F_r'}(s,t)|^2 }. \label{eq:cs2} \end{align}
The following simple observations are analogous to their counterparts \cref{lemma:bound1,lemma:bound21,lemma:bound2,cor:bound1,cor:bound2} in \cref{sec:induction}.
\begin{lemma}
\label{lemma:cbound1}
For all $s, t$ we have: $\sum_{r\in[M]} |\hat{F_r}(s,t)|^2 \le \gamma_{n-1,d-1}^2$.
\end{lemma}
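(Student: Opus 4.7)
The plan is to mirror the proof of Lemma \ref{lemma:bound1}, which established the analogous bound in the setting with the tensor $T$. The key idea is to take a specific unit-norm linear combination of the $F_r$'s, weighted by the conjugates $\bar{\hat{F_r}(s,t)}$, so that its correlation with $G_s, H_t$ exactly realizes $\sum_r |\hat{F_r}(s,t)|^2$. Concretely, I would set
\[ \wt{F}_{s,t} = \frac{\sum_{r\in[M]} \bar{\hat{F_r}(s,t)} F_r}{\sqrt{\sum_{r\in[M]} |\hat{F_r}(s,t)|^2}}, \]
which satisfies $\|\wt{F}_{s,t}\|_2 = 1$ by the orthonormality of $\{F_r\}_{r\in[M]}$ produced by the SVD (Lemma \ref{lemma:svd}). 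Then, by linearity of the expectation and the definition of $\hat{F_r}(s,t)$,
\[ \left|\E_{(x,y,z)\sim\mu^{\otimes I}}[\wt{F}_{s,t}(x) G_s(y) H_t(z)]\right|^2 = \sum_{r\in[M]} |\hat{F_r}(s,t)|^2. \]

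To conclude by applying the definition of $\gamma_{n-1,d-1}$, I must verify that $\wt{F}_{s,t} \in \cQ^{(n-1,d-1)}$. This is the one subtle point. Since $F \in \cQ^{(n,d)}$, each character $\chi = \chi_I \otimes \chi_n$ in the Efron--Stein-type expansion of $F$ (with respect to the basis from \cref{subsec:modest}) has $\moddeg(\chi_I) + \moddeg(\chi_n) \ge d$. Because $\moddeg(\chi_n) \in \{0,1\}$, every $\chi_I$ that appears must satisfy $\moddeg(\chi_I) \ge d-1$. Viewing the SVD as decomposing the matrix $F[x_I, x_n]$ (rows indexed by $x_I$, columns by $x_n$), each left singular function $F_r$ lies in the column span of this matrix, and hence in the span of those $\chi_I$'s, so $F_r \in \cQ^{(n-1,d-1)}$. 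Then $\wt{F}_{s,t}$ is a linear combination of such $F_r$'s, so it also lies in $\cQ^{(n-1,d-1)}$.

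With this in hand, the definition of $\gamma_{n-1,d-1}$ (together with $\|G_s\|_2, \|H_t\|_2 \le 1$, which holds because $\{G_s\}, \{H_t\}$ are orthonormal SVD components) immediately yields
\[ \sum_{r\in[M]} |\hat{F_r}(s,t)|^2 = \left|\E_{(x,y,z)\sim\mu^{\otimes I}}[\wt{F}_{s,t}(x) G_s(y) H_t(z)]\right|^2 \le \gamma_{n-1,d-1}^2, \]
as desired. The only place that requires any care beyond the tensor-free rewrite of Lemma \ref{lemma:bound1} is the modest-degree bookkeeping for $F_r$; once that is established, the argument is essentially a one-line application of Cauchy--Schwarz and the inductive definition.
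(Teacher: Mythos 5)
Your proposal is correct and follows exactly the paper's approach: the paper handles \cref{lemma:cbound1} by declaring it "analogous" to \cref{lemma:bound1} in \cref{sec:induction}, whose proof is precisely the renormalized-$\wt{F}_{s,t}$ trick you wrote (dropping the tensor $T$). Your extra paragraph about $F_r \in \cQ^{(n-1,d-1)}$ makes explicit the remark the paper states inline just after \cref{lemma:svd} ("each $F_r \in \cP^{(n-1,d-1)}$ because each $F_r'$ has degree $1$"), so you've only spelled out what the paper leaves implicit.
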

\begin{lemma}
\label{lemma:cbound21}
For all $s, t$ we have: $\sum_{r\in[M]} |\hat{F_r'}(s,t)|^2 \le 1$.
\end{lemma}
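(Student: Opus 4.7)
My plan is to prove \cref{lemma:cbound21} by mirroring the proof of \cref{lemma:bound21}, using the observation that in this complex-embedding setting the definition of $\hat{F_r'}(s,t)$ contains no tensor $T$, which only makes the argument simpler. For each fixed pair $(s,t)$ with $\sum_{r \in [M]} |\hat{F_r'}(s,t)|^2 > 0$ (the other case is trivial), I would introduce the unit-norm function
\[ \wt{F}_{s,t}' := \frac{\sum_{r \in [M]} \bar{\hat{F_r'}(s,t)}\, F_r'}{\sqrt{\sum_{r \in [M]} |\hat{F_r'}(s,t)|^2}}. \]
Because $\{F_r'\}_{r \in [M]}$ is orthonormal in $L_2(\A, \mu_x)$ by \cref{lemma:svd}, we have $\|\wt{F}_{s,t}'\|_2 = 1$, and expanding the definition $\hat{F_r'}(s,t) = \E_{(x,y,z) \sim \mu}[F_r'(x) G_s'(y) H_t'(z)]$ yields the identity
\[ \sqrt{\sum_{r \in [M]} |\hat{F_r'}(s,t)|^2} \;=\; \E_{(x,y,z) \sim \mu}\!\left[\wt{F}_{s,t}'(x)\, G_s'(y)\, H_t'(z)\right]. \]

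The next step is to bound the right-hand side by $1$ in absolute value using Cauchy-Schwarz together with the properties of the master relaxed base case (\cref{def:masterrelaxed}). I would split the integrand as $\wt{F}_{s,t}'(x)$ against $G_s'(y) H_t'(z)$ and apply Cauchy-Schwarz, yielding
\[ \left| \E_{(x,y,z) \sim \mu}\!\left[\wt{F}_{s,t}'(x)\, G_s'(y)\, H_t'(z)\right] \right|^2 \;\le\; \E_{x \sim \mu_x} |\wt{F}_{s,t}'(x)|^2 \;\cdot\; \E_{(y,z) \sim \mu_{y,z}} |G_s'(y) H_t'(z)|^2. \]
The first factor equals $\|\wt{F}_{s,t}'\|_2^2 = 1$. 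For the second, I would invoke the first bullet of \cref{def:masterrelaxed}: since $\mu_{y,z}$ is a product distribution, the expectation factorizes into $\|G_s'\|_2^2 \cdot \|H_t'\|_2^2 = 1$, using that $\{G_s'\}$ and $\{H_t'\}$ are also orthonormal bases of their respective $L_2$ spaces (by the SVDs of $G$ and $H$). Combining the two factors proves $\sum_{r\in[M]} |\hat{F_r'}(s,t)|^2 \le 1$.

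There is no substantive obstacle; this lemma, like its twin \cref{lemma:bound21}, is a one-step orthonormality-plus-Cauchy-Schwarz argument. The only things to verify are that the Riesz-style identity relating $\sqrt{\sum_r|\hat{F_r'}(s,t)|^2}$ to $\E_\mu[\wt{F}_{s,t}'\, G_s'\, H_t']$ still goes through in the absence of the $T$-factor that appeared in \cref{sec:induction}, and that the product-distribution hypothesis on $\mu_{y,z}$ cleanly separates the $(y,z)$-expectation into a product of two unit $L_2$-norms; both are immediate.
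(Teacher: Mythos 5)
Your proof is correct and takes essentially the same route as the paper: the paper gives no explicit argument for \cref{lemma:cbound21}, only saying it is analogous to \cref{lemma:bound21}, which in turn is declared ``identical'' to the Riesz-dual-plus-Cauchy--Schwarz proof of \cref{lemma:bound1}; your write-up is that same argument, with the one genuinely new bookkeeping step --- that $\E_{(y,z)\sim\mu_{y,z}}[|G_s'(y)|^2|H_t'(z)|^2]$ factorizes to $\|G_s'\|_2^2\|H_t'\|_2^2=1$ because $\mu_{y,z}$ is a product distribution --- spelled out correctly, and with the correct observation that in \cref{sec:induction2} the tensor $T$ is absent.
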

\begin{lemma}
\label{lemma:cbound2}
For all $r$ we have: $\sum_{s,t\in[M]} |\hat{F_r'}(s,t)|^2 \le 1$.
\end{lemma}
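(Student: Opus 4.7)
The plan is to adapt the proof of \cref{lemma:bound2} from \cref{sec:induction} to the present setting, where the tensor $T$ no longer appears. The master relaxed base case of \cref{def:masterrelaxed} provides exactly the two structural ingredients that drove that earlier argument: $\mu_{y,z}$ is a product distribution, and $(y,z)$ uniquely determines $x$ on $\supp(\mu)$. The absence of $T$ actually simplifies the proof relative to \cref{lemma:bound2}, since there is no longer any need to invoke $|T(x,y,z)| = 1$.

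First I would let $x = x(y,z)$ denote the unique $x$-coordinate associated to $(y,z) \in \supp(\mu_{y,z})$. Then $F_r' \circ x$ is a function on $\B \times \C$, and because $(y,z)$ determines $x$, the marginal law of $x(y,z)$ under $\mu_{y,z}$ equals $\mu_x$; hence $\|F_r' \circ x\|_{L_2(\B \times \C, \mu_{y,z})} = \|F_r'\|_{L_2(\A, \mu_x)} = 1$. Next, define $U_{s,t}(y,z) := G_s'(y) H_t'(z)$. Since $\mu_{y,z}$ factors as a product and $\{G_s'\}_{s\in[M]}$, $\{H_t'\}_{t\in[M]}$ are orthonormal in $L_2(\B, \mu_y)$ and $L_2(\C, \mu_z)$ respectively (being the right singular vectors produced by \cref{lemma:svd}), the collection $\{U_{s,t}\}_{s,t \in [M]}$ is orthonormal in $L_2(\B \times \C, \mu_{y,z})$.

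Rewriting the coefficient as an inner product using that $x$ is determined by $(y,z)$ gives
\[ \hat{F_r'}(s,t) = \E_{(x,y,z)\sim\mu}[F_r'(x) G_s'(y) H_t'(z)] = \E_{(y,z)\sim\mu_{y,z}}[(F_r' \circ x)(y,z)\, U_{s,t}(y,z)] = \l F_r' \circ x, \bar{U_{s,t}} \r. \]
Applying Bessel's inequality to the orthonormal family $\{\bar{U_{s,t}}\}_{s,t\in[M]}$ then yields
\[ \sum_{s,t \in [M]} |\hat{F_r'}(s,t)|^2 \le \|F_r' \circ x\|_{L_2(\mu_{y,z})}^2 = 1, \]
which is the desired bound.

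I do not anticipate any real obstacle here: every ingredient is directly supplied by \cref{def:masterrelaxed}, and the proof is essentially a transcription of the argument for \cref{lemma:bound2} with the $T$-factor deleted. The one point to handle carefully is the change of variables identifying $\|F_r' \circ x\|_{L_2(\mu_{y,z})}$ with $\|F_r'\|_{L_2(\mu_x)}$, which is immediate from the fact that the $\mu$-marginal of $x$ equals $\mu_x$.
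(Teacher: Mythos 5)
Your proposal is correct and matches the paper's intended argument: the paper states that \cref{lemma:cbound2} is proved analogously to \cref{lemma:bound2}, and your proof is precisely that transcription, with the $T$-factor deleted, invoking the product structure of $\mu_{y,z}$, the determination of $x$ by $(y,z)$, orthonormality of $\{G_s' H_t'\}$, and Bessel's inequality. The identification $\|F_r' \circ x\|_{L_2(\mu_{y,z})} = \|F_r'\|_{L_2(\mu_x)}$ via the pushforward is the right and sufficient justification.
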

\begin{lemma}
\label{cor:cbound1}
For all $s\in S,t\in T$ we have: $\sum_{r\in R} |\hat{F_r}(s,t)|^2 \ge (1 - \frac{4\delta M^3}{\Delta^4})\gamma_{n-1,d-1}^2$.
\end{lemma}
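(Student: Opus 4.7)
My plan is to mirror the proof of \cref{cor:bound1} essentially verbatim, since the setup in \cref{subsection:complexgap,subsec:firstcs2} is an exact analogue of what was done in \cref{subsec:gap,subsec:firstcs} — the only differences are that $T$ no longer appears in the expectations (because we are now in the ``complex embeddings'' regime where the role of the tensor $T$ has already been absorbed), and the relevant base case is \cref{def:masterrelaxed} instead of \cref{def:relaxed}. Neither of these changes affects the arithmetic of the contradiction.

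Concretely, I would argue by contradiction: suppose there exist $s_0 \in S$, $t_0 \in T$ for which $\sum_{r \in R} |\hat{F_{r}}(s_0,t_0)|^2 < (1 - \tfrac{4\delta M^3}{\Delta^4})\gamma_{n-1,d-1}^2$. Using \cref{lemma:cbound1} to bound the contribution of every other $(s,t) \in S \times T$ by $\gamma_{n-1,d-1}^2$, and using that $\sum_{s \in S}\bb_s^2, \sum_{t \in T}\cc_t^2 \le 1$, I would derive
\[ \sum_{s \in S,\,t \in T} \bb_s^2 \cc_t^2 \sum_{r\in R} |\hat{F_r}(s,t)|^2 \le \left(1 - \tfrac{4\delta M^3}{\Delta^4}\bb_{s_0}^2 \cc_{t_0}^2\right) \gamma_{n-1,d-1}^2 \le (1 - 4\delta M^3)\gamma_{n-1,d-1}^2, \]
where the final inequality uses that $\bb_{s_0}, \cc_{t_0} \ge \Delta$.

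Plugging this into the Cauchy-Schwarz bound \eqref{eq:cs2} and recalling from \cref{lemma:cbound2} that $\sum_{r \in R} \aa_r^2 \sum_{s \in S, t \in T} |\hat{F_r'}(s,t)|^2 \le \sum_{r\in R} \aa_r^2 \le 1$, the right hand side of \eqref{eq:cs2} is at most $\sqrt{1 - 4\delta M^3}\,\gamma_{n-1,d-1}$. Combining with the truncation estimate \eqref{eq:remove2}, the total is bounded by $(\sqrt{1 - 4\delta M^3} + \delta M^3)\gamma_{n-1,d-1} \le (1 - \delta M^3)\gamma_{n-1,d-1}$. Since $\delta \ge \tau^{O_M(1)} = (d/n)^{O_M(1)}$, this is better than the $(1-c(d/n)^{O_M(1)})\gamma_{n-1,d-1}$ bound we are trying to establish in \cref{lemma:highmodest}, contradicting the standing assumption that \cref{lemma:highmodest} fails for the choice of $F, G, H$.

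There is no real obstacle here — this is a purely mechanical transcription of the proof of \cref{cor:bound1}. The one bookkeeping point to double check is that $R, S, T$ are defined identically and satisfy the same singular-value gap inequalities, and that \eqref{eq:remove2} and \eqref{eq:cs2} play exactly the same roles as \eqref{eq:remove} and \eqref{eq:cs} did earlier; both hold by construction in \cref{subsection:complexgap,subsec:firstcs2}.
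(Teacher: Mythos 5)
Your proof matches what the paper does (the paper only supplies the argument explicitly for the analogous \cref{cor:bound1} in \cref{sec:induction} and declares \cref{cor:cbound1} to be analogous, which it is): assume the contrary at some $(s_0,t_0)$, use $\bb_{s_0},\cc_{t_0}\ge\Delta$ together with \cref{lemma:cbound1} to get $\sum_{s,t}\bb_s^2\cc_t^2\sum_{r\in R}|\hat{F_r}(s,t)|^2\le(1-4\delta M^3)\gamma_{n-1,d-1}^2$, and then plug into \eqref{eq:remove2}--\eqref{eq:cs2} to contradict the standing assumption that \cref{lemma:highmodest} fails. Your write-up is a more fully worked version of the same argument.
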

\begin{lemma}
\label{cor:cbound2}
For all $r\in R$ we have: $\sum_{s\in S,t\in T} |\hat{F_r'}(s,t)|^2 \ge 1 - \frac{4\delta M^3}{\Delta^2}$.
\end{lemma}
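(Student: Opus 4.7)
The plan is to mirror the proof of \cref{cor:bound2} in the earlier non-embeddable setting, since all the structural ingredients (\cref{lemma:cbound1,lemma:cbound21,lemma:cbound2,cor:cbound1}, the singular value gap of \cref{subsection:complexgap}, and the Cauchy-Schwarz bound \eqref{eq:cs2}) are already in place. The statement is a contrapositive: the assumption at the start of \cref{subsec:firstcs2} is that \cref{lemma:highmodest} fails for the fixed choice of $F,G,H$, and \cref{cor:cbound2} asserts that in this case, every $r \in R$ must carry almost all its $\ell_2$ mass (in $F_r'$) onto the block indexed by $S \times T$. So I would argue by contradiction: assume some $r_0 \in R$ violates the bound.

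First, I would use the global bound $\sum_r \aa_r^2 \le 1$ together with $\sum_{s,t\in[M]} |\hat{F_r'}(s,t)|^2 \le 1$ for each $r$ from \cref{lemma:cbound2} to write
\[ \sum_{r \in R} \aa_r^2 \sum_{s\in S,t\in T} |\hat{F_r'}(s,t)|^2 \le \aa_{r_0}^2 \left(1 - \tfrac{4\delta M^3}{\Delta^2}\right) + \sum_{r \in R,\, r \neq r_0} \aa_r^2 \le 1 - \aa_{r_0}^2 \cdot \tfrac{4\delta M^3}{\Delta^2}. \]
Using the singular value gap $\aa_{r_0} \ge \Delta$ (since $r_0 \in R$), the right-hand side is at most $1 - 4\delta M^3$. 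This is the quantitative shortfall I need.

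Second, I would plug this shortfall into the Cauchy-Schwarz bound \eqref{eq:cs2}, combined with $\sum_{s\in S,t\in T} \bb_s^2 \cc_t^2 \sum_{r\in R}|\hat{F_r}(s,t)|^2 \le \gamma_{n-1,d-1}^2$ (from \cref{lemma:cbound1} and $\sum_s \bb_s^2, \sum_t \cc_t^2 \le 1$), to get
\[ \left|\sum_{r\in R,s\in S,t\in T} \aa_r \bb_s \cc_t \hat{F_r}(s,t)\hat{F_r'}(s,t) \right| \le \gamma_{n-1,d-1}\sqrt{1 - 4\delta M^3} \le \gamma_{n-1,d-1}(1 - 2\delta M^3). \]
Combining with the gap bound \eqref{eq:remove2}, the full quantity is at most $\gamma_{n-1,d-1}(1 - \delta M^3)$. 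Since $\delta \ge \tau^{O_M(1)} = (d/n)^{O_M(1)}$ by the choice in \cref{subsection:complexgap}, this gives $\gamma_{n,d} \le (1 - c(d/n)^{O_M(1)})\gamma_{n-1,d-1}$, which is exactly \cref{lemma:highmodest} and contradicts the assumption of \cref{subsec:firstcs2}.

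There is no real obstacle here: the argument is essentially bookkeeping around the singular-value gap and Cauchy-Schwarz, and the step that does the work has already been abstracted into \cref{lemma:cbound1,lemma:cbound2,cor:cbound1} and \eqref{eq:remove2}, \eqref{eq:cs2}. The only small subtlety to double-check is that $T$ in this setting is implicitly $1$ (the statements in \cref{sec:induction2} no longer carry the tensor $T$ inside expectations), so the orthonormality used in the counterpart of \cref{lemma:bound2} reduces to the orthonormality of $G_s' H_t'$ over the product distribution $\mu_{y,z}$, which is immediate — and this is precisely the place where the first bullet of the master relaxed base case (\cref{def:masterrelaxed}) is used.
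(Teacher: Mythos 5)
Your proposal is correct and follows the same line of reasoning as the paper: the paper's own proof of the analogous \cref{cor:bound2} in \cref{subsec:firstcs} is exactly this contrapositive argument (one bad $r_0$ with $\aa_{r_0}\ge\Delta$ forces $\sum_{r\in R}\aa_r^2\sum_{s\in S,t\in T}|\hat{F_r'}(s,t)|^2\le 1-4\delta M^3$, which combined with \eqref{eq:remove2} and \eqref{eq:cs2} would prove the inductive lemma), and \cref{cor:cbound2} is simply stated as the analogue. You have spelled out the bookkeeping (including the $\sqrt{1-4\delta M^3}\le 1-2\delta M^3$ step) more explicitly than the paper does; the closing remark about orthonormality of $G_s'H_t'$ actually concerns \cref{lemma:cbound2} rather than the present lemma, but it is a correct observation about where the master relaxed base case enters.
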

The case $|R| < |S||T|$ is handled identically to \cref{subsec:r<st} (imagine that the tensor $T \equiv 1$). So in the remainder of the section, we focus on the case $|R| \ge |S||T|$.

\subsection{Handling the Case \texorpdfstring{$|R| \ge |S||T|$}{r>st2}}
\label{subsec:r>st2}
In this case we use the master relaxed base case (\cref{def:relaxed}). We use the analogue of \cref{lemma:fr'large}, whose proof is identical.
\begin{lemma}
\label{lemma:fr'large2}
If $|R| \ge |S||T|$ then for all $s \in S, t \in T$ we have $\sum_{r \in R} |\hat{F_r'}(s,t)|^2 \ge 1 - \frac{4\delta M^5}{\Delta^2}$.
\end{lemma}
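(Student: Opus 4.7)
The plan is to mirror the argument for its analogue \cref{lemma:fr'large}, which is essentially a pigeonhole/averaging argument sandwiched between two simple bounds. The two key ingredients are: a lower bound on $\sum_{r \in R} \sum_{s \in S, t \in T} |\hat{F_r'}(s,t)|^2$ coming from \cref{cor:cbound2}, and a matching pointwise upper bound $\sum_{r \in R} |\hat{F_r'}(s,t)|^2 \le 1$ for each fixed $(s,t) \in S \times T$ coming from \cref{lemma:cbound21}. These two ingredients are precisely the master-relaxed-base-case analogues of the bounds used for \cref{lemma:fr'large} in \cref{sec:induction}, and their statements have already been collected in \cref{subsec:firstcs2}.

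Concretely, first I would sum \cref{cor:cbound2} over $r \in R$ to conclude
\[
    \sum_{r \in R} \sum_{s \in S, t \in T} |\hat{F_r'}(s,t)|^2 \;\ge\; |R|\Big(1 - \tfrac{4\delta M^3}{\Delta^2}\Big).
\]
Using the hypothesis $|R| \ge |S| |T|$, this becomes
\[
    \sum_{s \in S, t \in T} \sum_{r \in R} |\hat{F_r'}(s,t)|^2 \;\ge\; |S| |T| \Big(1 - \tfrac{4\delta M^3}{\Delta^2}\Big).
\]
On the other hand, for each individual pair $(s,t) \in S \times T$, \cref{lemma:cbound21} gives $\sum_{r \in R} |\hat{F_r'}(s,t)|^2 \le \sum_{r \in [M]} |\hat{F_r'}(s,t)|^2 \le 1$. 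There are exactly $|S||T|$ such terms, each bounded by $1$, so the deficit from $|S||T|$ of the total sum is at most $|S||T| \cdot \frac{4\delta M^3}{\Delta^2}$. If any single term $\sum_{r\in R} |\hat{F_r'}(s,t)|^2$ dropped below $1 - \frac{4\delta M^5}{\Delta^2}$, the deficit at that coordinate alone would exceed $\frac{4\delta M^5}{\Delta^2}$, whereas the total deficit budget over all $|S||T| \le M^2$ pairs is only $|S||T| \cdot \frac{4\delta M^3}{\Delta^2} \le \frac{4\delta M^5}{\Delta^2}$. This contradiction yields the desired pointwise lower bound
\[
    \sum_{r \in R} |\hat{F_r'}(s,t)|^2 \;\ge\; 1 - \tfrac{4\delta M^5}{\Delta^2}
\]
for every $(s,t) \in S \times T$.

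There is no serious obstacle here; the proof is a direct averaging argument and the work has been done in setting up the correct analogues of \cref{lemma:bound21,cor:bound2} (namely, \cref{lemma:cbound21,cor:cbound2}), which hold in the master-relaxed-base-case setting for the same reasons they held in \cref{sec:induction}. Hence the proof is essentially one line: cite \cref{cor:cbound2,lemma:cbound21} and average, exactly as in the proof of \cref{lemma:fr'large}.
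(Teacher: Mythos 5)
Your proof is correct and follows exactly the same averaging/pigeonhole argument the paper uses for \cref{lemma:fr'large}, which is precisely what the paper invokes for \cref{lemma:fr'large2} (stated there as ``whose proof is identical''). You have correctly identified that \cref{cor:cbound2} and \cref{lemma:cbound21} are the appropriate analogues of \cref{cor:bound2} and \cref{lemma:bound21}, and the bound $|S||T|\le M^2$ that upgrades $M^3$ to $M^5$ matches the paper's bookkeeping.
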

For $s \in S, t \in T$, define the functions
\[ \wt{F_{st}} = \frac{\sum_{r \in R} \bar{\hat{F_r'}(s,t)}F_r'}{\sqrt{\sum_{r\in R} |\hat{F_r'}(s,t)|^2}}. \]
Note that $\|\wt{F_{st}}\|_2 = 1$ and
\begin{align} \left(\E_{(x,y,z)\sim\mu}\left[\wt{F_{st}}(x) G_s'(y) H_t'(z) \right]\right)^2 = \sum_{r \in R} |\hat{F_r'}(s,t)|^2 \ge 1 - \frac{4\delta M^5}{\Delta^2}, \label{eq:cfr'2} \end{align}
by \cref{lemma:fr'large2}. Now we leverage the master relaxed base case property of $\mu$ to get that \[ I_{\modest}[\wt{F_{st}}] \le \left(\frac{4\delta M^5}{c\Delta^2}\right)^{\frac{1}{100M}} := \delta', \] for all $s \in S$ and $t \in T$. The remainder of the proof converts this into an effective influence bound on the $F_r'$ functions themselves to contradict \cref{lemma:modestvariance}.

In the following, treat all functions to be on the space $L_2(\B \times \C, \mu_{y,z})$, where we recall that $(y,z)$ is a product distribution and determines $x$ uniquely in the relaxed base case. Note that for all $s \in S, t \in T$ that
\begin{align} \|\bar{\wt{F_{st}}} - G_s' H_t'\|_2 = \|\wt{F_{st}}\|_2^2 + \|G_s' H_t'\|_2^2 - 2 \E_{(x,y,z)}\left[\wt{F_{st}}(x) G_s'(y) H_t'(z) \right] \le \frac{8\delta M^5}{\Delta^2}, \label{eq:kobe2} \end{align} where we have used \eqref{eq:cfr'2} and that the expectation is real. For all $r \in R$ we have
\[ \Big\|\bar{F_r'} - \sum_{s\in S, t\in T} \bar{\hat{F_r'}(s,t)} \bar{\wt{F_{st}}} \Big\|_2 \le \Big\|\bar{F_r'} - \sum_{s \in S, t \in T} \bar{\hat{F_r'}(s,t)} G_s' H_t' \Big\|_2 + \sum_{s \in S, t \in T} \hat{F_r'}(s,t)\left\|G_s' H_t' - \bar{\wt{F_{st}}} \right\|_2. \]
By \eqref{eq:kobe2} the second term can be bounded by
\[ \sum_{s \in S, t \in T} \hat{F_r'}(s,t)\left\|G_s' H_t' - \bar{\wt{F_{st}}} \right\|_2 \le M^2 \cdot \frac{8\delta M^5}{\Delta^2} = \frac{8\delta M^7}{\Delta^2}. \]
The first term can be simplified to get
\[ \Big\|\bar{F_r'} - \sum_{s \in S, t \in T} \bar{\hat{F_r'}(s,t)} G_s' H_t' \Big\|_2 = \Big(\|F_r'\|_2^2 - \sum_{s \in S, t \in T} |\hat{F_r'}(s,t)|^2 \Big)^{1/2} \le \sqrt{\frac{4\delta M^3}{\Delta^2}},\]
where we have used that the functions $G_s'H_t'$ are orthonormal (because $\mu_{y,z}$ is a product distribution), and \cref{cor:cbound2}. By the triangle inequality again, we get
\begin{align*} I_{\modest}[F_r'] &\le I_{\modest}\Big[\bar{F_r'} - \sum_{s\in S, t\in T} \bar{\hat{F_r'}(s,t)} \bar{\wt{F_{st}}}\Big] + \sum_{s \in S, t \in T} |\hat{F_r'}(s,t)| I_{\modest}[\wt{F_{st}}]
\\ &\le \Big\|\bar{F_r'} - \sum_{s\in S, t\in T} \bar{\hat{F_r'}(s,t)} \bar{\wt{F_{st}}}\Big\|_2 + \sum_{s \in S, t \in T} |\hat{F_r'}(s,t)| I_{\modest}[\wt{F_{st}}] \\ 
&\le \sqrt{\frac{4\delta M^3}{\Delta^2}} + \frac{8\delta M^7}{\Delta^2} + M^2 \delta' \le 2M^2 \delta',
\end{align*}
by our choice of $\delta = \left(\frac{\Delta}{CM\tau}\right)^{1000M}.$ This gives that
\[ \sum_{r \in [M]} \aa_r^2 I_{\modest}[F_r'] \le M \delta^2 + \sum_{r \in R} \aa_r^2 I_{\modest}[F_r'] \le M\delta^2 + 2M^2 \delta' < \tau, \]
by the choice of $\delta$ again, because $\delta' = \left(\frac{4\delta M^5}{c\Delta^2}\right)^{\frac{1}{100M}}$. This contradicts \cref{lemma:modestvariance}.

\section{Additive Combinatorics}
\label{sec:additive}
This section is dedicated to proving \cref{thm:additive}.
\additive*

We show \cref{thm:additive} in the remainder of this section. By minimality, we may assume that $N$ is a power of $p$. Let $N = p^j$. 
Assume that \[ \Pr_{\substack{(x,y,z) \in (\cA^n)^3 \\ x+y+z=0}}[f(x)_i + g(y)_i + h(z)_i = t(x_i,y_i,z_i) \pmod{p^k}, i = 1, 2, \dots, n] \ge \eps, \]
where $\eps > 2^{-cn}$, for contradiction. Then writing $\wt{f}(x) = N \cdot f(x) - a(x)$, $\wt{g}(y) = N \cdot g(y) - b(x)$, and $\wt{h}(z) = N \cdot h(z) - c(z)$, where $\wt{f}, \wt{g}, \wt{h}: \cA^n \to \Z_{p^kN}^n$, we get that
\[ \Pr_{\substack{(x,y,z) \in (\cA^n)^3 \\ x+y+z=0}}[\wt{f}(x) + \wt{g}(y) + \wt{h}(z) = 0 \pmod{p^kN}, i = 1, 2, \dots, n] \ge \eps. \]
We can follow the approach \cite[Section 2]{BKM22} to conclude the following.
\begin{lemma}
\label{lemma:freiman}
    There is a subset $\cA' \subseteq \cA^n$ with $|\cA'| \ge \Omega(\eps^{O(p^k N)}) |\cA|^n$ such that $\wt{f}$ is an order-$p^kN$ Freiman homomorphism on $\cA'$.
\end{lemma}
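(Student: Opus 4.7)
The plan is to follow the strategy of~\cite[Section~2]{BKM22}, adapted to our setting where the ambient group is an arbitrary finite abelian group $\cA$ rather than $\Z_2^n$. The approximate equation
\[ \wt f(x) + \wt g(y) + \wt h(z) \equiv 0 \pmod{p^k N}, \qquad x+y+z = 0, \]
holding on an $\eps$-fraction of triples will be upgraded, via additive-combinatorial tools, to an exact order-$p^kN$ Freiman homomorphism property for $\wt f$ on a set of density $\Omega(\eps^{O(p^kN)})$.

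First, I would eliminate $\wt g$ and $\wt h$ by a Cauchy--Schwarz argument. By pigeonholing on $y$, there is an $\Omega(\eps)$-density set $Y$ of $y$'s for which $\Omega(\eps)|\cA|^n$ values of $z$ satisfy the equation. Sampling two such configurations $(x_1,y,z_1)$, $(x_2,y,z_2)$ sharing $y$ and subtracting eliminates $\wt g(y)$; repeating across two $y$'s eliminates $\wt h$ as well. The result is that, for an $\Omega(\eps^{O(1)})$-density set of quadruples $(x_1,x_2,d_1,d_2)$ in $(\cA^n)^4$, the parallelogram identity
\[ \wt f(x_1) - \wt f(x_1+d_1-d_2) - \wt f(x_2) + \wt f(x_2+d_1-d_2) \equiv 0 \pmod{p^kN} \]
holds. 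Equivalently, the graph $\Gamma := \{(x,\wt f(x)) : x \in \cA^n\} \subseteq \cA^n \times \Z_{p^kN}^n$ has $\gs \eps^{O(1)}|\cA|^{3n}$ additive quadruples, so by the Balog--Szemer\'edi--Gowers theorem there is a subgraph $\Gamma^* \subseteq \Gamma$ of size $\gs \eps^{O(1)}|\cA|^n$ with doubling $|\Gamma^* + \Gamma^*| \le \eps^{-O(1)}|\Gamma^*|$. The projection $\cA^* \subseteq \cA^n$ of $\Gamma^*$ is our initial candidate, and by the Pl\"unnecke--Ruzsa inequality, $|s\Gamma^* - s\Gamma^*| \le \eps^{-O(s)}|\Gamma^*|$ for every positive integer $s$.

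Next, to upgrade small doubling on $\Gamma^*$ to an exact order-$p^kN$ Freiman homomorphism, I would iterate the following step: given a subset $\cA^{(s)}$ of density $\gs \eps^{O(s)}$ on which $\wt f$ is an order-$s$ Freiman homomorphism, produce $\cA^{(s+1)} \subseteq \cA^{(s)}$ of density $\gs \eps^{O(s+1)}$ on which the order-$(s+1)$ identity holds. Concretely, for every order-$(s+1)$ relation $\sum_{i=1}^{s+1}x_i = \sum_{i=1}^{s+1}y_i$ in $\cA^{(s+1)}$, the desired identity $\sum_{i} \wt f(x_i) = \sum_i \wt f(y_i)$ follows from a chain of $O(1)$ order-$s$ identities together with one fresh parallelogram relation, which we can guarantee using the density $\eps^{O(1)}$ of the parallelogram identity combined with the Pl\"unnecke bounds on $s\Gamma^*$. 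Iterating this $p^kN$ times yields $\cA' := \cA^{(p^kN)}$ of density $\Omega(\eps^{O(p^kN)})$ on which $\wt f$ is an order-$p^kN$ Freiman homomorphism.

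The main obstacle will be executing the chaining step cleanly: each increase of the order loses a multiplicative $\eps^{-O(1)}$ factor in density and a polynomial factor in the Pl\"unnecke constants, and one must verify that after $p^kN$ iterations the accumulated loss is still $\eps^{-O(p^kN)}$ rather than something like $\eps^{-\exp(p^kN)}$. This is achieved by working throughout inside the controlled set $\Gamma^*$, so that Ruzsa calculus provides uniform polynomial-in-$s$ bounds on all iterated sumsets, matching the approach of~\cite[Section~2]{BKM22}.
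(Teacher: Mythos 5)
The paper itself does not actually prove this lemma: it states ``We can follow the approach [BKM22, Section~2] to conclude the following,'' and leaves the details to that reference. So your proposal should be evaluated against the argument in BKM22.

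Your first three steps match that reference: a Cauchy--Schwarz step to eliminate $\wt g$ and $\wt h$, yielding many additive quadruples for the graph $\Gamma = \{(x,\wt f(x))\} \subseteq \cA^n \times \Z_{p^kN}^n$; an application of Balog--Szemer\'edi--Gowers to obtain a $\gs \eps^{O(1)}$-dense subgraph $\Gamma^*$ with doubling $\eps^{-O(1)}$; and Pl\"unnecke--Ruzsa to control $|s\Gamma^* - s\Gamma^*|$ uniformly in $s$. You also correctly flag the central quantitative point, namely that the final loss must be $\eps^{-O(p^kN)}$, not a tower.

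The step I would push back on is the boosting from order $s$ to order $s+1$. A Freiman $s$-homomorphism must satisfy the defining identity for \emph{every} $2s$-tuple $x_1,\dots,x_s,y_1,\dots,y_s \in \cA'$ with $\sum x_i = \sum y_i$, not merely for a dense set of them. Your mechanism---``a chain of $O(1)$ order-$s$ identities together with one fresh parallelogram relation, which we can guarantee using the density...''---is a density argument, and it necessarily produces the chain only for a typical tuple. For a fixed adversarially chosen tuple violating the $(s{+}1)$-identity there is no reason any intermediate point of the chain (e.g.\ $x_1+x_2-y_1$) should fall inside the density-$\eps^{O(s)}$ set $\cA^{(s)}$, so the decomposition is not available. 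This is precisely the gap a chained-density argument cannot close on its own.

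The way this is handled in arguments of the BKM22 type is not by iteratively upgrading the order, but by working directly with the vertical stabilizer of the iterated sumset of the graph: for a graph $\Gamma^*$, the set $T_s := \{h : (0,h) \in s\Gamma^* - s\Gamma^*\}$ has $|T_s| \le |(s{+}1)\Gamma^* - s\Gamma^*|/|\Gamma^*| \le K^{O(s)}$ because the translates $\Gamma^* + (0,h)$, $h \in T_s$, are pairwise disjoint. The function $\wt f$ restricted to the projection of $\Gamma^*$ is an order-$s$ Freiman homomorphism \emph{if and only if} $T_s = \{0\}$, and when $T_s \ne \{0\}$ one must pass to a further subset of density $\gs 1/|T_{p^kN}| \gs \eps^{-O(p^kN)}$ chosen so that the stabilizer becomes trivial. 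That refinement is the heart of the matter; it is a single structural step, not $p^kN$ independent losses from iterated chaining. I would rewrite the second half of your sketch around this stabilizer argument rather than the order-by-order induction.
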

At this point, write $\cA = \prod_{i=1}^L \Z_{p_i^{k_i}}$. We have the map $a: \cA \to \Z_{p^kN}$. Consider its restriction modulo $p$, which we write as $\bar{a}: \cA \to \Z_p$ defined as $\bar{a}(x) = a(x) \pmod{p}$. We show that $\bar{a}$ must be the shift of a homomorphism.
\begin{lemma}
\label{lemma:ahom}
For sufficiently small constants $c$, if $\eps \ge 2^{-cn}$ then $\bar{a}: \cA \to \Z_p$ must take the form $\bar{a}(x) = c + \sum_{i=1}^L c_i x_i \pmod{p}$, where $x = (x_1, \dots, x_L) \in \cA$.
\end{lemma}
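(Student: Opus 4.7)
Suppose for contradiction that $\bar{a}$ fails to be an affine map from $\cA$ to $\Z_p$. Since the affine maps $\cA \to \Z_p$ form the subspace spanned by the constant function and the coordinate homomorphisms, this means there exist elements $u_1^*, u_2^*, u_3^*, u_4^* \in \cA$ satisfying $u_1^* + u_2^* = u_3^* + u_4^*$ and
\[
\delta := \bar{a}(u_1^*) + \bar{a}(u_2^*) - \bar{a}(u_3^*) - \bar{a}(u_4^*) \not\equiv 0 \pmod{p}.
\]
The plan is to lift this single-variable failure to a quadruple in $\cA'$ that violates the $2$-Freiman property of $\wt{f}$ guaranteed by \cref{lemma:freiman}, giving a contradiction.

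First I would reduce modulo $p$. Because $N = p^j$ with $j \ge 1$ (since $N > 1$ is a power of $p$) and $f(x)_i \in \Z_{p^k}$, we have $N \cdot f(x)_i \equiv 0 \pmod{p}$, so coordinate-wise
\[
\wt{f}(x)_i \;\equiv\; -a(x_i) \;\equiv\; -\bar{a}(x_i) \pmod{p}, \qquad i \in [n].
\]
By \cref{lemma:freiman}, for any $x^{(1)}, x^{(2)}, x^{(3)}, x^{(4)} \in \cA'$ with $x^{(1)} + x^{(2)} = x^{(3)} + x^{(4)}$ in $\cA^n$, the order-$p^k N$ Freiman property (applied with order $2$) yields $\wt{f}(x^{(1)}) + \wt{f}(x^{(2)}) = \wt{f}(x^{(3)}) + \wt{f}(x^{(4)})$ coordinate-wise in $\Z_{p^k N}^n$, whence, reducing mod $p$,
\[
\bar{a}(x^{(1)}_i) + \bar{a}(x^{(2)}_i) \;\equiv\; \bar{a}(x^{(3)}_i) + \bar{a}(x^{(4)}_i) \pmod{p}, \qquad \forall\, i \in [n]. \tag{$\ast$}
\]

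The key combinatorial step is to find a \emph{template line} realizing the witness: an index $i \in [n]$ and a base $w \in \cA^{n-1}$ such that each of the four points
\[
x^{(j)} \;:=\; (w_1, \ldots, w_{i-1}, u_j^*, w_i, \ldots, w_{n-1}), \qquad j = 1, 2, 3, 4,
\]
lies in $\cA'$. Since the $x^{(j)}$ agree on all coordinates except $i$, and $u_1^* + u_2^* = u_3^* + u_4^*$, they satisfy $x^{(1)} + x^{(2)} = x^{(3)} + x^{(4)}$ in $\cA^n$. Applying $(\ast)$ at coordinate $i$ gives $\delta \equiv 0 \pmod{p}$, the desired contradiction. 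Thus the lemma reduces to producing such a line.

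The main obstacle is precisely the existence of a template line, and this is where the small-$c$ hypothesis is used. The density $\rho := |\cA'|/|\cA|^n \ge 2^{-O(c\, p^k N\, n)}$ is exponentially small, so a naive product-measure heuristic predicts on the order of $\rho^{4} n |\cA|^{n-1}$ such lines, which is positive once $c$ is small enough in terms of $\log|\cA|$, but this heuristic can fail (for instance if $\cA'$ concentrates on a low-codimension coset). The actual argument follows \cite[Section~2]{BKM22}: one exploits the additional structure on $\cA'$ inherited from the three-wise equation $\wt{f}(x) + \wt{g}(y) + \wt{h}(z) = 0$ together with the Freiman property, and invokes standard additive-combinatorial machinery (Pl\"unnecke--Ruzsa estimates and Balog--Szemer\'edi--Gowers-type arguments) to produce the required combinatorial $4$-line along some coordinate. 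The constant $c$ is chosen small enough (as a function of $|\cA|$, $p$, $k$ and the minimal $N$) so that $\rho$ meets the threshold required by this counting argument.
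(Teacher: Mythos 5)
Your reduction modulo $p$ and the use of the Freiman property to obtain $(\ast)$ are correct and match the paper. Where you diverge — and where there is a genuine gap — is in the plan to produce an explicit \emph{template line}, i.e.\ a coordinate $i$ and a base $w$ so that all four points $x^{(j)} = (w, u_j^*)$ land in $\cA'$. You acknowledge this yourself (``the main obstacle''), but then wave at ``Pl\"unnecke--Ruzsa'' and ``Balog--Szemer\'edi--Gowers'' without an actual argument. Those tools live inside the proof of \cref{lemma:freiman} (establishing the density of $\cA'$); they do not obviously yield the existence of a combinatorial $4$-line of this form, and in general $\cA'$ can be quite structured (e.g.\ concentrated on a coset of an arbitrary subgroup of $\cA^n$), so the naive $\rho^4$-density heuristic for lines is not justified. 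Finding an honest line here is a substantially stronger requirement than what the lemma actually needs.

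The paper avoids the line-finding problem entirely by a global counting argument: among all additive quadruples $(u,v,w,x)$ in $(\cA^n)^4$ with $u+v=w+x$, the fraction lying in $(\cA')^4$ is $\ge (|\cA'|/|\cA|^n)^4 \gtrsim \eps^{O(p^kN)}$ (a standard Cauchy--Schwarz count of quadruples in a set of given density), and each of those satisfies the coordinate-wise identity $(\ast)$ by the Freiman property. On the other hand, if even a single quadruple $(u^*,v^*,w^*,x^*) \in \cA^4$ with $u^*+v^*=w^*+x^*$ violated $\bar a(u^*)+\bar a(v^*) = \bar a(w^*)+\bar a(x^*)$, then by independence across coordinates the fraction of $n$-fold quadruples satisfying $(\ast)$ in every coordinate would be at most $(1-1/|\cA|^3)^n \ll \eps^{O(p^kN)}$ once $c$ is small enough. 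These two bounds clash, so no single-letter violating quadruple exists, and $\bar a$ is a Freiman homomorphism on all of $\cA$, hence affine. Note that this counting argument never needs all four $n$-fold quadruple points to lie on a common axis-parallel line, which is precisely the step your proposal leaves unresolved.
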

\begin{proof}
    Because $\wt{f}$ is an order $4$ Freiman homomorphism we know for all $u,v,w,x \in (\cA')^4$ with $u+v = w+x$ we have $\wt{f}(u) + \wt{f}(v) = \wt{f}(w) + \wt{f}(x) \pmod{p^kN}$. Recall that $\wt{f}(x) = N \cdot f(x) - a(x) = -\bar{a}(x) \pmod{p}$.
    So if $u+v=w+x$ then $\bar{a}(u)+\bar{a}(v) = \bar{a}(w) + \bar{a}(x) \pmod{p}$. The number of such $4$-tuples is at least $|\cA'|^4/|\cA|^n \ge \Omega(\eps^{O(p^kN)}) |\cA|^{3n}$, so $\bar{a}$ satisfies:
    \[ \Pr_{(u,v,w,x) \in (\cA^n)^4: u+v=w+x}[\bar{a}(u_i) + \bar{a}(v_i) = \bar{a}(w_i) + \bar{a}(x_i) \pmod{p} \enspace \forall \enspace i \in [n]] \ge \Omega(\eps^{O(p^kN)}). \]
    If $\bar{a}(u) + \bar{a}(v) \neq \bar{a}(w) + \bar{a}(x)$ for some $(u,v,w,x) \in \cA^4$ with $u+v=w+x$ then the above probability is at most $(1-1/|\cA|^3)^n \ll \eps^{O(p^kN)}$.
    So $\bar{a}(u) + \bar{a}(v) = \bar{a}(w) + \bar{a}(x)$ for all $u+v=w+x$. This implies the conclusion.
\end{proof}
Now we show that many of the coefficients $c_i = 0$ in fact.
\begin{lemma}
    \label{lemma:czero}
    Under the hypotheses of \cref{lemma:ahom} we have that $c_i = 0$ if $p_i \neq p$, or $p_i = p$ and $k_i < k+j$.
\end{lemma}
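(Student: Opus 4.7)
The proof is by contradiction: I assume $c_1 \neq 0 \pmod{p}$ for some direction $i = 1$ with either $p_1 \neq p$ or ($p_1 = p$ and $k_1 < k + j$), aiming to contradict the standing hypothesis $\eps \ge 2^{-cn}$ from \cref{lemma:ahom}. The case $p_1 \neq p$ is immediate from well-definedness of the affine form $\bar a(x) = c + \sum_i c_i x_i$ on the factor $\Z_{p_i^{k_i}} \subseteq \cA$: the function remains well-defined under $x_i \mapsto x_i + p_i^{k_i}$ only if $p_i^{k_i} c_i \equiv 0 \pmod{p}$, and coprimality of $p_i$ and $p$ forces $c_i \equiv 0 \pmod{p}$.

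For the main case $p_1 = p$ and $k_1 < k + j$, my plan is to apply the Freiman homomorphism property of $\wt f = p^j f - a$ guaranteed by \cref{lemma:freiman} at the tuple length $p^{k_1}$ (rather than at length $4$ as in \cref{lemma:ahom}); this is permissible since $p^{k_1} \le p^{k+j-1} < p^{k+j}$. The key group-theoretic identity is $p^{k_1} e_1 = 0$ in $\Z_{p^{k_1}}$, which gives the trivial tuple relation $\sum_{\ell = 1}^{p^{k_1}}(x + e_1^{(j)}) = \sum_{\ell = 1}^{p^{k_1}} x$ in $\cA^n$, where $e_1^{(j)} \in \cA^n$ denotes the vector with $e_1 \in \cA$ in coordinate $j$ and $0$ elsewhere. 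The Freiman property then yields $p^{k_1}\bigl(\wt f(x + e_1^{(j)}) - \wt f(x)\bigr) \equiv 0 \pmod{p^{k+j}}$, i.e., $\wt f(x + e_1^{(j)}) \equiv \wt f(x) \pmod{p^{k+j-k_1}}$. Reducing modulo $p$ and using $\wt f \equiv -a \pmod{p}$ (since $p^j f \equiv 0 \pmod p$ for $j \ge 1$), the $j$-th coordinate gives $\bar a(x_j + e_1) \equiv \bar a(x_j) \pmod{p}$. Combining with the affine form of $\bar a$ from \cref{lemma:ahom}, the discrete derivative $\bar a(x_j + e_1) - \bar a(x_j)$ equals $c_1$, so $c_1 \equiv 0 \pmod{p}$, contradicting the assumption.

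The principal obstacle is that the $p^{k_1}$-tuple Freiman relation requires all tuple entries to lie in $\cA'$, which reduces to simultaneously requiring $x, x + e_1^{(j)} \in \cA'$; when the density $\delta = |\cA'|/|\cA|^n$ is as small as $2^{-O(cn)}$, the intersection $\cA' \cap (\cA' - e_1^{(j)})$ may be empty, so no such $x$ exists directly. I plan to overcome this by mirroring the probabilistic argument from \cref{lemma:ahom}: if the coordinate-wise relation $\bar a(x_j^* + e_1) \equiv \bar a(x_j^*) \pmod{p}$ failed at some specific $x_j^* \in \cA$, then (applying it coordinate by coordinate across i.i.d.\ samples) the winning probability would satisfy $\eps \le (1 - 1/|\cA|^{O(1)})^n$, which for sufficiently small $c$ contradicts $\eps \ge 2^{-cn}$. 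Making this probabilistic extension precise in the $p^{k_1}$-tuple setting—specifically, showing that a single coordinate violation propagates to a Freiman violation in $(\cA^n)^{p^{k_1}}$ incompatible with the density lower bound on $\cA'$—is the main technical challenge and follows the same template as the argument in \cref{lemma:ahom}, adapted from $4$-tuples to $p^{k_1}$-tuples.
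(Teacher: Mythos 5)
Your top-level strategy matches the paper's: in the case $p_1 \neq p$, you argue $\bar a$ cannot be an affine form, which is exactly the paper's observation; in the main case $p_1 = p$, $k_1 < k+j$, you correctly identify that the degenerate Freiman relation arising from $p^{k_1}(x-y) = 0$ (i.e., $p^{k_1}$ copies of $x$ equaling $p^{k_1}$ copies of $y$) is the key to extracting $\wt f(x) \equiv \wt f(y) \pmod{p^{k+j-k_1}}$ and hence $\bar a(x_j) = \bar a(y_j) \pmod p$. The problem is in how you propose to produce the pair $(x,y)$.

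Fixing a single shift $e_1^{(j)}$ and asking for $x, x + e_1^{(j)} \in \cA'$ fails, as you acknowledge: with density $\approx 2^{-O(cn)}$ the set $\cA'\cap(\cA'-e_1^{(j)})$ can be empty. But the fallback you sketch—``mirror the probabilistic argument from \cref{lemma:ahom}, adapted from $4$-tuples to $p^{k_1}$-tuples''—does not, in its natural reading, supply what is needed. If you count \emph{generic} $2p^{k_1}$-tuples $(u_1,\dots,u_{p^{k_1}};v_1,\dots,v_{p^{k_1}})$ with $\sum u_i = \sum v_i$ (the direct analogue of the additive quadruple count), then applying the Freiman property and reducing mod $p$ gives $\sum_i\bar a((u_i)_j) \equiv \sum_i\bar a((v_i)_j)$ for each coordinate $j$. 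Plugging in the affine form, the contribution of the $i$-th factor is $c_i\bigl(\sum_i(u_i)_{i,j} - \sum_i(v_i)_{i,j}\bigr)$, which is automatically $0\pmod p$ because of the constraint $\sum u = \sum v$ in $\cA^n$—so the coefficient $c_i$ simply cancels and you learn nothing. The conclusion $c_i = 0$ requires the \emph{degenerate} tuple $u_1 =\cdots = u_{p^{k_1}} = x$, $v_1=\cdots=v_{p^{k_1}}=y$, where $x$ and $y$ differ \emph{only} in factor $i$ and, crucially, differ there by something not divisible by $p$ in at least one of the $n$ coordinates. Neither property falls out of a random sum-constrained $p^{k_1}$-tuple.

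The paper sidesteps both issues at once with a short pigeonhole argument: partition $\cA^n$ into $|\cA|^n/p^n$ blocks of size $p^n$, each block obtained by freezing every factor except the $i$-th and, within the $i$-th factor, freezing the quotient by intervals $\{ap,\dots,ap+p-1\}$ in each of the $n$ coordinates. Any two distinct points in the same block differ only in factor $i$, satisfy $p^{k_i}x = p^{k_i}y$ in $\cA^n$, \emph{and} differ mod $p$ in any coordinate where they differ. Since $|\cA'| \gtrsim \eps^{O(p^kN)}|\cA|^n > |\cA|^n/p^n$ once $c$ is small enough, pigeonhole produces such a pair inside $\cA'$, and the rest of the argument proceeds exactly as you envision. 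So the missing ingredient in your write-up is the construction of this block partition (or an equivalent coset-of-the-$i$-th-factor Cauchy--Schwarz that additionally forces a mod-$p$ difference); the ``$p^{k_1}$-tuple'' generalization of \cref{lemma:ahom} you gesture at won't produce it by itself.
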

\begin{proof}
If $c_i \neq 0$ for some $p_i \neq p$, then $\bar{a}$ is not the shift of a homomorphism from $\cA \to \Z_p$.

Now consider $i$ with $p_i = p$ and $k_i < k+j$. Split $\cA^n$ into $|\cA|^n / p^n$ groups as follows.
We first describe how to split $(\Z_{p^{k_i}})^n$ into $(p^{k_i})^n / p^n$ groups. This is done by splitting $\Z_{p^{k_i}}$ into $p^{k_i-1}$ groups of size $p$:
\[ \{0, 1, \dots. p-1\} \cup \{p, p+1, \dots, 2p-1\}, \dots, \{p^{k_i}-p, \dots, p^{k_i}-1\}, \]
and then taking the direct product $n$ times.
The point is that for any $x, y \in (\Z_{p^{k_i}})^n$ in the same group, $x \neq y \pmod{p}$, as $n$-dimensional vectors.
Now we split $\cA^n$ as follows: $(x_1, \dots, x_L)$ and $(y_1, \dots, y_L)$ are in the same group if and only if $x_{i'} = y_{i'}$ for any $i' \neq i$, and $(x_i, y_i)$ are in the same group of $(\Z_{p^{k_i}})^n$ as described.

Because $|\cA'| > |\cA|^n / p^n$, the Pigeonhole Principle tells us that we can find $x, y \in \cA'$ in the same group. By construction, $p^{k_i}x = p^{k_i}y$. Because $\wt{f}$ was an order $p^kN > p^{k_i}$ Freiman homomorphism, we deduce that $p^{k_i}\wt{f}(x) = p^{k_i}\wt{f}(y) \pmod{p^{k+j}}$, so $\wt{f}(x) = \wt{f}(y) \pmod{p}$ as $k_i < k+j$. Thus, $\bar{a}(x) = \bar{a}(y)$. Because $x_{i'} = y_{i'}$ for all $i' \neq i$ by construction, using \cref{lemma:ahom} gives us $c_i(x_i-y_i) = 0 \pmod{p}$. By construction $x_i \neq y_i \pmod{p}$ as vectors, so $c_i = 0$.
\end{proof}
By symmetry, we can apply both \cref{lemma:ahom,lemma:czero} to $b, c$ to write:
\[ \bar{b}(y) = d + \sum_{i=1}^L d_i y_i \pmod{p} \enspace \text{ and } \enspace \bar{c}(z) = e + \sum_{i=1}^L e_i z_i \pmod{p}. \]
We will use this to contradict minimality of $N$. To start, we first show a relationship between the coefficients $c_i, d_i, e_i$.
\begin{lemma}[Matching coefficients]
    \label{lemma:coef}
    If $\bar{a}(x) = c + \sum_{i=1}^L c_i x_i$, $\bar{b}(y) = d + \sum_{i=1}^L d_i y_i$, and $\bar{c}(z) = e + \sum_{i=1}^L e_i z_i$ then $c+d+e=0\pmod{p}$ and $c_i = d_i = e_i$ (as elements of $\Z_p$).
\end{lemma}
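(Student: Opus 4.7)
The plan is to reduce the defining identity
\[ a(x) + b(y) + c(z) = N \cdot t(x,y,z) \pmod{p^k N} \]
modulo $p$ and evaluate at a few carefully chosen triples with $x+y+z=0$. Since minimality forces $N$ to be a power of $p$ and $N > 1$, we have $N = p^j$ with $j \geq 1$, so $N \equiv 0 \pmod p$. Reducing the identity modulo $p$ therefore gives
\[ \bar{a}(x) + \bar{b}(y) + \bar{c}(z) \equiv 0 \pmod p \quad \text{whenever } x+y+z = 0. \]

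First I would plug in $x=y=z=0$, which lies in the constraint set, to read off $c + d + e \equiv 0 \pmod p$; this is the first claim. Substituting back and using the explicit linear formulas for $\bar{a}, \bar{b}, \bar{c}$, the remaining identity simplifies to
\[ \sum_{i=1}^L \big( c_i x_i + d_i y_i + e_i z_i \big) \equiv 0 \pmod p \quad \text{for every } (x,y,z) \in \cA^3 \text{ with } x+y+z = 0. \]

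Next I would extract the pairwise matching of coefficients by specializing to ``standard basis'' inputs. By \cref{lemma:czero} (applied by symmetry to $b$ and $c$ as well), any index $i$ with $p_i \neq p$ already satisfies $c_i = d_i = e_i = 0$, so only indices $i$ with $p_i = p$ require attention. For each such $i$, let $\mathbf{e}_i \in \cA$ denote the element whose $i$-th coordinate is $1 \in \Z_{p^{k_i}}$ and whose other coordinates vanish. Evaluating the displayed identity at $(0, \mathbf{e}_i, -\mathbf{e}_i)$ yields $d_i \equiv e_i \pmod p$; at $(\mathbf{e}_i, 0, -\mathbf{e}_i)$ it yields $c_i \equiv e_i \pmod p$; and at $(\mathbf{e}_i, -\mathbf{e}_i, 0)$ it yields $c_i \equiv d_i \pmod p$. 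Together these establish $c_i = d_i = e_i$ in $\Z_p$ for every $i$.

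No step here poses a genuine obstacle: once \cref{lemma:ahom} and \cref{lemma:czero} have pinned down $\bar{a}, \bar{b}, \bar{c}$ as shifts of linear forms, the matching-of-coefficients statement is essentially a linear-algebra consequence of the constraint $x+y+z=0$ combined with $N \equiv 0 \pmod p$. The substance of the overall proof of \cref{thm:additive} lies upstream, in deriving the linear structure via the Freiman-homomorphism property of $\wt{f}$, and downstream, in using the matching to contradict the minimality of $N$.
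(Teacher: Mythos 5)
Your proof is correct and follows the same route the paper takes: reduce the identity $a(x)+b(y)+c(z) = N\cdot t(x,y,z) \pmod{p^k N}$ modulo $p$, note that $N \cdot t \equiv 0 \pmod p$ because $N = p^j$ with $j\ge 1$, and then read off the constants at $(0,0,0)$ and the coefficient matches at standard-basis triples with $x+y+z=0$. The paper compresses all of this into a single sentence; you have simply supplied the specializations it leaves implicit.
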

\begin{proof}
Recall that $\bar{a}(x) + \bar{b}(y) + \bar{c}(z) = N \cdot t(x,y,z) = 0 \pmod{p}$ for all $x+y+z = 0$ in $\cA$. Thus $c+d+e=0\pmod{p}$ and $c_i=d_i=e_i$.
\end{proof}
Finally we reach a contradiction by showing that $N$ was not minimal. 
\begin{lemma}
\label{lemma:reduceN}
Assume $c+d+e=0$ over $\Z$.
Define $\wt{a}(x) = a(x) - c - \sum_{i=1}^L c_i x_i$, $\wt{b}(y) = b(y) - d - \sum_{i=1}^L c_i y_i$, $\wt{c}(z) = c(z) - e - \sum_{i=1}^L c_i z_i$.
Then $\wt{a}(x) + \wt{b}(y) + \wt{c}(z) = a(x) + b(y) + c(z) \pmod{p^kN}$ and $\wt{a}(x), \wt{b}(y), \wt{c}(z) = 0 \pmod{p}$ for any $x, y, z \in \cA$.
\end{lemma}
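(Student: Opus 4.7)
The plan is to verify both claims by a direct computation, where the only thing that needs care is the ambient group in which each product $c_i x_i$ is interpreted. I would lift the constants $c, d, e$ (consistent with the assumption $c+d+e = 0$ over $\Z$) and each coefficient $c_i$ from $\Z_p$ to fixed integers, so that $\wt{a}, \wt{b}, \wt{c}$ are well-defined as $\Z_{p^k N}$-valued functions on $\cA$.

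First I would simply subtract the definitions to write
\[
\wt{a}(x) + \wt{b}(y) + \wt{c}(z) - \bigl(a(x) + b(y) + c(z)\bigr)
= -(c + d + e) - \sum_{i=1}^{L} c_i(x_i + y_i + z_i) \pmod{p^k N}.
\]
The scalar $-(c+d+e)$ vanishes by the standing hypothesis. For the second term I would invoke \cref{lemma:czero}: any nonzero $c_i$ satisfies $p_i = p$ and $k_i \ge k + j$, and since $x + y + z = 0$ in $\cA$ forces $x_i + y_i + z_i \equiv 0 \pmod{p^{k_i}}$, lifting the sum to $\Z$ shows it lies in $p^{k_i}\Z \subseteq p^{k+j}\Z = p^k N\,\Z$, so $c_i(x_i + y_i + z_i)$ vanishes modulo $p^k N$. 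This proves the first identity.

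For the divisibility claim I would reduce the defining formula for $\wt{a}$ modulo $p$: by \cref{lemma:ahom} together with $\bar{a}(x) = a(x) \bmod p$, one has $a(x) \equiv c + \sum_i c_i x_i \pmod{p}$, so $\wt{a}(x) \equiv 0 \pmod{p}$ by construction. The symmetric argument, using the analogues of \cref{lemma:ahom} for $b$ and $c$ (whose coefficients match those of $a$ by \cref{lemma:coef}), yields $\wt{b}(y), \wt{c}(z) \equiv 0 \pmod{p}$.

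There is no real analytic obstacle here; the one subtlety worth highlighting is the interpretation of $c_i x_i$ as an element of $\Z_{p^k N}$, which is precisely what \cref{lemma:czero} is built to guarantee by forcing the cyclic factor $\Z_{p^{k_i}}$ on which $x_i$ lives to surject onto $\Z_{p^k N}$ whenever $c_i \ne 0$. Combined with the standing relation $a(x) + b(y) + c(z) = N \cdot t(x,y,z) \pmod{p^k N}$, this lemma then contradicts the minimality of $N$: all three of $\wt{a}, \wt{b}, \wt{c}$ are divisible by $p$, and $N = p^j$ with $j \ge 1$ (as $N > 1$), so dividing the relation through by $p$ produces an embedding witnessing the strictly smaller value $N/p$, completing the proof of \cref{thm:additive} and thus of \cref{thm:main}.
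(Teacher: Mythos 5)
Your proof is correct and follows the same route as the paper's: the divisibility claim falls out of \cref{lemma:ahom} (and its analogues for $b,c$ via \cref{lemma:coef}) by reducing the defining formulas modulo $p$, and the modular identity follows by subtracting, cancelling $c+d+e=0$, and using \cref{lemma:czero} to see that each term $c_i(x_i+y_i+z_i)$ with $c_i\ne 0$ vanishes mod $p^kN$ because then $k_i\ge k+j$. Your added commentary on interpreting $c_i x_i$ in $\Z_{p^kN}$ via integer lifts, and the closing remark about how the lemma delivers the contradiction to minimality of $N$, are accurate and match the paper's surrounding discussion.
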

\begin{proof}
    The final claim from \cref{lemma:ahom}.
    For all $(x,y,z) \in \cA^3$ with $x+y+z = 0$ we know that
    \[(a(x) + b(y) + c(z)) - (\wt{a}(x) + \wt{b}(y) + \wt{c}(z)) = (c+d+e) + \sum_{i=1}^L c_i(x_i+y_i+z_i). \]
    If $p_i = p$ and $k_i < k+j$ then $c_i = 0$. Otherwise, $x_i + y_i + z_i = 0 \pmod{p^kN}$, as $N = p^j$. In both cases, $\sum_{i=1}^L c_i(x_i+y_i+z_i) = 0 \pmod{p^kN}$ as desired.
\end{proof}
This contradicts the minimality of $N$, because we can shift $c, d, e$ by multiples of $p$ to ensure that $c+d+e=0$ (recall that $c+d+e=0\pmod{p}$ by \cref{lemma:coef}), and then divide everything by $p$ to reduce $N$ to $N/p$.

{\small
\bibliographystyle{alpha}
\bibliography{refs}}

\newcommand{\etalchar}[1]{$^{#1}$}
\begin{thebibliography}{KORW08}

\bibitem[ABSS97]{ABSS97}
Sanjeev Arora, L{\'{a}}szl{\'{o}} Babai, Jacques Stern, and Z.~Sweedyk.
\newblock The hardness of approximate optima in lattices, codes, and systems of
  linear equations.
\newblock {\em J. Comput. Syst. Sci.}, 54(2):317--331, 1997.

\bibitem[AK09]{AK09}
Noga Alon and Bo'az Klartag.
\newblock Economical toric spines via {C}heeger's inequality.
\newblock {\em J. Topol. Anal.}, 1(2):101--111, 2009.

\bibitem[ALM{\etalchar{+}}98]{ALMSS98}
Sanjeev Arora, Carsten Lund, Rajeev Motwani, Madhu Sudan, and Mario Szegedy.
\newblock Proof verification and the hardness of approximation problems.
\newblock {\em J. {ACM}}, 45(3):501--555, 1998.

\bibitem[AS98]{AS98}
Sanjeev Arora and Shmuel Safra.
\newblock Probabilistic checking of proofs: {A} new characterization of {NP}.
\newblock {\em J. {ACM}}, 45(1):70--122, 1998.

\bibitem[BBCR13]{BBCR13}
Boaz Barak, Mark Braverman, Xi~Chen, and Anup Rao.
\newblock How to compress interactive communication.
\newblock {\em {SIAM} J. Comput.}, 42(3):1327--1363, 2013.

\bibitem[BBLV13]{BBLV13}
Jop Bri{\"{e}}t, Harry Buhrman, Troy Lee, and Thomas Vidick.
\newblock Multipartite entanglement in {XOR} games.
\newblock {\em Quantum Inf. Comput.}, 13(3-4):334--360, 2013.

\bibitem[BG15]{BG15}
Mark Braverman and Ankit Garg.
\newblock Small value parallel repetition for general games.
\newblock In Rocco~A. Servedio and Ronitt Rubinfeld, editors, {\em Proceedings
  of the Forty-Seventh Annual {ACM} on Symposium on Theory of Computing, {STOC}
  2015, Portland, OR, USA, June 14-17, 2015}, pages 335--340. {ACM}, 2015.

\bibitem[BGKW88]{BGKW88}
Michael {Ben{-}Or}, Shafi Goldwasser, Joe Kilian, and Avi Wigderson.
\newblock Multi-prover interactive proofs: How to remove intractability
  assumptions.
\newblock In Janos Simon, editor, {\em Proceedings of the 20th Annual {ACM}
  Symposium on Theory of Computing, May 2-4, 1988, Chicago, Illinois, {USA}},
  pages 113--131. {ACM}, 1988.

\bibitem[BGS98]{BGS98}
Mihir Bellare, Oded Goldreich, and Madhu Sudan.
\newblock Free bits, pcps, and nonapproximability-towards tight results.
\newblock {\em {SIAM} J. Comput.}, 27(3):804--915, 1998.

\bibitem[BKM22]{BKMcsp1}
Amey Bhangale, Subhash Khot, and Dor Minzer.
\newblock On approximability of satisfiable \emph{k}-csps: {I}.
\newblock In Stefano Leonardi and Anupam Gupta, editors, {\em {STOC} '22: 54th
  Annual {ACM} {SIGACT} Symposium on Theory of Computing, Rome, Italy, June 20
  - 24, 2022}, pages 976--988. {ACM}, 2022.

\bibitem[BKM23a]{BKM5}
Amey Bhangale, Subhash Khot, and Dor Minzer.
\newblock Effective bounds for restricted $3$-arithmetic progressions in
  $\mathbb{F}_p^n$.
\newblock {\em Electron. Colloquium Comput. Complex.}, {TR23-116}, 2023.

\bibitem[BKM23b]{BKM2}
Amey Bhangale, Subhash Khot, and Dor Minzer.
\newblock On approximability of satisfiable $k$-csps: {II}.
\newblock In Barna Saha and Rocco~A. Servedio, editors, {\em Proceedings of the
  55th Annual {ACM} Symposium on Theory of Computing, {STOC} 2023, Orlando, FL,
  USA, June 20-23, 2023}, pages 632--642. {ACM}, 2023.

\bibitem[BKM23c]{BKMcsp3}
Amey Bhangale, Subhash Khot, and Dor Minzer.
\newblock On approximability of satisfiable k-csps: {III}.
\newblock In Barna Saha and Rocco~A. Servedio, editors, {\em Proceedings of the
  55th Annual {ACM} Symposium on Theory of Computing, {STOC} 2023, Orlando, FL,
  USA, June 20-23, 2023}, pages 643--655. {ACM}, 2023.

\bibitem[BKM23d]{BKM22}
Mark Braverman, Subhash Khot, and Dor Minzer.
\newblock Parallel repetition for the {GHZ} game: Exponential decay.
\newblock In {\em 64th {IEEE} Annual Symposium on Foundations of Computer
  Science, {FOCS} 2023, Santa Cruz, CA, USA, November 6-9, 2023}, pages
  1337--1341. {IEEE}, 2023.

\bibitem[BKM24]{BKM4}
Amey Bhangale, Subhash Khot, and Dor Minzer.
\newblock On approximability of satisfiable k-csps: {IV}.
\newblock In Bojan Mohar, Igor Shinkar, and Ryan O'Donnell, editors, {\em
  Proceedings of the 56th Annual {ACM} Symposium on Theory of Computing, {STOC}
  2024, Vancouver, BC, Canada, June 24-28, 2024}, pages 1423--1434. {ACM},
  2024.

\bibitem[BM21]{BMtilesym}
Mark Braverman and Dor Minzer.
\newblock Optimal tiling of the euclidean space using permutation-symmetric
  bodies.
\newblock In Valentine Kabanets, editor, {\em 36th Computational Complexity
  Conference, {CCC} 2021, July 20-23, 2021, Toronto, Ontario, Canada (Virtual
  Conference)}, volume 200 of {\em LIPIcs}, pages 5:1--5:48. Schloss Dagstuhl -
  Leibniz-Zentrum f{\"{u}}r Informatik, 2021.

\bibitem[BRWY13]{BRWY13}
Mark Braverman, Anup Rao, Omri Weinstein, and Amir Yehudayoff.
\newblock Direct products in communication complexity.
\newblock In {\em 54th Annual {IEEE} Symposium on Foundations of Computer
  Science, {FOCS} 2013, 26-29 October, 2013, Berkeley, CA, {USA}}, pages
  746--755. {IEEE} Computer Society, 2013.

\bibitem[CHTW04]{CHTW04}
Richard Cleve, Peter H{\o}yer, Benjamin Toner, and John Watrous.
\newblock Consequences and limits of nonlocal strategies.
\newblock In {\em 19th Annual {IEEE} Conference on Computational Complexity
  {(CCC} 2004), 21-24 June 2004, Amherst, MA, {USA}}, pages 236--249. {IEEE}
  Computer Society, 2004.

\bibitem[DGKR05]{DGKR05}
Irit Dinur, Venkatesan Guruswami, Subhash Khot, and Oded Regev.
\newblock A new multilayered {PCP} and the hardness of hypergraph vertex cover.
\newblock {\em {SIAM} J. Comput.}, 34(5):1129--1146, 2005.

\bibitem[DHVY17]{DHVY17}
Irit Dinu, Prahladh Harsha, Rakesh Venkat, and Henry Yuen.
\newblock Multiplayer parallel repetition for expanding games.
\newblock In {\em 8th {I}nnovations in {T}heoretical {C}omputer {S}cience
  {C}onference}, volume~67 of {\em LIPIcs. Leibniz Int. Proc. Inform.}, pages
  Art. No. 37, 16. Schloss Dagstuhl. Leibniz-Zent. Inform., Wadern, 2017.

\bibitem[DRS05]{DRS05}
Irit Dinur, Oded Regev, and Clifford~D. Smyth.
\newblock The hardness of $3$-uniform hypergraph coloring.
\newblock {\em Comb.}, 25(5):519--535, 2005.

\bibitem[DS14]{DS14}
Irit Dinur and David Steurer.
\newblock Analytical approach to parallel repetition.
\newblock In David~B. Shmoys, editor, {\em Symposium on Theory of Computing,
  {STOC} 2014, New York, NY, USA, May 31 - June 03, 2014}, pages 624--633.
  {ACM}, 2014.

\bibitem[Fei91]{Fei91}
Uriel Feige.
\newblock On the success probability of the two provers in one-round proof
  systems.
\newblock In {\em Proceedings of the Sixth Annual Structure in Complexity
  Theory Conference, Chicago, Illinois, USA, June 30 - July 3, 1991}, pages
  116--123. {IEEE} Computer Society, 1991.

\bibitem[Fei98]{Fei98}
Uriel Feige.
\newblock A threshold of $\ln n$ for approximating set cover.
\newblock {\em J. {ACM}}, 45(4):634--652, 1998.

\bibitem[FGL{\etalchar{+}}96]{FGLSS96}
Uriel Feige, Shafi Goldwasser, L{\'{a}}szl{\'{o}} Lov{\'{a}}sz, Shmuel Safra,
  and Mario Szegedy.
\newblock Interactive proofs and the hardness of approximating cliques.
\newblock {\em J. {ACM}}, 43(2):268--292, 1996.

\bibitem[FKO07]{FKO07}
Uriel Feige, Guy Kindler, and Ryan O'Donnell.
\newblock Understanding parallel repetition requires understanding foams.
\newblock In {\em 22nd Annual {IEEE} Conference on Computational Complexity
  {(CCC} 2007), 13-16 June 2007, San Diego, California, {USA}}, pages 179--192.
  {IEEE} Computer Society, 2007.

\bibitem[FRS94]{FRS94}
Lance Fortnow, John Rompel, and Michael Sipser.
\newblock On the power of multi-prover interactive protocols.
\newblock {\em Theor. Comput. Sci.}, 134(2):545--557, 1994.

\bibitem[FV02]{FV02}
Uriel Feige and Oleg Verbitsky.
\newblock Error reduction by parallel repetition---a negative result.
\newblock {\em Combinatorica}, 22(4):461--478, 2002.

\bibitem[GHM{\etalchar{+}}21]{GHMRZ21}
Uma Girish, Justin Holmgren, Kunal Mittal, Ran Raz, and Wei Zhan.
\newblock Parallel repetition for the {GHZ} game: A simpler proof.
\newblock {\em arXiv preprint arXiv:2107.06156}, 2021.
\newblock Available at \url{https://arxiv.org/pdf/2107.06156.pdf}.

\bibitem[GHM{\etalchar{+}}22]{GHMRZ22}
Uma Girish, Justin Holmgren, Kunal Mittal, Ran Raz, and Wei Zhan.
\newblock Parallel repetition for all 3-player games over binary alphabet.
\newblock In Stefano Leonardi and Anupam Gupta, editors, {\em {STOC} '22: 54th
  Annual {ACM} {SIGACT} Symposium on Theory of Computing, Rome, Italy, June 20
  - 24, 2022}, pages 998--1009. {ACM}, 2022.

\bibitem[GHS02]{GHS02}
Venkatesan Guruswami, Johan Hastad, and Madhu Sudan.
\newblock Hardness of approximate hypergraph coloring.
\newblock {\em {SIAM} J. Comput.}, 31(6):1663--1686, 2002.

\bibitem[GMRZ22]{GMRZ22}
Uma Girish, Kunal Mittal, Ran Raz, and Wei Zhan.
\newblock Polynomial bounds on parallel repetition for all 3-player games with
  binary inputs.
\newblock In Amit Chakrabarti and Chaitanya Swamy, editors, {\em Approximation,
  Randomization, and Combinatorial Optimization. Algorithms and Techniques,
  {APPROX/RANDOM} 2022, September 19-21, 2022, University of Illinois,
  Urbana-Champaign, {USA} (Virtual Conference)}, volume 245 of {\em LIPIcs},
  pages 6:1--6:17. Schloss Dagstuhl - Leibniz-Zentrum f{\"{u}}r Informatik,
  2022.

\bibitem[Gow01]{Gowers}
William~T Gowers.
\newblock A new proof of {S}zemer{\'e}di's theorem.
\newblock {\em Geometric \& Functional Analysis GAFA}, 11(3):465--588, 2001.

\bibitem[Has01]{Has01}
Johan Hastad.
\newblock Some optimal inapproximability results.
\newblock {\em J. {ACM}}, 48(4):798--859, 2001.

\bibitem[Hol09]{H09}
Thomas Holenstein.
\newblock Parallel repetition: Simplification and the no-signaling case.
\newblock {\em Theory Comput.}, 5(1):141--172, 2009.

\bibitem[HR20]{HR20}
Justin Holmgren and Ran Raz.
\newblock A parallel repetition theorem for the ghz game.
\newblock {\em arXiv preprint arXiv:2008.05059}, 2020.
\newblock Available at \url{https://arxiv.org/pdf/2008.05059.pdf}.

\bibitem[Kho02a]{Khot02a}
Subhash Khot.
\newblock Hardness results for approximate hypergraph coloring.
\newblock In John~H. Reif, editor, {\em Proceedings on 34th Annual {ACM}
  Symposium on Theory of Computing, May 19-21, 2002, Montr{\'{e}}al,
  Qu{\'{e}}bec, Canada}, pages 351--359. {ACM}, 2002.

\bibitem[Kho02b]{Khot02b}
Subhash Khot.
\newblock Hardness results for coloring 3 -colorable 3 -uniform hypergraphs.
\newblock In {\em 43rd Symposium on Foundations of Computer Science {(FOCS}
  2002), 16-19 November 2002, Vancouver, BC, Canada, Proceedings}, pages
  23--32. {IEEE} Computer Society, 2002.

\bibitem[KORW08]{KORW08}
Guy Kindler, Ryan O'Donnell, Anup Rao, and Avi Wigderson.
\newblock Spherical cubes and rounding in high dimensions.
\newblock In {\em 49th Annual {IEEE} Symposium on Foundations of Computer
  Science, {FOCS} 2008, October 25-28, 2008, Philadelphia, PA, {USA}}, pages
  189--198. {IEEE} Computer Society, 2008.

\bibitem[Pol12]{DHJ12}
D.~H.~J. Polymath.
\newblock A new proof of the density {H}ales-{J}ewett theorem.
\newblock {\em Ann. of Math. (2)}, 175(3):1283--1327, 2012.

\bibitem[PRW97]{PRW97}
Itzhak Parnafes, Ran Raz, and Avi Wigderson.
\newblock Direct product results and the {GCD} problem, in old and new
  communication models.
\newblock In Frank~Thomson Leighton and Peter~W. Shor, editors, {\em
  Proceedings of the Twenty-Ninth Annual {ACM} Symposium on the Theory of
  Computing, El Paso, Texas, USA, May 4-6, 1997}, pages 363--372. {ACM}, 1997.

\bibitem[Rao11]{Rao11}
Anup Rao.
\newblock Parallel repetition in projection games and a concentration bound.
\newblock {\em {SIAM} J. Comput.}, 40(6):1871--1891, 2011.

\bibitem[Raz98]{Raz98}
Ran Raz.
\newblock A parallel repetition theorem.
\newblock {\em {SIAM} J. Comput.}, 27(3):763--803, 1998.

\bibitem[Ver96]{V96}
Oleg Verbitsky.
\newblock Towards the parallel repetition conjecture.
\newblock {\em Theoret. Comput. Sci.}, 157(2):277--282, 1996.

\end{thebibliography}

\appendix

\section{Proof of \texorpdfstring{\cref{lemma:polyapprox}}{polyapprox}}
\label{app:poly}

\begin{proof}
Fix $\rho_0, \dots, \rho_d \in [0, 1-\eps]$ to be chosen later, and let $c_j = \frac{\prod_{i\neq j}(1-\rho_i)}{\prod_{i\neq j}(\rho_j - \rho_i)}$ for $j = 0, \dots, d$. We will show that \cref{item:degree<d,item:degree>d} of \cref{lemma:polyapprox} are satisfied for any choice of $\rho_j$.

For \cref{item:degree<d} (where $k \le d$) consider the polynomial $p(x) = x^k$. By the Lagrange interpolation formula, we know that
\[ p(x) = \sum_{j=0}^d \frac{\prod_{i\neq j}(x-\rho_i)}{\prod_{i\neq j}(\rho_j-\rho_i)} p(\rho_j) = \sum_{j=0}^d \frac{\prod_{i\neq j}(x-\rho_i)}{\prod_{i\neq j}(\rho_j-\rho_i)} \rho_j^k. \] Taking $x = 1$ gives us
$1 = p(1) = \sum_{j=0}^d c_j \rho_j^k$, as desired.

Define $A_k = \sum_{j=0}^d c_j \rho_j^k$. We will show that $A_k \ge A_{k+1}$ for all $k \ge 0$. Because $\lim_{k\to\infty} A_k = 0$ evidently, and $A_d = 1$ by the above, $0 \le A_k \le 1$ for all $k \ge d+1$, as desired. Note that
\[ A_k - A_{k+1} = \sum_{j=0}^d \frac{\prod_{i\neq j}(1-\rho_i)}{\prod_{i\neq j}(\rho_j - \rho_i)} (\rho_j^k - \rho_j^{k+1}) = \prod_{j=0}^d (1-\rho_j) \cdot \sum_{j=0}^d \frac{\rho_j^k}{\prod_{i\neq j} (\rho_j - \rho_i)}. \]
The positivity of this expression follows from $\prod_{j=0}^d (1-\rho_j) > 0$, as $\rho_0, \dots, \rho_d < 1$, and the identity
\begin{align}
\sum_{j=0}^d \frac{\rho_j^k}{\prod_{i\neq j} (\rho_j - \rho_i)} = \sum_{\substack{a_0 + \dots + a_d = k-d \\ a_0, \dots, a_d \in \Z_{\ge0}}} \rho_0^{a_0} \rho_1^{a_1} \dots \rho_d^{a_d} \ge 0. \label{eq:schur}
\end{align}
We establish this identity below, in \cref{lemma:schur}.

We proceed to \cref{item:totalc}. Let
\begin{equation}
    T_d(x) := \frac12\left((x+\sqrt{x^2-1})^d + (x-\sqrt{x^2-1})^d \right) \label{eq:cheby}
\end{equation}
denote the $d$-th Chebyshev polynomial, and let $\tilde{T}(x) = T_d(\frac{2}{1-\eps}x - 1)$. Let $\rho_j = \frac{1-\eps}{2}(\cos(j\pi/d) + 1)$. Note that $0 \le \rho_j \le 1-\eps$, and $\tilde{T}(\rho_j) = T_d(\cos(j\pi/d)) = \cos(j\pi) = (-1)^j$. Let $a_0, \dots, a_d$ be the coefficients of $\tilde{T}$, i.e., $\tilde{T}(x) = \sum_{k=0}^d a_k x^k$. We have that
\begin{align*}
\tilde{T}(1) =
\sum_{k=0}^d a_k = \sum_{k=0}^d a_k \sum_{j=0}^d c_j \rho_j^k = \sum_{j=0}^d c_j \sum_{k=0}^d a_k \rho_j^k = \sum_{j=0}^d c_j \tilde{T}(\rho_j) = \sum_{j=0}^d c_j (-1)^j = \sum_{j=0}^d |c_j|,
\end{align*}
where the final equality uses the definition $c_j = \frac{\prod_{i\neq j}(1-\rho_i)}{\prod_{i\neq j}(\rho_j - \rho_i)}$ to deduce that $c_j(-1)^j > 0$ for all $j$ (recall that $\rho_0 > \rho_1 > \dots > \rho_d$). To conclude, note that \[ \tilde{T}(1) = T\left(\frac{1+\eps}{1-\eps}\right) \le T(1+4\eps) \le \exp(O(d\sqrt{\eps})), \]
where the final inequality comes from taking $x = 1+4\eps$ in \eqref{eq:cheby}.
\end{proof}
\begin{lemma}
\label{lemma:schur}
\eqref{eq:schur} holds for any distinct real numbers $\rho_0, \dots, \rho_d$, and integer $k \ge d$.
\end{lemma}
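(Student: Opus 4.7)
The plan is to recognize the right-hand side of \eqref{eq:schur} as the complete homogeneous symmetric polynomial $h_{k-d}(\rho_0,\dots,\rho_d) := \sum_{a_0+\cdots+a_d=k-d} \rho_0^{a_0}\cdots \rho_d^{a_d}$ and to evaluate it via its generating function combined with a partial-fraction decomposition. The cleanest route starts from the classical identity
\[
\sum_{m\ge 0} h_m(\rho_0,\dots,\rho_d)\, t^m \;=\; \prod_{j=0}^d \frac{1}{1-\rho_j t},
\]
which follows by expanding each factor as a geometric series and collecting monomials.

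Assuming first that all $\rho_j$ are nonzero (as well as distinct), I would perform a partial-fraction decomposition of the right-hand side as $\sum_{j=0}^d A_j/(1-\rho_j t)$; multiplying through by $1-\rho_j t$ and setting $t=1/\rho_j$ gives $A_j = \rho_j^d / \prod_{i\ne j}(\rho_j-\rho_i)$. Expanding $A_j/(1-\rho_j t) = \sum_{m\ge 0} A_j \rho_j^m\, t^m$ and reading off the coefficient of $t^{k-d}$ (which is a nonnegative integer exponent since $k\ge d$) yields
\[
h_{k-d}(\rho_0,\dots,\rho_d) \;=\; \sum_{j=0}^d A_j \rho_j^{k-d} \;=\; \sum_{j=0}^d \frac{\rho_j^k}{\prod_{i\ne j}(\rho_j-\rho_i)},
\]
which is exactly \eqref{eq:schur}. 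To remove the nonzero hypothesis on the $\rho_j$ (the paper's application indeed has $\rho_d=0$), I would invoke continuity: both sides of \eqref{eq:schur} are continuous on the open set where the $\rho_j$ are pairwise distinct, and we have established the identity on the dense open subset where additionally no $\rho_j$ vanishes, so it extends by taking limits.

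As an alternative that avoids the nonzero caveat entirely, I would argue by induction on $d$. Writing $L_d(k)$ for the left-hand side of \eqref{eq:schur}, grouping the $j$-th summand by factoring $\rho_j-\rho_d$ out of the denominator and observing that the $j=d$ term drops out yields the recurrence $L_d(k) - \rho_d L_d(k-1) = L_{d-1}(k-1)$, where $L_{d-1}$ is built from $\rho_0,\dots,\rho_{d-1}$. Combined with the standard Pascal-style recurrence $h_m(\rho_0,\dots,\rho_d) = h_m(\rho_0,\dots,\rho_{d-1}) + \rho_d\, h_{m-1}(\rho_0,\dots,\rho_d)$ (obtained by splitting the defining sum according to whether $a_d=0$), the induction closes after checking the base cases: $L_0(k)=\rho_0^k=h_k(\rho_0)$ trivially, and $L_d(k)=0$ for $0\le k<d$ while $L_d(d)=1=h_0$, both of which follow by Lagrange interpolation of $x^k$ at the nodes $\rho_0,\dots,\rho_d$ and comparing coefficients of $x^d$.

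No serious obstacle is anticipated here: the identity is a classical special case of the bialternant/Jacobi–Trudi formula for complete homogeneous symmetric polynomials, and the only mild care needed is either the continuity argument to handle vanishing $\rho_j$ or the bookkeeping in the inductive step. I would likely present the generating-function proof since it is the shortest, with a one-sentence remark about continuity to cover the edge case.
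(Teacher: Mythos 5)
Your proof is correct, and it takes a genuinely different route from the paper's. The paper rewrites the left side of \eqref{eq:schur} as a bialternant (a ratio of Vandermonde-type determinants), identifies it as the Schur polynomial $s_{(k-d)}$, and then invokes the tableau expansion of Schur polynomials to read off the right side. Your generating-function argument (and your alternative induction) avoids symmetric-function machinery entirely: the partial-fraction computation $A_j=\rho_j^d/\prod_{i\ne j}(\rho_j-\rho_i)$ together with $\sum_m h_m t^m=\prod_j(1-\rho_j t)^{-1}$ gives the identity in two lines, modulo the nonvanishing caveat, which you correctly dispose of by continuity on the open dense set where the $\rho_j$ are distinct and nonzero. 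Your inductive variant is also sound: the recurrence $L_d(k)-\rho_d L_d(k-1)=L_{d-1}(k-1)$ obtained by factoring $\rho_j-\rho_d$ matches the Pascal recurrence $h_m(\rho_0,\dots,\rho_d)=h_m(\rho_0,\dots,\rho_{d-1})+\rho_d h_{m-1}(\rho_0,\dots,\rho_d)$, and the Lagrange-interpolation base cases $L_d(k)=0$ for $k<d$ and $L_d(d)=1$ are exactly right. The trade-off is that the paper's proof is a one-liner \emph{given} the Schur-function identities, whereas yours is self-contained and elementary; since the lemma as stated is really just the classical power-sum/complete-homogeneous identity, your approach is arguably the cleaner fit for an appendix. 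One minor remark: the paper calls the tableau expansion ``the Giambelli formula,'' which is a slight misnomer (that name usually refers to the determinantal expansion in hooks); your labeling of the identity as a special case of the bialternant/Jacobi--Trudi circle is the more standard attribution.
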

\begin{proof}
Using the Vandermonde determinant, one can compute that
\begin{align*}
\sum_{j=0}^d \frac{\rho_j^k}{\prod_{i\neq j} (\rho_j - \rho_i)} = \det \begin{bmatrix}
    \rho_0^k & \rho_1^k & \dots & \rho_d^k \\
    \rho_0^{d-1} & \rho_1^{d-1} & \dots & \rho_d^{d-1} \\
    \vdots & \vdots & \ddots & \vdots \\
    1 & 1 & \dots & 1
\end{bmatrix} \Bigg{/}
\det \begin{bmatrix}
    \rho_0^d & \rho_1^d & \dots & \rho_d^d \\
    \rho_0^{d-1} & \rho_1^{d-1} & \dots & \rho_d^{d-1} \\
    \vdots & \vdots & \ddots & \vdots \\
    1 & 1 & \dots & 1
\end{bmatrix}.
\end{align*}
This is the Schur polynomial corresponding to the partition $\lambda := (0, 0, \dots, k-d)$. The Giambelli formula says that this the Schur polynomial can be alterantively expressed as
\[ \sum_T \rho_0^{t_0} \dots \rho_d^{t_d}, \]
where $T$ ranges over all semistandard Young Tableaux for partition $\lambda$, and $t_i$ is the number of occurrences of $i$ in $T$. Evidently, this is equivalent to the RHS of \eqref{eq:schur}, for $\lambda$ with a single piece of size $k-d$, and labels between $0, \dots, d$.
\end{proof}

\end{document}